\newcommand{\arxiv}[1]{\iftoggle{neurips}{}{#1}}
\newcommand{\neurips}[1]{\iftoggle{neurips}{#1}{}}
\global\togglefalse{neurips}
\newcommand{\nc}{\newcommand}
\Crefname{assumption}{Assumption}{Assumptions}
   \Crefname{question}{Question}{Questions}
   \Crefname{claim}{Claim}{Claims}
   \Crefname{problem}{Problem}{Problems}
\Crefname{subsubsection}{Section}{Sections}
\nc{\sups}[1]{^{\scriptscriptstyle{#1}}}
\nc{\subs}[1]{_{\scriptscriptstyle{#1}}}
\newcommand{\wb}{\widebar}
\nc{\Critic}{\texttt{Critic}\xspace}
\nc{\PSDPUCB}{\texttt{PSDP-UCB}\xspace}
\nc{\LSVIUCB}{\texttt{LSVI-UCB}\xspace}
\nc{\Actor}{\texttt{Actor}\xspace}
\nc{\EstFeature}{\texttt{EstFeature}\xspace}
\nc{\ExpFTPL}{\texttt{ExpFTPL}\xspace}
\nc{\dist}{\mathrm{dist}}
\nc{\Bquad}{B^{\mathsf{quad}}}
\newtheorem*{rep@theorem}{\rep@title}
\newcommand{\newreptheorem}[2]{%
\newenvironment{rep#1}[1]{%
 \def\rep@title{#2 \ref{##1}}%
 \begin{rep@theorem}}%
 {\end{rep@theorem}}}
\newcommand\xlabel[2][]{\phantomsection\def\@currentlabelname{#1}\label{#2}}
\theoremstyle{plain}
\newtheorem{theorem}{Theorem}
\newtheorem{lemma}[theorem]{Lemma}
\newtheorem{corollary}[theorem]{Corollary}
\newtheorem{proposition}[theorem]{Proposition}
\newtheorem{claim}[theorem]{Claim}
\newtheorem{assumption}[theorem]{Assumption}
\theoremstyle{definition}
\newtheorem{definition}{Definition}
\newtheorem{defn}[definition]{Definition}
\newtheorem{example}[definition]{Example}
\newtheorem{remark}[definition]{Remark}
\numberwithin{theorem}{section}
\numberwithin{definition}{section}
\nc{\DMO}{\DeclareMathOperator}
\DMO{\prox}{prox}
\DMO{\UCB}{UCB}
\DMO{\LCB}{LCB}
\nc{\phidiff}{\phi\sups{\Delta}}
\nc{\pexp}{q_{\mathrm{exp}}}
\nc{\nn}{\nonumber}
\nc{\rk}{\mathrm{rk}}
\nc{\brk}[3]{{\rm br}_{#1}^{#2}({#3})}
\nc{\co}{{\rm co}}
\nc{\br}[2]{{\rm br}^{#1}({#2})}
\nc{\depth}[1]{{\rm d}({#1})}
\nc{\tA}{\textsc{A}}
\nc{\child}[2]{{\rm ch}_{#1}({#2})}
\nc{\parent}[1]{{\rm pa}({#1})}
\nc{\dg}{\dagger}
\nc{\bB}{\mathbf{B}}
\nc{\Span}{{\rm Span}}
\nc{\unif}{\mathsf{unif}}
\nc{\indsig}[2]{\mathcal{I}_{#1}({#2})}
\nc{\total}{{\rm fin}}
\nc{\early}{{\rm pre}}
\nc{\zsink}{z_{\rm sink}}
\nc{\lowv}{{\rm low}}
\nc{\ol}{\overline}
\nc{\ul}{\underline}
\nc{\madec}[3]{\texttt{ma-dec}_{#1}({#2}, {#3})}
\nc{\madeco}[1]{\texttt{ma-dec}_{#1}}
\nc{\madecd}[3]{\texttt{ma-dec}^{\texttt{d}}_{#1}({#2}, {#3})}
\nc{\SF}{\mathscr{F}}
\nc{\SH}{\mathscr{H}}
\nc{\SP}{\mathscr{P}}
\nc{\SPc}{\wb{\mathscr{P}}}
\nc{\SB}{\mathscr{B}}
\nc{\SC}{\mathscr{C}}
\nc{\BS}{\mathbb{S}}
\nc{\PiMarkov}{\Pi^{\rm markov}}
\nc{\trunc}[2]{\mathsf{trunc}_{#2}({#1})}
\nc{\sbl}{of strong Bellman type\xspace}
\nc{\inormal}[1][\Phi, u,v]{\til{N}_{{#1}}}
\nc{\gamvec}{\gamma}
\nc{\til}{\widetilde}
\nc{\td}{\tilde}
\nc{\wh}{\widehat}
\nc{\old}[1]{\ifnum\Comments=1 {\color{brown}  [OLD: #1]}\fi}
\nc{\noah}[1]{\ifnum\Comments=1 {\color{purple} [ng: #1]}\fi}
\nc{\BP}{\mathbb{P}}
\nc{\BI}{\mathbb{I}}
\nc{\midpoint}[1][\Phi,\phi_1,\phi_2]{\mu^{\star}_{{#1}}}
\nc{\fools}[3]{\MF_{#3}({#1}, {#2})}
\nc{\fool}[2]{\MF({#1},{#2})}
\nc{\clip}[2]{{\rm clip}\left[ \left. {#1} \right| {#2} \right]}
\nc{\imax}{\omega}
\DMO{\conv}{conv}
\nc{\MH}{\mathcal{H}}
\nc{\MV}{\mathcal{V}}
\nc{\MC}{\mathcal{C}}
\nc{\MI}{\mathcal{I}}
\nc{\st}{\star}
\nc{\lng}{\langle}
\nc{\rng}{\rangle}
\DMO{\OOPT}{opt}
\nc{\dopt}[2]{\ell_{\OOPT}({#1},{#2})}
\nc{\grad}{\nabla}
\nc{\MG}{\mathcal{G}}
\nc{\MP}{\mathcal{P}}
\nc{\PP}{\mathbb{P}}
\nc{\TT}{\mathbb{T}}
\nc{\TTmax}{\TT_{\max}}
\DMO{\REG}{Reg}
\DMO{\WREG}{wReg}
\nc{\reg}[2]{{\Delta}_{{#1}}({#2})}
\nc{\wreg}[2]{{\Delta}^{\rm w}_{{#1}}({#2})}
\nc{\Reg}[2]{{\REG}_{{#1}}({#2})}
\nc{\wReg}[2]{{\WREG}_{{#1}}({#2})}
\DMO{\Ham}{Ham}
\DMO{\Gap}{Gap}
\DMO{\GD}{GD}
\DMO{\GDA}{GDA}
\DMO{\EG}{EG}
\nc{\TE}{\til{\E}}
\nc{\Var}{\mathbb{V}}
\DMO{\Cov}{Cov}
\DMO{\OGDA}{OGDA}
\DMO{\Unif}{Unif}
\DMO{\Tr}{Tr}
\nc{\Qu}{\ul{Q}}
\nc{\Qo}{\ol{Q}}
\nc{\Ro}{\ol{R}}
\nc{\Vu}{\ul{V}}
\nc{\Vo}{\ol{V}}
\nc{\RanQ}{\Delta Q}
\nc{\RanV}{\Delta V}
\nc{\clipQ}{\Delta \breve{Q}}
\nc{\frzQ}{\Delta \mathring{Q}}
\nc{\clipV}{\Delta \breve{V}}
\nc{\clipdelta}{\breve{\delta}}
\nc{\cliptheta}{\breve{\theta}}
\nc{\delmin}{\Delta_{{\rm min}}}
\nc{\delmins}[1]{\Delta_{{\rm min},{#1}}}
\nc{\gapfinal}[1]{\max \left\{ \frac{\frzQ_{{#1}}^{k^\st}(x,a)}{2H}, \frac{\delmin}{4H} \right\}}
\nc{\post}[2]{R({#1}; {#2})}
\nc{\posts}[3]{R_{#3}({#1}; {#2})}
\nc{\algnst}[1]{\begin{align*}#1\end{align*}}
\nc{\algn}[1]{\begin{align}#1\end{align}}
\nc{\matx}[1]{\left(\begin{matrix}#1\end{matrix}\right)}
\nc{\nuu}{\nu}
\nc{\bel}[1]{\mathbf{b}({#1})}
\nc{\nbel}[1]{\bar{\mathbf{b}}({#1})}
\nc{\sbel}[2]{\mathbf{b}'_{#1}({#2})}
\nc{\nsbel}[2]{\bar{\mathbf{b}}'_{#1}({#2})}
\nc{\bv}{\mathbf{v}}
\nc{\bone}{\mathbf{1}}
\nc{\bX}{\mathbf{X}}
\nc{\bY}{\mathbf{Y}}
\nc{\bG}{\mathbf{G}}
\nc{\bz}{\mathbf{z}}
\nc{\bw}{\mathbf{w}}
\nc{\bA}{\mathbf{A}}
\nc{\bJ}{\mathbf{J}}
\nc{\bK}{\mathbf{K}}
\nc{\bb}{\mathbf{b}}
\nc{\ba}{\mathbf{a}}
\nc{\bc}{\mathbf{c}}
\nc{\bC}{\mathbf{C}}
\nc{\BR}{\mathbb R}
\nc{\BA}{\mathbb{A}}
\nc{\BC}{\mathbb C}
\nc{\bx}{\mathbf{x}}
\nc{\bS}{\mathbf{S}}
\nc{\bM}{\mathbf{M}}
\nc{\bR}{\mathbf{R}}
\nc{\bN}{\mathbf{N}}
\nc{\NN}{\mathbb{N}}
\nc{\by}{\mathbf{y}}
\nc{\sy}{y}
\nc{\sx}{x}
\nc{\MO}{\mathcal O}
\nc{\MU}{\mathcal{U}}
\nc{\ME}{\mathcal{E}}
\nc{\MN}{\mathcal{N}}
\nc{\MK}{\mathcal{K}}
\nc{\MM}{\mathcal{M}}
\nc{\MS}{\mathcal{S}}
\nc{\MT}{\mathcal{T}}
\nc{\BF}{\mathbb F}
\nc{\BQ}{\mathbb Q}
\nc{\MX}{\mathcal{X}}
\nc{\MA}{\mathcal{A}}
\nc{\MD}{\mathcal{D}}
\nc{\MB}{\mathcal{B}}
\nc{\MZ}{\mathcal{Z}}
\nc{\MJ}{\mathcal{J}}
\nc{\MW}{\mathcal{W}}
\nc{\MR}{\mathcal{R}}
\nc{\MY}{\mathcal{Y}}
\nc{\BZ}{\mathbb Z}
\nc{\BN}{\mathbb N}
\nc{\ep}{\epsilon}
\nc{\epbe}{\varepsilon_{\mathsf{BE}}}
\nc{\epout}{\varepsilon_{\mathsf{outlier}}}
\nc{\bellc}[1][h]{\MT_{#1}^\circ}
\nc{\vep}{\varepsilon}
\nc{\gapfn}[1]{\varepsilon_{#1}}
\nc{\ggapfn}[2]{\varphi_{#1}({#2})}
\nc{\epsahk}{\gapfn{0}}
\nc{\BH}{\mathbb H}
\nc{\BG}{\mathbb{G}}
\nc{\D}{\Delta}
\nc{\MF}{\mathcal{F}}
\nc{\One}[1]{\mathbbm{1}\{{#1}\}}
\nc{\bOne}{\mathbf{1}}
\nc{\Aopt}{\mathcal{A}^{\rm opt}}
\nc{\Amul}{\mathcal{A}^{\rm mul}}
\nc{\SQ}{\mathsf Q}
\nc{\DO}{\accentset{\circ}{\D}}
\nc{\mf}{\mathfrak}
\nc{\mfp}{\mathfrak{p}}
\nc{\mfq}{\mf{q}}
\nc{\mfx}{\mf{s}}
\nc{\Sp}{\mbox{Spec}}
\nc{\Spm}{\mbox{Specm}}
\nc{\hookuparrow}{\mathrel{\rotatebox[origin=c]{90}{$\hookrightarrow$}}}
\nc{\hookdownarrow}{\mathrel{\rotatebox[origin=c]{-90}{$\hookrightarrow$}}}
\nc{\hra}{\hookrightarrow}
\nc{\tra}{\twoheadrightarrow}
\nc{\sgn}{{\rm sgn}}
\nc{\aut}{{\rm Aut}}
\nc{\Hom}{{\rm Hom}}
\nc{\img}{{\rm Im}}
\DMO{\id}{Id}
\DMO{\supp}{supp}
\DMO{\KL}{KL}
\nc{\kld}[2]{d_{\mathsf{KL}}({#1}||{#2})}
\nc{\ren}[2]{D_2({#1}||{#2})}
\nc{\chisq}[2]{\chi^2({#1}||{#2})}
\nc{\tvd}[2]{d_{\mathsf{TV}}({#1}, {#2})}
\nc{\hell}[2]{d_{\mathsf{H}}^2({#1}, {#2})}
\nc{\dbi}[3][\pi]{D_{\mathsf{bi}}^{#1}({#2} \| {#3})}
\DMO{\BSS}{BSS}
\DMO{\BES}{BES}
\DMO{\BGS}{BGS}
\DMO{\poly}{poly}
\nc{\indep}{\perp}
\DMO{\sink}{sink}
\nc{\fp}[1]{\MP_1({#1})}
\nc{\BO}{\mathbb{O}}
\nc{\BT}{\mathbb{T}}
\nc{\RR}{\mathbb{R}}
\nc{\Gradient}{\nabla}
\DMO{\diag}{diag}
\nc{\norm}[1]{\left \lVert #1 \right \rVert}
\nc{\EE}{\mathbb{E}}
\nc{\MQ}{\mathcal{Q}}
\nc{\ML}{\mathcal{L}}
\nc{\cPhi}{\bar \Phi}
\DeclareMathOperator*{\PR}{Pr}
\renewcommand{\Pr}{\PR}
\nc{\E}{\mathbb{E}}
\nc{\ra}{\rightarrow}
\nc{\pmhc}[1]{\{-1,1\}^{#1}}
\nc{\Dbnd}{D}
\nc{\Bbnd}{B}
\nc{\Key}{\mathsf{KeyGen}}
\nc{\Enc}{\mathsf{Encode}}
\nc{\Dec}{\mathsf{Decode}}
\nc{\sk}{\mathsf{sk}}
\nc{\pk}{\mathsf{pk}}
\nc{\lpk}{\ell_{\mathsf{pk}}}
\nc{\lsk}{\ell_{\mathsf{sk}}}
\nc{\msg}{\mathsf{m}}
\nc{\Adv}{\mathsf{Adv}}
\nc{\dham}{D_{\mathsf{Ham}}}
\nc{\negl}{\mathsf{negl}}
\nc{\Ber}{\mathrm{Ber}}
\nc{\PRFPRC}{\mathsf{PRF\text{-}PRC}}
\nc{\wt}{\mathrm{wt}}
\nc{\res}[2]{{#1}_{#2}}
\nc{\bzero}{\mathbf{0}}
\nc{\Bin}{\mathrm{Bin}}
\nc{\Hyp}{\mathrm{Hyp}}
\nc{\Nrho}[1][\rho]{{N}_{#1}}
\nc{\Trho}[1][\rho]{\mathsf{T}_{#1}}
\nc{\hc}[1][n]{\{0,1\}^{#1}}
\nc{\Stab}{\mathbf{Stab}}
\nc{\bW}{\mathbf{W}}
\nc{\NS}{{\mathbf{NS}}}
\nc{\KeyS}{\mathsf{KeyGen_{Sub}}}
\nc{\EncS}{\mathsf{Encode_{Sub}}}
\nc{\DecS}{\mathsf{Decode_{Sub}}}
\nc{\WeightPerturb}{\mathsf{WeightPerturb}}
\nc{\Unique}{\mathsf{Unique}}
\nc{\PRCS}{\mathsf{PRC_{Sub}}}
\nc{\PRC}{\mathsf{PRC}}
\nc{\PRCI}{\mathsf{PRC_{Idx}}}
\nc{\SampleUnique}{\mathsf{SampleUnique}}
\nc{\PerturbDifference}{\mathsf{PerturbDifference}}
\nc{\Model}{\mathsf{Model}}
\nc{\Modelo}{\overline{\Model}}
\nc{\prompt}{\mathtt{PROMPT}}
\nc{\Setup}{\mathsf{Setup}}
\nc{\Detect}{\mathsf{Detect}}
\nc{\Sigprc}{\Sigma_{\mathsf{PRC}}}
\nc{\Wat}{\mathsf{Wat}}
\nc{\term}{\mathtt{END}}
\nc{\tok}{\mathsf{t}}
\nc{\True}{\textsf{True}}
\nc{\False}{\textsf{False}}
\nc{\Eemb}{\ME_{\mathsf{Emb}}}
\nc{\hist}{\mathsf{hist}}
\nc{\hh}{\mathsf{h}}
\nc{\freq}{\mathsf{freq}}
\nc{\ff}{\mathsf{f}}
\nc{\Hemp}[1]{H_{\mathsf{e}}^{#1}}
\nc{\Hempt}[1]{\bar{H}_{\mathsf{e}}^{#1}}
\nc{\Hemptil}[1]{\tilde{H}_{\mathsf{e}}^{#1}}
\nc{\Spread}[1]{S^{#1}}
\nc{\Hmean}[1]{H_{\mathsf{m}}^{#1}}
\nc{\partition}[1][n,q]{P^{\mathsf{ptn}}_{#1}}
\nc{\Crob}{C_{\mathsf{rob}}}
\nc{\Lmax}{L_{\mathsf{max}}}
\nc{\skwat}{\sk_{\mathsf{Wat}}}
\nc{\EmbedToken}{\mathsf{EmbedChar}}
\nc{\len}{\mathrm{len}}
\nc{\Esub}{\ME_{\mathsf{sub}}}
\nc{\Ecomp}{\ME_{\mathsf{comp}}}
\nc{\comp}{\mathsf{c}}
\nc{\SE}{\mathscr{E}}
\nc{\alphb}{q}
\nc{\tAdv}{\widetilde{\Adv}}
\nc{\Funif}{{F_{\mathsf{Unif}}}}
\nc{\Alg}{\mathsf{Alg}}
\nc{\Majority}{\mathsf{Maj}}
\nc{\Dist}{\mathsf{Dist}}
\nc{\edit}{edit\xspace}
\nc{\Edit}{Edit\xspace}
\nc{\Wcomp}{\MW^{\mathsf{comp}}}
\title{Edit Distance Robust Watermarks via \\ Indexing Pseudorandom Codes}
\date{\today}
\begin{document}
\maketitle

\begin{abstract}
  Motivated by the problem of detecting AI-generated text, we consider the problem of watermarking the output of language models with provable guarantees. We aim for watermarks which satisfy: (a) \emph{undetectability}, a cryptographic notion introduced in \cite{christ2023undetectable} which stipulates that it is computationally hard to distinguish watermarked language model outputs from the model's actual output distribution; and (b) \emph{robustness} to channels which introduce a constant fraction of \emph{adversarial} insertions, substitutions, and deletions to the watermarked text. Earlier schemes could only handle stochastic substitutions and deletions, and thus we are aiming for a more natural and appealing robustness guarantee that holds with respect to \emph{edit distance}.
  
  Our main result is a watermarking scheme which achieves both undetectability and robustness to edits when the alphabet size for the language model is allowed to grow as a polynomial in the security parameter. To derive such a scheme, we follow an approach introduced in \cite{christ2024pseudorandom}, which proceeds via first constructing \emph{pseudorandom codes} satisfying undetectability and robustness properties analogous to those above; our key idea is to handle adversarial insertions and deletions by interpreting the symbols as indices into the codeword, which we call \emph{indexing pseudorandom codes}. Additionally, our codes rely on weaker computational assumptions than used in previous work.    Then we show that there is a generic transformation from such codes over large alphabets to watermarking schemes for \emph{arbitrary} language models.

\end{abstract}

\arxiv{\newpage}

\section{Introduction}
\label{sec:intro}
\noah{if time: result showing quasipolynomial is optimal for PRF-based construction.} %
The rapid increase in  AI-generated content represents a significant challenge for numerous societal institutions, ranging from education to social media. For instance, the ability of large language models such as GPT-4 to generate long bodies of text such as essays raises the possibility of increasing amounts of plagiarism, and the proliferation of AI-generated text, images, and videos on social media presents challenges regarding large-scale manipulation of online audiences. 
An important tool at our disposal in preventing the misuse of such content is \emph{watermarking schemes}, which are procedures that embed hidden patterns in AI-produced content using a \emph{secret key}. Watermarking schemes allow a \emph{detection algorithm} with the aid of the secret key to determine with high probability that the content was produced by the AI model. They do so without noticeably altering the content from the perspective of any algorithm not possessing the secret key.

Despite a number of recently proposed watermarking schemes, taking both a theoretical perspective (e.g., \cite{christ2023undetectable,christ2024pseudorandom,zamir2024excuse,fairoze2023publicly}) as well as a more empirical one (e.g., \cite{kirchenbauer2023watermark,kirchenbauer2024reliability,kuditipudi2023robust}), a crucial challenge that remains elusive is that of ensuring the watermark be \emph{robust} to adversaries which can modify the generated content. A watermarking scheme is of little use if it is too easy to change the model's output so as to remove the watermark. On the other hand, a sufficiently resourceful adversary can simply train their own unwatermarked model. Thus it is necessary to strike a balance in terms of the power of adversaries to which a watermarking scheme enjoys robustness. %
In this paper, we give the first watermarking schemes with provable robustness to adversaries that make a constant fraction of arbitrary substitutions, insertions, and deletions to the output. This is a substantial improvement over previous schemes \cite{christ2024pseudorandom}, which could only tolerate a constant fraction of adversarial substitutions or a constant fraction of i.i.d.~deletions under additional stochasticity assumptions on the language model. %
Our results therefore represent progress towards the overarching goal of constructing watermarking schemes which are robust to resource-limited adversaries. %

\subsection{Setup: watermarking schemes}
We consider the task of watermarking the output of a \emph{language model} $\Model$, which is simply defined as a mapping from a sequence of \emph{tokens} $\tok_1, \ldots, \tok_{i-1}$ to a distribution over the next token.\footnote{Note also that it is typically assumed that $\Model$ takes as input a prompt; for simplicity, we absorb such a prompt in the definition of $\Model$, which technically introduces a different model for each possible prompt. Nothing in our results changes if we instead explicitly allow $\Model$ to take as input a prompt. (In particular, our watermarking detection algorithms do not assume any knowledge of $\Model$, and therefore of the prompt.)} Though we use the terminology ``language model'' and ``text'' throughout the paper, our framework can be used to model a host of autoregressive models including those for generation of images, audio, and videos (e.g., \cite{yu2022scaling,tian2024visual}) in addition to text. As this paper is theoretical in nature, we work only with this abstraction of an autoregressive model, keeping in mind that single tokens could represent, e.g., sequences of letters \cite{sennrich2016neural} or discretized patches of an image \cite{yu2022vectorquantized,razavi2019generating}.\footnote{In the setting of text, transforming a token sequence to readable text (i.e., sequences of letters) and then back to tokens may not necessarily recover the original token sequence. Any errors induced via this transformation would likely be of small edit distance, and therefore we incorporate them as part of the adversary for which our watermarks have robustness. See also \cite[Section 7]{kirchenbauer2023watermark} for related discussion on ``tokenization attacks''.}

A \emph{watermarking scheme} $\MW$ for $\Model$ consists of a tuple of efficient algorithms, $$\MW = (\Setup, \Wat, \Detect)$$ where $\Setup$ generates a secret key at random, $\Wat$ uses $\Model$ and the secret key to produce a sequence of text with a watermark embedded in it, and $\Detect$ uses the secret key to determine whether a given sequence of text is watermarked. Moreover, $\Detect$ has no knowledge of $\Model$. $\MW$ should satisfy the following three properties:
\neurips{\begin{itemize}[leftmargin=.4cm]}
  \arxiv{\begin{itemize}}
\item \emph{Undetectability \cite{christ2023undetectable}:} Text generated from the watermarking protocol $\Wat$ should be computationally indistinguishable to text generated by the true model $\Model$, to any polynomial-time algorithm which can repeatedly query either $\Wat$ or $\Model$. (Other notions, namely \emph{distortion-freeness} \cite{kuditipudi2023robust}, have been considered in place of undetectability, but are significantly weaker; see \cref{ex:df-fail}.) %
\item \emph{Soundness:} Any fixed sequence of text (not produced by $\Wat$) should not be detected as watermarked, with high probability (over the generation of the secret key).
\item \emph{\textbf{Edit-Robustness:}} A sequence which can be obtained by calling the watermarking procedure $\Wat$ and then making a constant fraction of adversarial \emph{edits} (i.e., substitutions, insertions, and deletions) to its output will still be detected as watermarked by $\Detect$ with high probability. 
\end{itemize}
Whereas several existing watermarking schemes achieve undetectability and soundness, the main contribution of our work is to achieve the notion of \emph{edit-robustness} given above. Robustness, broadly construed, has long been recognized as essential for constructing useful watermarking schemes \cite{atallah2001natural,atallah2002natural}, yet nearly all of the existing provable guarantees on watermarking do not yield robustness to any constant fraction of edits. We remark that some works (e.g., \cite{christ2023undetectable}) can handle a \emph{subconstant} fraction of edits. However, we argue that the right notion of adversary is that which makes a \emph{constant} fraction of edits: there is a relatively low bar for a malicious adversary attempting to remove the watermark by changing the text at a {constant} rate while avoiding significant quality degradations, such as randomly replacing $5\%$ of words with a synonym. %
Moreover, even if an adversary is not actively trying to remove a watermark, it is reasonable to expect that an ``honest'' editor who is merely trying to improve the text's quality will introduce edits at a constant rate. The lens of coding theory provides further motivation for recovering from such channels that introduce a constant fraction of errors: doing so is a longstanding gold standard in the field \cite{guruswami2019essential}.

The only prior work in the literature which obtains watermarking schemes with provable robustness guarantees against such ``constant-rate'' attacks was \cite{christ2024pseudorandom}, which constructed watermarking schemes robust to a constant fraction of deletions and substitutions. Moreover, in order to handle any deletions at all, \cite{christ2024pseudorandom} needs to assume that: (a) the deletions and substitutions are made \emph{independently} at each position with some fixed probability; and (b) the language model is equivalent to a binary symmetric channel in a certain sense, a strong assumption which is contradicted by decades of study in linguistics, which posit that earlier words strongly influence the distribution of subsequent words \cite{biber2009corpus}.\footnote{We remark that, even for the case of substitutions, the result of  \cite{christ2024pseudorandom} (Lemma 23 \& Theorem 5 within) only states robustness against i.i.d.~substitutions, i.e., a binary symmetric channel. Nevertheless, a straightforward modification to the proof of \cite[Theorem 5]{christ2024pseudorandom} allows one to establish robustness a constant fraction of \emph{adversarial} substitutions. However,  entirely different techniques are needed to handle the case where insertions and deletions are also allowed.} %

Finally, we emphasize that the task of obtaining robustness against adversaries that can make a constant fraction of insertions, deletions, and substitutions, as opposed to merely substitutions, is well-motivated by an extensive line of work on \emph{edit distance} (e.g., \cite{ostrovsky2007low,andoni2010polylogarithmic,andoni2011approximating,andoni2020edit}, amongst many others). {The edit distance between two strings is the minimum number of insertions, deletions, and substitutions needed to transform one string into the other.} Edit distance has found numerous applications in areas ranging from computational biology to information retrieval \cite{navarro2001guided}. This fundamental role of edit distance results from the fact that many natural processes induce small edit-distance perturbations on strings: for instance, mutations of DNA can involve insertions or deletions as well as substitutions, as do the changes people typically make to documents. Procedures that  make such changes therefore represent a reasonable adversary in our present setting of watermarking.

\subsection{Main result} Our main result states that watermarking schemes with all of the aforementioned properties exist under a standard cryptographic assumption: %
\begin{theorem}[Informal version of \cref{thm:wm-overall}]
  \label{thm:main-informal}
Suppose that local weak PRFs exist (per \cref{asm:local-weak-prf}). Then for any security parameter $\lambda \in \BN$, there is a watermarking scheme over an alphabet of size $\poly(\lambda)$ for model outputs of length $\poly(\lambda)$, which is sound, undetectable with respect to algorithms running in time $\poly(\lambda)$, and robust to a constant fraction of substitutions, insertions, and deletions when the sequence of generated text has ``entropy rate'' at least some constant.  
\end{theorem}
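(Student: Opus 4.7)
The plan is to follow the two-stage approach outlined in the introduction: (i) construct an \emph{indexing pseudorandom code} (indexing PRC) over a polynomial-size alphabet whose codewords are pseudorandom to any polynomial-time adversary without the secret key, and simultaneously decodable from a constant fraction of adversarial insertions, deletions, and substitutions; and then (ii) plug this PRC into a generic reduction that yields a watermarking scheme $\MW = (\Setup, \Wat, \Detect)$ for an arbitrary language model $\Model$ whose per-token conditional distribution has at least constant entropy rate.

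For stage (i), I would build the indexing PRC as follows. Let $n = \poly(\lambda)$ be the codeword length and let the alphabet $\Sigma$ have size $n \cdot \poly(\lambda)$; the $i$-th symbol of a codeword consists of the index $i$ bundled with a pseudorandom bit $r_i$ obtained by evaluating a local weak PRF at $i$ under the secret key $\sk$. Computational undetectability of the code is immediate from \cref{asm:local-weak-prf}: the sequence $(r_i)_{i=1}^n$ is pseudorandom given polynomially many independent PRF evaluations, so codewords look uniform over $\Sigma^n$ to any polynomial-time distinguisher lacking $\sk$. For decoding, the explicit indices act as synchronization markers: $\Detect$ scans the received string, retains the symbols whose claimed index forms a long monotone-increasing subsequence consistent with the received positions, and computes an agreement statistic between the bits $r_i$ on the surviving positions and the PRF values recomputed from $\sk$. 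This structure yields soundness (a string chosen independently of $\sk$ matches the PRF bits in at most $1/2$ of positions except with negligible probability) and edit-robustness (a constant fraction of adversarial edits cannot destroy the index-based alignment of a $1 - o(1)$ fraction of true codeword positions) in one shot.

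For stage (ii), I would adapt the generic PRC-to-watermark transformation of \cite{christ2024pseudorandom}: at each generation step, $\Wat$ samples from $\Model$'s conditional distribution but biases its sampling so that the emitted token agrees with the next codeword symbol of the PRC. Standard rejection-sampling-style arguments show this is possible without detectably distorting $\Model$'s output distribution, provided the conditional min-entropy per token is at least a constant, which is exactly the entropy-rate hypothesis in the theorem. Undetectability of the watermark then reduces to pseudorandomness of PRC codewords; soundness and edit-robustness follow directly from the corresponding PRC properties, since the sequence emitted by $\Wat$ is, by construction, a PRC codeword realized inside $\Model$'s output alphabet.

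The main obstacle I expect is rigorously handling the \emph{adversarial} nature of the edits in the indexing decoder, in contrast to the i.i.d.\ deletion channel treated by \cite{christ2024pseudorandom}. An adversary can concentrate deletions on the index fields, insert a constant fraction of fake symbols carrying adversarially-chosen indices, and substitute genuine symbols to corrupt the pseudorandom bits, all simultaneously. Showing that the alignment step still recovers a long monotone-index subsequence consisting predominantly of genuine codeword symbols requires a combinatorial argument of longest-common-subsequence type tailored to edit distance, together with an anticoncentration bound on the adversary's ability to fabricate fake agreements with the PRF. Moreover, this analysis must be carried out so that the decoding threshold and any auxiliary statistics do not leak information about $\sk$, since undetectability must hold even against distinguishers with oracle access to $\Wat$; reconciling these combinatorial recovery guarantees with the computational undetectability constraint is the technical heart of the argument.
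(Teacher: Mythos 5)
Your high-level decomposition (edit-robust indexing PRC over a large alphabet, then a generic PRC-to-watermark reduction) matches the paper's, but the construction you propose for stage (i) has genuine gaps, and the most important one is undetectability. If the $i$-th symbol of every codeword carries the literal index $i$ (even bundled with a pseudorandom payload bit), codewords are trivially distinguishable from uniform strings over $\Sigma^n$: the index fields are deterministic, publicly checkable, and identical across every call to $\Enc$, so a single query (or two, if you encrypt the indices under $\sk$) breaks \cref{eq:undetectability}. This is precisely the obstruction the paper identifies for synchronization-string-style codes. Relatedly, evaluating a \emph{weak} PRF at the fixed inputs $1,\dots,n$ is outside the security guarantee of \cref{asm:local-weak-prf}: weak PRFs are only pseudorandom on \emph{uniformly random} queries, and for the candidate instantiations (e.g., sparse parities) evaluation at fixed structured points can reveal the key outright. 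The paper's $\PRFPRC$ is careful to query $F_s$ only at fresh uniform inputs $x_1,\dots,x_m$, and to hide all structure behind a uniformly random mask $z$ and permutation $\pi$.

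The paper's indexing PRC also resolves the adversarial-alignment problem you correctly flag as the technical heart, but by a different mechanism than LIS-style alignment: the decoder is completely \emph{position-oblivious}. A symbol $a\in[q]$ is mapped by a \emph{secret, random, many-to-one} map $\psi$ to an index $j\in[n]$ of an underlying binary substitution-PRC codeword, and decoding only looks at the \emph{set} $\Unique(\psi(z))$, setting $y'_j=\One{j\in\Unique(\psi(z))}$. Insertions and deletions then each perturb this set by at most one element, and because $\psi$ is secret and random, \cref{lem:delta-psi-z} shows each adversarially edited symbol lands inside or outside $\Unique(\psi(z))$ roughly uniformly, so the induced Hamming error on the binary side stays below $(1/2-p_0)n$ even when the edit fraction is $1-O(p_0)$; this quantitative gain is what makes the watermark reduction go through. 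Undetectability is restored by the $\PerturbDifference$ step, which makes the encoder's output exactly uniform on $[q]^m$ when the binary codeword is uniform. Finally, stage (ii) is not a direct adaptation of the binary reduction of \cite{christ2024pseudorandom}: over a large PRC alphabet the model's conditional distribution puts negligible mass on any single PRC symbol, so the paper introduces a random hash $\phi:\Sigma\to\Sigprc$ and shows (\cref{lem:phi-unif-dist}) that constant entropy rate suffices to keep an $\Omega(\alpha)$ fraction of positions unsubstituted. You would need all three of these ideas (or substitutes for them) to close the argument.
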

As is typical in cryptographic settings, the guarantee of \cref{thm:main-informal} involves a security parameter $\lambda$, which controls the power of a distinguishing algorithm (in the context of undetectability) as well as the failure probability of soundness and edit-robustness (which are $\negl(\lambda)$; see the formal version in \cref{sec:prc-to-wm}). The requirement that the sequence of generated text have constant entropy rate (meaning that on average, the tokens are generated from conditional distributions which  have entropy at least a constant fraction of the maximum possible entropy) is easily seen to be necessary,\footnote{If only an $\alpha = o(1)$ fraction of tokens have nonzero entropy, then an adversary could simply replace those tokens with any fixed token and thus remove all entropy.\label{fn:alpha-o1}} and lower bounds on the entropy are standard amongst essentially all watermarking schemes, even without any robustness requirement \cite{christ2023undetectable,fairoze2023publicly,christ2024pseudorandom,zamir2024excuse}. Indeed, watermarking a near-deterministic language model is impossible since it can only produce a small number of outputs; the entropy assumption quantifies the extent to which the language model is near-deterministic. 

One potential limitation of \cref{thm:main-informal} is the requirement that the alphabet size of the language model (i.e., number of tokens) grow as a polynomial in $\lambda$. This limitation is mitigated by the following observations: first, in many domains, the number of tokens used by existing (or future) tokenizers may already be quite large, i.e., at least in the thousands \cite{sennrich2016neural,yu2022vectorquantized}; second, in practice one could use various heuristics to increase the number of tokens, such as by grouping together consecutive sequences of tokens to create larger alphabets of ``mega-tokens''. Finally, again invoking the language of coding theory, a common approach to constructing good codes is via \emph{concatenation}, which combine an \emph{outer code} with large alphabet together with an \emph{inner code} that encodes individual symbols of the large alphabet using a smaller alphabet. \cref{thm:main-informal} (as well as \cref{thm:prc-insdel}, discussed below) could play the role of the outer code in such a concatenation strategy for watermarking schemes; finding the analogue of an inner code to decrease the alphabet size represents an important direction for future work.

\paragraph{Roadmap of techniques.} The proof of \cref{thm:main-informal} proceeds via constructing new \emph{pseudorandom codes (PRCs)}, which are  cryptographic objects originally introduced by \cite{christ2024pseudorandom} to derive watermarking schemes robust to i.i.d.~random (i.e., non-adversarial) subsitutions and deletions. Roughly speaking, a pseuorandom code is an error-correcting code equipped with a secret key used for encoding and decoding, which looks random to any polynomial-time algorithm that does not hold a secret key. We establish the following ingredients pertaining to PRCs:
\neurips{\begin{itemize}[leftmargin=.4cm]}
\arxiv{\begin{itemize}}
\item First, we design a a PRC over the binary alphabet which is robust to a constant fraction of adversarial substitutions, under \cref{asm:local-weak-prf}  (\cref{thm:prf-prc}; see \cref{sec:sk-main}). This assumption is, qualitatively speaking, \emph{weaker} than the cryptographic assumptions in \cite{christ2024pseudorandom}.
\item Next, we give a generic reduction which, given any PRC over the binary alphabet robust to substitutions, yields a PRC over a polynomial-sized alphabet robust to any constant fraction of substitutions, insertions, and deletions (\cref{thm:prc-insdel}; see \cref{sec:sk-insdel}). A central idea in this latter PRC is to interpret symbols of the larger alphabet as \emph{indices} into codewords of the former PRC; hence, we call it an \emph{indexing PRC}.
\item Finally, we establish a generic reduction which converts any PRC over a larger alphabet robust to substitutions, insertions, and deletions to a watermarking scheme with analogous robustness properties (\cref{thm:wm-from-prc}; see \cref{sec:prc-wm}). 
\end{itemize}
\cref{thm:main-informal} follows by composing the components above. These components have the advantage of being modular in nature, meaning that one could, for instance, plug in the substitution PRCs of \cite{christ2024pseudorandom} in place of the first item to obtain a guarantee analogous to \cref{thm:main-informal} but under the (qualitatively stronger) cryptographic assumptions of \cite{christ2024pseudorandom}. 

In order to handle deletions, \cite{christ2024pseudorandom} uses a type of PRC they call a \emph{majority code}. This code is only robust when the substitutions and deletions are made in an i.i.d.~manner. In their reduction that converts a PRC to a watermarking scheme, the language model may induce certain errors on the PRC codewords. Since these errors may not be correctable by the majority code unless they are assumed to be i.i.d., \cite{christ2024pseudorandom} need to assume that the language model is equivalent to a BSC. Our use of indexing PRCs avoids this significant shortcoming.

\arxiv{\neurips{\section{Additional related work}}
\arxiv{\subsection{Additional related work}}
\label{sec:related-main}
Watermarking has a long history, dating back to the work of \cite{topkara2005natural,atallah2001natural,atallah2002natural}: generally speaking, these early works considered procedures which embed a watermark into a \emph{given} text, by \emph{altering} the text in some subtle way. 
Spurred by the remarkable abilities of generative models, in particular large language models (LLMs), there has been a recent resurgence of interest in watermarking. In contrast to the classical works, most recent works (including this paper) aim to embed the watermark in a way that does not significantly alter the \emph{distribution} of the generated content. This task is enabled by the autoregressive nature of LLMs, which generate each successive token using some fresh randomness.

Recent works \cite{zhao2024provable,aaronson2022watermarking,kirchenbauer2023watermark} construct watermarking schemes by pereferentially generating certain tokens at each step of generation. For instance, \cite{zhao2024provable} partitions tokens into two equal-sized sets, a ``green set'' and a ``red set''. It slightly increases the probabilities of green tokens while slightly decreasing the probabilities of red tokens. A watermark is then detected if the proportion of green tokens in a text segment is noticeably higher than $\frac 12$. The works of \cite{aaronson2022watermarking,kirchenbauer2023watermark} build on this protocol by generating the green and red sets at each position using a pseudorandom function seeded with a window of some number $k$ of previous tokens. These schemes all introduce some degree of noticeable bias to the watermarked model's output, by increasing the probability of certain $k$-grams.

\paragraph{Distortion-freeness vs undetectability.} \cite{kuditipudi2023robust} introduced a notion called \emph{distortion-freeness}, which posits that a \emph{single text sample} from the watermarked model is distributed identically to a single sample from the true model. \cite{kuditipudi2023robust} construct watermarking schemes satisfying distortion-freeness, with nontrivial edit-robustness guarantees: their watermarks have robustness to a constant fraction of substitutions (Lemma 2.5 within), but only to a slightly sub-constant fraction of edits when insertions and deletions are also allowed (Lemma 2.6 within).\footnote{In particular, in order for \cite[Lemma 2.6]{kuditipudi2023robust} to be nonvacuous, the parameter $\gamma$ must be chosen so that $\gamma > \log k$, which necessitates $\ep < 1/\gamma < 1/\log k$, and $k$ must grow faster than a constant for the probability in \cite[Lemma 2.6]{kuditipudi2023robust} to decay to $0$.} \cite{hu2024unbiased,wu2023dipmark} also developed a similar approach to \cite{kuditipudi2023robust}.

\cite{christ2023undetectable} defined the stronger notion of \emph{undetectability} (the focus of the present work), which requires that no computationally efficient algorithm which makes \emph{any number} of queries to the language model can distinguish between the watermarked model and the original one. %
\cite{christ2023undetectable} constructed the first provably undetectable watermark by using a similar technique to \cite{aaronson2022watermarking,kirchenbauer2023watermark} with the modification that they dynamically adjust the parameter $k$ denoting the length of text used to seed the pseudorandom function. This dynamic adjustment ensures that the sequence of $k$ tokens has sufficiently high entropy, which ensures undectability.%
\footnote{Notice that undetectability is not strictly speaking stronger than distortion-freeness, since an undetectable watermarking scheme can perturb the distribution of even a single text sample output by the model, though only in a way that can not be detected by any computationally efficient algorithm. Since in practice, any downstream applications of language model outputs will proceed via computationally efficient algorithms, we view undetectability as stronger than distortion-freeness ``for all intents and purposes''.} Undetectable watermarking schemes were also constructed by \cite{zamir2024excuse}, which focused specifically on the application to \emph{steganography} (see also \cite{christ2024pseudorandom} for robust steganography schemes), and \cite{fairoze2023publicly}, which concentrated on schemes with \emph{public attribution}. 

It is natural to wonder: \emph{why should one should prefer undetectable watermarking schemes over distortion-free ones?} The simplest reason is that, as a requirement on generative models, distortion-freeness is very weak, as illustrated in the below example.
\begin{example}[Distortion-freeness]
  \label{ex:df-fail}
Consider the alphabet $\Sigma = \{0,1\}$ and suppose that $\Modelo$ is the \emph{uniformly random} distribution on outputs $\tok \in \hc[n]$, for some $n \in \BN$. Consider the watermarking scheme which draws a key $\sk \sim \Unif(\hc[n])$ and at each call to $\Wat(\sk)$, simply returns $\sk$. This scheme is distortion-free since the distribution of $\sk$ is uniform, but has the property that once $\Wat$ is called once, the output of all future calls is determined (and so it is clearly not undetectable).
\end{example}
\cref{ex:df-fail} indicates that distortion-free watermarks may suffer from insufficient diversity in model outputs: in \cref{ex:df-fail} there is no variation whatsoever in the outputs. 
The distortion-free schemes of \cite{kuditipudi2023robust} are only slightly better on this front: they sample a long random key $\sk$ and then generate a sequence of text from $\Model$ in a way that aims to be close to a \emph{random shift} of $\sk$. Having fixed the key $\sk$, the output text from the scheme of \cite{kuditipudi2023robust} is a deterministic function of the shift, meaning that at most $\len(\sk)$ distinct sequences of text can be generated altogether. Moreover, by the birthday paradox, one would expect to see a repeat answer after only $\sqrt{\len(\sk)}$ calls. The effective number of distinct samples of the distribution of output text of \cite{kuditipudi2023robust} may be even smaller than ${\len(\sk)}$, if different shifts of $\sk$ lead to similar text sequences (as they will for, e.g., the uniform $\Model$ in \cref{ex:df-fail}). 

Beyond avoiding the issue of insufficient diversity, the stronger guarantees afforded by undetectability offer a more compelling argument for adopting watermarking in the first place. Watermarking is just one of many technologies that can be built into AI models to achieve better alignment with human values. Even if each of these degrades performance in minor but non-negligible ways, the aggregate impact of all of these additions could potentially potentially be substantial. The guarantee of undetectability, namely that \emph{no matter what efficient downstream algorithms are applied to the model's output, the result will be essentially the same}, means that the addition of a watermark can only be blamed for a negligible amount of the overall performance degradation. %

\paragraph{Additional work on watermarking.} \cite{huang2024optimal} formulate a variant of the watermarking problem which is purely statistical in nature, and characterize the optimal rate of watermarking. \cite{gu2024on} show that watermarking schemes can be \emph{learned}, using a student-teacher framework.  \cite{kirchenbauer2023watermark,christ2024pseudorandom,pang2024attacking,zhang2023watermarks} develop various attacks on watermarking schemes. Generally speaking, these attacks are relatively expensive in terms of calls to the language model (as in \cite{christ2024pseudorandom,pang2024attacking}) or an additional ``quality oracle'' (as in \cite{zhang2023watermarks}). Moreover, a sufficiently motivated adversary could simply train their own (unwatermarked) AI model and generate content from it. Therefore, we view robust watermarking as targeted more at ``honest'' or ``lazy'' adversaries who are making edits to either improve the quality of content without explicitly trying to remove a watermark (e.g., for AI-generated news articles) or have very few resources to remove the watermark (e.g., a student asking an LLM to write their essay right before the deadline).

Finally, we remark that there is a sizeable body of work focusing on empirical approaches to watermarking (e.g., \cite{liu2024a,liu2024an,liu2024adaptive,chen2024watme,kirchenbauer2024reliability}).

\paragraph{Error-correcting codes for insertions and deletions.} A recent line of work (e.g., \cite{guruswami2016efficiently,bukh2017improved,guruswami2017deletion,haeupler2021synchronization,haeupler2021synchronizationb}) has focused on developing error-correcting codes for insertions and deletions. We remark that the idea of indexing is implicit in some of this work. One focus of these papers has been on obtaining codes which can correct a constant fraction of insertions and deletions with smaller (e.g., constant-size) alphabets. A common technique for this task is that of of \emph{synchronization strings} \cite{haeupler2021synchronization,haeupler2021synchronizationb}: it proceeds by increasing the alphabet size by a constant factor and attaching to each message character an ``auxiliary character'' coming from a so-called synchronization string, which has certain properties that make it easier to align a noisy codeword with the clean codeword. Unfortunately, the same synchronization string must be used with each call to the encoding algorithm, which makes it unclear how to construct such codes that are pseudorandom.

}

\section{Preliminaries}
\label{sec:prelim}

\arxiv{\subsection{Pseudorandom codes}}
We begin with the definition of a \emph{pseudorandom code (PRC)}, as introduced in \cite{christ2024pseudorandom}. A pseudorandom code is defined over an \emph{alphabet} $\Sigma$, which is simply a finite set. We denote the set of all finite-length strings over $\Sigma$ by $\Sigma^\st$. As is typical when defining cryptographic primitives, our pseudorandom codes depend on a \emph{security parameter} $\lambda \in \BN$; as $\lambda$ is increased, the amount of security afforded by the PRC also increases. %
A function $f : \BN \to \BR_{\geq 0}$ is called \emph{negligible} if for any $C > 0$, there exists $\lambda_0$ so that $f(\lambda) \leq \lambda^{-C}$ for all $\lambda > \lambda_0$. We use $\negl(\lambda)$ to denote a negligible function, whose precise value can change from line to line. 

Given an alphabet $\Sigma$, a \emph{channel} is a mapping $\ME$ which associates to any $x \in \Sigma^\st$ a distribution $\ME(x)$ over $\Sigma^\st$. With slight abuse of notation, we will often let $\ME(x)$ denote a random variable $y$ drawn from $\ME(x)$: for instance, when we write a statement of the form ``with probability $1-\negl(n)$, $\dham(x, \ME(x)) \leq pn$'', we mean that $\dham(x, y) \leq pn$ with probability $1-\negl(n)$ over $y \sim \ME(x)$. 

We primarily focus on \emph{secret key} PRCs for simplicitity; our arguments (in particular, those regarding edit-robust PRCs) apply identically to the case of public-key PRCs \cite[Definition 2]{christ2024pseudorandom}, though for simplicity we focus solely on secret-key PRCs. A secret-key PRC may be viewed as a variant of a secret-key encryption scheme in which the ciphertext enjoys certain robustness properties to adversarial perturbations. Formally, a PRC is specified by functions $\Key, \Enc, \Dec$, which behave as follows: $\Key$ outputs a secret key, $\Enc$ uses the secret key to ``encrypt'' a message, and $\Dec$ uses the secret key to ``decode'' a string. If the string passed to $\Dec$ is the output of $\Enc$, perhaps with a bounded number of adversarial perturbations, then $\Dec$ should output the message originally passed to $\Enc$ (\emph{robustness}). Outputs of $\Enc$ should also look random to polynomial-time algorithms (\emph{undetectability}), and $\Dec$ should almost always fail when given any fixed string (i.e., not an output of $\Enc$) as input. These definitions are formalized below:
\begin{definition}[Secret-key Pseudorandom code (PRC)]
  \label{def:sk-prc}
  Let $\lambda \in \BN$ denote a security parameter, and suppose that for each $\lambda \in \BN$ an associated alphabet $\Sigma(\lambda)$ is given. Moreover, suppose that for each $\lambda$, a collection $\mathscr{E}(\lambda)$ of channels $\ME : \Sigma(\lambda)^\st \to \Delta(\Sigma(\lambda)^\st)$ is given. 
  A \emph{secret-key pseudorandom code (PRC)} with robustness to $\SE$ is a triple of probabilistic polynomial-time algorithms $(\Key, \Enc, \Dec)$ satisfying the following:
  \neurips{\begin{itemize}[leftmargin=.4cm]}
\arxiv{  \begin{itemize}}
  \item For some functions $n,k,\lsk : \BN \to \BN$, for all $\lambda \in \BN$, we have $\Key(1^\lambda) \in  \{0,1\}^{\lsk(\lambda)} $, $\Enc : \{1^\lambda\} \times \{0,1\}^{\lsk(\lambda)} \times \Sigma^{k(\lambda)} \to \Sigma^{n(\lambda)}$, and $\Dec : \{1^\lambda\} \times \{0,1\}^{\lsk(\lambda)} \times \Sigma^\st \to \Sigma^{k(\lambda)} \cup \{ \perp\}$.
  \item  For any $\lambda \in \BN$ and message $\msg \in \Sigma^{k(\lambda)}$, the code is \emph{robust} to any channel $\ME \in \SE(\lambda)$, which means that:
    \begin{align}
\Pr_{\sk \gets \Key(1^\lambda)} \left( \Dec(1^\lambda, \sk, \ME(x)) = \msg \mid x \gets \Enc(1^\lambda, \sk, \msg)\right) \geq 1-\negl(\lambda)\nonumber.
    \end{align}
    Moreover, for any fixed $y \in \Sigma^\st$, the code is \emph{sound}, which means that:
    \begin{align}
\Pr_{\sk \gets \Key(1^\lambda)} \left( \Dec(1^\lambda, \sk, y) = \perp \right) \geq 1-\negl(\lambda)\nonumber.
    \end{align}
  \item The code is \emph{undetectable} which means that: for any $\lambda \in \BN$ for any probabilistic polynomial-time adversary $\Adv$,
    \begin{align}
\left| \Pr_{\sk \gets \Key(1^\lambda)} \left( \Adv^{\Enc(1^\lambda, \sk, \cdot)} (1^\lambda) = 1 \right) - \Pr_{\sk \gets \Key(1^\lambda)} \left( \Adv^\MU(1^\lambda) = 1 \right) \right| \leq \negl(\lambda)\label{eq:undetectability},
    \end{align}
    where $\MU$ denotes an oracle which responds with a freshly drawn uniformly random string in $\Sigma^{n(\lambda)}$ on each call (to any input). 
  \end{itemize}

\end{definition}
The \emph{block length} of the PRC is defined to be $n(\lambda)$. For all of our pseudorandom codes, we will have $n(\lambda), k(\lambda), \lsk(\lambda) \leq \poly(\lambda)$. Moreover, we take as a convention that $n(\lambda) \geq \lambda$ for all $\lambda$ (this may be ensured without loss of generality by rescaling $n(\lambda)$). 
Our focus will primarily be on \emph{zero-bit PRCs}, which are particularly useful for watermarking language model outputs. 
\begin{definition}[Zero-bit PRC]
  \label{def:zero-bit-prc}
   A \emph{zero-bit} PRC is one for which $k(\lambda) = 0$ for all $\lambda$, i.e., the only possible message is $\msg = \emptyset$.%
\end{definition}

\cite[Section 6]{christ2024pseudorandom} shows a generic reduction that converts any zero-bit PRC into a general PRC with constant rate, meaning that $k(\lambda)/n(\lambda) = \Omega(1)$. We remark that the same reduction can be applied to our PRCs with edit robustness, though we do not pursue this direction further in this paper.  

\arxiv{Below we define the channels we consider in this paper. }
\begin{definition}[Substitution-bounded]
  \label{def:sub-bounded}
  For $p \in (0,1)$, a channel $\ME$ over alphabet $\Sigma$ is \emph{$p$-substitution-bounded} if for any
  $n \in \BN$, $y \in \Sigma^n$, for $z \sim \ME(y)$, $\dham(y, z) \leq pn$ holds almost surely.
\end{definition}
\begin{definition}[\Edit-bounded channel]
  \label{def:sid}
  Fix an alphabet $\Sigma$ together with $p \in (0,1)$. %
  A channel $\ME$ over $\Sigma$ is defined to be \emph{$p$-\edit-bounded} if for any $x \in \Sigma^\st$ with $n := |x|$, $y \sim \ME(x)$ may be obtained from $x$ by applying a total of at most $pn$ substitutions, insertions, and deletions, almost surely. %
  Moreover, there exists a probabilistic polynomial-time algorithm which, given $x$, outputs a sample $y \sim \ME(x)$.\footnote{The computational efficiency of $\ME$ is only needed for our watermarking schemes. It is not necessary to obtain edit-robust PRCs; see \cref{rem:channel-eff}}
\end{definition}
\arxiv{\neurips{\section{Additional preliminaries}}
\neurips{\label{sec:additional-prelims}}
\subsection{Watermarking language models}
\arxiv{\label{sec:additional-prelims}}
\noah{post neurips: figure?} A language model $\Model$ over alphabet $\Sigma$ which, for any positive integer $i$, takes as input a sequence of tokens $\tok_{1:i-1} = (\tok_1, \ldots, \tok_{i-1})$ already output by the model and produces a distribution $\Model(\tok_i = \cdot \mid \tok_{1:i-1}) \in \Delta(\Sigma)$, representing the distribution of the next token, conditioned on $\tok_{1:i-1}$. We assume that $\Model$ is computationally efficient, meaning that it runs in time $\poly(i, \log|\Sigma|)$. We also assume that there is some token $\term \in \Sigma$ representing the end of the model's response. As a matter of convention, we assume that all subsequent tokens after a $\term$ token are also $\term$ (i.e., $\Model(\term \mid \tok_{1:i-1}) = 1$ if some $\tok_j$ for $j < i$ is $\term$). Given a language model $\Model$, we introduce the notation $\Modelo$: %
given any sequence $\tok_{1:i-1} \in \Sigma^{i-1}$ (which may be the empty sequence) and $m > 0$, the distribution $\Modelo(\tok_{i:i+m-1} = \cdot \mid \tok_{1:i-1}) \in \Sigma^m$ is defined as follows: we sequentially generate $\tok_i, \ldots, \tok_{i+m-1}$, where for each $0 \leq j < m$, $\tok_j$ is distributed as $\Model(\tok_j = \cdot \mid \tok_{1:j-1})$. In the case $i=1$ (i.e., $\tok_{1:i-1} = \emptyset$), then we denote an output of $\Model$ as $\tok \sim \Modelo$.

An intuitive requirement on $\Model$ that allows for watermarking is that outputs of $\Model$ must be \emph{high-entropy}; indeed, a model that outputs a deterministic sequence of text is impossible to watermark, since the only distribution which is indistinguishable from its output is that same deterministic sequence of text. As we aim to measure the entropy of individual model outputs, we use the following notion of \emph{empirical entropy}, following \cite{christ2023undetectable}:
\begin{definition}[Empirical entropy]
  \label{def:emp-entropy}
  Given a language model $\Model$, a sequence $\tok \in \Sigma^L$, and $i,j \in \tok$, we define the \emph{empirical entropy of $\Model$ for $\tok$ on subsequence $[i,j]$} to be:
  \begin{align}
\Hemp{[i:j]}(\tok, \Model) := - \log \Modelo(\tok_{i:j} \mid \tok_{1:i-1})\nonumber.
  \end{align}
  By definition of $\Modelo$, we have $-\log \Modelo(\tok_{i:j} \mid \tok_{1:i-1}) = -\log \Pr_{\bar{\tok} \sim \Modelo} (\bar \tok_{i:j} = \tok_{i:j} \mid \bar \tok_{1:i-1} = \tok_{1:i-1})$. When $\Model$ is clear from context, we typically drop $\Model$ from the notation, so that $\Hemp{[i:j]}(\tok) := \Hemp{[i:j]}(\tok, \Model)$, and moreover write $\Hemp{i}(\tok) := \Hemp{i}(\tok, \Model)$, $\Hemp{[i:j)}(\tok) := \Hemp{[i:j-1]}(\tok)$. Finally, note that from the chain rule of probability, $\Hemp{[i:j]}(\tok) = \sum_{a=i}^j \Hemp{a}(\tok)$. 
\end{definition}

Our main goal in this paper is to construct a \emph{watermarking scheme} for $\Model$, which is a tuple $\MW = (\Setup, \Wat, \Detect)$ of probabilistic polynomial-time algorithms with the following semantics:
\begin{itemize}
\item $\Setup(1^\lambda)$ takes as input a security parameter $\lambda \in \BN$ and outputs a secret key $\sk$.
\item $\Wat(1^\lambda, \sk)$ takes as input $\lambda, \sk$ and outputs a sequence $\tok = \tok_{1:L} \in \Sigma^L$, for some $L \in \BN$.%
\item $\Detect(1^\lambda, \sk, \tok)$ takes as input a sequence $\tok \in \Sigma^\ell$ for some $\ell \in \BN$ and outputs either $\True$ or $\False$, denoting whether $\tok$ is detected as being watermarked. 
\end{itemize}
We next define the security properties we desire in our watermarks, in terms of a security parameter $\lambda \in \BN$. Some of our results operate in the ``large-alphabet'' setting, meaning that the size of the alphabet for the language model can depend (polynomially) on $\lambda$. Formally, we consider a language model family indexed by $\lambda$, $(\Model(\lambda))_{\lambda \in \BN}$, where the alphabet for $\Model(\lambda)$ is denoted by $\Sigma(\lambda)$. Our watermarking procedure  should produce a string in $\Sigma(\lambda)^\st$ which is computationally indistinguishable from an output of $\Model(\lambda)$. In order for us to establish a computational separation between the process of outputting watermarked text and an adversary running in time given by an arbitrarily large polynomial in $\lambda$, we assume that the length of an output $\tok \sim \Modelo(\lambda)$ is bounded by $\Lmax(\lambda)$, where $\Lmax(\lambda)$ is some function growing as $\poly(\lambda)$.\footnote{This requirement appears to be implicit in \cite{christ2024pseudorandom}, though it is not explicitly stated therein.} When the value of $\lambda$ is clear from context, we drop the argument $\lambda$ and simply write $\Model, \Sigma$. 
\begin{definition}[Soundness, Undetectability, $\SE$-Robustness]
  \label{def:watermarking-sur}
  Consider a language model family $\Model(\lambda)$ together with a watermarking scheme $\MW = (\Setup, \Wat, \Detect)$. We define the following properties of $\MW$:
  \begin{itemize}
  \item $\MW$ is \emph{sound} if for all $\lambda \in \BN$ and $\tok \in \Sigma^{\leq \Lmax(\lambda)}$,
    \begin{align}
\Pr_{\sk \gets \Setup(1^\lambda)} \left( \Detect(1^\lambda, \sk, \tok) = \True \right) \leq & \negl(\lambda)\nonumber.
    \end{align}
  \item $\MW$ is \emph{undetectable} if for all $\lambda \in \BN$ and any probabilistic polynomial time algorithm $\Dist$,
    \begin{align}
\left| \Pr \left( \Dist^{\Modelo}(1^\lambda) = 1\right) - \Pr\left( \Dist^{\Wat(1^\lambda, \sk)}(1^\lambda) = 1 \right) \right| \leq \negl(\lambda)\nonumber,
    \end{align}
    where $\Dist^\MO$ means that $\Dist$ can make repeated calls to $\MO$, which generate a sample from the corresponding distribution (either $\tok \sim \Modelo$ or $\tok \sim \Wat(1^\lambda, \sk)$).
  \item Fix some family of channels $(\SE(\lambda))_{\lambda \in \BN}$, where each $\SE(\lambda)$ is a collection of channels $\ME : \Sigma(\lambda)^\st \to \Delta(\Sigma(\lambda)^\st)$, and a family of functions $\beta_\lambda : \BN \to \BN$. Then $\MW$ is defined to be \emph{$\beta$-substring robust} to the family $(\SE(\lambda))_\lambda$ if for each $\lambda \in \BN$, and each channel $\ME \in \SE(\lambda)$, 
    \begin{align}
\hspace{-2cm} \Pr_{\substack{\sk \gets \Setup(1^\lambda) \\ \tok \gets \Wat(1^\lambda, \sk), \tok' \gets \ME(\tok)}} \left( \exists i,j \in [\Lmax(\lambda)] \mbox{ s.t. } \Detect(1^\lambda, \sk, \tok') = \False \mbox{ and }\Hemp{[i:j)}(\tok) \geq  \beta_\lambda(j-i) \cdot \ln |\Sigma(\lambda)| \right) \leq \negl(\lambda)\nonumber.
    \end{align}
    Above, we use the convention that $\tok$ is to be interpreted as an element of $\Sigma^{\Lmax(\lambda)}$ by padding it with instances of the terminal token $\term$. Thus, substring robustness requires that if we choose any substring of length $\ell$ from the output of $\Wat$ and pass it through an channel $\ME$, then the output of the channel is still detected as watermarked with high probability. 
  \end{itemize}
\end{definition}

\subsection{Notation}
Given $n \in \BN$ together with a mapping $\pi : [n] \to [n]$ and a string $x = (x_1, \ldots, x_n) \in \Sigma^n$, we write $\pi \circ x := (x_{\pi(1)}, \ldots, x_{\pi(n)}) \in \Sigma^n$. Given a mapping $\phi : \MX \to \MY$ for (finite) sets $\MX, \MY$ and a distribution $P \in \Delta(\MX)$, we let $\phi \circ P \in \Delta(\MY)$ be the push-forward distribution for $\phi$, i.e., $(\phi \circ P)(y) = P(\phi^{-1}(y))$.

Given a mapping $\psi : \MX \to \MY$ for finite sets $\MX, \MY$, we let $\psi(\MX) := \{ \psi(x) \ : \ x \in \MX \} \subset \MY$, and for any $\MY' \subset \MY$, $\psi^{-1}(\MY') := \{ x \in \MX \ : \ \psi(x) \in \MY' \}$. Finally, for a tuple $x = (x_1, \ldots, x_n) \in \MX^n$, we let $\psi(x) := (\psi(x_1), \ldots, \psi(x_n)) \in \MY^n$. 

 For $y \in \Sigma^N$ and $J = (j_1, \ldots, j_t)$, we let $y_J \in \Sigma^t$ denote the vector $y_J := (y_{j_1}, \ldots, y_{j_t})$. Given a set $\MS$, $m \in \BN$, and a string $y \in \MS^m$, we define $\Unique(y) := \{y_i \ : \ i \in [m] \} \subset \MS$, i.e., $\Unique(y)$ is the \emph{set} of distinct elements of $y$.

}

\section{Secret-key substitution PRCs from weaker assumptions}
\label{sec:sk-main}
In this section, we discuss a new construction of binary PRCs for substitution channels. Though such PRCs were also obtained by \cite{christ2024pseudorandom}, the codes in \cite{christ2024pseudorandom} relied on relatively strong average-case hardness assumptions, in the sense that they imply the existence of public-key cryptography (i.e., in the context of Impagliazzo's Five Worlds \cite{impagliazzo1995personal}, they imply primitives in ``Cryptomania''). %
In contrast, our construction relies only on the hardness of the existence of a family of pseudorandom functions that enjoys a certain locality property; such an assumption is generally believed to be weaker than the ones in \cite{christ2024pseudorandom}, in the sense that it is only known to yield cryptographic primitives in ``Minicrypt'', though we are not aware of a formal separation.

We first recall the definition of pseudorandom function (PRF) families, which are PRF families for which the adversary can only query the pseudorandom function at a uniformly random input $x$. 
\begin{definition}[Weak PRF family]
  \label{def:prf-weak}
  Fix functions $\ell(\lambda), n(\lambda) : \BN\to\BN$ of a security parameter $\lambda$, and a collection of functions $\{ F_s : \{0,1\}^{n(\lambda)}  \to \{0,1 \} \}$, indexed by $s \in \{0,1\}^{\ell(\lambda)}$, for $\lambda \in \BN$. We say that the collection $\{ F_s \}_s$ is a \emph{weak pseudorandom function family (weak PRF)} if for every probabilistic polynomial-time algorithm $\Adv$ which outputs a single bit (i.e., 0 or 1), it holds that
  \begin{align}
\left| \E_{s \sim \{0,1\}^{\ell(\lambda)}}\left[ \tAdv^{F_{s}(\cdot)}(1^{n(\lambda)}) \right] - \E_\Funif\left[ \tAdv^\Funif(1^{n(\lambda)}) \right] \right| \leq \negl(\lambda)\label{eq:prf-game},
  \end{align}
  where $\tAdv^{G}$, for a mapping $G : \hc[n(\lambda)]\to\{0,1\}$ means that $\Adv$ can make calls to the function $G$, and for each call receives a tuple $(x, G)$, for $x \sim \Unif(\{0,1\}^n)$. (In particular, the tilde refers to the fact that $\Adv$ can only call $G(x)$ on a \emph{uniformly chosen} $x$.) 
  Moreover, $\Funif : \{0,1\}^n \to \{0,1\}$ denotes a uniformly random function. %
  We will often refer to the function $G$ that $\Adv$ can make queries to as the \emph{oracle} $\Adv$ has access to.  
Given $q \in [0, 1/2)$,  we say that $\{ F_s \}_s$ is a \emph{weak PRF family with noise level $q$} if \cref{eq:prf-game} holds where each call to $F_s(\cdot)$ by $\Adv$ returns $F_s(x) \oplus e$ where $x \sim \Unif(\{0,1\})^n$ and $e \sim \Ber(q)$. 
\end{definition}

Our new construction of PRCs is based off of \emph{local (weak) PRFs}, which are PRFs for which the output depends on a small number of input bits. 
\begin{definition}[Local function family]
  \label{def:local-prf}
Let $\tau \in \BN$. A family of functions $\{ F_s : \{0,1\}^n \to \{0,1\}\}$ indexed by $s$ is defined to be \emph{$\tau$-local} if for each $s$, $F_s(x)$ only depends on at most $\tau$ bits of $x$, i.e., for each $s$ there are distinct indices $j_1, \ldots, j_\tau \in [n]$ together with a function $G_{s} : \{0,1\}^\tau \to \{0,1\}$ so that $F_s(x) = G_{s}(x_{j_1}, \ldots, x_{j_\tau})$ for all $x \in \{0,1\}^n$. 
\end{definition}
Our main computational assumption is the existence of a weak PRF family which is $\tau$-local for $\tau$ of \emph{logarithmic} size:
\begin{assumption}[Local Weak PRFs]
  \label{asm:local-weak-prf}
For some functions $\ell(\lambda), n(\lambda), \tau(\lambda) : \BN \to \BN$ with $\ell(\lambda), n(\lambda) \leq \poly(\lambda)$ and $\tau(\lambda) \leq \log n(\lambda)$, there exists a weak PRF family $\{ F_s : \hc[n(\lambda)] \to \{0,1\} \}_{s \in \hc[\ell(\lambda)]}$ for some noise level $q < 1/2$ which is $\tau(\lambda)$-local, for each $\lambda \in \BN$. 
\end{assumption}
In \cref{sec:add-prelim}, we discuss how \cref{asm:local-weak-prf} follows from standard average-case hardness assumptions, notably the hardness of learning $\log(n)$-juntas over $\Unif(\hc[n])$. As specific examples, either hardness of the $\log(n)$-sparse noisy parity problem \cite{feldman2009agnostic,blum1994weakly,grigorescu2011noise,valiant2015finding} or hardness of weakly learning a particular family of functions presented in \cite{blum1994cryptographic} (see also \cite{blum2003learning}) implies \cref{asm:local-weak-prf}. 

\paragraph{The PRC construction.} Our construction of PRCs based on \cref{asm:local-weak-prf} is presented in \cref{alg:prf-prc}: given a function family $\MF$ satisfying \cref{asm:local-weak-prf} with noise level $q$ together with some $p < 1/2$ representing the maximum fraction of substitutions to correct, we construct $\PRFPRC[\MF, p, q] = (\Key, \Enc, \Dec)$ as follows. The construction depends on some parameters $N(\lambda), m(\lambda) \leq \poly(\lambda)$, specified in \cref{eq:choose-Nn}:
\neurips{\begin{itemize}[leftmargin=.4cm]}
\arxiv{\begin{itemize}}
\item $\Key(1^\lambda)$ chooses a function in $\MF$ (indexed by $s$) together with a uniformly random $z \sim \Unif(\hc[N(\lambda)])$ and a uniform permutation $\pi : [N(\lambda)] \to [N(\lambda)]$, and returns $\sk = (s,z,\pi)$. 
\item $\Enc(1^\lambda, (s,z,\pi), \emptyset)$ draws $m(\lambda)$ uniformly random elements of $\hc[n(\lambda)]$, $x_1, \ldots, x_{m(\lambda)}$, applies $F_s$ to each of them and flips the result with probability $q$ to obtain bits $(w_1, \ldots, w_{m(\lambda)})$, and perturbs the concatenation $((x_i, w_i))_{i \in [m(\lambda)]}$ according to $z, \pi$ as on \cref{line:return-zpi}.
\item $\Dec(1^\lambda, (s,z,\pi), y)$ first ``unperturbs'' $y$ to obtain a string $((x_i, w_i))_{i \in [m(\lambda)]}$ as in $\Enc$, and then outputs $\emptyset$ if $\sum_{j=1}^{m(\lambda)} \One{w_j = F_s(x_j)}$ is above a threshold; otherwise, it outputs $\perp$. 
\end{itemize}

\arxiv{\begin{algorithm}
  \caption{$\PRFPRC[\MF, p, q]$: Generic PRF-based PRC}
  \label{alg:prf-prc}
  \begin{algorithmic}[1]\onehalfspacing
    \Require Weak PRF family $\MF = (\MF_\lambda)_\lambda$, where $\MF_\lambda = \{ F_s : \{0,1\}^{n(\lambda)}  \to \{0,1\}\}$ indexed by $s \in \hc[\ell(\lambda)]$, parameters $p,q \in (0,1)$. %
    Functions $n(\lambda),\ell(\lambda), m(\lambda),N(\lambda): \BN \to  \BN$ satisfying $N(\lambda) \geq (n(\lambda)+1)m(\lambda)$, 

    \Function{$\Key$}{$1^\lambda$}
    \State Sample $s \sim \Unif(\{0,1\}^{\ell(\lambda)})$ and $z \sim \Unif(\{0,1\}^{N(\lambda)}$.
    \State Let $\pi : [N(\lambda)] \to [N(\lambda)]$ be a uniformly random permutation.
    \State Return $\sk := (s, z, \pi)$.
    \EndFunction

    \Function{$\Enc$}{$1^\lambda, \sk = (s,z,\pi), \emptyset$}
    \State \label{line:sample-xe} Sample $x_1, \ldots, x_{m(\lambda)} \sim \Unif(\{0,1\}^{n(\lambda)})$ and $e_1, \ldots, e_{m(\lambda)} \sim \Ber(q)$.
    \State Sample $r \sim \Unif(\{0,1\}^{N(\lambda) - (n(\lambda)+1)m(\lambda)})$.     
    \State Define $a := ((x_1, F_s(x_1) \oplus e_1), \ldots, (x_{m(\lambda)}, F_s(x_{m(\lambda)})\oplus e_{m(\lambda)}),r)$.
    \State \Return the string $\pi \circ (a \oplus z)$. %
    \label{line:return-zpi}
    \EndFunction

    \Function{$\Dec$}{$1^\lambda, \sk = (s,z,\pi), y$}
    \State Let $x_1, \ldots, x_{m(\lambda)} \in \{0,1\}^{n(\lambda)}$ and $w_1,\ldots, w_{m(\lambda)} \in \{0,1\}$ be defined by $ ((x_1, w_1), \ldots, (x_{m(\lambda)}, w_{m(\lambda)})) = ((\pi^{-1} \circ y) \oplus z)_{1:(n(\lambda)+1)m(\lambda)}$.\label{line:xi-wi}
    \State Let $W := \sum_{j=1}^{m(\lambda)} \One{w_j = F_s(x_j)}$.
    \State If $W > \frac{m(\lambda)}{2} + \ln(m(\lambda))\sqrt{m(\lambda)}$, \Return $\emptyset$; otherwise, \Return $\perp$. 
    \EndFunction
  \end{algorithmic}

\end{algorithm}

}

\begin{theorem}
  \label{thm:prf-prc}
Given $p, q < 1/2$ and a function family $\MF$ together with noise level $q$ satisfying \cref{asm:local-weak-prf}, then $\PRFPRC[\MF, p, q]$ (\cref{alg:prf-prc}) is a zero-bit binary-alphabet secret-key PRC (per \cref{def:zero-bit-prc}) with robustness to all $p$-bounded substitution channels. 
\end{theorem}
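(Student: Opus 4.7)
My plan is to establish undetectability, soundness, and robustness separately, with the bulk of the technical effort going into robustness.

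For undetectability, the natural approach is a hybrid argument. Fix any probabilistic polynomial-time distinguisher $\Adv$, and define a hybrid experiment in which every bit $F_s(x_j^{(i)}) \oplus e_j^{(i)}$ inserted by $\Enc$ on the $i$-th query is replaced by a fresh uniform random bit. A polynomial-time distinguisher between the real encoding oracle and this hybrid would directly violate the weak-PRF-at-noise-level-$q$ game in \cref{eq:prf-game}: one can simulate an encoding query by forwarding the pairs $(x, F_s(x) \oplus e)$ returned by the weak-PRF oracle into the appropriate slots of $a$ and then returning $\pi \circ (a \oplus z)$ using internally sampled $z, \pi$. In the hybrid, $a$ is uniform over $\{0,1\}^{(n+1)m}$, so together with the uniform filler $r$ the full preimage is uniform over $\{0,1\}^N$; hence $\pi \circ (a \oplus z)$ is uniform over $\{0,1\}^N$ regardless of $z$ and $\pi$, identically matching the oracle $\MU$.

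For soundness, fix any $y \in \{0,1\}^N$. Over random $z \sim \Unif(\{0,1\}^N)$, which is independent of $\pi$, $s$, and the fixed $y$, the string $(\pi^{-1} \circ y) \oplus z$ is uniformly distributed, so the decoded pairs $((x_j, w_j))_{j \in [m]}$ are uniform independent bits. Conditioning on $s$ and the $x_j$'s, each indicator $\One{w_j = F_s(x_j)}$ is an independent $\Ber(1/2)$ random variable, so $W$ is distributed as $\Bin(m, 1/2)$. A Hoeffding bound then gives $\Pr[W > m/2 + \ln(m)\sqrt{m}] \leq \exp(-2\ln^2 m) = \negl(\lambda)$ provided $m(\lambda)$ grows at least polylogarithmically.

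For robustness, the key structural idea is that the uniformly random permutation $\pi$ acts as a randomizer on the channel's error pattern, mapping it to a uniform pattern in the coordinate system of $a$. Formally, for $y \gets \Enc(1^\lambda, \sk, \emptyset)$ and $y' = \ME(y)$ with $\dham(y, y') \leq pN$, we have $(\pi^{-1} \circ y') \oplus z = a \oplus D$ where $D := \pi^{-1}(y \oplus y')$ has weight at most $pN$. Because $z$ is a one-time pad, $y$ is statistically uniform on $\{0,1\}^N$ marginally over $\sk$, so the channel's choice of flipped positions is independent of $\pi$, and over the posterior randomness of $\pi$ given $y$, $D$ is uniformly distributed among strings of weight $\dham(y,y')$. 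For each pair $j$, let $R_j \subset [N]$ denote the $\tau+1$ relevant positions. A hypergeometric calculation gives $\Pr[D|_{R_j} = 0] \geq (1-p)^{\tau+1}(1 - o(1))$, and on this event $\text{check}_j = \One{e_j = 0}$ is $\Ber(1-q)$. Writing $\text{check}_j = \One{e_j = k_w \oplus \delta_F}$ where $k_w$ indicates a flip on $w_j$ and $\delta_F := F_s(x_j') \oplus F_s(x_j)$, and exploiting the uniform randomness of $x_j$, $e_j$, and the random placement of flips within $R_j$, I aim to obtain a per-pair lower bound $\E[\text{check}_j] \geq 1/2 + \eta$ for some $\eta = \Omega((1-p)^\tau(1/2 - q))$, so that $\E[W] \geq m/2 + \eta m$. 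Choosing $m(\lambda)$ a sufficiently large polynomial (per \cref{eq:choose-Nn}), which is possible because $\tau \leq \log n \leq O(\log \lambda)$ forces $\eta \geq 1/\poly(\lambda)$, the expected surplus exceeds $\ln(m)\sqrt{m}$ by a polynomial factor. Concentration of $W$ around its mean follows by conditioning on the error set $D$ in $a$-coordinates: given $D$, the pair-level randomness is independent across $j$, so Hoeffding applies to the conditional sum, while the residual variability over the random permutation $\pi$ is handled by a McDiarmid-type bounded-differences argument.

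The main obstacle is establishing a sharp enough lower bound on $\E[\text{check}_j]$ in the event that some relevant bit of pair $j$ is hit by $D$; a naive per-pair worst-case estimate could permit $\text{check}_j = 0$, destroying the signal. The resolution exploits the joint randomness of $s$, $\pi$, and the encoding internals $(x_j, e_j)$, together with the locality of $F_s$, to argue that corrupted checks are on average no worse than symmetric $\Ber(1/2)$. Handling the mild dependencies among the $m$ checks induced by sharing a single permutation is the secondary technical hurdle, resolved by conditioning on $D$ and applying bounded-differences concentration.
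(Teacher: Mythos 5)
Your overall architecture matches the paper's: soundness from the one-time pad making $W \sim \Bin(m,1/2)$, undetectability by forwarding weak-PRF oracle answers into simulated encodings, and robustness by using $z$ and $\pi$ to randomize the adversary's error pattern into the coordinate system of $a$, then proving a per-pair advantage and concentrating. However, your robustness argument has a genuine gap at its central step. The claim that ``corrupted checks are on average no worse than symmetric $\Ber(1/2)$'' is false: take $F_s$ to be a $\tau$-sparse parity (a canonical instance of \cref{asm:local-weak-prf}). Under the i.i.d.\ flip model at rate $p'$ on the $t=\tau+1$ relevant coordinates, $\Pr[\text{agree}] = \tfrac12 + \tfrac12(1-2p')^{t}$, while $\Pr[\text{no relevant bit hit}] = (1-p')^{t}$; since $(1-2p')^t < (1-p')^t$, conditioning on the corrupted event gives $\Pr[\text{agree}\mid\text{corrupted}] < 1/2$, hence $\E[\mathrm{check}_j\mid\text{corrupted}] < 1/2$. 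Consequently your target advantage $\eta = \Omega((1-p)^\tau(1/2-q))$ is unattainable; the correct per-pair surplus is $\Theta((1-2q)(1-2p)^{\tau+1})$, obtained in the paper by applying the Fourier noise-sensitivity identity (\cref{cor:ns-poly}) to the \emph{combined} function $G_s'(z_{1:\tau},z_{t}) = G_s(z_{1:\tau})\oplus z_{t}$ over the full (unconditioned) distribution of flip patterns, rather than splitting into hit/not-hit cases. The weaker advantage still suffices since $(1-2p)^{\tau+1}\geq n^{-O(1)}$ and $m$ is a large enough polynomial, but your route to it does not close.

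A second, quantitative gap lies in the concentration step. Conditioning on the permuted error pattern $D$ and applying Hoeffding across the (then independent) pairs is fine, but the residual fluctuation of $\sum_j \E[\mathrm{check}_j\mid D]$ over the random permutation $\pi$, which you propose to control by McDiarmid, has the wrong scale: a transposition of $\pi$ can change this sum by $\Theta(1)$, so bounded-differences concentration over the symmetric group on $[N]$ gives deviations of order $\sqrt{N} = \sqrt{m}\,(n+1)$, which exceeds the signal $\eta m$ for the parameters of \cref{eq:choose-Nn} whenever $p$ is small. The paper instead applies Dobrushin's inequality (\cref{thm:dobrushin}) directly to the tuples $\Gamma_j = (x_j,e_j,w_j',x_j')$, bounding each pairwise influence by $(n+1)^2/(N-m(n+1))$ so that the Dobrushin coefficient is at most $1/2$ with $N = 3m(n+1)^2$; this yields deviations of order $\sqrt{m}$, which is what the argument needs. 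You would need either this tool or a variance/negative-association argument in place of worst-case bounded differences over $\pi$.
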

\neurips{The proof of \cref{thm:prf-prc} is given in \cref{sec:prfprc-proof}.}

\arxiv{\paragraph{Proof overview.}}
\arxiv{The proof of soundness \cref{lem:prfprc-sound} is straightforward; see \cref{sec:prfprc-proof} for details. The proof of  undetectability (\cref{lem:prfprc-undet}) of $\PRFPRC[\MF, p, q]$ uses \cref{asm:local-weak-prf}: if undetectability failed to hold as witnessed by some adversary $\Adv$, then we could construct an adversary which violates pseudorandomness of $\MF$ (per \cref{eq:prf-game}) by simulating $\Adv$, using computational efficiency of $\Enc$ together with the PRF oracle to implement the oracle calls for $\Adv$.

The bulk of the proof lies in establishing robustness to $p$-bounded substitution channels (\cref{lem:prfprc-robust}). Let us first make sure that $\Dec(1^\lambda, \sk, y)$ returns $\emptyset$ with high probability if its input $y$ is simply an output of $\Enc(1^\lambda, \sk, \emptyset)$ (i.e., if the channel $\ME$ does nothing to its input). Indeed, the tuples $(x_i, w_i)$ (for $i \in [m(\lambda)]$) computed in \cref{line:xi-wi} of $\Dec$ are exactly equal to $(x_i, F_s(x_i) \oplus e_i)$. As long as the noise rate $q$ of $e_i$ is bounded away from $1/2$, the statistic $W$ computed in $\Dec$ will be larger than $m(\lambda)/2 +\ln(m(\lambda))\sqrt{m(\lambda)}$ with all but negligible probability, as desired.

Now what if $y$ is drawn from $ \ME(y^0)$, for $y^0 \gets \Enc(1^\lambda, \sk, \emptyset)$, for some $p$-bounded substitution channel $\ME$? Then the tuples computed in \cref{line:xi-wi} of $\Dec$ may be written as $(x_i', w_i')$ (for $i \in [m(\lambda)]$), where $(x_i', w_i')$ are ``perturbed'' versions of $(x_i, w_i)$, where $x_i \sim \Unif(\hc[n(\lambda)]), w_i = F_s(x_i) \oplus e_i$ are as computed in $\Enc$. Although the channel $\ME$ may introduce substitutions in an adversarial fashion (i.e., it may not introduce substitutions at each position independently with probability $p$), by virtue of the fact that the output string of $\Enc$ is $\pi \circ (a \oplus z)$ for uniformly random $z \in \hc[N(\lambda)], \pi : [N(\lambda)] \to [N(\lambda)]$, we can show that $(x_i', w_i')$ is close to being distributed by perturbing each bit of $(x_i, w_i)$ independently with some probability $p' \leq p$. The proof of this fact uses an approximation of a Hypergeometric distribution with a Binomial distribution: roughly speaking, the permutation $\pi$ allows us to ``pick out'', for each $i$, a uniformly random subset of $n(\lambda)+1$ positions corresponding to $(x_i', w_i')$, out of $N(\lambda)$ total positions of the string $y$. Of these, $p' \cdot N(\lambda) \leq p \cdot N(\lambda)$ are changed by $\ME$, and as long as $N(\lambda) \gg n(\lambda)$, applying $\ME$ thus has nearly the same effect as changing each of the $n(\lambda)+1$ positions of $(x_i, w_i)$ independently with probability $p'$. 

Given the above, we then use the fact that if $x_i \sim \Unif(\hc[n(\lambda)])$ and $x_i'$ is formed by flipping each bit of $x_i$ with probability $p' \leq p$, then $\Pr[F_s(x_i) = F_s(x_i')] \geq \frac 12 + (1-2p)^{\tau} \geq \frac 12 + (1-2p)^{\log n(\lambda)}$. This is a basic consequence of the Fourier analytic expression for the noise sensitivity of Boolean functions (\cref{lem:ns-fourier}), together with the fact that $F_s(\cdot)$ is $\tau$-local for some $\tau \leq \log(n(\lambda))$. Note that $(1-2p)^{\log n(\lambda)} \geq n(\lambda)^{-\Omega_p(1)}$; thus, as long as $m(\lambda)$ is a large enough polynomial in $n(\lambda)$, we can show that the statistic $ \sum_{i=1}^{m(\lambda)} \One{w_i' = F_s(x_i')}$ will be bounded away from $m(\lambda)/2$, which implies that $\Dec$ returns $\emptyset$ with high probability.

One complication of the above argument comes from the fact that, due to the adversarial nature of the channel $\ME$, the random variables $(x_i, e_i, x_i', w_i')$ may not be independent across different $i$. Thus, to ensure concentration of the sum $ \sum_{i=1}^{m(\lambda)} \One{w_i' = F_s(x_i')}$ to its mean, we use Dobrushin's concentration inequality for weakly dependent data (\cref{thm:dobrushin}) and bound the dependence of these random variables for different $i$ using the fact that $N(\lambda)$ is  sufficiently large. %

}

\section{From substitution PRCs to edit-robust PRCs}
\label{sec:sk-insdel}
In this section, we discuss our construction of PRCs which are robust to \edit-bounded channels. %
To do so, we reduce to PRCs robust to substitution-bounded channels. Suppose we are given a PRC $\PRCS$ with block length $n(\lambda)$ over the binary alphabet which is robust to any $(1/2-p_0)$-bounded substitution channel, for $p_0 \in (0, 1/2)$. Given a parameter $\rho \geq 1$, we construct an \emph{indexing PRC}, $\PRCI[\PRCS, \rho]$ (in \cref{alg:prf-di}), which is robust to any $p$-\edit-bounded channel, where the parameter $p$ depends on $p_0, \rho$ in a manner that will be explained below. The code $\PRCI[\PRCS, \rho]$ has polynomially large alphabet $\Sigma(\lambda) := [q(\lambda)]$, where $q(\lambda) = \rho \cdot n(\lambda)$. We denote the block length of $\PRCI[\PRCS, \rho]$ by $m(\lambda)$ (which is defined to be $\lceil \ln (2) \cdot n(\lambda) \rceil$) to distinguish it from the block length $n(\lambda)$ of $\PRCS$. 

\paragraph{Construction of $\PRCI[\PRCS, \rho]$.} The idea behind $\PRCI[\PRCS, \rho]$ is simple: we interpret each symbol of $\PRCI[\PRCS, \rho]$ as an \emph{index} into a codeword of $\PRCS$, so that the existence of a symbol in a given codeword of $\PRCI[\PRCS, \rho]$ should be interpreted as  the corresponding codeword for $\PRCS$ as having a ``1'' in the position corresponding to that symbol. To ensure stronger robustness guarantees, it turns out to be necessary to introduce redundancy in the sense that for each integer $j \in [n(\lambda)]$ (representing an index of a codeword of $\PRCS$), there are $\rho$ different elements of $[q(\lambda)]$ which correspond to index $j$. The choice of these $\rho$ elements for each $j$ is specified a mapping $\psi : [\alphb(\lambda)] \to [n(\lambda)]$ with $|\psi^{-1}(j)| = \rho$ for all $j$, which is chosen randomly in the $\Key$ function of $\PRCI[\PRCS, \rho]$. With this intuition in mind, we proceed to overview the individual $\Key, \Enc, \Dec$ functions of $\PRCI[\PRCS, \rho]$ in \cref{alg:prf-di}: %
\neurips{\begin{itemize}[leftmargin=.4cm]}
\arxiv{\begin{itemize}}
\item The $\Key(1^\lambda)$ function generates a secret key $\sk$ for $\PRCS$ using $\KeyS$. It also  generates a random function $\psi$ as described above, and returns the tuple $(\sk, \psi)$, which is the secret key for $\PRCI[\PRCS, \rho]$.
\item The $\Enc(1^\lambda, (\sk, \psi), \msg)$ function calls the encoding method $\EncS$ for $\PRCS$, which yields a string $y^0 \in \{0,1\}^{n(\lambda)}$. It then chooses a string $y \in [n(\lambda)]^{m(\lambda)}$ which has the property that the set of distinct elements of $y$, which we denote by $\Unique(y)$, has small set difference with the set $\MS^0 := \{ i \in [n(\lambda)] \ : \ y^0_i = 1 \}$ of indices at which $y^0$ has a ``1''. The precise way in which the sets $\Unique(y)$ and $\MS^0$ differ is determined by the function $\PerturbDifference$ in \cref{alg:prf-di}, and is needed to ensure that the output of $\Enc(1^\lambda, \sk, \msg)$ is indistinguishable from the uniform distribution over $[n(\lambda)]^{m(\lambda)}$ (i.e., that the PRC is undetectable). Finally, $\Enc$ returns a string $z \in [q(\lambda)]^{m(\lambda)}$ where each coordinate $z_j$ is a uniformly random pre-image of $y_j$ under $\psi$. 
\item The $\Dec(1^\lambda, (\sk, \psi), z)$ function calls the substitution PRC decode function, $\DecS$, on the string $y' \in \{0,1\}^{n(\lambda)}$ which has a $1$ in position $i \in [n(\lambda)]$ if and only if $i \in \Unique(\psi(z))$. For future reference, we denote this string by $y' = D_\psi(z)$, i.e., $D_\psi(z)_i = \One{i \in \Unique(\psi(z)) }$. 
\end{itemize}

\cref{thm:prc-insdel} below shows that $\PRCI[\PRCS, \rho]$ has robustness to any channel which makes a large fraction (at most $1-\Crob p_0$) of substitutions, insertions, and deletions. %
\begin{theorem}
  \label{thm:prc-insdel}
  There are constants $C_0, \Crob \geq 1$ so that the following holds. For any $p_0 < (10\Crob)^{-1}$ and PRC $\PRCS$ with block length $n(\lambda)$ which is robust to all $(1/2 - p_0)$-substitution-bounded channels, for any $\rho \geq C_0/p_0$, $\PRCI[\PRCS, \rho]$ (\cref{alg:prf-di}) is a pseudorandom code over an alphabet of size $\lceil\rho\cdot n(\lambda)\rceil$ and block length at most $n(\lambda)$, which has robustness to any $(1-\Crob p_0)$-\edit-bounded channel (per \cref{def:sid}). 
\end{theorem}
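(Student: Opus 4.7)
The plan is to verify the three PRC properties (undetectability, soundness, robustness) in turn, with the bulk of the work in robustness. \emph{Undetectability} follows from a hybrid argument: first replace the internal $y^0 \gets \EncS(1^\lambda, \sk_0, \emptyset)$ with a uniformly random $y^0 \sim \Unif(\{0,1\}^{n(\lambda)})$, indistinguishable by the undetectability of $\PRCS$ (one reduction per oracle query that the $\PRCI$-distinguisher makes). The design of $\PerturbDifference$ then guarantees that from a uniform $y^0$ the resulting $y \in [n(\lambda)]^{m(\lambda)}$ is itself uniform; the choice $m(\lambda) = \lceil \ln(2) n(\lambda) \rceil$ is exactly what makes the marginal $\Pr[i \in \Unique(y)]$ coincide with $\Pr[i \in \MS^0] = 1/2$, enabling a pointwise coupling. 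Finally, sampling $z_j$ uniformly from $\psi^{-1}(y_j)$ with $\psi$ uniformly $\rho$-regular yields $z$ uniform in $[\lceil\rho n(\lambda)\rceil]^{m(\lambda)}$. \emph{Soundness} is immediate: for any fixed $z$, conditioning on $\psi$ renders $D_\psi(z) \in \{0,1\}^{n(\lambda)}$ deterministic, so soundness of $\PRCS$ gives $\DecS(1^\lambda, \sk_0, D_\psi(z)) = \perp$ except with negligible probability over $\sk_0$, uniformly in $\psi$.

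For \emph{robustness}, fix a $(1 - \Crob p_0)$-\edit-bounded channel $\ME$, and sample $z \gets \Enc(1^\lambda, (\sk_0, \psi), \emptyset)$ and $z' \gets \ME(z)$. By the substitution-robustness of $\PRCS$, it suffices to show $\dham(D_\psi(z'), y^0) \leq (1/2 - p_0) n(\lambda)$ with probability $1 - \negl(\lambda)$, where $y^0 = \EncS(1^\lambda, \sk_0, \emptyset)$. Writing $S := \Unique(\psi(z))$, $S' := \Unique(\psi(z'))$, and $\MS^0 := \{i : y^0_i = 1\}$, this Hamming distance equals $|S' \triangle \MS^0| \leq |S \triangle \MS^0| + |S \triangle S'|$, and the first term is $o(n(\lambda))$ by the $\PerturbDifference$ guarantee. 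The main task is therefore to bound
\begin{align*}
|S \setminus S'| &\leq \#\{i \in S : \text{every } j \text{ with } \psi(z_j) = i \text{ is deleted or substituted by } \ME\}, \\
|S' \setminus S| &\leq \#\{\text{distinct } \psi\text{-images of symbols appearing in } z' \text{ but not in } \Unique(z)\}.
\end{align*}

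The key leverage is that $\ME$ is chosen independently of $\psi$ and that, conditioned on $z$, the restriction of $\psi$ to $[\lceil\rho n(\lambda)\rceil] \setminus \Unique(z)$ stays close to a uniform $\rho$-regular completion; the hypothesis $\rho \geq C_0/p_0$ makes this near-uniformity tight enough. For $|S' \setminus S|$, each symbol novel to $z'$ has a $\psi$-image nearly uniform on $[n(\lambda)]$, so it lies outside $S$ with probability close to $1 - |S|/n(\lambda) \approx 1/2$; concentration over the independent portions of $\psi$ (e.g.\ Hoeffding) bounds the contribution by roughly half the number of insertions and substitutions. For $|S \setminus S'|$, the $\psi$-classes partition $\Unique(z)$ into groups whose sizes follow an approximately Poisson distribution with mean $\ln 2$, and the adversary's best strategy---which must be blind to the $\psi$-class structure---amounts to greedily removing the smallest-looking classes; counting against this size distribution bounds the fully removed elements in terms of the edit budget. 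Choosing $\Crob$ as a sufficiently large absolute constant then absorbs the constants from both estimates to give $|S \triangle S'| \leq (1/2 - p_0) n(\lambda) - o(n(\lambda))$, as required.

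The main obstacle is this combinatorial argument, which must simultaneously control both sources of damage (strategic removal of small $\psi$-classes and insertion of novel symbols with unknown $\psi$-images) using only the residual randomness of $\psi$ given $z$. Without $\rho \geq C_0/p_0$ the conditional distribution of $\psi$ after observing $z$ would be too constrained for the ``novel symbols are nearly uniform'' argument; and without the Poisson-based multiplicity analysis on $r_i := |\psi^{-1}(i) \cap \Unique(z)|$, one cannot meaningfully bound the adversary's optimal deletion strategy against the large edit budget $(1 - \Crob p_0) m(\lambda)$.
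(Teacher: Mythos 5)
Your soundness and undetectability arguments match the paper's, and reducing robustness to the bound $\dham(D_\psi(z'), y^0) \le (1/2 - p_0)\,n(\lambda)$ via the triangle inequality and the $\PerturbDifference$ guarantee is also the right frame. The gap is in the robustness core: splitting $|S \triangle S'|$ into $|S \setminus S'|$ and $|S' \setminus S|$ and bounding the two pieces by the two separate mechanisms you describe is quantitatively too lossy, and in the substitution-heavy regime the resulting bound exceeds $n/2$ --- which the substitution code cannot correct. Concretely, let the channel make $(1-\Crob p_0)m$ substitutions and nothing else (this is allowed). Your bound on $|S\setminus S'|$ counts every class of $S$ all of whose preimage symbols in $z$ were edited away; since only about $\Crob p_0 m$ symbols of $z$ survive and they hit at most $\Crob p_0 m < 0.07\,n$ classes, this bound cannot be smaller than $|S| - \Crob p_0 m \ge n/2 - 0.07\,n$. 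Your bound on $|S'\setminus S|$ counts the roughly $(1-\Crob p_0)m \approx 0.62\,n$ novel symbols whose $\psi$-images land outside $S$, contributing about $(n/2)(1 - e^{-0.62}) \approx 0.23\,n$ further classes. The sum is about $0.66\,n$, well above the target $(1/2 - p_0)n$, so the argument as decomposed cannot close. What is missing is the cancellation between the two mechanisms: a class $i \in S$ whose $z$-preimages are all substituted away is, with probability about $1/2$, \emph{rescued} by a novel symbol landing in $\psi^{-1}(i)$, so the true $|S\setminus S'|$ is about $n/4$, not $n/2$, exactly when $|S'\setminus S|$ is also about $n/4$.

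The paper's \cref{lem:delta-psi-z} avoids this by analyzing each class $i\in[n]$ jointly: $i$ contributes to $|S\triangle S'|$ iff exactly one of $\psi^{-1}(i)\cap\Unique(z)\ne\emptyset$ and $\psi^{-1}(i)\cap\Unique(z')\ne\emptyset$ holds, where $\Unique(z')$ contains both kept and novel symbols. With $\zeta_b := \exp(-|\MZ_b|/n)$ the per-class probability is about $\zeta_1(1-\zeta_2)+\zeta_2(1-\zeta_1)\approx 1/2$, and the intersection $\MZ=\Unique(z)\cap\Unique(z')$ --- of size $\Omega(\Crob p_0 m)$ because the edit budget forces that many symbols to be preserved --- positively correlates the two events and pulls this down to $1/2-\Omega(\Crob p_0)$, which is what beats the $(1/2-p_0)$ threshold once $\Crob$ is large. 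That joint, correlation-based per-class accounting (plus Dobrushin concentration over the weakly dependent fibers $\psi^{-1}(i)$, and a pseudorandomness step to pass from uniform $z$ to the actual encoder output) is the essential content you would need to supply; your Poisson multiplicity analysis and your ``novel symbols map near-uniformly'' observation are both ingredients of it, but they must be combined within each class rather than summed as separate worst cases.
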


\begin{algorithm}
  \caption{Indexing PRC: $\PRCI[\PRCS, \rho]$}
  \label{alg:prf-di}
  \begin{algorithmic}[1]\onehalfspacing
    \Require PRC for substitutions: $(\KeyS, \EncS, \DecS)$, together with functions $n : \BN \to \BN, \ell : \BN \to \BN$ characterizing the block length and key length, respectively; parameter $\rho > 1$. Functions $m(\lambda) := \lceil \ln(2) \cdot n(\lambda)\rceil $, $\alphb(\lambda):= \rho \cdot n(\lambda)$. 

    \Function{$\Key$}{$1^\lambda$}
    \State Define $\sk \gets \KeyS(1^\lambda)$.
    \State Let $\psi : [\alphb(\lambda)] \to [n(\lambda)]$ be a uniformly random function conditioned on $|\psi^{-1}(j)| = \rho$ for each $j \in [n(\lambda)]$.
    \State \Return $(\sk, \psi)$. 
    \EndFunction

    \Function{$\Enc$}{$1^\lambda, (\sk, \psi), \msg$}
    \State Set $y^0 \gets \EncS(1^\lambda, \sk, \msg) \in \hc[n(\lambda)]$. %
    \State Set $y \gets \PerturbDifference(n(\lambda), m(\lambda), y^0)$.\label{line:pd-call}
    \State For each $j \in [m(\lambda)]$, choose $z_j \sim \Unif( \{ a \ : \ \psi(a) = y_j \})$. \label{line:ybar-sample}
    \State \Return $z = (z_1, \ldots, z_{m(\lambda)})$.
        \EndFunction

    \Function{$\Dec$}{$1^\lambda, (\sk, \psi), z$}
    \State Define $y := \psi (z) = (\psi(z_1), \ldots, \psi(z_{m(\lambda)}))$. 
    \State Define $y' \in \hc[n]$ by $y'_i = \One{i \in \Unique(y)}$.\label{line:yprime-decode}
    \State \Return $\DecS(1^\lambda, \sk, y')$. 
    \EndFunction

    \Function{$\PerturbDifference$}{$n,m, y^0$}
    \State Set $\MS^0 := \{ i \in [n] \ : \ y^0_i = 1 \}$. \label{line:define-s0}
    \State Sample $y^1 \sim \Unif([n]^m)$.\label{line:sample-y1}
    \State Set $\MS^1 := \Unique(y^1) \subset [n]$.\label{line:define-s1}
    \If{$|\MS^0 \backslash \MS^1| \geq |\MS^1\backslash \MS^0|$}
    \State Let $\sigma : \MS^1 \backslash \MS^0 \to \MS^0 \backslash \MS^1$ denote a uniformly random injective mapping.
    \State For each $a \in \MS^1 \backslash \MS^0$, let $y$ be formed by replacing each instance of $a$ in $y^1$ with $\sigma(a)$.\label{line:replace-sigma}
    \Else
    \State Let $\tau : \MS^0 \backslash \MS^1 \to \MS^1 \backslash \MS^0$ denote a uniformly random injective mapping.
    \State For each $a \in \MS^0 \backslash \MS^1$, let $y$ be formed by replacing each instance of $\tau(a)$ in $y^1$ with $a$. \label{line:replace-tau}
    \EndIf
    \State\Return $y$. 

    \EndFunction
  \end{algorithmic}

\end{algorithm}

\paragraph{Proof overview for \cref{thm:prc-insdel}.} %
To prove \cref{thm:prc-insdel}, we need to establish the soundness, undetectability, and robustness of $\PRCI[\PRCS, \rho]$. Soundness is an immediate consequence of soundness of $\PRCS$ (\cref{lem:pd-unif}). Undetectability is likewise straightforward, using undetectability of $\PRCS$ together with the fact that the output of $\PerturbDifference(n,m,y^0)$, for $y^0 \sim \Unif(\{0,1\})^n$, is uniform on $[n]^m$ (\cref{lem:pd-unif}). The bulk of the proof consists in establishing robustness.

A natural attempt to establish robustness would proceed as follows: given a string $z \in [n(\lambda)]^{m(\lambda)}$ (to be interpreted as an output of $\Enc(1^\lambda, (\sk, \psi), \msg)$), a single insertion or deletion in $z$ can change at most one symbol of $D_\psi(z)$, and a single substitution in $z$ can change at most two symbols of $D_\psi(z)$. Thus, it is straightforward to show a statement of the following form: if $\PRCS$ is $(1/2 - p_0)$-substitution bounded and $2p < 1/2 - p_0$, then $\PRCI[\PRCS, \rho]$ is robust to the class of $p$-\edit-bounded channels. (Technically, some additional work is needed since the $\PerturbDifference$ function introduces some additional substitutions in the underlying binary codeword, though we ignore this detail for now since with an appropriate choice of parameters the number of such errors will be of lower order.)

Unfortunately, such a result does not have sufficiently good robustness for our application to watermarking. As will be discussed in \cref{sec:prc-wm}, our procedure which watermarks a language model $\Model$ by ``embedding'' a codeword $z$ of a PRC in a sequence of text output by $\Model$ introduces a fraction $1-\alpha$ of errors to $z$, all of which are substitutions. Here $\alpha$ is some constant which is related to the entropy rate of $\Model$. Using the naive approach above, we are constrained to a rate of substitutions $p$ bounded as $p < 1/4$, thus forcing $1-\alpha < 1/4$ and so disallowing all but high entropy rates. 

To compensate, we make use of the fact that the randomly chosen mapping $\psi : [q(\lambda)] \to [n(\lambda)]$ maps multiple (namely, $\rho$) symbols in $[q(\lambda)]$ to each symbol in $[n(\lambda)]$, when performing decoding. In particular, consider any fixed channel $\ME$ over the alphabet $[q(\lambda)]$. For simplicity in our overview here, we assume that $\ME$ is deterministic (so that it is specified by a mapping $\ME : [q(\lambda)]^m \to [q(\lambda)]^\st$ for each $m \in \BN$), though essentially the same argument works for randomized $\ME$. Consider a codeword $z \in [q(\lambda)]^{m(\lambda)}$ which is output by $\Enc(1^\lambda, (\sk, \psi), \msg)$. By undetectability of the code, with high probability over $\Enc$, we will have that $|\Unique(\psi(z))| \approx n(\lambda)/2 \pm O(\sqrt{n(\lambda)})$, since by our choice of $m(\lambda) = \lceil n(\lambda) \cdot \ln(2) \rceil$, a uniformly random string $\tilde z \sim \Unif([q(\lambda)]^{m(\lambda)})$ satisfies $|\Unique(\psi(\tilde z))| \approx n(\lambda)/2 \pm O(\sqrt{n(\lambda)})$ with high probability.\footnote{Here we let $\psi(z) := (\psi(z_1), \ldots, \psi(z_{m(\lambda)}))$.}

Supposing that $\ME$ is promised to make at most a fraction $1-p$ of edits, then $z' := \ME(z)$ shares at least $p \cdot m(\lambda)$ symbols with $z$. Let us suppose for simplicity here that $z' \in [q(\lambda)]^{m(\lambda)}$, so that the insertions and deletions of $\ME$ are balanced (a slight modification of the argument handles the general case). Of the remaining $(1-p) \cdot m(\lambda)$ symbols of $z'$, by the random choice of $\psi$ and since $|\Unique(\psi(z))| \approx n(\lambda)/2$, each one is roughly equally likely to map (under $\psi$) to an element in $\Unique(\psi(z))$ as to an element in $[n(\lambda)] \backslash \Unique(\psi(z))$. Thus, in going from the set $\Unique(\psi(z))$ to the set $\Unique(\psi(z'))$, we should expect to change at most roughly $(1-p) \cdot n(\lambda)/2$ elements. This intuition is made precise in the following lemma: %
\begin{lemma}[Informal version of \cref{lem:delta-psi-z}]
  \label{lem:idx-informal}
Given a channel $\ME$ as above which makes a $(1-p)$-fraction of edits (i.e., substitutions, insertions, and deletions), with $1-\negl(\lambda)$ probability over the draw of of $(\sk, \psi) \gets \Key(1^\lambda)$ and $z \gets \Enc(1^\lambda, (\sk, \psi), \msg)$ we can bound the set difference between $\Unique(\psi(z)), \Unique(\psi(\ME(z)))$ as follows: \[|\Delta(\Unique(\psi(z)), \Unique(\psi(\ME(z)))| \leq (1-\Omega(p)) \cdot n(\lambda).\] 
\end{lemma}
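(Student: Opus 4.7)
\emph{The plan} is to reduce to an idealized law in which $z \sim \Unif([q(\lambda)]^{m(\lambda)})$ is independent of a uniformly random balanced $\psi : [q(\lambda)] \to [n(\lambda)]$, and then bound the symmetric difference by conditioning on $z$ and invoking concentration over $\psi$. For the reduction, undetectability of $\PRCS$ makes $y^0 \gets \EncS(1^\lambda, \sk, \msg)$ computationally indistinguishable from uniform; \cref{lem:pd-unif} and the coordinatewise sampling of $z$ in $\Enc$ then promote this to the claimed indistinguishability of $(\psi, z)$. Since the event in question is efficiently testable given $(\psi, z, z')$, we work under the idealized law at a negligible cost.

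\emph{Alignment and size estimates.} Write $z' = \ME(z)$ and fix an edit-distance alignment $\alpha : A \to A'$ with $|A| = |A'| \geq p m(\lambda)$ preserving $z_i = z'_{\alpha(i)}$. Setting $V := \Unique(z)$, $V' := \Unique(z')$, $V_A := \{z_i : i \in A\}$, we have $V_A \subseteq V \cap V'$ and $\Unique(\psi(z)) = \psi(V)$, $\Unique(\psi(z')) = \psi(V')$, so
\begin{equation*}
|\Unique(\psi(z)) \triangle \Unique(\psi(z'))| \leq |\psi(V \cup V')| - |\psi(V_A)|.
\end{equation*}
A birthday computation on $z \sim \Unif([q]^m)$, using $q = \rho n$ with $\rho \geq C_0/p_0$, gives $|V_A| \geq pm \cdot (1 - o(1))$, and combined with $|z'| \leq (2-p)m$, $|V \cup V'| \leq (3-2p)m \cdot (1 + o(1))$, both whp. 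For any \emph{fixed} $T \subseteq [q]$ and $\psi$ uniformly random balanced, $\E|\psi(T)| = n \bigl(1 - \binom{q-\rho}{|T|}/\binom{q}{|T|}\bigr) \approx n(1 - e^{-|T|/n})$, with $O(\sqrt{n \log n})$ fluctuations by McDiarmid applied to a permutation parameterization of $\psi$. Since $\ME$ is oblivious to $\psi$, the sets $V_A$ and $V \cup V'$ are $z$-measurable, so conditioning on $z$ and using $m = \lceil n \ln 2 \rceil$ yields $|\psi(V \cup V')| \leq n(1 - 2^{-(3-2p)}) + o(n)$ and $|\psi(V_A)| \geq n(1 - 2^{-p}) - o(n)$ whp. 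Subtracting,
\begin{equation*}
|\Unique(\psi(z)) \triangle \Unique(\psi(z'))| \leq n\bigl(2^{-p} - 2^{-(3-2p)}\bigr) + o(n),
\end{equation*}
and the bracketed quantity is $\leq 7/8$ for all $p \in [0,1]$ and strictly decreasing, giving the required $(1 - \Omega(p))n$ bound.

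\emph{Main obstacle.} The principal delicacy is the concentration of $|\psi(T)|$ for a $z$-measurable $T \subseteq [q]$ against a uniformly random balanced $\psi$. The independence $\psi \perp z$ allows us to fix $T$ before revealing $\psi$, but one must still select a parameterization of balanced $\psi$ (e.g., a uniform permutation of $[q]$ partitioned into length-$\rho$ blocks) under which McDiarmid's bounded-difference condition holds with manageable constants; an equivalent route is a direct hypergeometric tail bound on the number of $\rho$-blocks hit by $T$. A secondary point is controlling the birthday loss $|A| - |V_A|$; the hypothesis $\rho \geq C_0/p_0$ in \cref{thm:prc-insdel} is precisely what makes $q = \rho n$ sufficiently large relative to $m$ to keep this loss within $o(pm)$, so that the gain from $|\psi(V_A)|$ truly translates into the final $(1-\Omega(p))n$ saving.
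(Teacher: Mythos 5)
Your reduction to the idealized law (uniform $z$, independent balanced $\psi$, efficiently testable event) matches the paper's Steps 2--3 of \cref{lem:sdi-robust}, and your alignment bookkeeping, the hypergeometric formula $\Pr(i\notin\psi(T))=\binom{q-\rho}{|T|}/\binom{q}{|T|}\approx e^{-|T|/n}$, and the handling of the birthday loss via $\rho\gtrsim 1/p$ are all sound. The literal displayed inequality $(1-\Omega(p))\cdot n$ does follow from your bound, since $2^{-p}-2^{-(3-2p)}\le 7/8\le 1-p/8$ for $p\in[0,1]$.

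However, your core combinatorial step is genuinely different from the paper's and quantitatively weaker in a way that matters. Writing the symmetric difference as $|\psi(V\cup V')|-|\psi(V)\cap\psi(V')|$, you lower-bound $|\psi(V)\cap\psi(V')|$ by $|\psi(V_A)|\le|\psi(V\cap V')|$. This discards all indices $i$ for which $\psi^{-1}(i)$ meets $V\setminus V'$ and \emph{separately} meets $V'\setminus V$ --- the ``accidental'' collisions created by the $\rho$-fold redundancy of $\psi$ --- and these account for a constant fraction of $[n]$: with $|V\setminus V'|\approx|V'\setminus V|\approx(1-p)n\ln 2$, roughly a $(1-2^{-(1-p)})^2\cdot 2^{-p}\approx 1/4$ fraction of indices are hit by both sides via distinct preimages. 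Consequently your bound bottoms out near $\frac78 n$ (or $\frac34 n$ with tighter accounting of $|V\cup V'|$), whereas the formal \cref{lem:delta-psi-z} gives $n(\frac12-\frac p5+23\ep)$ by computing, for each index $i$, the exact two-sided probability $\Pr(\psi^{-1}(i)\cap\MZ_1\neq\emptyset,\ \psi^{-1}(i)\cap\MZ_2=\emptyset)$ via \cref{lem:random-subset}, which retains precisely those collision events. The factor of $2$ is not cosmetic: downstream, the symmetric difference equals $\dham(D_\psi(z),D_\psi(z'))$ and must fall below $(\frac12-p_0)n$ for the inner substitution PRC $\PRCS$ to decode, so a bound of $\frac78 n$ renders the indexing construction useless. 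If you replace your set-theoretic sandwich with a per-index inclusion--exclusion over the random $\rho$-subsets $\psi^{-1}(i)$ (keeping your concentration machinery, which is a valid alternative to the paper's Dobrushin blocking argument), you recover the paper's bound.
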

Since the Hamming distance $\dham(D_\psi(z), D_\psi(z'))$ is equal to the size of the set difference $|\Delta(\Unique(\psi(z)), \Unique(\psi(z'))|$, we arrive at the conclusion that $\dham(D_\psi(z), D_\psi(z')) \lesssim (1-\Omega(p)) \cdot n(\lambda)/2$ with high probability over the draw of $\psi$ in $\Key$. %
Since $p_0$ can be chosen arbitrarily small, we have substitution PRCs $\PRCS$ which {can} correct a $(1-\Omega(p))/2$ fraction of substitutions. Thus, we obtain as a consequence that $\PRCI[\PRCS, \rho]$ can correct a $(1-p)$ fraction of substitutions, insertions, and deletions. %

The argument above omits many details, notably involving the high-probability event referenced in \cref{lem:idx-informal}. %
To establish that such an event occurs with high probability (over the draw of $\psi$), we need to use Dobrushin's concentration inequality (\cref{thm:dobrushin}) for data with limited dependencies. Roughly speaking, this inequality comes into play because the sets $\psi^{-1}(1), \ldots, \psi^{-1}(n(\lambda)) \subset [q(\lambda)]$ are not fully independent (since, e.g., they must be disjoint). Nevertheless, we may bound their dependencies, assuming that $n(\lambda)$ is sufficiently large as compared to $\rho$. %

\section{From large-alphabet PRCs to watermarking}\label{sec:prc-wm}
\neurips{In this section, we overview our reduction (stated in \cref{thm:wm-from-prc-informal}) that converts a PRC with robustness to channels making a bounded number of adversarial edits to a watermarking scheme with robustness to adversarial edits. At a high level, this reduction uses rejection sampling at each step of the language model generation to make the output of the model align with a PRC codeword. 
To formally state the result, we need the notion of \emph{empirical entropy}: given a token sequence $\tok \in \Sigma^L$ and $i,j \in [L]$, the empirical entropy of $\Model$ on the subsequence $[i,j]$ is
\begin{align}
\Hemp{[i:j]}(\tok, \Model) := -\log \Pr_{\tok_{i:j}' \sim \Model(\cdot \mid \tok_{1:i-1})}(\tok_{i:j}' = \tok_{i:j} \mid \tok_{1:i-1})\nonumber,
\end{align}
where the probability is over a sequence that is drawn, token by token, from the per-token distributions induced by $\Model$ (see \cref{def:emp-entropy} for discussion).  The empirical entropy quantifies the degree to which the sequence $\tok_{i:j}$ is ``far from being deterministic'' under $\Model$ given the prefix $\tok_{1:i-1}$.

The watermarking schemes we derive from PRCs satisfy a stronger property known as \emph{substring robustness} \cite{christ2023undetectable,christ2024pseudorandom}, which means that if any sufficiently high-entropy \emph{substring} of watermarked text is passed through a channel inducing a bounded number of edits, then the watermark will still be detected. More precisely, for a function $\beta : \BN \to \BN$, a watermarking procedure $\MW = (\Setup, \Wat, \Detect)$ over an alphabet $\Sigma$ is \emph{$\beta$-substring robust} to a channel $\ME$ if
\begin{align}
  \Pr_{\substack{\sk \gets \Setup(1^\lambda) \\ \tok \gets \Wat(1^\lambda, \sk), \tok' \gets \ME(\tok)}}\left( \exists i, \ell \mbox{ s.t. } \Detect(1^\lambda, \sk, \tok') = \False \mbox{ and } \Hemp{[i:i+\ell-1]}(\tok) \geq \beta(\ell) \cdot \ln |\Sigma| \right) \leq \negl(\lambda)\nonumber. 
\end{align}
We remark that the channel $\ME$ is required to be \emph{non-adaptive} in that it cannot depend on the particular draw of the secret key $\sk$. 
Some details are omitted; see \cref{def:watermarking-sur} for a completely formal definition of substring-robustness. Our main result of this section shows that a PRC with robustness to \edit-bounded channels yields a watermarking scheme with substring-robustness to \edit-bounded channels:
\begin{theorem}[Informal version of \cref{thm:wm-from-prc}]
  \label{thm:wm-from-prc-informal}
Let $\alpha \in (0,1)$ be a given constant. Suppose $\PRC$, defined over alphabet $\Sigprc$, has block length $n$ and is robust to any $(1-\frac{\alpha}{16})$-\edit-bounded channel. Further suppose $\Model$ is a language model over alphabet $\Sigma$ satisfying $|\Sigma| \geq (\frac{8}{\alpha} |\Sigprc|)^{2/\alpha}$. Then there is a watermarking scheme $\MW[\PRC, \Model]$ (\cref{alg:wm-prc}) which is sound, undetectable, and $\beta(\ell)$-substring robust to any $\frac{\alpha^2}{48}$-\edit-bounded channel, for $\beta(\ell) := 8n + 6\alpha \ell$. 
\end{theorem}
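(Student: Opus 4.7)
I would build $\MW[\PRC, \Model]$ by coupling each generation step to a symbol of a PRC codeword through a random balanced hash on the large alphabet. At setup, $\Setup(1^\lambda)$ draws $\sk_{\PRC} \gets \KeyS(1^\lambda)$ together with a uniformly random $h : \Sigma \to \Sigprc$ with nearly equal-sized fibers $|h^{-1}(g)| \approx |\Sigma|/|\Sigprc|$. $\Wat$ computes $z = (z_1,\ldots,z_n) \gets \EncS(1^\lambda, \sk_{\PRC}, \emptyset)$ and maintains a pointer $j \in [n]$. At each step $i$, given $p_i := \Model(\cdot \mid \tok_{1:i-1})$, it samples $\tok_i \sim p_i(\cdot \mid h(\tok_i) = z_j)$ and advances $j \mapsto j+1$ whenever the step ``spends'' a codeword symbol, i.e., when the selected hash class carries enough surprise under $p_i$ (say $-\log p_i(h^{-1}(z_j)) \geq \frac{\alpha}{2}\ln|\Sigma|$). $\Detect(1^\lambda, \sk, \tok')$ enumerates alignments between positions of $\tok'$ and codeword positions, extracts a putative codeword by applying $h$, and accepts if $\DecS$ succeeds on any alignment.

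\textbf{Undetectability and soundness.} For undetectability, a hybrid argument invokes PRC undetectability to replace $z$ by $z^\st \sim \Unif(\Sigprc^n)$ at a $\negl(\lambda)$ cost. With $h$ (nearly) balanced and $z^\st$ uniform, integrating out $z^\st_j$ in $p_i(\cdot \mid h(\tok_i) = z^\st_j)$ recovers $p_i$ (up to a bias that vanishes as $|\Sigma|/|\Sigprc| \to \infty$, which is controlled by the hypothesis $|\Sigma| \geq (8|\Sigprc|/\alpha)^{2/\alpha}$), so the watermarked distribution matches $\Modelo$ in this hybrid. For soundness, fix $\tok$: the extracted candidate codeword at each alignment is a deterministic function of $(h, \tok)$, hence independent of $\sk_{\PRC}$; PRC soundness plus a union bound over the polynomially many alignments gives $\Detect(\tok)=\False$ with overwhelming probability.

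\textbf{Substring robustness.} This is the core step. Fix a substring $\tok_{[i:i+\ell-1]}$ with $\Hemp{[i:i+\ell-1]}(\tok) \geq (8n + 6\alpha\ell)\ln|\Sigma|$. The entropy lower bound forces $\Omega(\alpha\ell)$ spending positions within the window (from the $6\alpha\ell$ multiplicative term), while the $8n$ additive slack absorbs boundary drift caused by edits and ensures the scan in $\Detect$ can land on a codeword-aligned starting position. Given any $\frac{\alpha^2}{48}$-\edit-bounded channel $\ME$, $\tok' = \ME(\tok)$ contains at most $\frac{\alpha^2}{48}|\tok|$ token edits. Since each codeword symbol is embedded across $\Theta(1/\alpha)$ text tokens on average (from the spending threshold), an amortized argument shows that each text edit corrupts only $O(1)$ PRC positions, so the induced PRC edit rate on the correctly-aligned extraction is at most $1 - \frac{\alpha}{16}$ after tuning constants. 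Invoking PRC robustness on this extraction forces $\DecS \neq \perp$, so $\Detect$ accepts.

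\textbf{Main obstacle.} The hardest step is the alignment analysis inside the robustness argument: a single insertion or deletion shifts all downstream tokens, so one text edit can in principle desynchronize the extractor from the codeword for a long stretch. The key lemma I would need to prove is that among the $\poly(\Lmax)$ alignments enumerated by $\Detect$, at least one places the extracted symbols within $O(1)$ text positions of the true spending boundaries, so that the text-to-PRC edit blowup is bounded by a constant. This in turn requires showing that the spending decisions themselves are robust, meaning the adversary cannot greatly change which text positions cross the entropy threshold --- which uses the large-alphabet hypothesis $|\Sigma| \geq (8|\Sigprc|/\alpha)^{2/\alpha}$ so that a balanced hash mimics a uniform random hash with respect to $p_i$ up to negligible error. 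Reconciling (i) the huge PRC robustness budget $1-\alpha/16$, (ii) the tiny text edit budget $\alpha^2/48$, and (iii) the constant-factor blowup from text edits to PRC edits is precisely what drives the specific form $\beta(\ell) = 8n + 6\alpha\ell$.
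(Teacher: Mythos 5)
Your construction is genuinely different from the paper's, and it has a gap that I don't think can be patched as written. You always sample $\tok_i \sim p_i(\cdot \mid h(\tok_i) = z_j)$, and you argue undetectability by marginalizing over a uniform $z_j^\st$ with the bias "controlled by $|\Sigma|/|\Sigprc|$ being large." That bias is not negligible: it is governed by how balanced $h$ is \emph{with respect to $p_i$}, not with respect to counting measure. For a near-deterministic step (say $p_i$ puts mass $1-2^{-\lambda}$ on one token $\sigma_0$), $p_i(h^{-1}(g))$ is essentially $\One{g = h(\sigma_0)}$, so for all but one value of $z_j$ your conditional law is supported on an exponentially unlikely event and is wildly far from $p_i$ --- the output distribution is visibly distorted exactly at the positions where a watermark must not perturb anything. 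Moreover the same $z_j$ is reused across an entire "spending group," so even averaging over $z_j$ cannot restore the product structure. The paper avoids this with a rejection step (\cref{line:sample-ber,line:sample-fresh} of \cref{alg:wm-prc}): with probability $\min\{1, |\Sigprc|\cdot \bar p_i(x_j)\}$ the token is forced into the fiber of $x_j$, and otherwise it is drawn from the residual distribution $q_i(\sigma) \propto [\bar p_i(\sigma) - 1/|\Sigprc|]_+$; \cref{lem:model-phi-equiv} then gives \emph{exact} equality with $\Modelo$ when the codeword is uniform, and undetectability follows from PRC pseudorandomness with no approximation error.

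Relatedly, you have mislocated where the $(1-\frac{\alpha}{16})$ robustness budget goes, which is why your "main obstacle" (alignment/desynchronization of the extractor) is not the one the paper has to fight. In the paper's scheme each token consumes exactly one codeword symbol, so the embedding itself is a \emph{substitution} channel on the codeword that destroys a $1-\Omega(\alpha)$ fraction of positions (the rejection succeeds only with probability comparable to the "spread," which \cref{lem:phi-unif-dist,lem:phi-he-spread} lower-bound by $\Omega(\alpha)$ when the entropy rate is $\Omega(\alpha)$ --- this is where $|\Sigma| \geq (\frac{8}{\alpha}|\Sigprc|)^{2/\alpha}$ is used). The external $\frac{\alpha^2}{48}$-edit channel contributes only the remaining $\frac{3p}{\alpha} = \frac{\alpha}{16}$ slack, via the observation that it can concentrate all its edits on the $O(\alpha)$-fraction of high-entropy blocks (\cref{clm:most-g}). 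Synchronization is not solved at the watermark layer at all: $\Detect$ simply applies $\phi$ tokenwise and runs $\PRC.\Dec$ on every substring of length at most $n$, and the insertions/deletions are absorbed by the hypothesis that $\PRC$ itself is edit-robust. Your alignment lemma, besides being unproven, would be needed only because your variable-rate spending rule re-creates a synchronization problem that the theorem's hypotheses were designed to delegate to the PRC.
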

\cref{thm:wm-from-prc-informal} establishes that, as long as the entropy from a substring of generated text is roughly a $\Omega(\alpha)$-fraction of the maximum possible entropy, then a constant ($O(\alpha^2)$) fraction of edits to that substring cannot remove the watermark. We remark that this constant fraction of edits cannot be impoved to beyond an $\alpha$-fraction (as is evident from the example in \cref{fn:alpha-o1}). 
The proof of \cref{thm:wm-from-prc-informal} builds off of the reduction of \cite{christ2024pseudorandom}, for which $\Sigma = \Sigprc = \{0,1\}$. Their reduction, however, breaks down in the setting when $\Sigprc$ is no longer binary, and we introduce some new ideas (roughly, involving a hashing technique) to deal with the setting of larger alphabets. Details of the proof may be found in \cref{sec:prc-to-wm}. 
Finally, we remark that by combining \cref{thm:prf-prc,thm:prc-insdel,thm:wm-from-prc-informal}, we obtain \cref{thm:main-informal}, which establishes the existence of edit-robust watermarking schemes under \cref{asm:local-weak-prf}.

\paragraph{On implementation of the watermarking scheme.} A natural question is how feasible it is to implement the watermarking scheme $\MW[\PRC, \Model]$ of \cref{thm:wm-from-prc-informal}. The main limitation of our present theoretical results which may complicate a practical implementation is as follows: The alphabet size $|\Sigma(\lambda)|$ is required to grow exponentially in the inverse of the parameter $\alpha$ (see the statement of \cref{thm:wm-overall}). In turn, the parameter $\alpha$ is proportional to the entropy rate of the text needed to guarantee substring robustness (see \cref{def:watermarking-sur} and the setting of $\beta_\lambda(\ell) \asymp \alpha \cdot \ell$ in \cref{thm:wm-overall}). For typical LLMs, the alphabet size is likely smaller than our required value of $|\Sigma(\lambda)|$ given the entropy rates observed empirically in natural language. On the other hand, we believe that future work aimed at developing modifications of our watermarking scheme with an eye towards practical implementation will be successful. One idea which seems promising is to simulate a larger alphabet by grouping tokens together, and to aim accordingly for a slightly weaker robustness guarantee.

}
\arxiv{In this section, we discuss how to use any PRC with security parameter $\lambda$ and alphabet size $\poly(\lambda)$ to produce a watermarking scheme meeting the requirements of \cref{def:watermarking-sur}. The reduction of \cite{christ2024pseudorandom} for this task is limited to the case where the PRC has a \emph{binary} alphabet. To extend to the setting where the alphabet size of the PRC is larger, some new ideas are needed.

Suppose that we are given a zero-bit pseudorandom code, $\PRC$, with block length $n(\lambda)$ over an alphabet $\Sigprc(\lambda)$, with $|\Sigprc(\lambda)| \geq n(\lambda)$, which is robust to a constant fraction of substitutions, insertions, and deletions (such a code is provided by \cref{thm:prc-insdel}). Given some constant $\alpha$ and a family of language models $(\Model(\lambda))_{\lambda \in \BN}$ over alphabet $(\Sigma(\lambda))_{\lambda \in \BN}$ with $|\Sigma(\lambda)| \geq |\Sigprc(\lambda)|$, we construct a watermarking scheme $\MW[\PRC, \Model]$ (\cref{alg:wm-prc}), as the following tuple of algorithms $(\Setup, \Wat, \Detect)$:
\neurips{\begin{itemize}[leftmargin=.4cm]}
\arxiv{\begin{itemize}}
\item $\Setup(1^\lambda)$ produces a pair $\skwat = (\sk, \phi)$ as the secret key of $\MW$. Here $\sk$ is a secret key generated from $\PRC.\Key(\lambda)$, and $\phi : \Sigma(\lambda) \to \Sigprc(\lambda)$ is chosen uniformly at random, which can be interpreted as a hash function. 
\item $\Wat(1^\lambda, (\sk, \phi))$ generates a sequence of tokens $\tok_1, \tok_2, \ldots$ in blocks of length $n(\lambda)$, in the following manner: for each block of length $n(\lambda)$, we sample a codeword  $x \gets \PRC.\Enc(1^\lambda, \sk)$ (\cref{line:wm-encode}), and at each position $i$, we consider the pushforward distribution $\bar p_i := \phi \circ \Model(\tok_i = \cdot \mid \tok_{1:i-1}) \in \Delta(\Sigprc(\lambda))$ of the next token under the hash function $\phi$ (\cref{line:bar-p} of the subroutine $\EmbedToken$). If $\bar p_i$ puts enough mass on the corresponding token of the codeword $x$ (denoted by $x_j$ in \cref{alg:wm-prc}), then we let the output token of $\Wat$ at position $i$ be a uniformly random token which hashes to $x_j$ (\cref{line:sample-ber,line:cond-phi}). Otherwise, we sample the output token of $\Wat$ at position $i$ in a way that will ensure it does not alter the conditional distribution of the $i$th token (\cref{line:sample-fresh,line:cond-phi}).

  To obtain some intuition for the above procedure, it is straightforward to see that if the codewords $x$ are actually uniformly random, then the output of this procedure is identical to the distribution $\Modelo$ (\cref{lem:model-phi-equiv}). Thus, if $x$ is drawn from a distribution computationally indistinguishable from random (as will be the case for the output of a PRC), it is simple to show that this procedure yields an undetectable watermark. 
\item $\Detect(1^\lambda, (\sk, \phi), \tok)$ functions as follows, given a sequence $\tok = \tok_{1:\ell} \in \Sigma(\lambda)^\ell$ of (potentially watermarked) text. It searches through all contiguous substrings of $\tok$, denoted $\tok_{i:j}$ and checks whether $(\phi(\tok_i), \ldots, \phi(\tok_j))$ decodes to $\emptyset$ under $\PRC.\Dec$. If so, it returns $\True$ (and otherwise $\False$). 
\end{itemize}
In the below theorem, we suppose that $\PRC$ is a zero-bit PRC with block length $n(\lambda)$ over alphabet $\Sigprc(\lambda)$, satisfying $|\Sigprc(\lambda)| \geq n(\lambda)$. Furthermore, we suppose that $(\Model(\lambda))_{\lambda \in \BN}$ is a family of language models defined over alphabet $\Sigma(\lambda)$. 
\begin{theorem}[Watermarking from PRCs]
  \label{thm:wm-from-prc}
Suppose that $p,\alpha \in (0,1)$ are given, that $\PRC, \Model(\lambda)$ are as described above and satisfy $|\Sigma(\lambda)| \geq (\frac 8\alpha |\Sigprc(\lambda)|)^{2/\alpha}$, and that $\PRC$ satisfies robustness to any $(1-\frac{\alpha}{8}+ \frac{3p}{\alpha})$-\edit-bounded channel. Then the watermarking scheme $\MW[\PRC, \Model]$ (\cref{alg:wm-prc}) is sound, undetectable, and $\beta_\lambda(\ell)$-substring robust to any $p$-\edit-bounded channel, where $\beta_\lambda(\ell) := 8n(\lambda) + 6\alpha \ell$. 
\end{theorem}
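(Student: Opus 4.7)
\emph{Soundness and undetectability.} For soundness I would fix any $\tok \in \Sigma^{\leq \Lmax}$ and observe that, once $\phi$ is fixed, $\phi(\tok_{i:j}) \in \Sigprc^\st$ is a deterministic input to $\PRC.\Dec$, so soundness of $\PRC$ gives $\Pr_\sk[\PRC.\Dec(\sk, \phi(\tok_{i:j})) \neq \perp] \leq \negl(\lambda)$; a union bound over the $O(\Lmax^2) = \poly(\lambda)$ substrings finishes, using independence of $\phi$ from $\sk$. For undetectability I would use a multi-query hybrid: replace each codeword $\PRC.\Enc(\sk)$ produced inside $\Wat$ by an independent uniform $u \sim \Unif(\Sigprc^n)$. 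Multi-query undetectability of $\PRC$ bounds the hybrid gap; by the cited equivalence lemma, running $\EmbedToken$ on a uniform codeword reproduces $\Model(\cdot \mid \tok_{1:\tau-1})$ at each step, so the hybrid is distributed identically to $\Modelo$ per query.

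\emph{Substring robustness: setup and pigeonhole.} I would fix a $p$-\edit-bounded channel $\ME$ and condition on $\tok \gets \Wat$, $\tok' \gets \ME(\tok)$. Partition $[1,L]$ into the aligned length-$n$ blocks $B_1, B_2, \ldots$ used by $\Wat$, so each $B_m$ carries an independently drawn codeword $x^{(m)}$. For any substring $[i,j)$ of length $\ell$ satisfying $\Hemp{[i:j)}(\tok) \geq (8n + 6\alpha\ell)\ln|\Sigma|$, the two partially overlapping boundary blocks absorb at most $2n\ln|\Sigma|$ of the entropy, so the blocks fully contained in $[i,j)$ collect at least $(6n + 6\alpha\ell)\ln|\Sigma|$; a pigeonhole then supplies at least $K_h = \Omega(1 + \alpha\ell/n)$ \emph{good} blocks, each with per-block empirical entropy $\Omega(\alpha n \ln|\Sigma|)$. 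The proof now rests on two facts about good blocks: \textbf{(i)} $\dham(\phi(\tok_{B_m}), x^{(m)}) \leq (1 - \alpha/8)n$ with all but negligible probability; and \textbf{(ii)} the channel $\ME$ cannot simultaneously corrupt all good blocks.

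\emph{Embedding fidelity (main obstacle), edit accounting, and conclusion.} The principal technical step is \textbf{(i)}. Writing $\bar p_\tau := \phi \circ \Model(\cdot \mid \tok_{1:\tau-1})$, $\EmbedToken$ outputs $\tok_\tau$ with $\phi(\tok_\tau) = x^{(m)}_{\tau'}$ with probability governed by $\bar p_\tau(x^{(m)}_{\tau'})$. A R\'enyi-2 style concentration argument over the uniform random hash $\phi : \Sigma \to \Sigprc$ shows that $\bar p_\tau(x^{(m)}_{\tau'}) \gtrsim 1/|\Sigprc|$ whenever $\Hemp{\tau}(\tok)$ is not too small, and this is where the alphabet assumption $|\Sigma| \geq (8|\Sigprc|/\alpha)^{2/\alpha}$ is essential: it drives the per-position failure probability below a threshold that survives a union bound over the $n$ positions of a block and all codeword symbols. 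Converting these per-position estimates into the block-level Hamming bound requires Azuma--Hoeffding, or Dobrushin for weakly dependent summands across $\tau$, and the $\Omega(\alpha n \ln|\Sigma|)$ block entropy then buys the $(1 - \alpha/8)n$ margin. For \textbf{(ii)}, I would fix a minimum-edit alignment of $\tok$ to $\tok'$, let $B'_m$ denote the image of $B_m$, and let $E_m$ count the edits of $\ME$ incident to $B_m$; then $\sum_m E_m \leq p|\tok|$, and corrupting a good block requires $E_m > (3p/\alpha)n$ (the PRC's remaining edit budget after absorbing the $(1-\alpha/8)n$ embedding error), so at most $\alpha|\tok|/(3n)$ good blocks can be destroyed. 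This accounting is exactly what forces the $8n + 6\alpha\ell$ shape of $\beta_\lambda$: it must be large enough that the pigeonhole lower bound on $K_h$ beats the adversary's corruption budget. At least one surviving good block $B_{m^\st}$ then has $\phi(\tok'_{B'_{m^\st}})$ within edit distance $\leq (1 - \alpha/8 + 3p/\alpha)n$ of $x^{(m^\st)}$; PRC robustness yields $\PRC.\Dec(\sk, \phi(\tok'_{B'_{m^\st}})) = \emptyset$, so $\Detect$ returns $\True$.
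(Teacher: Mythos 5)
Your proposal follows essentially the same route as the paper's proof: soundness by a union bound over substrings using soundness of $\PRC$ and independence of $\phi$ from $\sk$; undetectability by a hybrid that replaces codewords with uniform strings plus the distribution-equivalence lemma for $\EmbedToken$; and robustness by the same three-part decomposition (entropy pigeonhole over the length-$n$ blocks, hash-concentration showing each high-entropy block suffers at most $(1-\alpha/8)n$ substitutions from the embedding, and edit accounting showing at most an $\alpha$-fraction of blocks receive more than $(3p/\alpha)n$ adversarial edits, so some good block survives and decodes). The one step you elide that the paper spends real effort on is packaging the map from a codeword $x^{(m)}$ to the decoded substring $\phi(\tok'_{b:b'-1})$ as a legitimate key-independent $(1-\frac{\alpha}{8}+\frac{3p}{\alpha})$-edit-bounded channel and union-bounding over all candidate blocks and boundary pairs $(g,b,b')$, since the surviving block and its image under the edit alignment are data-dependent and PRC robustness is only stated for channels fixed in advance of the key.
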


\cref{thm:wm-from-prc} shows that for any constant $\alpha$, we can detect the watermark as long as the entropy rate of the model's text is at least $\Omega(\alpha)$ and as long as the fraction of errors (adversarial deletions/insertions) introduced to the watermarked text is at most $O(\alpha^2)$. This latter statement follows since there are PRCs robust to $(1-\frac{\alpha}{8}+ \frac{3p}{\alpha})$-\edit-bounded channels for $p = O(\alpha^2)$. In turn, the alphabet $\Sigma(\lambda)$ of the language model needs to have size roughly $|\Sigprc(\lambda)|^{2/\alpha}$, which is a polynomial in $|\Sigprc(\lambda)$ as long as $\alpha$ is a constant.

By combining \cref{thm:prf-prc,thm:prc-insdel,thm:wm-from-prc} we can show our main theorem, which guarantees substring-robust watermarking schemes for edit-bounded channels under the existence of local weak PRFs (\cref{asm:local-weak-prf}). To simplify notation, given constants $\alpha, q > 0$ together with a function family $\MF$ consisting of binary-valued functions on $\hc[n(\lambda)]$, let us define
\begin{align}
  \Wcomp[\MF,q, \alpha] := \MW\left[\PRCI\left[\PRFPRC\left[\MF, \frac 12 -
  \frac{\alpha}{ 16\Crob} ,q\right], \frac{16\Crob C_0}{\alpha}\right], \Model\right],\label{eq:wcomp}
\end{align}
where $C_0,\Crob \geq 1$ are the constants from \cref{thm:prc-insdel}. In words, $\Wcomp[\MF, q, \alpha]$ chains together the PRCs in \cref{alg:prf-prc,alg:prf-di} and the watermarking scheme in \cref{alg:wm-prc}.

\begin{theorem}[Main theorem]
  \label{thm:wm-overall}
  There are absolute constants $c, C_1, C_2 > 0$ so that the following holds. Fix any $\alpha > 0$. Suppose there exists a function family $\MF$, consisting of binary-valued functions on $\hc[n(\lambda)]$, which is a $\log n(\lambda)$-local weak PRF for some noise level $q \in [0, 1/2)$, per \cref{asm:local-weak-prf}. Then for any language model family $\Model(\lambda)$ over alphabets $\Sigma(\lambda)$, the watermarking scheme $\Wcomp[\MF, q, \alpha]$ (\cref{eq:wcomp}) %
is  sound, undetectable, and $\beta_\lambda(\ell)$-substring robust to any $c\alpha^2$-\edit-bounded channel, as long as:
  \begin{align}
\beta_\lambda(\ell) := 6\alpha \ell + 8\cdot  n(\lambda)^{C_1 \log \frac{1}{\alpha}}, \qquad |\Sigma(\lambda)| \geq n(\lambda)^{C_2 \frac{1}{\alpha}\log \frac{1}{\alpha}}\nonumber.
  \end{align}
\end{theorem}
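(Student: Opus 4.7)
The plan is to compose the three component theorems, carefully tracking how the alphabet size, block length, and robustness parameters propagate through the chain $\PRFPRC \to \PRCI \to \MW$. Since $\Wcomp[\MF, q, \alpha]$ is defined precisely as this composition, the proof is essentially a bookkeeping exercise once the parameters are chosen to match.

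First, I would instantiate \cref{thm:prf-prc} with the local weak PRF family $\MF$ from \cref{asm:local-weak-prf}, noise level $q$, and substitution parameter $p_{\mathsf{sub}} := \tfrac{1}{2} - \tfrac{\alpha}{16\Crob}$. This yields a binary zero-bit PRC robust to $p_{\mathsf{sub}}$-substitution-bounded channels, with some block length $N_1(\lambda)$. Tracing through the proof of \cref{thm:prf-prc}, the repetition count $m(\lambda)$ inside $\PRFPRC$ must be polynomially large enough to overwhelm the noise-sensitivity factor $(1-2p_{\mathsf{sub}})^{\log n(\lambda)} = (\alpha/(8\Crob))^{\log n(\lambda)} = n(\lambda)^{-\Theta(\log(1/\alpha))}$, so $N_1(\lambda) \leq n(\lambda)^{C_1 \log(1/\alpha)}$ for an absolute constant $C_1$. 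Second, I would apply \cref{thm:prc-insdel} with $p_0 = \alpha/(16\Crob)$ and $\rho = 16\Crob C_0/\alpha = C_0/p_0$; for $\alpha$ below a fixed absolute constant, $p_0 < 1/(10\Crob)$ holds. The result is an \edit-robust PRC of block length at most $N_1(\lambda)$ over an alphabet of size $\rho \cdot N_1(\lambda)$, which is robust to any $(1-\Crob p_0) = (1-\alpha/16)$-\edit-bounded channel.

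Third, I would feed this \edit-robust PRC into \cref{thm:wm-from-prc} with the given constant $\alpha$ and a choice of $p$ to be determined. The hypothesis of \cref{thm:wm-from-prc} requires PRC robustness to $(1-\alpha/8 + 3p/\alpha)$-\edit-bounded channels. Matching against our robustness threshold of $1-\alpha/16$ gives $3p/\alpha \leq \alpha/16$, i.e.\ $p \leq \alpha^2/48$, so take $c = 1/48$. The alphabet condition $|\Sigma(\lambda)| \geq (8|\Sigprc(\lambda)|/\alpha)^{2/\alpha}$, with $|\Sigprc(\lambda)| = \rho N_1(\lambda) \leq N_1(\lambda)/\alpha^2 \cdot O(1)$, yields after taking logarithms the requirement $\log|\Sigma(\lambda)| \geq (2/\alpha)(C_1 \log(1/\alpha) \log n(\lambda) + O(\log(1/\alpha)))$, which is implied by $|\Sigma(\lambda)| \geq n(\lambda)^{C_2 \log(1/\alpha)/\alpha}$ for an appropriate absolute constant $C_2$. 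The resulting watermarking scheme is sound, undetectable, and $\beta_\lambda(\ell)$-substring robust to $p$-\edit-bounded channels with $\beta_\lambda(\ell) = 8 \cdot (\text{block length of PRC}) + 6\alpha \ell \leq 8 N_1(\lambda) + 6\alpha \ell \leq 6\alpha \ell + 8 \cdot n(\lambda)^{C_1 \log(1/\alpha)}$, as claimed.

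The main obstacle (if any) is not conceptual but careful: one must verify that the block length of $\PRFPRC$ truly scales as $n(\lambda)^{O(\log(1/\alpha))}$ rather than something worse. This requires peeking into the quantitative argument behind \cref{thm:prf-prc}, specifically the concentration-of-measure step where the sum $\sum_i \One{w_i' = F_s(x_i')}$ is shown to separate from its null-hypothesis value $m(\lambda)/2$ by a margin governed by $(1-2p_{\mathsf{sub}})^{\tau}$. Since $\tau \leq \log n(\lambda)$ and $1-2p_{\mathsf{sub}} = \alpha/(8\Crob)$, the margin is polynomially small in $n(\lambda)$ with exponent $\log(1/\alpha)$, forcing $m(\lambda)$ (and hence $N_1(\lambda)$) to be polynomially large in $n(\lambda)$ with the same exponent up to a constant factor; all other parameter manipulations are straightforward algebra.
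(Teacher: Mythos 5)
Your proposal is correct and follows essentially the same route as the paper's proof: the same choice $p_0 = \alpha/(16\Crob)$, the same matching $3p/\alpha \le \alpha/16$ giving $c = 1/48$, and the same block-length accounting $N(\lambda) \le n(\lambda)^{O(\log(1/\alpha))}$ read off from \cref{eq:choose-Nn}. The paper's proof is exactly this composition-and-bookkeeping argument, so there is nothing to add.
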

Notice that the exponent of $n(\lambda)$ in the  $\ell$-independent term of $\beta_\lambda(\ell)$ depends on $\alpha$, as does the exponent of $n(\lambda)$ in the size of $\Sigma(\lambda)$. While $\alpha$ is a constant (and so this only leads to a polynomial blowup), it would be desirable to come up with more efficient reductions. We remark, however, that this issue is quite subtle, since our criterion of undetectability (in \cref{def:sk-prc,def:watermarking-sur}) is phrased with respect to any \emph{polynomial-time algorithm}, meaning that given a watermarking scheme $\MW = (\Setup, \Wat, \Detect)$ we can construct a polynomially-more efficient watermarking scheme $\MW' = (\Setup', \Wat', \Detect')$ as follows. The algorithms $\Setup', \Wat', \Detect'$, given a security parameter $\lambda$, call the corresponding $\Setup, \Wat, \Detect$ algorithms with security parameter $\lambda^{\alpha}$, for some $\alpha < 1$. Since any function which is $\negl(\lambda^\alpha)$ is also $\negl(\lambda)$ and an algorithm running in time $\poly(\lambda)$ is also $\poly(\lambda^\alpha)$, the new scheme $\MW'$ is sound, undetectable, and substring robust if $\MW$ is. Moreover, its various parameters (e.g., $|\Sigma(\lambda)|$) will typically be smaller than those of $\MW$ by a polynomial. Of course, the guarantee afforded by undetectability of $\MW'$ (due to its use of $\lambda^\alpha$ as opposed to $\lambda$) is weaker than that of $\MW$, though only by a polynomial. %

Indeed, the approach of \cite{christ2024pseudorandom} implicity suffers from the same $n(\lambda)^{O(\log 1/\alpha)}$ dependence in their substring-robustness function $\beta(\ell)$, though it is not explicitly stated as such since they essentially already scale down the security parameter. In particular, the sparsity parameter $t$ in \cite[Lemmas 4 \& 5]{christ2024pseudorandom} is required to be $O \left( \frac{\log n}{\log \frac{1}{\frac 12 - p}}\right)$ (where $p < 1/2$ denotes the rate of substitutions for the purpose of evaluating robustness; one can think of $\frac 12 - p$ as being roughly comparable to $\alpha$ in \cref{thm:wm-overall}). Moreover, there is an algorithm which can distinguish the PRCs in \cite{christ2024pseudorandom} from uniform strings which runs in time roughly $n^t$. To formally reason about polynomial-sized differences in the parameters, we would need to make fine-grained average case hardness assumptions, a direction we do not pursue in the present work.

\subsection{Proof overview for \cref{thm:wm-from-prc}} Fix a security parameter $\lambda \in \BN$; we will drop the argument $\lambda$ in our notation for the proof overview. In the algorithm description above we discussed the idea behind undetectability of $\MW$, and the proof of soundness is immediate. Therefore we focus on the (substring) robustness claim. The high-level idea of the proof of robustness is to show that the procedure $\Wat$ can be viewed as a substitution channel which (repeatedly) takes as input an output $x$ of $\PRC.\Enc$ and changes some of the tokens to produce its output $\tok$. (Technically, we view it as a channel over $\Sigprc$ by applying the hash function $\phi$ to the output $\tok$ of $\Wat$.) Our goal is to show that for a codeword $x \in \Sigprc^n$, $\Wat$ makes at most $(1-\Omega(\alpha)) \cdot n$ substitutions if the empirical entropy of the generated text derived from $x$ (i.e., the output of the channel) is at least $\Omega(\alpha \cdot n \cdot \log |\Sigma|)$. In such a case, we say that the ``empirical entropy rate'' of the generated text is $\Omega(\alpha)$. 

The key technical difference between our watermarking procedure and that of \cite{christ2024pseudorandom} is the introduction of the hash function $\phi$. Let us first see what goes wrong without such a $\phi$, i.e., if $\Sigma = \Sigprc$ and $\phi$ is the identity map. Suppose that for each $i$, $\Model(\tok_i = \cdot \mid \tok_{1:i-1})$ is uniform over a fixed subset $\Sigma' \subset\Sigma$ of size $|\Sigma'| = |\Sigma|^\alpha$. Then with high probability the output of $\Modelo$ will have empirical entropy rate $\Omega(\alpha)$. However, a codeword $x$ produced by $\PRC$ will only have the property that a roughly $|\Sigma|^{\alpha-1}$ fraction of its tokens belong to $\Sigma'$. Thus, the procedure in \cref{alg:wm-prc} will only be able to maintain roughly a $|\Sigma|^{\alpha-1}$ fraction of tokens of $x$ (i.e., the \textbf{if} statement on \cref{line:sample-ber} will only succeed with probability $|\Sigma|^{\alpha-1}$), and thus the error rate of the corresponding substitution channel will be roughly $1 - |\Sigma|^{\alpha - 1} = 1-o(1)$, since $|\Sigma| \geq n$. %
This error rate is too large, since our pseudorandom codes (even over large alphabets) can only correct a $1-p$ fraction of errors, for any \emph{constant} $p > 0$.

To circumvent this issue, we make the following observation: if the entropy of the distribution $p_i := \Model(\tok_i = \cdot \mid \tok_{1:i-1}) \in \Delta(\Sigma)$ is at least $\alpha \cdot \ln |\Sigma|$, then, if $\Sigma$ is sufficiently large compared to $\Sigprc$ (roughly $|\Sigma| \geq |\Sigprc|^{\Omega(1/\alpha)}$), then for a uniformly random hash function $\phi : \Sigma \to \Sigprc$, the pushforward distribution $\bar p_i = \phi \circ p_i \in \Delta(\Sigprc)$ will be ``close to uniform'' in the folloing sense: At least an $\Omega(\alpha)$ fraction of the mass of $\bar p_i$ will be on tokens $\sigma \in \Sigprc$ for which $\bar p_i(\sigma) \leq 1/|\Sigprc|$. This claim is proven formally in \cref{lem:phi-unif-dist}, and may be seen as a consequence of concentration of measure. Since each such token $\sigma$ occurs with probability roughly $1/|\Sigma|$ under a codeword output by the PRC (by undetectability), altogether such tokens account for a probability $\Omega(\alpha)$ event under which the condition on \cref{line:sample-ber} of \cref{alg:wm-prc} will evaluate to true. %
Thus, at least a fraction $\Omega(\alpha)$ of the tokens will \emph{not} be substituted, as desired. This idea (together with the application of various concentration inequalities) allows us to ensure that the substitution channel induced by $\Wat$ introduces a fraction $1-\Omega(\alpha)$ of errors.

Next, the $p$-\edit bounded channel referred to in the statement of \cref{thm:wm-from-prc} introduces an additional $p$ fraction of errors in the sequence of watermarked text. However, because an output of $\Wat$ consists of a sequence of text $\tok$ consisting of multiple (say $M$) consecutive codewords of $\PRC$ (each of block length $n$), we run into the following issue: suppose that all of the entropy of $\tok$ is concentrated in $\alpha \cdot M$ of the codeword blocks. Also suppose that the $p$-\edit bounded channel concetrates all of its $p \cdot Mn$ errors in those same $\alpha M$ blocks, meaning that the effective rate of (\edit) errors in those $\alpha M$ blocks is in fact $\frac{p Mn}{\alpha Mn} = p/\alpha$. Thus, we need the PRC to in fact be robust to $(1-\Omega(\alpha)+ O(p/\alpha))$-\edit-bounded channels in order to detect the watermark. It is straightforward to see that the aforementioned scenario is worst possible, meaning that such robustness is in fact sufficient. Full details of the proof may be found in \arxiv{\cref{sec:prc-to-wm}}\neurips{\cref{sec:prc-to-wm-formal}}.

}

\section*{Acknowledgements}
NG was supported by a Fannie \& John Hertz Foundation Fellowship and an NSF Graduate Fellowship. AM is supported in part by a Microsoft Trustworthy AI Grant, an ONR grant and a David and Lucile Packard Fellowship. We thank Gabe Schoenbach for helpful comments on an earlier draft of this work.

\neurips{\bibliographystyle{plain}}
\arxiv{\bibliographystyle{alpha}}
\bibliography{wm}

\newpage
\appendix

\neurips{}
\neurips{}

\section{Secret-key substitution PRCs}
\label{sec:prfprc-proof}

\subsection{Hardness assumptions}
\label{sec:hardness-assumption}
In this section, we discuss several standard conjectures on the hardness of PAC learning which imply \cref{asm:local-weak-prf}. We first recall the definition of PAC learning with respect to the uniform distribution:
\begin{definition}[PAC learning for the uniform distribution]
  \label{def:pac-learning}
  Suppose that for each $n \in \BN$, we are given a class $\MF_n$ of boolean functions indexed by strings $s \in \{0,1\}^{\ell(n)}$, for some function $\ell : \BN \to \BN$; write $\MF_n = \{ F_s : \hc[n] \to \{0,1\} \}_{s \in \{0,1\}^{\ell(n)}}$. Given $q \in [0, 1/2)$, we say that $\MF = (\MF_n)_n$ is \emph{$(\ep(n), \delta(n))$-PAC learnable with noise level $q$} if there is a $\poly(n)$-time algorithm $\Alg$ which, for some $m(n) \in \BN$, takes as input a dataset $D = ((x_i, y_i))_{i \in [m(n)]}$ and $X \in \hc[n]$, and satisfies the following with probability $1-\delta$ over $s \sim \Unif(\hc[\ell(n)]), x_i \sim \Unif(\hc[n]), y_i = F_s(x_i)$ ($i \in [m(n)]$):
  \begin{align}
\Pr_{X \sim \Unif(\hc[n])} \left( \Alg((x_i,y_i)_{i \in [m(n)]}, X) \neq F_s(X) \right) \leq \ep\nonumber.
  \end{align}
\end{definition}

\begin{proposition}
  \label{prop:prf-to-learning}
  Suppose that a family of functions $\MF = (\MF_n)_{n \in \BN}$ is not $(1/3, 1/3)$-PAC learnable with noise level $q \in [0,1/2)$ (per \cref{def:pac-learning}). Then, with $n(\lambda) = \lambda$, the family of functions $(\MF_{n(\lambda)})_\lambda$ is a weak PRF with noise level $q$ (per \cref{def:prf-weak}).
\end{proposition}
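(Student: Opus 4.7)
My plan is to prove the contrapositive. Assume $(\MF_{n(\lambda)})_\lambda$ is \emph{not} a weak PRF with noise level $q$: then there is a PPT adversary $\Adv$ making $T \leq \poly(\lambda)$ oracle queries with distinguishing advantage $\alpha \geq 1/\poly(\lambda)$ between the noisy-PRF oracle (returning $(x, F_s(x) \oplus e)$ for $x \sim \Unif(\hc[n])$, $e \sim \Ber(q)$) and the uniform oracle $\Funif$. I will convert $\Adv$ into a $(1/3, 1/3)$-PAC learner for $\MF$ with noise level $q$, contradicting the assumption.

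The core step combines the standard Yao hybrid argument with a distinguishing-to-prediction reduction. Define hybrids $H_0, \ldots, H_T$ where under $H_j$ the first $j$ oracle responses are noisy-PRF samples and the remaining are uniform. Averaging yields an index $j^* \in [T]$ with $\Pr[\Adv=1 \mid H_{j^*}] - \Pr[\Adv=1 \mid H_{j^*-1}] \geq \alpha/T$. To predict $F_s(X)$ for a fresh input $X \sim \Unif(\hc[n])$, I simulate $\Adv$ by answering the first $j^*-1$ queries with disjoint training samples (correctly distributed as noisy-PRF samples), the $j^*$-th query with the pair $(X,b)$ for a candidate bit $b \in \{0,1\}$, and the remaining queries with fresh uniform pairs. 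Writing $p_b(X)$ for the resulting acceptance probability, a direct calculation exploiting $e \sim \Ber(q)$ gives
\begin{align*}
\frac{\alpha}{T} \;\leq\; \left(\tfrac{1}{2}-q\right)\E_X\!\left[(2F_s(X)-1)\bigl(p_1(X)-p_0(X)\bigr)\right].
\end{align*}
The randomized prediction rule that outputs $1$ with probability $\bigl(1 + \hat p_1(X) - \hat p_0(X)\bigr)/2$, where $\hat p_b$ are empirical averages from independent simulations on disjoint training batches, then has expected correlation $\Omega\bigl(\alpha/(T(1-2q))\bigr) = 1/\poly(\lambda)$ with $F_s(X)$ on random $X$. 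Since $j^*$ is unknown, I iterate over all $T$ candidates and select the one whose empirical advantage on a held-out portion of the training set is largest.

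The final step is to amplify this weak predictor (which has expected advantage $1/\poly(\lambda)$ over random guessing) into a full $(1/3,1/3)$-PAC learner via standard boosting-style techniques, with only a polynomial blowup in sample size and running time; our reduction naturally supplies independent predictors from disjoint training batches. All operations remain in $\poly(\lambda)$-time since $\Adv$ does. The main technical obstacle I anticipate is precisely this amplification: the per-input success probability of the weak predictor is $1/2 + \gamma(X)/2$ for $\gamma(X):=(2F_s(X)-1)(p_1(X)-p_0(X))$, which may be negative for some~$X$, so naive majority-over-runs on a fixed $X$ drives accuracy towards $\mathbf{1}[\gamma(X)>0]$ rather than towards~$1$, and the bound $\E_X[\gamma(X)] \geq \Omega(\alpha/T)$ does not directly guarantee that $\Pr_X[\gamma(X) > 0]$ is close to~$1$. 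Navigating this requires invoking an appropriate boosting procedure compatible with uniform-distribution weak learning (or leveraging finer structural properties of $\Adv$); the remaining ingredients---the hybrid argument, the prediction calculation, and the concentration-based estimation of $p_0$ and $p_1$---are routine.
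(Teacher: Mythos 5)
Your proposal follows essentially the same route as the paper's proof: a hybrid argument over the adversary's queries that plants the challenge point $(X,b)$ at a random position to extract a weak predictor with $1/\poly(\lambda)$ advantage over the uniform distribution, followed by boosting to reach error $1/3$. The amplification step you flag as the main obstacle is handled in the paper exactly as you anticipate---by citing a standard boosting argument (Freund's boost-by-majority)---so your blind reconstruction matches the intended argument.
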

\cref{prop:prf-to-learning} is essentially folklore (see, e.g., \cite{bogdanov2017pseudorandom}), but we provide a proof sketch for completeness:
\begin{proof}[Proof sketch of \cref{prop:prf-to-learning}]
  Suppose that a polynomial-time adversary $\Adv$ satisfies
    \begin{align}
\left| \E_{s \sim \{0,1\}^{n(\lambda)}}\left[ \tAdv^{F_{s}(\cdot)}(1^{n(\lambda)}) = 1 \right] - \E_\Funif\left[ \tAdv^\Funif(1^{n(\lambda)}) \right] \right| \geq \lambda^{-O(1)}\nonumber,
    \end{align}
    where $\Funif : \hc[n] \to \{0,1\}$ denotes a uniformly random function. Suppose that $\Adv$ makes at most $m(\lambda)$ queries to its oracle (either $F_s(\cdot)$ or $\Funif$). Consider the algorithm which receives $m(n)$ samples $(x_i, y_i)$ with $x_i \sim \Unif(\hc[n]), y_i = F_s(x_i)$ for an unknown $s$ together with $X \sim \Unif(\hc[n(\lambda)])$,  chooses a uniformly random index $m' \in [m(\lambda)-1]$, and then runs $\Adv$ with the oracle responses given by $(((x_i, y_i))_{i \in [m']}, (X,b), (x_j, y_j)_{j \in [m'+2, n(\lambda)]})$, where $x_j \sim \Unif(\hc[n]), y_j \sim \Unif(\{0,1\})$ for $j \geq m'+2$, for each value of $b \in \{0,1\}$. By a hybrid argument, for some $b \in \{0,1\}$, this algorithm must have $\lambda^{-O(1)}$ advantage at predicting $F_s(X)$. A standard boosting argument \cite{freund1995decision} then establishes the existence of a polynomial-time algorithm that $(1/3, 1/3)$-PAC learns the family $\MF$. 
  \end{proof}

  By \cref{prop:prf-to-learning}, either of the two assumptions below implies \cref{asm:local-weak-prf}:
  \begin{assumption}[Sparse noisy parity]\label{asm:sparse-parity}
There is no polynomial-time algorithm which $(1/3, 1/3)$-PAC learns the family $(\MF_n)_n$ of \emph{$\log(n)$-sparse parities} with noise level $q = 1/3$, i.e., $\MF_n = \{ F_s(x) = \bigoplus_{i \in [n]} x_is_i \ : \ s \in \hc[n], \| s \|_1 = \log(n) \}$. 
\end{assumption}
It is straightforward to see (using a hybrid argument) \cite{blum2003learning} that \cref{asm:sparse-parity} is equivalent to the assumption that there is no polynomial-time algorithm which \emph{identifies} the hidden $\log(n)$-sparse string $s \in \hc[n]$ with constant probability. 

\begin{assumption}[\cite{blum1994cryptographic}, Section 2.3]
  \label{asm:blkf}
  There is no polynomial-time algorithm which $(1/3, 1/3)$-PAC learns the family $(\MG_n)_n$ defined below, with noise level $q = 0$:
  \begin{align}
\MG_n = \left\{ G_{\MS_1, \MS_2}(x) = \Majority(x_i  : \ i \in \MS_1) \oplus \bigoplus_{i \in \MS_1} x_i \mid \MS_1, \MS_2 \subset [n] \mbox{ are disjoint subsets of size $\log(n)/2$} \right\}\nonumber.
  \end{align}
\end{assumption}
Similarly to the case for \cref{asm:sparse-parity}, \cref{asm:blkf} is equivalent to the assumption that there is no polynomial-time algorithm which \emph{identifies} the function $G_{\MS_1, \MS_2}$ with constant probability. We remark that \cite[Theorem 3]{feldman2009agnostic} shows that an algorithm which learns $\log(n)$-sparse parities at noise level $\frac 12 - O(1/n)$ implies the existence of an algorithm which learns (noiseless) $\log(n)$-juntas (with both learning algorithms being over the uniform distribution). Notice, though, that \cref{asm:sparse-parity} assumes there is no efficient algorithm which learns $\log(n)$-sparse noisy parities with noise level $q$ bounded away from $1/2$, which is  stronger than assuming hardness of learning $\log(n)$-sparse noisy parities with noise level $q = 1/2 - O(1)$. 
Thus, even given the reduction of \cite{feldman2009agnostic},  \cref{asm:blkf} does not appear to imply \cref{asm:sparse-parity}. %

\subsection{Additional preliminaries}
\label{sec:add-prelim}

\paragraph{Weakly substitution-bounded channels.}
We will show that our PRCs are robust to a slightly more broad class of channels which are only guaranteed to introduce a bounded number of substitutions with high probability with respect to a uniformly random string from the channel's alphabet. The codes of \cite{christ2024pseudorandom} also enjoy robustness to such channels; thus, the robustness guarantee for our substitution codes is directly comparable to that of \cite{christ2024pseudorandom}. 
\begin{definition}[Weakly substitution bounded channel]
  \label{def:weak-sub-bounded}
  We say that a channel $\ME$ over an alphabet $\Sigma$ is \emph{$p$-weakly-substitution-bounded} if for any $n \in \BN$, $\Pr_{y \sim \Unif(\Sigma^n),z \sim \ME(y)}\left( \dham(y, z) > pn \right) \leq \negl(n)$.
\end{definition}

\paragraph{Noise sensitivity.} For $x \in \{0,1\}^n$, we let $\Nrho(x) \in \Delta(\hc)$ denote the distribution over $y$, where $y_i = x_i$ with probability $1/2 + \rho/2$ (and $y_i = 1-x_i$ with probability $1/2 - \rho/2$), independently for each $i$. Note that $y \sim \Nrho[1-2\delta](x)$ is generated by flipping each bit of $x$ with probability $\delta$. Define $\NS_\delta[f] := \Pr_{x \sim \Unif(\hc[n]), y \sim \Nrho[1-2\delta](x)}\left( f(x) \neq f(y) \right)$. Also write $\bW^i[f] = \sum_{S \subset [n], |S| = i} \hat f(S)^2$. 
\begin{lemma}[Theorem 2.49 of \cite{odonnell2014analysis}]
  \label{lem:ns-fourier}
For any $f : \hc \to \{-1,1\} $, we have $\NS_\delta[f] = \frac  12 \sum_{i=0}^n (1-(1-2\delta)^i) \cdot \bW^i[f]$. 
\end{lemma}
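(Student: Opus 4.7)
The plan is to follow the standard Fourier-analytic derivation. First I would rewrite the noise sensitivity as a correlation: since $f : \hc \to \{-1,1\}$, one has $\NS_\delta[f] = \Pr[f(x) \ne f(y)] = \frac{1}{2}(1 - \E[f(x) f(y)])$, where $x \sim \Unif(\hc)$ and $y \sim \Nrho[1-2\delta](x)$. This reduces the claim to computing the correlation $\E[f(x) f(y)]$ under the $\rho$-correlated distribution with $\rho = 1 - 2\delta$.

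Next I would expand $f$ in the Fourier basis, $f(x) = \sum_{S \subset [n]} \hat f(S) \chi_S(x)$ with characters $\chi_S(x) = \prod_{i \in S} (-1)^{x_i}$, giving $\E[f(x) f(y)] = \sum_{S,T} \hat f(S) \hat f(T)\, \E[\chi_S(x) \chi_T(y)]$. The key computation is the joint character expectation. Since under $\Nrho[\rho]$ the coordinate pairs $(x_i,y_i)$ are mutually independent across $i$, this expectation factorizes coordinate by coordinate, with $\E[(-1)^{x_i}] = \E[(-1)^{y_i}] = 0$ (by uniformity of $x_i$ and the symmetric marginal of $y_i$) and $\E[(-1)^{x_i+y_i}] = \Pr[x_i=y_i] - \Pr[x_i \ne y_i] = \rho$. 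Hence $\E[\chi_S(x) \chi_T(y)] = 0$ unless $S = T$, in which case it equals $\rho^{|S|}$, yielding $\E[f(x) f(y)] = \sum_i \rho^i \bW^i[f]$.

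Finally, I would invoke Parseval's identity $\sum_i \bW^i[f] = \E[f(x)^2] = 1$ (valid because $f$ is $\pm 1$-valued) to rewrite $1 - \E[f(x) f(y)] = \sum_i (1 - \rho^i) \bW^i[f]$ and substitute back into the identity of the first step, obtaining exactly $\NS_\delta[f] = \frac{1}{2}\sum_{i=0}^n (1 - (1-2\delta)^i)\bW^i[f]$. The ``main obstacle'' is essentially nonexistent, as this is a textbook identity already cited from O'Donnell's book. The only step that merits care is verifying that $\E[\chi_S(x)\chi_T(y)] = 0$ whenever $S \ne T$, which is a direct consequence of the independence of coordinate pairs under $\Nrho$ together with the vanishing of the marginal expectations $\E[(-1)^{x_i}]$ and $\E[(-1)^{y_i}]$.
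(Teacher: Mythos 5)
Your proof is correct and is exactly the standard Fourier-analytic argument; the paper itself offers no proof of this lemma, citing it directly as Theorem~2.49 of O'Donnell's book, and your derivation (writing $\NS_\delta[f]=\tfrac12(1-\E[f(x)f(y)])$, computing $\E[\chi_S(x)\chi_T(y)]=(1-2\delta)^{|S|}\One{S=T}$ by coordinate-wise independence, and applying Parseval) is precisely the textbook proof of that theorem.
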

\begin{corollary}
  \label{cor:ns-poly}
For any $f : \hc \to \{0,1\}$, we have $\NS_\delta[f] = \frac 12 - \sum_{i=0}^n \phi_i \cdot (1-2\delta)^i$, for some $\phi_0, \ldots, \phi_n \geq 0$ satisfying $\phi_0 + \cdots + \phi_n = 1/2$. 
\end{corollary}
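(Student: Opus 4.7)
The plan is to reduce the $\{0,1\}$-valued statement directly to the $\{-1,1\}$-valued statement of \cref{lem:ns-fourier}. Given $f : \hc[n] \to \{0,1\}$, define $g : \hc[n] \to \{-1,1\}$ by $g(x) := 1 - 2f(x)$. This is a bijective affine transformation of the codomain, so $g(x) \neq g(y)$ if and only if $f(x) \neq f(y)$, and hence $\NS_\delta[g] = \NS_\delta[f]$ for every $\delta \in [0,1/2]$.

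Next I would apply \cref{lem:ns-fourier} to $g$ to write
\[
\NS_\delta[f] \;=\; \NS_\delta[g] \;=\; \tfrac{1}{2}\sum_{i=0}^{n}\bigl(1 - (1-2\delta)^{i}\bigr)\,\bW^{i}[g].
\]
Since $g$ is $\{-1,1\}$-valued, Parseval's identity gives $\sum_{i=0}^{n}\bW^{i}[g] = \|g\|_2^2 = 1$. Splitting the sum above accordingly,
\[
\NS_\delta[f] \;=\; \tfrac{1}{2}\sum_{i=0}^{n}\bW^{i}[g] \;-\; \tfrac{1}{2}\sum_{i=0}^{n}(1-2\delta)^{i}\,\bW^{i}[g] \;=\; \tfrac{1}{2} \;-\; \sum_{i=0}^{n}\phi_i\,(1-2\delta)^{i},
\]
where I set $\phi_i := \tfrac{1}{2}\bW^{i}[g]$.

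It remains to verify the two properties of the coefficients $\phi_i$. Nonnegativity is immediate since $\bW^{i}[g] = \sum_{|S|=i}\hat g(S)^2 \geq 0$, and the normalization $\sum_{i=0}^{n}\phi_i = \tfrac{1}{2}\sum_i \bW^{i}[g] = \tfrac{1}{2}$ follows from Parseval as above. There is no real obstacle here: the only subtlety is checking that $\NS_\delta$ is preserved under the affine bijection $f \mapsto 1-2f$, which holds because $\NS_\delta$ depends on $f$ only through the event $\{f(x) \neq f(y)\}$.
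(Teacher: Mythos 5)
Your proof is correct and follows essentially the same route as the paper: apply \cref{lem:ns-fourier} to the $\pm 1$-valued version of $f$ (the paper uses $2f(x)-1$ rather than $1-2f(x)$, an immaterial sign difference) and invoke $\sum_{i=0}^n \bW^i = 1$ to identify $\phi_i = \frac{1}{2}\bW^i$. Nothing is missing.
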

\begin{proof}
The corollary is an immediate consequence of \cref{lem:ns-fourier} applied to the mapping $x \mapsto 2f(x) - 1$, together with the fact that any $f : \hc \to \{-1,1\}$ satisfies $\sum_{i=0}^n \bW^i[f] = 1$. 
\end{proof}

\subsection{Proof of \cref{thm:prf-prc}}
\label{sec:prfprc-proof}
\neurips{}
\neurips{\paragraph{Proof overview.}}
\neurips{}

\neurips{\paragraph{Formal proof.}} Suppose that we are given some $p < 1/2$ together with a weak PRF family $\MF = \{ F_s : \hc[n(\lambda)] \to \hc[] \}_{\lambda \in \BN, s \in \hc[\ell(\lambda)]}$ with noise level $q < 1/2$ which verifies \cref{asm:local-weak-prf}, i.e., for some function $\tau(\lambda) \leq \log n(\lambda)$, the family is $\tau(\lambda)$-local. We consider the construction $\PRFPRC[\MF, p, q]$ in \cref{alg:prf-prc}. We let the functions $n(\lambda), \ell(\lambda)$ be as given by the PRF family $\MF$, and choose
\begin{align}
 m(\lambda) := C_0 \cdot (1-2q)^{-4} \cdot n(\lambda)^{4 \log \frac{1}{1-2p}}, \qquad N(\lambda) := 3m(\lambda)\cdot (n(\lambda)+1)^2\label{eq:choose-Nn},
\end{align}
where $C_0$ is a constant chosen sufficient large as specified in the proof below. By \cref{def:sk-prc}, to prove \cref{thm:prf-prc}, it suffices to establish soundness, undetectability, and robustness to $p$-weakly-substitution-bounded channels: we do so in \cref{lem:prfprc-sound}, \cref{lem:prfprc-undet}, \cref{lem:prfprc-robust}, respectively, below.

\begin{lemma}[Soundness]
  \label{lem:prfprc-sound}
  For any function family $\MF = \{ F_s \}_s$ on $\{0,1\}^{n(\lambda)}$ indexed by $s \in \hc[\ell(\lambda)]$, and $p,q \in (0,1)$, %
  $\PRFPRC[\MF, p, q]$ is sound.
\end{lemma}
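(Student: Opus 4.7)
The plan is to show that for any fixed $y \in \Sigma^\st$, the statistic $W$ computed by $\Dec$ is essentially a sum of independent fair coin flips, so by a standard concentration bound the threshold is exceeded only with negligible probability.

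First, I would observe that for a fixed $y$ and any choice of $\pi$, the string $\pi^{-1} \circ y$ is a fixed element of $\hc[N(\lambda)]$ (after truncating or padding $y$ as appropriate; the interesting case is $|y| \geq N(\lambda)$, and shorter $y$ can be handled analogously). Since $z \sim \Unif(\hc[N(\lambda)])$ and XOR against a uniform mask is a bijection, the string $(\pi^{-1} \circ y) \oplus z$ is uniformly distributed in $\hc[N(\lambda)]$, independently of $\pi$ and $s$. Restricting to the first $(n(\lambda)+1)m(\lambda)$ coordinates yields that the tuples $(x_j, w_j)$ computed on \cref{line:xi-wi} of $\Dec$ are jointly uniform on $\hc[(n(\lambda)+1)m(\lambda)]$ and independent of $s$.

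Next, fix any realization of $s$ (and hence of $F_s$). Because each $(x_j, w_j)$ is uniform on $\hc[n(\lambda)+1]$ and the $w_j$ coordinate is independent of the $x_j$ coordinate, the conditional probability $\Pr[w_j = F_s(x_j) \mid s]$ equals $1/2$, regardless of what $F_s$ is. Moreover, the tuples are mutually independent across $j \in [m(\lambda)]$, so the indicators $\One{w_j = F_s(x_j)}$ are i.i.d.\ $\Ber(1/2)$ conditional on $s$. Hoeffding's inequality then gives
\begin{align*}
\Pr\!\left[W > \tfrac{m(\lambda)}{2} + \ln(m(\lambda))\sqrt{m(\lambda)} \,\Big|\, s\right] \leq \exp\!\big(-2 \ln(m(\lambda))^2\big) = m(\lambda)^{-2\ln m(\lambda)}.
\end{align*}
Integrating out $s$ preserves the bound, and since $m(\lambda) \geq \lambda$ under the parameter choice \cref{eq:choose-Nn} (recalling $n(\lambda) \geq \lambda$), this quantity is $\negl(\lambda)$. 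Consequently $\Dec$ returns $\perp$ with probability $1-\negl(\lambda)$, which is exactly soundness.

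There is really no substantive obstacle here: the argument is entirely information-theoretic and does not invoke \cref{asm:local-weak-prf} or the structure of $F_s$ at all. The only thing one needs to be slightly careful about is the edge case when $|y| < N(\lambda)$, in which case one can pad $y$ deterministically (e.g.\ with zeros) before applying $\pi^{-1}$; the same masking argument still renders the decoded bits uniform. This is why soundness is the easiest of the three PRC properties to establish for $\PRFPRC[\MF, p, q]$; the real work lies in undetectability (which uses the PRF assumption) and robustness (which requires the Fourier/Dobrushin machinery sketched in the proof overview).
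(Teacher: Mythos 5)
Your proposal is correct and follows essentially the same route as the paper: since $z$ is a uniform one-time pad, the decoded pairs $(x_j,w_j)$ are uniform and independent of $s$, so $W\sim \Bin(m(\lambda),1/2)$ and a Chernoff/Hoeffding bound at the threshold $m(\lambda)/2+\ln(m(\lambda))\sqrt{m(\lambda)}$ gives a $\negl(\lambda)$ failure probability. Your extra remark about padding/truncating $y$ to length $N(\lambda)$ is a harmless detail the paper glosses over by fixing $y\in\{0,1\}^{N(\lambda)}$.
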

\begin{proof}
  Fix $\lambda\in \BN$, and write $n = n(\lambda), \ell = \ell(\lambda), m = m(\lambda), N = N(\lambda)$. 
Fix any $y \in \hc[N]$. Let $x_1, \ldots, x_m \in \hc[n], w_1, \ldots, w_m \in \{0,1\}$ be defined as in $\Dec(1^\lambda, \sk, y)$, i.e., $((x_1, w_1), \ldots, (x_m, w_m)) = ((\pi^{-1} \circ y) \oplus z)_{1:(n+1)m}$, where $\sk = (s,z,\pi)$. Since $\pi, z$ are uniformly random in their respective domains, it follows that $W = \sum_{j=1}^m \One{w_j = F_s(x_j)}$ is distributed as $\Bin(m, 1/2)$ for any fixed $s$, meaning that, for any $\delta \in (0,1)$, $\Pr_{z,\pi}(W > m/2 + \sqrt{m\log 1/\delta}) \leq \delta$. Choosing $\delta = 2^{-\log^2 m} \leq \negl(\lambda)$ yields that $\Pr_{\sk \gets \Key(1^\lambda)} (W > m/2 + \log(m) \sqrt{m}) \leq \negl(\lambda)$, as desired. 
\end{proof}

\begin{lemma}[Undetectability]
  \label{lem:prfprc-undet}
Fix $p,q \in [0,1/2)$. Suppose that $\MF = \{ F_s \}_s$ is a weak PRF on $\{0,1\}^{n(\lambda)}$ with noise level $q$ (\cref{def:prf-weak}), indexed by $s \in \{0,1\}^{\ell(\lambda)}$. Then $\PRFPRC[\MF, p,q]$ is undetectable.
\end{lemma}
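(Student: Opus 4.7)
The plan is to prove undetectability via a standard hybrid argument that reduces to the weak PRF assumption on $\MF$. Fix any PPT adversary $\Adv$ and let $\epsilon(\lambda)$ denote its distinguishing advantage in \eqref{eq:undetectability}. I will introduce a hybrid oracle $H$ which behaves exactly like $\Enc(1^\lambda, \sk, \emptyset)$ except that each quantity $F_s(x_i) \oplus e_i$ appearing in the definition of $a$ is replaced by a freshly drawn uniform bit $u_i \in \{0,1\}$. It then suffices to establish (i) $\Adv$'s view under $H$ is distributed identically to its view under $\MU$, and (ii) $\Adv$ cannot distinguish the real oracle $\Enc(1^\lambda, \sk, \cdot)$ from $H$ with non-negligible advantage.

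For (i), observe that in $H$ each response takes the form $\pi \circ (a \oplus z)$, where $a = ((x_1, u_1), \ldots, (x_m, u_m), r)$ is a concatenation of independent uniform bits and hence is itself uniform on $\hc[N]$. Because $z \sim \Unif(\hc[N])$ is independent of $a$, the string $a \oplus z$ is uniform on $\hc[N]$ as well, and applying the fixed permutation $\pi$ still yields a uniform string. Moreover, fresh randomness is used for each query (the samples $x_i,u_i,r$ are drawn independently every time), so the sequence of responses is iid uniform on $\hc[N]$, which exactly matches the behavior of $\MU$.

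For (ii), I will construct an adversary $\Adv'$ against the weak PRF game for $\MF$ at noise level $q$. At setup $\Adv'$ samples $z \sim \Unif(\hc[N])$ and a uniform permutation $\pi$ on $[N]$; it then simulates $\Adv$. Each oracle query of $\Adv$ is answered by making $m$ queries to $\Adv'$'s own PRF oracle, obtaining pairs $(x_i, w_i)_{i \in [m]}$, drawing a fresh tail $r \sim \Unif(\hc[N - (n+1)m])$, forming $a := ((x_1, w_1), \ldots, (x_m, w_m), r)$, and returning $\pi \circ (a \oplus z)$ to $\Adv$. When the PRF oracle is $F_s(\cdot)$ at noise level $q$ we have $w_i = F_s(x_i) \oplus e_i$, so the simulation is distributed exactly as $\Enc(1^\lambda, \sk, \emptyset)$; when the oracle is the uniform function $\Funif$, each $w_i$ is a uniform bit (up to the negligible probability that the fresh random inputs $x_i$ collide across queries, which can be absorbed into $\negl(\lambda)$), so the simulation is distributed as $H$. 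Thus $\Adv'$ inherits the advantage $\epsilon(\lambda)$ of $\Adv$, and the weak PRF assumption forces $\epsilon(\lambda) \leq \negl(\lambda)$. Combining (i) and (ii) gives undetectability. The main subtlety is really only in the reduction bookkeeping; the design work — the masks $z$ and $\pi$ — is motivated by robustness rather than undetectability, and here these simply ensure a clean match with $\MU$.
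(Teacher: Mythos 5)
Your proposal is correct and follows essentially the same route as the paper's proof: a single reduction to the weak-PRF game in which the real oracle corresponds to $\Enc$ and the uniform-function oracle yields (up to the negligible probability of collisions among the random inputs $x_j$) responses distributed exactly as the uniform oracle $\MU$. Your explicit hybrid $H$ is just a named version of the intermediate distribution the paper handles implicitly, and your observation that $z$ and $\pi$ play no essential role for undetectability is accurate.
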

\begin{proof}
Fix $\lambda \in \BN$, and write $n = n(\lambda), \ell = \ell(\lambda), m = m(\lambda), N = N(\lambda)$.   Consider any probabilistic polynomial-time adversary $\Adv$. Let $\sk = (s,z,\pi) \gets \Key(1^\lambda)$ denote the secret key for the PRC. Suppose that the adversary makes a total of $Q$ queries to $\Enc(1^\lambda, \sk, \emptyset)$; its view consists of $Q$ tuples $\pi \circ (((x_1\^i, F_s(x_1\^i) \oplus e_1\^i), \ldots, (x_m\^i, F_s(x_m\^i) \oplus e_m\^i))\oplus z)$, for $i \in [Q]$. If $\Adv$ satisfies
  \begin{align}
\left| \Pr_{\sk \sim \Key(1^\lambda)} \left(\Adv^{\Enc(1^\lambda, \sk, \emptyset)}(1^\lambda) = 1\right) - \Pr_{\MU} \left( \Adv^\MU(1^\lambda) = 1\right)\right| \geq \lambda^{-c}\label{eq:adv-success}
  \end{align}
  for some constant $c > 0$, then we could construct an adversary $\Adv'$ which violates security of $F_s$ (namely, \cref{def:prf-weak}), as follows: $\Adv'$ makes $Qm$ calls to $F_s(\cdot)$ on random inputs, which gives it samples $(x_j\^i, F_s(x_j\^i) \oplus e_j\^i)$ for $j \in [m], i \in [Q]$, draws a uniformly random permutation $\pi : [N] \to [N]$ and a uniformly random string $z \in \hc[N]$, and outputs the result of $\Adv$ given the $Q$ tuples $y\^i := \pi \circ (((x_1\^i, F_s(x_1\^i) \oplus e_1\^i), \ldots, (x_m\^i, F_s(x_m\^i) \oplus e_m\^i), r\^i)\oplus z)$, where $r\^i$ are uniformly random strings of length $N - (n+1)m$ (for $i \in [Q]$). Note that if the $mQ$ calls made by $\Adv'$ were generated by a random function, then the strings $y\^i$ are uniformly random in $\hc[N]$, conditioned on the event that $x_j\^i$ are all distinct, for $j \in [m], i \in [Q]$. %
  The probability that all $x_j\^i$ are distinct is at least $1-mQ/2^{n} \geq 1-\negl(\lambda)$, meaning that, by \cref{eq:adv-success}, $\Adv'$ achieves advantage $\lambda^{-c} - \negl(\lambda)$ on distinguishing a random element of the family $\{ F_s \}_s$ from a uniformly random function, thus contradicting the security of $\MF$. 
\end{proof}

Given $N,t \in \BN$, let $\MJ_{N,t}$ denote the distribution of $t$-tuples $J = (j_1, \ldots, j_t)$ where $j_1, \ldots, j_t$ are drawn uniformly from $[N]$ \emph{without replacement}.
\begin{lemma}[Robustness]
  \label{lem:prfprc-robust}
  Fix $p, q \in [0, 1/2)$, and that $\MF = \{ F_s \}_{s \in \hc[\ell(\lambda)], \lambda \in \BN}$ is a $\tau(\lambda)$-local function family. %
  Then $\PRFPRC[\MF p,q]$ is robust to any $p$-weakly-substitution-bounded channel.
\end{lemma}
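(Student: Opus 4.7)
The plan is to follow the high-level argument sketched after the statement of \cref{thm:prf-prc} and make the three sub-steps precise. First, I would reduce the effect of an arbitrary $p$-substitution-bounded channel $\ME$ to that of independent per-bit noise on each tuple $(x_i,w_i)$. The output $y^0 \gets \Enc(1^\lambda,\sk,\emptyset)$ has the form $\pi\circ(a\oplus z)$, where $a = ((x_i,F_s(x_i)\oplus e_i))_{i\in[m]}$ concatenated with a uniform pad $r$, and $(\pi,z)$ is jointly uniform and independent of $a$. Hence $y^0$ is itself uniform on $\hc[N]$, which immediately lets me upgrade $p$-\emph{weakly}-substitution-boundedness to a genuine high-probability bound $\dham(y^0,\ME(y^0)) \leq pN$. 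Conditioning on the resulting set $T\subseteq[N]$ of flipped positions and on $z$, the positions corresponding to a given pair $(x_i,w_i)$ form a uniform subset of $n+1$ indices out of $[N]$, so the number of flipped bits in pair $i$ is $\Hyp(N,|T|,n+1)$. Since $N \geq 3m(n+1)^2$ and $|T|\leq pN$, the TV distance between this hypergeometric and $\Bin(n+1,p')$ with $p' := |T|/N \leq p$ is $O(n/N)$, so jointly across $i\in[m]$ we may couple $(x_i',w_i')$ to the independent perturbation $(x_i\oplus \eta_i^{(x)}, w_i\oplus\eta_i^{(w)})$, $\eta_i^{(x)}\sim\Ber(p')^{\otimes n}$, $\eta_i^{(w)}\sim\Ber(p')$, up to total error $o(1)$.

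Second, I would use $\tau$-locality of $F_s$ with $\tau\leq\log n(\lambda)$ to obtain a per-pair bias. By \cref{cor:ns-poly} applied to $F_s$, for any $p' \in [0,1/2)$,
\[
\Pr[F_s(x_i)\neq F_s(x_i')] \;=\; \tfrac12 - \textstyle\sum_{j=0}^\tau \phi_j (1-2p')^j \;\leq\; \tfrac12 - \tfrac12 (1-2p')^\tau,
\]
since the $\phi_j$ are nonnegative, sum to $1/2$, and vanish for $j>\tau$ by locality. Combining with $w_i' = F_s(x_i)\oplus e_i \oplus \eta_i^{(w)}$ and independence of $e_i,\eta_i^{(w)}$ from $(x_i,x_i')$, a direct computation gives
\[
\Pr[w_i' = F_s(x_i')] \;\geq\; \tfrac12 + \tfrac12 (1-2q)(1-2p')^\tau \cdot (1-2p') \;\geq\; \tfrac12 + \Omega\!\left((1-2q)(1-2p)^{\log n(\lambda)}\right),
\]
and the advantage is therefore at least $n(\lambda)^{-\log(1/(1-2p))}$ times a constant depending on $q$. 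The choice $m(\lambda) = C_0(1-2q)^{-4}n(\lambda)^{4\log(1/(1-2p))}$ is precisely what makes this bias exceed $\log(m(\lambda))/\sqrt{m(\lambda)}$ by a polynomial factor.

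Third, I would establish concentration of $W = \sum_i \One{w_i' = F_s(x_i')}$ around its mean. The indicators are \emph{not} independent across $i$: the permutation $\pi$ couples them (positions are drawn without replacement) and an adversarial channel $\ME$ can correlate how many flips land in each pair. I plan to bound the Dobrushin influence between pair $i$ and pair $i'$ by $O((n+1)^2/N) = O(1/m)$, using that conditioning on all coordinates of $(x_{i'},\eta_{i'}^{(x)},\eta_{i'}^{(w)})$ only reshuffles an $O(n)$-sized slice of the global pool of $N$ positions. Summing gives total Dobrushin coefficient $O(1)$, and \cref{thm:dobrushin} then yields $|W - \E W| \leq \log(m(\lambda))\sqrt{m(\lambda)}$ with probability $1-\negl(\lambda)$. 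Since $\E W \geq m/2 + \Omega(m \cdot (1-2q)(1-2p)^{\log n})$, the threshold in $\Dec$ is exceeded with all but negligible probability.

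The main obstacle will be step three: carefully setting up the dependency graph so Dobrushin applies cleanly. The subtlety is that the randomness is shared between $\pi$ (global, without-replacement), $z$ (per-bit, independent), and $\ME$ (adversarial given $y^0$), and the natural product-measure coupling from step one must be preserved when computing influences. I expect this to go through by conditioning first on $|T|$ and on the random pad $r$, reducing to a cleanly symmetric random-allocation problem where influences are controllable by $(n+1)^2/N$, as allowed by the generous choice $N = 3m(n+1)^2$ in \cref{eq:choose-Nn}.
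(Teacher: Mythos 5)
Your plan follows essentially the same route as the paper: reparametrize so that the error pattern of $\ME$ is independent of the codeword and randomly permuted, compare the without-replacement allocation of flipped positions to a product Bernoulli perturbation, convert locality into a per-pair bias via the Fourier expression for noise sensitivity, and finish with Dobrushin's inequality with influences bounded by $O((n+1)^2/N)$. Two caveats. First, your claim in Step~1 that the coupling to independent per-bit noise holds \emph{jointly} across all $m$ pairs with total error $o(1)$ is not justified and is in fact false as stated: the joint TV distance between drawing $m(n+1)$ positions without replacement from $[N]=3m(n+1)^2$ positions and drawing them with replacement is not $o(1)$ (the per-pair bound, whether the $O\bigl(t/\sqrt{N-t}\bigr)$ rate of \cref{lem:bin-hyp-tvd} or a Diaconis--Freedman-type $O(t^2/N)$ bound, does not sum to $o(1)$ over $m$ pairs). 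Fortunately this overstatement is not load-bearing, and your own Step~3 implicitly concedes it: you only need the coupling for the \emph{marginal} of each pair in order to lower-bound $\E[W]$, and the cross-pair dependence is then handled entirely by Dobrushin, exactly as in the paper. Second, the per-pair TV rate you quote, $O(n/N)$, is not what \cref{lem:bin-hyp-tvd} gives (it gives $\tfrac{2t}{\sqrt{N-t}}$); the paper sharpens this further by applying the comparison only to the $\tau+1$ coordinates on which $F_s$ actually depends, but with the choice of $N$ in \cref{eq:choose-Nn} even the crude $t=n+1$ version yields an error of order $m^{-1/2}$, which is dominated by the bias $(1-2q)(1-2p)^{\tau+1}=\Omega\bigl((1-2q)\,n^{-\log\frac{1}{1-2p}}\bigr)$, so your version goes through. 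With Step~1 restated as a per-pair marginal comparison rather than a joint coupling, the proof is correct and matches the paper's.
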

\begin{proof}
  Fix $\lambda \in \BN$, and write $n = n(\lambda), \ell = \ell(\lambda), m = m(\lambda), N = N(\lambda)$.   Let $\ME$ be a $p$-weakly-substitution-bounded channel over $\{0,1\}$. Given $x_j, e_j$ ($j \in [m]$) drawn as in \cref{line:sample-xe} of \cref{alg:prf-prc} and a uniformly random string $r \sim \Unif(\{0,1\}^{N-(n+1)m})$,  consider the string $a := ((x_1, F_s(x_1) \oplus e_1), \ldots, (x_m, F_s(x_m) \oplus e_m),r)$, and note that the output of $\Enc(1^\lambda, \sk, \emptyset)$ given that $x_1, \ldots, x_m, e_1, \ldots, e_m,r$ are sampled is $\bar y := \pi \circ (a \oplus z)$. %
  Given $v \in \hc[n]$, we let, with slight abuse of notation, $\ME(v) \in \hc[n]$ denote a random variable drawn from the eponymous distribution $\ME(v) \in \Delta(\hc[n])$. 
  We have
  \begin{align}
 u :=    (\pi^{-1} \circ \ME(\bar y)) \oplus z =& \left(\pi^{-1} \circ \ME(\pi \circ (a \oplus z))\right) \oplus z\nonumber\\
    =& \pi^{-1} \circ \left((\pi \circ z)\oplus \ME((\pi \circ a) \oplus (\pi \circ z))\right)\nonumber.
  \end{align}
  Since $z$ and $\pi$ are chosen uniformly at random in their respective domains and independent of $a$, it follows that the distribution of $u$ is the same as the distribution of
  \begin{align}
   u' := \pi^{-1} \circ \left( (\pi \circ a) \oplus v \oplus \ME(v) \right) = a \oplus (\pi^{-1} \circ (v \oplus \ME(v))),\nonumber
  \end{align}
  where $v \sim \Unif(\{0,1\}^n)$ is uniformly at random and independent of $\pi, a$. (In particular, we have used the reparametrization that sets $v := (\pi \circ a) \oplus (\pi \circ z)$. \noah{check})

Let us unpack $(\pi^{-1} \circ (v \oplus \ME(v)))_{1:(n+1)m} = ((x_1', w_1'), \ldots, (x_m', w_m'))$, where $x_j' \in \{0,1\}^m, w_j' \in \{0,1\}$ for $j \in[m]$. 
Note that the distribution of the statistic $W$ computed in $\Dec(1^\lambda, \sk, \ME(\bar y))$ is exactly the distribution of
\begin{align}
  \label{eq:decode-sum} \sum_{j=1}^m \One{w_j' \oplus e_j \oplus F_s(x_j) = F_s(x_j \oplus x_j')}.
\end{align}
By our assumption that the family $F_s$ is $\tau$-local, for each $s \in \{0,1\}^\ell$ there is some $J_s \in [n]^\tau$ and $G_s : \hc[\tau] \to \{0,1\}$ so that $F_s(x) = G_s(x_{J_s})$ for all $x \in \hc$. Define $t = \tau + 1$ and $G_s' : \hc[t] \to \{0,1\}$ by $G_s'(x) = G_s(x_1, \ldots, x_{t-1}) \oplus x_t$.

For each $j \in [m]$ and fixed $s \in \{0,1\}^\ell$, we have that
\begin{align}
  \Pr \left( w_j'\oplus e_j \oplus F_s(x_j) = F_s(x_j \oplus x_j') \right) =& (1-2q) \cdot \Pr \left(  F_s(x_j) = w_j' \oplus F_s(x_j \oplus x_j') \right)  + q\nonumber\\
  =& (1-2q) \cdot \Pr\left( G_s'((x_j)_{J_s}, 0) = G_s'((x_j)_{J_s} \oplus (x_j')_{J_s}, w_j')\right)+q\label{eq:gsprime-ub},
\end{align}
where the probability is taken over $x_j \sim \Unif(\hc[n]), e_j \sim \Ber(q)$, $z \sim \Unif(\hc[n])$, the uniformly random permutation $\pi$, and the randomness in $\ME$ (which together determine $w_j', x_j'$). 
We now apply \cref{lem:hyp-ber} to the function $$ (z_1, \ldots, z_t) \mapsto \One{G_s'((x_j)_{J_s}, 0) = G_s'((z_1, \ldots, z_{t-1}) \oplus (x_j)_{J_s}, z_t)}.$$ Note that, for fixed $v, \ME(v)$, since $\pi$ is chosen uniformly at random (and independent of $v, \ME(v), a$), the distribution of $((x_j')_{J_s}, w_j')$ is exactly the distribution of $y_J$ for $J \sim \MJ_{N,t}$, for $y = v \oplus \ME(v)$. Let us write $p(y) := \wt(y)/N$, for $y \in \hc[N]$. Thus, for any fixed $s$, we have
\begin{align}
  & \Pr\left( G_s'((x_j)_{J_s}, 0) = G_s'((x_j)_{J_s} \oplus (x_j')_{J_s}, w_j')\right)\nonumber\\
  =& \E_{x_j,v, \ME(v)}  \E_{\substack{J \sim \MJ_{N,t}\\ y = v \oplus \ME(v)}} \left[ \One{G_s'((x_j)_{J_s}, 0) = G_s'(((x_j)_{J_s}, 0) \oplus \bar y_J)} \mid x_j, v, \ME(v) \right]\nonumber\\
  \geq & \E_{x_j, v, \ME(v)} \E_{\bar x \sim \Ber(p(v \oplus \ME(v)))^t} \left[ \One{G_s'((x_j)_{J_s}, 0) = G_s'(((x_j)_{J_s}, 0) \oplus \bar x) } \right] - \frac{2t}{\sqrt{N-t}}\nonumber\\
  =& \E_{v, \ME(v)} \E_{x_j} \E_{\bar x \sim \Ber(p(v \oplus \ME(v)))^t} \One{G_s((x_j)_{J_s}) = G_s((x_j)_{J_s} \oplus (\bar x_1, \ldots, \bar x_{t-1})) \oplus \bar x_t} - \frac{2t}{\sqrt{N-t}}\nonumber\\
  = &  \E_{v, \ME(v)} \left[p(v\oplus \ME(v))+ \left(1-p(v \oplus \ME(v))\right) \E_{x_j} \E_{\bar x \sim \Ber(p(v \oplus \ME(v)))^{t-1}} \One{G_s((x_j)_{J_s}) = G_s((x_j)_{J_s} \oplus (\bar x))}\right]\nonumber\\
  & - \frac{2t}{\sqrt{N-t}}\nonumber\\
  =& \E_{v, \ME(v)} \left[p(v \oplus \ME(v)) + (1-2p(v \oplus \ME(v)) \cdot \left(1- \NS_{p(v \oplus \ME(v))}[G_s] \right)\right] - \frac{2t}{\sqrt{N-t}}\nonumber\\
  \geq &  \E_{v, \ME(v)}\left[p(v \oplus\ME(v)) + (1-2p(v \oplus\ME(v))) \cdot \left( \frac 12 + \frac 12 \cdot (1-2\cdot p(v \oplus \ME(v))^\tau)\right) \right] - \frac{2t}{\sqrt{N-t}}\nonumber\\
  = & \E_{v, \ME(v)}\left[ \frac 12 + \frac 12 \cdot (1-2 \cdot p(v \oplus \ME(v)))^{\tau+1} \right] - \frac{2t}{\sqrt{N-t}}\nonumber\\
  \geq & \frac 12 + \frac 12 \cdot (1-2p)^{\tau+1} - \frac{2t}{\sqrt{N-t}} - \negl(N)\label{eq:gsprime-lb},
\end{align}
where the first inequality uses \cref{lem:hyp-ber}, the fourth equality uses the fact that $x_j \sim \Unif(\hc[n])$ independently of $v, \ME(v)$ and the definition of noise sensitivity, the second equality uses \cref{cor:ns-poly}, and the final inequality uses the fact that $\wt(v \oplus \ME(v)) \leq pN$ with probability $1-\negl(N)$ since the marginal distribution of $v$ is $\Unif(\hc[n])$ and $\ME$ is $p$-weakly-substitution-bounded. Since we have $\frac{2t}{\sqrt{N-t}} + \negl(N) \leq \frac 14 \cdot (1-2p)^{\tau+1}$ by our choice of $N$ in \cref{eq:choose-Nn} (which ensures that, as long as $C_0$ is sufficiently large $N \geq C_0 \cdot (1-2p)^{-4 \log(n)}$, and in particular that $N-t \geq 16t (1-2p)^{-\tau-1}$ as long as the security parameter $\lambda$ is sufficiently large) , it follows from \cref{eq:gsprime-ub,eq:gsprime-lb} that for each $j \in [m]$, for sufficiently large $\lambda$, 
\begin{align}
\Pr(w_j' \oplus e_j \oplus F_s(x_j) = F_s(x_j \oplus x_j')) \geq (1-2q) \cdot \left( \frac 12 + \frac 14 \cdot (1-2p)^{\tau+1} \right) + q\geq \frac 12 + \frac 14 (1-2q)(1-2p)^{\tau+1}\label{eq:mean-wex}.
\end{align}

Finally, we use Dobrushin's inequality to analyze the concentration of the sum \cref{eq:decode-sum}; we utilize the notation of \cref{sec:dobrushin} (in particular the influences defined in \cref{eq:define-influence}). For each $j \in [m]$, define $\Gamma_j = (x_j, e_j, w_j', x_j')$. Let $P_{\Gamma_1, \ldots, \Gamma_m}$ denote the joint law of $(\Gamma_1, \ldots, \Gamma_m)$, and $P_{\Gamma_i \mid \Gamma_{-i}}$ denote the conditional law of $\Gamma_i$ conditioned on $\Gamma_{-i}$. For any distinct $i,j \in [m]$, we have that
\begin{align}
I_{j \to i}(\Gamma_{1:m}) = \max_{\gamma_{-i-j} , \gamma_j, \gamma_j'} \tvd{P_{\Gamma_i \mid \Gamma_{-i}}(\cdot \mid \gamma_j, \gamma_{-i-j})}{P_{\Gamma_i \mid \Gamma_{-i}}(\cdot \mid \gamma_j', \gamma_{-i-j})} \leq \frac{(n+1)^2}{N- m(n+1)}\nonumber,
\end{align}
since if, for any $i \in [m]$, we condition on $x_{-i}, e_{-i}, v \oplus \ME(v)$, and $\MS := \{ \pi^{-1}((a-1)(n+1) + b) \ : \ a \in [m] \backslash \{ i \}, b \in [n+1] \}$, the distribution of $\Gamma_i = (x_i, e_i, w_i', x_i')$ is given as follows: $x_i \sim \Unif(\hc[n])$, $e_i \sim \Ber(q)$, and $(w_i', x_i')$ is distributed (independently of $x_i, e_i$) as a tuple of $n+1$ distinct elements of $v \oplus \ME(v)$ which are not indexed by coordinates in $\MS$. Moreover, changing the value of $x_j, e_j$, and $\{ \pi^{-1}((j-1)(n+1) +b) \ : \ b \in [n+1]\}$ changes only $n+1$ elements of $\MS$, so by \cref{lem:set-diff} with $k=n+1$ and the data processing inequality for total variation distance, changes the conditional distribution of $(w_i', x_i')$ by at most $\frac{(n+1)^2}{N-m(n+1)}$.

Thus $\sum_{j \in [m]\backslash \{ i\}} I_{j\to i}(\Gamma_{1:m}) \leq \frac{m(n+1)^2}{N-m(n+1)} \leq 1/2$ and $\sum_{i \in [m] \backslash \{ j \}} I_{j\to i} (\Gamma_{1:m}) \leq \frac{m(n+1)^2}{N-m(n+1)} \leq 1/2$ since we have chosen $N = 3m(n+1)^2$ in \cref{eq:choose-Nn}. It then follows from \cref{thm:dobrushin}  that for any $\delta > 0$, with probability $1-\delta$ over the draw of $\Gamma_{1:m}$, 
\begin{align}
W =   \sum_{j=1}^m \One{w_j' \oplus e_j \oplus F_s(x_j) = F_s(x_j \oplus x_j')} \geq & \sum_{j=1}^m \Pr(w_j' \oplus e_j \oplus F_s(x_j) = F_s(x_j \oplus x_j')) - \sqrt{4m\ln(2/\delta)}\label{eq:use-dobrushin}. 
\end{align}
Choosing $\delta =2\exp(-\ln^2(m)) \leq \negl(m) \leq \negl(\lambda)$, and combining \cref{eq:mean-wex,eq:use-dobrushin} and the choice of $m$ in \cref{eq:choose-Nn}, we see that as long as the constant $C_0$ in \cref{eq:choose-Nn} is sufficiently large,
\begin{align}
\Pr\left(W \geq \frac m2 + \ln(m) \sqrt{m}\right) \geq \Pr\left(W \geq \frac{m}{2} + m \cdot (1-2q)(1-2p)^{\tau+1}/4 - \ln(m)\sqrt{4m}\right)\geq 1-\delta\nonumber.
\end{align}
(In particular, we have used that our choice of $m$ ensures that $m \cdot (1-2q) (1-2p)^{\tau+1}/4 \geq 3 \ln(m) \sqrt{m}$.) Since $\Dec(1^\lambda, \sk, y)$ outputs $\emptyset$ exactly when $W \geq \frac m2 + \log(m) \sqrt{m}$, we have established robustness, as desired. 
\end{proof}

\begin{lemma}
  \label{lem:hyp-ber}
  Fix $N, k \in \BN$ and let $y \in \{0,1\}^N$ be given with $\wt(y) = k$. Fix $t \in \BN$, and let $f : \{0,1\}^t \to \{0,1\}$ be a given function. %
  Then 
  \begin{align}
\left|    \E_{J \sim \MJ_{N,t}}[f(y_J)] - \E_{x \sim \Ber(k/N)^t}[f(x)] \right| \leq \frac{2t}{\sqrt{N-t}}\nonumber.
  \end{align}
\end{lemma}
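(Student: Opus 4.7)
The plan is to bound the difference in expectations by the total variation distance between the induced distributions on $\hc[t]$, reducing the problem to comparing sampling without replacement to sampling with replacement from $[N]$.

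First, since $f$ takes values in $\{0,1\}$, we have
\[
\left| \E_{J \sim \MJ_{N,t}}[f(y_J)] - \E_{x \sim \Ber(k/N)^t}[f(x)] \right| \leq \tvd{P_{y_J}}{P_x},
\]
where $P_{y_J}$ and $P_x$ are the distributions of $y_J$ and $x$, respectively. The key observation is that if we let $J' = (j_1', \ldots, j_t')$ consist of $t$ i.i.d.~uniform draws from $[N]$ (i.e., with replacement), then $y_{J'} = (y_{j_1'}, \ldots, y_{j_t'})$ has each coordinate independently distributed as $\Ber(k/N)$, so $y_{J'} \stackrel{d}{=} x$. Thus it suffices to bound $\tvd{P_{y_J}}{P_{y_{J'}}}$, which by the data processing inequality is at most $\tvd{P_J}{P_{J'}}$.

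To bound this TV distance, I would construct an explicit coupling of $J$ and $J'$ sequentially: for $i = 1, 2, \ldots, t$, draw $j_i' \sim \Unif([N])$, and then set $j_i := j_i'$ if $j_i' \notin \{j_1, \ldots, j_{i-1}\}$, and otherwise draw $j_i$ uniformly from $[N] \setminus \{j_1, \ldots, j_{i-1}\}$. One checks this gives the correct marginals on $J$ (uniform without replacement) and $J'$ (uniform with replacement), and moreover $J = J'$ unless $J'$ experiences a collision. By a union bound over pairs,
\[
\Pr[J \neq J'] \leq \binom{t}{2}\frac{1}{N} \leq \frac{t^2}{2N}.
\]

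The main (minor) obstacle is reconciling this bound with the stated form $2t/\sqrt{N-t}$. This is purely a matter of verifying the inequality $t^2/(2N) \leq 2t/\sqrt{N-t}$, equivalently $t\sqrt{N-t} \leq 4N$, in the non-trivial regime. When $t \geq \sqrt{N-t}$ the stated bound $2t/\sqrt{N-t} \geq 2$ is already vacuous (TV distances are at most $1$), so we may assume $t < \sqrt{N-t}$, in which case $t\sqrt{N-t} \leq \sqrt{N-t} \cdot \sqrt{N-t} = N - t \leq 4N$ as needed. This completes the argument.
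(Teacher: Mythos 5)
Your proof is correct, but it takes a genuinely different route from the paper's. The paper first observes that, by exchangeability, both the law of $y_J$ and the law of $x \sim \Ber(k/N)^t$ are uniform over strings of a given weight conditionally on that weight, so the total variation distance collapses to $\tvd{\Hyp(N,k,t)}{\Bin(t,k/N)}$; it then invokes a cited bound of $\tfrac{2t}{\sqrt{N-t}}$ on this one-dimensional distance (proved separately via Pinsker's inequality). You instead couple sampling without replacement to sampling with replacement directly at the level of the index tuples, use data processing to pass to $y_J$, and obtain the birthday bound $\binom{t}{2}/N \le t^2/(2N)$; your coupling and the verification of both marginals are correct, as is the case analysis showing $t^2/(2N) \le 2t/\sqrt{N-t}$ whenever the right-hand side is below the trivial bound of $1$. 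Your argument is self-contained (no external citation, no Pinsker computation) and in fact yields a quantitatively sharper estimate when $t \lesssim \sqrt{N}$, which is the regime in which the lemma is applied in the proof of \cref{lem:prfprc-robust}; what it gives up is only the clean reduction to the named Binomial-vs-Hypergeometric comparison. Either bound suffices for the downstream use, where $N$ is chosen polynomially larger than $t$.
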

\begin{proof}
It suffices to upper bound the total variation distance between the distribution of $y_J$ and the distribution of $x \sim \Ber(k/N)^t$. By symmetry, this total variation distance is the total variation distance between $\Bin(t, k/N)$ and $\Hyp(N, k, t)$, where $\Hyp$ denotes the hypergeometric distribution (so that, in particular, $W \sim \Hyp(N, k,t)$ satisfies $\Pr(W = w) = \frac{{k \choose w}{N-k \choose t-w}}{{N \choose t}}$). By \cite[Theorem 1]{rupassara2023convergence} (restated in \cref{lem:bin-hyp-tvd}), this total variation distance is bounded above by $\frac{2t}{\sqrt{N-t}}$. %
\end{proof}

\begin{lemma}
  \label{lem:set-diff}
Let $\MS_0, \MS_1$ be sets of size $N$ so that $|\MS_0 \cap \MS_1| = N-k$, for some $k < N$. Let $n < N$ be given. For $b \in \{0,1\}$ let $\MJ_b$ denote the distribution of a tuple $J = (j_1, \ldots, j_n)$ of $n$ elements of $\MS_b$ drawn uniformly \emph{without replacement}. Then $\tvd{\MJ_0}{\MJ_1} \leq \frac{nk}{N-n}$.
\end{lemma}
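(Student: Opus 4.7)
The plan is to compute $\tvd{\MJ_0}{\MJ_1}$ essentially in closed form, exploiting the fact that both distributions are uniform on their respective supports. Set $A := \MS_0 \cap \MS_1$ and $B_b := \MS_b \setminus \MS_{1-b}$ for $b \in \{0,1\}$, so that $|A| = N-k$ and $|B_0| = |B_1| = k$.

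First, I would observe that for any ordered tuple $J = (j_1, \ldots, j_n)$ of distinct elements, $\MJ_b(J)$ equals $1/(N)_n$ (where $(N)_n := N(N-1) \cdots (N-n+1)$) if every entry of $J$ lies in $\MS_b$, and is zero otherwise. Consequently the two densities coincide precisely on tuples whose entries all lie in $A$, and elsewhere at most one of them is nonzero. Combined with the symmetry between $\MS_0$ and $\MS_1$, this gives
\begin{align}
\tvd{\MJ_0}{\MJ_1} = \Pr_{J \sim \MJ_0}\left[ J \not\subset \MS_1 \right] = 1 - \Pr_{J \sim \MJ_0}\left[ J \subset A \right]\nonumber.
\end{align}

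Next, I would compute $\Pr_{J \sim \MJ_0}[J \subset A]$ directly. Since sampling the tuple coordinate-by-coordinate is equivalent to a uniform ordered sample without replacement,
\begin{align}
\Pr_{J \sim \MJ_0}[J \subset A] = \frac{(N-k)_n}{(N)_n} = \prod_{i=0}^{n-1} \left(1 - \frac{k}{N-i}\right)\nonumber.
\end{align}
Applying the elementary inequality $1 - \prod_i (1-a_i) \leq \sum_i a_i$ valid for $a_i \in [0,1]$, I conclude
\begin{align}
\tvd{\MJ_0}{\MJ_1} \leq \sum_{i=0}^{n-1} \frac{k}{N-i} \leq \frac{nk}{N-n+1} \leq \frac{nk}{N-n}\nonumber,
\end{align}
which is the desired bound.

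The only edge case worth flagging is $n > N - k$: here the product formula vanishes and $\tvd{\MJ_0}{\MJ_1} = 1$, but in this regime $nk/(N-n) \geq 1$ automatically, so the inequality still holds. I do not anticipate any substantive obstacle; the argument is a routine computation of TV distance between two sampling-without-replacement laws on closely overlapping supports, and the slack between $1/(N-n+1)$ and $1/(N-n)$ in the last step simply upper-bounds the average of $1/(N-i)$ over $i \in \{0, \ldots, n-1\}$ by its maximum.
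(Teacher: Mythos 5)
Your proof is correct and follows essentially the same route as the paper's: both reduce $\tvd{\MJ_0}{\MJ_1}$ to $\Pr_{J \sim \MJ_0}(J \not\subset \MS_0 \cap \MS_1)$, express that probability as the ratio of falling factorials $\frac{(N-k)\cdots(N-k-n+1)}{N\cdots(N-n+1)}$, and finish with an elementary bound (you use $1-\prod_i(1-a_i)\leq\sum_i a_i$ term by term; the paper lower-bounds each factor by $\frac{N-n-k}{N-n}$ and applies $1-(1-x)^n\leq nx$ — the same estimate). Your additional observations (exact equality for the TV distance rather than the paper's inequality, and the explicit treatment of the edge case $n>N-k$) are correct refinements but do not change the argument.
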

\begin{proof}
  For any tuple $J$ all of whose elements belong to $\MS_0 \cap \MS_1$ (which we write as $J \subset \MS_0 \cap \MS_1$), we have $\MJ_0(J) = \MJ_1(J)$ by symmetry. Thus we have
  \begin{align}
    \tvd{\MJ_0}{\MJ_1} \leq &  \Pr_{J \sim \MJ_0}(J \not\subset \MS_0 \cap \MS_1) = 1 - \frac{{N-k \choose n}}{{N \choose n}} = 1 - \frac{(N-k) \cdots (N-k-n+1)}{N \cdots (N-n+1)} \nonumber\\
    \leq & 1 - \left( \frac{N-n-k}{N-n} \right)^n \leq \frac{nk}{N-n}\nonumber.
  \end{align}
\end{proof}

\section{Insertion/Deletion PRCs from substitution PRCs}
\label{sec:sdi-prc}
Suppose that $\PRCS$ is a PRC which is robust to $(1/2 - p_0)$-substitution-bounded channels, for some $p_0 > 0$. Choose $\rho := 2C_{\ref{lem:sdi-robust}}/p_0$, where $C_{\ref{lem:sdi-robust}}$ is the constant defined in \cref{lem:sdi-robust}. Note that $\PRCI[\PRCS, \rho]$ (defined in \cref{alg:prf-di}) has block length $m(\lambda) := \lceil n(\lambda) \cdot \ln (2) \rceil$ and alphabet size $|\Sigma(\lambda)| = q(\lambda) = \rho \cdot n(\lambda) = \frac{2C_{\ref{lem:sdi-robust}}}{p_0} \cdot n(\lambda)$. Thus, to prove \cref{thm:prc-insdel}, it suffices to show that $\PRCI[\PRCS, \rho]$ satisfies undetectability (\cref{lem:di-undetect}),  soundness (\cref{lem:di-sound}), and robustness to all $(1-\Crob p_0, p_0)$-\edit-bounded channels (\cref{lem:sdi-robust}), where $\Crob$ is a constant defined in \cref{lem:sdi-robust}.

\paragraph{Additional notation.} Fix $q,m \in \BN$. For integers $j \geq 1$, we define $\Unique_j(y) := \{ a \in [q] \ : \ |\{ i \ : \ y_i = a \}| = j \}$, i.e., $\Unique_j(y)$ is the set of elements $a \in [q]$ so that exactly $j$ elements of $y$ are equal to $a$. Given $j \in \BN$, we define $\Unique_{\geq j}(y) = \bigcup_{j' \geq j} \Unique_{j'}(y)$. Note that $\Unique(y) = \Unique_{\geq 1}(y)$.

\begin{lemma}[Soundness]
  \label{lem:di-sound}
Let $\PRCS$ be a PRC for substitutions. Then the PRC $\PRCI[\PRCS]$ in \cref{alg:prf-di} is sound.
\end{lemma}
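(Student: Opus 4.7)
The plan is to reduce soundness of $\PRCI[\PRCS]$ directly to soundness of the underlying substitution PRC $\PRCS$, exploiting the fact that the key $(\sk, \psi)$ factors into two independent components and that the decoder's string $y' = D_\psi(z)$ depends only on $\psi$ (and the fixed input $z$), not on $\sk$.

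Concretely, fix any $\lambda \in \BN$ and any $z \in [\alphb(\lambda)]^\st$; we wish to bound $\Pr_{(\sk, \psi) \gets \Key(1^\lambda)}\big(\Dec(1^\lambda, (\sk, \psi), z) \neq \perp\big)$. Inspecting $\Dec$ in \cref{alg:prf-di}, its output is precisely $\DecS(1^\lambda, \sk, y')$ where $y'_i = \One{i \in \Unique(\psi(z))}$. First I would condition on $\psi$: once $\psi$ is fixed, the string $y' \in \{0,1\}^{n(\lambda)}$ is a fixed (deterministic) string, since $z$ is fixed by assumption. Now invoking soundness of $\PRCS$ applied to this fixed $y'$, we have
\begin{align}
\Pr_{\sk \gets \KeyS(1^\lambda)}\!\big(\DecS(1^\lambda, \sk, y') \neq \perp\big) \leq \negl(\lambda).\nonumber
\end{align}
Since in $\Key$ for $\PRCI$ the key $\sk$ is drawn from $\KeyS(1^\lambda)$ independently of $\psi$, the bound above holds for every realization of $\psi$. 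Averaging over $\psi$ preserves the negligible bound, giving $\Pr_{(\sk, \psi)}(\Dec(1^\lambda, (\sk, \psi), z) \neq \perp) \leq \negl(\lambda)$, which is soundness of $\PRCI[\PRCS]$.

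There is no real obstacle here; the only point requiring care is to verify that the decoder of $\PRCI$ does not use any randomness beyond $(\sk, \psi)$ and that $\sk$ and $\psi$ are drawn independently in $\Key$, both of which are immediate from the pseudocode in \cref{alg:prf-di}.
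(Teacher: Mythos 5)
Your proof is correct and follows essentially the same route as the paper's: reduce to soundness of $\PRCS$ applied to the decoded string $y' = D_\psi(z)$. In fact you are slightly more careful than the paper, which treats $y'$ as a fixed string without explicitly conditioning on $\psi$ and averaging using the independence of $\sk$ and $\psi$; your handling of that point is the right way to make the argument airtight.
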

\begin{proof}
  Fix $\lambda \in \BN$, and consider $z \in [\alphb(\lambda)]^{m(\lambda)}$. Define $y' \in \{0,1\}^{n(\lambda)}$ as in \cref{line:yprime-decode} of \cref{alg:prf-di}. Since $\PRCS$ is sound, we have
  \begin{align}
\Pr_{(\sk, \psi) \sim \Key(1^\lambda)} \left( \Dec(1^\lambda, (\sk, \psi), z) = \perp \right) = \Pr_{\sk \sim \KeyS(1^\lambda)}\left( \DecS(1^\lambda, \sk, y') = \perp \right) \geq 1-\negl(\lambda)\nonumber.
  \end{align}
\end{proof}

To establish undetectability, we first need the following lemma which states that $\PerturbDifference(n,m,y^0)$ outputs a uniformly random string in $[n]^m$ when its input $y^0$ is uniform over $\hc[n]$. 
\begin{lemma}
  \label{lem:pd-unif}
Given positive integers $m \leq n$, the distribution of $z \gets \PerturbDifference(n,m,y^0)$ (\cref{alg:prf-di}), for $y^0 \sim \Unif( \{0,1\}^n)$, is exactly $\Unif([n]^m)$.
\end{lemma}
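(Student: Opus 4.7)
The plan is to reduce uniformity of $y$ on $[n]^m$ to two facts: (a) conditional uniformity of $y$ given $\Unique(y)$, and (b) equality of the distributions of $\Unique(y)$ and $\Unique(y^1)$. The key structural observation is that in both branches of $\PerturbDifference$, the output can be written coordinatewise as $y = \phi(y^1)$, where $\phi: \MS^1 \to [n]$ is the injection determined by $(\MS^0, \MS^1)$ and the chosen random map. In the ``if'' branch, $\phi$ is the identity on $\MS^0 \cap \MS^1$ and equals $\sigma$ on $\MS^1 \setminus \MS^0$; in the ``else'' branch, $\phi$ is the identity on $\MS^1 \setminus \tau(\MS^0 \setminus \MS^1)$ and equals $\tau^{-1}$ on $\tau(\MS^0 \setminus \MS^1)$. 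In both cases the identity-fixed portion maps into $\MS^1$, while the image of the permuted portion lies in $\MS^0 \setminus \MS^1$ and is therefore disjoint from $\MS^1$, so $\phi$ is injective. In particular, $|\Unique(y)| = |\phi(\MS^1)| = |\Unique(y^1)|$.

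For (a), I would argue that conditional on $(\MS^0, \MS^1, \sigma/\tau)$, the output $y = \phi(y^1)$ is uniform over $\{v \in [n]^m : \Unique(v) = \phi(\MS^1)\}$. This is because $y^1 \sim \Unif([n]^m)$ is independent of $(\MS^0, \sigma/\tau)$, so further conditioning on $\Unique(y^1) = \MS^1$ makes $y^1$ uniform over strings with that unique set; and the bijection $\phi: \MS^1 \to \phi(\MS^1)$ acting coordinatewise carries this uniform distribution to the uniform distribution over strings with unique set $\phi(\MS^1)$. Writing $N_s$ for the count of strings in $[n]^m$ with a fixed unique set of size $s$, it follows that $\Pr(y = y^* \mid \Unique(y) = \MS^*) = 1/N_s$ for every $y^*$ with $\Unique(y^*) = \MS^*$.

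For (b), I would use $S_n$-equivariance: permuting $[n]$ by $\pi \in S_n$ acts compatibly on $y^0$ (sending $\MS^0$ to $\pi(\MS^0)$), on $y^1$ (sending each symbol $v$ to $\pi(v)$), and on the random $\sigma$ or $\tau$ by conjugation; this action commutes with the whole procedure and descends to the output. Since both $y^0 \sim \Unif(\hc[n])$ and $y^1 \sim \Unif([n]^m)$ are $S_n$-invariant, so is the distribution of $\Unique(y)$ (as is the distribution of $\Unique(y^1)$). Each distribution therefore depends only on cardinality, and since $|\Unique(y)| = |\Unique(y^1)|$ in distribution by the bijectivity observation above, the two full distributions coincide, giving $\Pr(\Unique(y) = \MS^*) = \Pr(\Unique(y^1) = \MS^*) = N_s/n^m$. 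Combining with (a) yields $\Pr(y = y^*) = (N_s/n^m) \cdot (1/N_s) = 1/n^m$ for every $y^*$. The one step requiring care is the verification of the coordinatewise formula $y = \phi(y^1)$ and the injectivity of $\phi$ in both branches, but this is a direct bookkeeping of how $\sigma$ and $\tau$ act on the sets $\MS^0 \cap \MS^1$, $\MS^0 \setminus \MS^1$, and $\MS^1 \setminus \MS^0$.
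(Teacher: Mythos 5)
Your proof is correct. It takes a mildly different route from the paper's: the paper conditions on the frequency profile $\freq(y)$ (the multiset of symbol multiplicities), observing (i) that $\freq(y)=\freq(y^1)$ almost surely, so $\freq(y)$ has the distribution induced by $\Unif([n]^m)$, and (ii) that any two strings with the same frequency profile are equiprobable outputs, the latter via a joint symmetry under permutations of the symbol set $[n]$ and of the positions $[m]$. You instead condition on the unique set: your identification $y=\phi(y^1)$, with $\phi$ an explicit injection on $\MS^1$, gives conditional uniformity of $y$ over strings with unique set $\phi(\MS^1)$ directly as a pushforward of the uniform measure (no permutation of positions needed), and a separate $S_n$-equivariance argument pins down the marginal distribution of $\Unique(y)$. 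Both arguments are symmetry arguments at heart; yours makes the relabeling map explicit and isolates exactly which part of the claim needs the invariance (only the law of the unique set), at the cost of the extra, but standard, observation that an exchangeable distribution on subsets of $[n]$ is determined by its cardinality distribution. The bookkeeping of $\phi$ in the two branches and its injectivity, which you correctly flag as the step requiring care, checks out, so the argument is complete.
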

\begin{proof}
  Suppose $y^0 \sim \Unif(\{0,1\}^n), y^1 \sim \Unif([n]^m)$, and write $\MS^0 = \{ i \in [n] \ : \ y_i^0 = 1 \},\ \MS^1 = \Unique(y^1)$ as in $\PerturbDifference$.

  Given $y \in [n]^m$, denote its \emph{frequency mapping} $\freq(y) = \ff : [m] \to \BN_{\geq 0}$ by $\ff(j) = |\Unique_j(y)|$ for $j \in [m]$. Note that $\freq(y^1) = \freq(y)$ with probability $1$, where $y$ denotes the output string of $\PerturbDifference$ (this holds since the maps $\sigma,\tau$ are necessarily injective). Thus, the distribution of $\freq(y)$ is exactly the distribution of $\freq(y^1)$ for $y^1 \sim \Unif([n]^m)$.

  Next, we claim that any two strings $z,z' \in [n]^m$ with $\freq(z) = \freq(z')$ have equal probability of being output by $\PerturbDifference$. To see this, we may choose a permutation $\pi : [n] \to [n]$ and $\sigma : [m] \to [m]$ so that $\pi(z_{\sigma(i)}) = z_i'$ for $i \in [m]$. Next, it is straightforward to see from the definition of $\PerturbDifference$ that the distribution of its output remains unchanged if its input $y^0$ is replaced with the string $\tilde y^0$ defined by $\tilde y^0_{\pi(i)} := y^0_{i}$, $i \in [n]$, and the sample $y^1$ in \cref{line:sample-y1} is replaced with the string $\tilde y^1$ defined by $\tilde y^1_i := \pi(y^1_{\sigma(i)})$, $i \in [m]$. The distribution of the sets $\MS^0, \MS^1 \subset [n]$ under this modified procedure is exactly the distribution of $\pi (\MS^0), \pi ( \MS^1)$ when $\MS^0, \MS^1$ are drawn according to the original procedure. Thus, the probability of observing $z$ under the original execution of $\PerturbDifference$ is the same as the probability of observing $z'$ under this modified execution of $\PerturbDifference$. 

  It follows from the two previous paragraphs that each string in $[n]^m$ has equal probability of being output by $\PerturbDifference(n,m,y^0)$, under $y^0 \sim \Unif([n]^m)$, as desired.
\end{proof}

\begin{lemma}[Undetectability]
  \label{lem:di-undetect}
Let $\PRCS = (\KeyS, \EncS, \DecS)$ be a PRC for substitutions. Then the PRC $\PRCI[\PRCS]$ in \cref{alg:prf-di} is undetectable. 
\end{lemma}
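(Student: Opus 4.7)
\textbf{The plan} is a standard two-step hybrid argument. I introduce an intermediate distribution $\MH_\lambda$ on $[\alphb(\lambda)]^{m(\lambda)}$ obtained from $\Enc$ by replacing the call $y^0 \gets \EncS(1^\lambda, \sk, \msg)$ in \cref{alg:prf-di} with a fresh sample $y^0 \sim \Unif(\hc[n(\lambda)])$, while leaving the rest of $\Enc$ unchanged (i.e., still applying $\PerturbDifference$ and then sampling each $z_j \sim \Unif(\psi^{-1}(y_j))$). The proof then reduces to showing: (a) outputs of $\Enc(1^\lambda, (\sk, \psi), \msg)$ are computationally indistinguishable from samples of $\MH_\lambda$, by reduction to undetectability of $\PRCS$; and (b) $\MH_\lambda$ is \emph{exactly} $\Unif([\alphb(\lambda)]^{m(\lambda)})$, which matches the uniform oracle $\MU$ in \cref{def:sk-prc}.

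For step (b), when $y^0 \sim \Unif(\hc[n(\lambda)])$, \cref{lem:pd-unif} gives that $y \gets \PerturbDifference(n(\lambda), m(\lambda), y^0)$ is distributed as $\Unif([n(\lambda)]^{m(\lambda)})$. Since the coordinates $z_j$ are sampled independently from $\Unif(\psi^{-1}(y_j))$ conditional on $(\psi, y)$, and since $|\psi^{-1}(j)| = \rho$ for all $j \in [n(\lambda)]$, a direct computation gives, for each fixed $\psi$ and each $z \in [\alphb(\lambda)]^{m(\lambda)}$,
\[
\Pr[Z = z \mid \psi] = \prod_{j=1}^{m(\lambda)} \Pr[y_j = \psi(z_j)] \cdot \frac{1}{|\psi^{-1}(\psi(z_j))|} = \left( \frac{1}{n(\lambda)\,\rho} \right)^{m(\lambda)} = \alphb(\lambda)^{-m(\lambda)}.
\]
Marginalizing over $\psi$ preserves uniformity, so $Z \sim \Unif([\alphb(\lambda)]^{m(\lambda)})$.

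For step (a), given a PPT adversary $\Adv$ with non-negligible advantage at distinguishing $\Enc(1^\lambda, (\sk, \psi), \cdot)$ from $\MH_\lambda$, I construct $\Adv'$ attacking undetectability of $\PRCS$ as follows. On input $1^\lambda$ with oracle access to either $\EncS(1^\lambda, \sk, \cdot)$ or a fresh uniform oracle on $\hc[n(\lambda)]$, $\Adv'$ locally samples $\psi : [\alphb(\lambda)] \to [n(\lambda)]$ uniformly subject to $|\psi^{-1}(j)| = \rho$ (which requires no knowledge of $\sk$), then simulates $\Adv$; on each of $\Adv$'s oracle calls, $\Adv'$ queries its own oracle to obtain $y^0$, locally runs $\PerturbDifference(n(\lambda), m(\lambda), y^0)$ and the sampling of each $z_j \sim \Unif(\psi^{-1}(y_j))$, and returns the resulting $z$ to $\Adv$. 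Since $\psi$ is distributed exactly as in $\Key(1^\lambda)$ of $\PRCI[\PRCS]$, when $\Adv'$'s oracle is $\EncS(1^\lambda, \sk, \cdot)$ the view of $\Adv$ is identical to its view under $\Enc(1^\lambda, (\sk, \psi), \cdot)$, and when it is the uniform oracle the view of $\Adv$ is identical to its view under $\MH_\lambda$. Hence $\Adv'$ inherits $\Adv$'s full advantage, contradicting undetectability of $\PRCS$.

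The only subtlety worth mentioning is that $\PerturbDifference$ and the sampling of $z_j$ from $\psi^{-1}(y_j)$ are both efficiently computable from $\psi$ alone, so the reduction runs in polynomial time without any information about $\sk$; this is immediate from inspection of \cref{alg:prf-di}. Apart from this, the argument is entirely mechanical.
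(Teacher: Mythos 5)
Your proof is correct and takes essentially the same route as the paper: the paper likewise establishes (via \cref{lem:pd-unif}) that the post-processing of a uniform $y^0$ yields a uniform string in $[\alphb(\lambda)]^{m(\lambda)}$ (its Claim analogous to your step (b)), and then builds the same reduction $\Adv'$ that samples $\psi$ locally and forwards transformed oracle responses. The only cosmetic difference is that you name the intermediate hybrid explicitly and verify uniformity by direct computation rather than by composing \cref{lem:pd-unif} with the regularity of $\psi$; the content is identical.
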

\begin{proof}
  Fix $\lambda \in \BN$, and let us write $\alphb := \alphb(\lambda), n := n(\lambda)$. Consider any $\psi : [\alphb ] \to [n]$ which satisfies $|\psi^{-1}(j)| = \alphb / n$ for each $j \in [n]$.

  Given a string $y^0 \in \{0,1\}^n$, let $E_\psi(y^0) \in [\alphb]^n$ denote the random variable which is the output of the following procedure: given $y^0$, let $y \gets \PerturbDifference(n, m, y^0)$, then sample $z \in [\alphb]^m$ as on \cref{line:ybar-sample} of \cref{alg:prf-di} (using $\psi$), and output the resulting string $z$. 
  \begin{claim}
    \label{clm:unif-y0}
    For any fixed $\psi$ as above, the distribution of $E_\psi(y^0)$, for $y^0 \sim \Unif(\{0,1\}^n)$, is uniform on $[\alphb]^m$.
  \end{claim}
  \begin{proof}
By \cref{lem:pd-unif}, the distribution of $y \gets \PerturbDifference(n,m,y^0)$ is uniform on $[n]^m$. Since $|\psi^{-1}(j)| = \alphb / n$ for each $j \in [n]$, it follows that the distribution of the output string $z$ is uniform on $[\alphb]^m$. 
\end{proof}
Now consider any probabilistic polynomial-time adversary $\Adv$, and suppose that
\begin{align}
\left| \Pr_{(\sk, \psi)\gets \Key(1^\lambda)} \left( \Adv^{\Enc(1^\lambda, (\sk, \psi), \cdot)}(1^\lambda) = 1\right) - \Pr_{\MU}\left( \Adv^\MU(1^\lambda) = 1\right)\right| = \nu(\lambda)\nonumber
\end{align}
for some function $\nu : \BN \to \BR_{\geq 0}$. We construct an adversary $\Adv'$ for the substitution PRC $\PRCS$, as follows: $\Adv'$ first generates $\psi : [\alphb] \to [n]$ conditioned on $|\psi^{-1}(j)| = \alphb / n$ for each $j \in[ n]$. $\Adv'$ then simulates $\Adv$, where each time $\Adv$ calls its oracle $\MO(\msg)$ for some message $\msg$, %
$\Adv'$ performs the following. It calls $y^0 \gets \MO'(\msg)$ (using its oracle $\MO'$ which is either a random oracle or $\EncS(1^\lambda, \sk, \msg)$), then applies $y \gets \PerturbDifference(n, m,y^0)$, samples $z_j \sim \Unif(\{a \ : \ \psi(a) = y_j \})$ for each $j \in [m]$, and then uses $z$ as the simulated output for $\MO(\msg)$. By definition of $\Enc$ in \cref{alg:prf-di}, the adversary $\Adv'$ faithfully simulates the execution of $\Adv^{\Enc(1^\lambda, \sk, \msg)}$ when the oracle $\MO'(\cdot)$ for $\Adv'$ is $\EncS(1^\lambda, \sk, \cdot)$. When the oracle $\MO'$ for $\Adv'$ is a random oracle $\MU$ (i.e., which outputs a uniformly random string $y^0 \sim \Unif(\{0,1\}^n)$), then \cref{clm:unif-y0} ensures that the adversary $\Adv'$ generates a uniformly random string in $[\alphb]^m$. Thus, we have that
\begin{align}
\left| \Pr_{\sk \gets \KeyS(1^\lambda)} \left( (\Adv')^{\EncS(1^\lambda, \sk, \cdot)}(1^\lambda) = 1\right) - \Pr_{\MU}\left( (\Adv')^\MU(1^\lambda) = 1\right)\right| = \nu(\lambda)\nonumber,
\end{align}
and since $\PRCS$ is undetectable, we have $\nu(\lambda) = \negl(\lambda)$, meaning that 
which implies that $\PRCI[\PRCS]$ is undetectable.
\end{proof}

\subsection{Lemmas for robustness} 
Next, we will establish several lemmas in the aim of showing robustness of $\PRCI[\PRCS, \rho]$. 
The below lemma bounds the number of replacements $\PerturbDifference$ has to perform in \cref{line:replace-sigma} or \cref{line:replace-tau} of \cref{alg:prf-di}. 
\begin{lemma}
  \label{lem:perturb-diff}
  There is a sufficiently large constant $C_{\ref{lem:perturb-diff}} \geq 1$ so that the following holds. 
  Fix positive integers $n, m$ with $m = \lceil n \cdot \ln(2) \rceil$ and $\ep \in (0,1)$ so that $n \geq C_{\ref{lem:perturb-diff}} \ln^2(1/\ep)/\ep^2$. For $z \in [n]^m$, write $D(z) \in \{0,1\}^n$ to be the string $D(z)_i := \One{i \in \Unique(z)}$. Then
  \begin{align}
\Pr_{\substack{y^0 \sim \Unif(\{0,1\}^n) \\ z \gets \PerturbDifference(n,m,y^0)}} \left( \dham(y^0, D(z)) \geq \ep n \right) \leq 1-\negl(n)\nonumber.
  \end{align}
\end{lemma}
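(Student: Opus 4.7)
The first step is a deterministic analysis of the output of $\PerturbDifference$: I claim that, writing $\MS^0 = \{i : y^0_i = 1\}$ and $\MS^1 = \Unique(y^1)$ as in the procedure, one has
\[
\dham(y^0, D(z)) \;=\; \bigl| |\MS^0| - |\MS^1| \bigr|.
\]
To see this, consider the two branches. In the first branch ($|\MS^0 \setminus \MS^1| \geq |\MS^1 \setminus \MS^0|$), the injection $\sigma$ replaces every occurrence in $y^1$ of each symbol in $\MS^1 \setminus \MS^0$ by a symbol in $\MS^0 \setminus \MS^1$, so $\Unique(z) = (\MS^0 \cap \MS^1) \cup \sigma(\MS^1 \setminus \MS^0) \subseteq \MS^0$, and $|\MS^0 \setminus \Unique(z)| = |\MS^0 \setminus \MS^1| - |\MS^1 \setminus \MS^0|$. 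In the second branch the situation is symmetric: $\tau$ turns a subset of $\MS^1 \setminus \MS^0$ of size $|\MS^0 \setminus \MS^1|$ into elements of $\MS^0 \setminus \MS^1$, giving $\Unique(z) \supseteq \MS^0$ and $|\Unique(z) \setminus \MS^0| = |\MS^1 \setminus \MS^0| - |\MS^0 \setminus \MS^1|$. In both cases $\dham(y^0, D(z))$ equals the absolute difference of the sizes $|\MS^0|$ and $|\MS^1|$, as claimed.

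Given this identity, it suffices to show that $|\MS^0|$ and $|\MS^1|$ are both concentrated around $n/2$. For $|\MS^0|$, note that $|\MS^0| \sim \Bin(n, 1/2)$, so a Chernoff/Hoeffding bound gives $\bigl||\MS^0| - n/2\bigr| \leq t$ with probability at least $1 - 2\exp(-2t^2/n)$. For $|\MS^1|$, this is the classical number of occupied bins in a balls-in-bins experiment with $m$ balls and $n$ bins. Its expectation equals $n(1 - (1-1/n)^m)$, and since $m = \lceil n \ln 2 \rceil$, a short Taylor expansion gives $(1-1/n)^m = \tfrac{1}{2} + O(1/n)$, so $\bigl|\E[|\MS^1|] - n/2\bigr| = O(1)$. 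Changing the value of a single ball alters $|\MS^1|$ by at most $1$, so McDiarmid's inequality yields $\bigl||\MS^1| - \E|\MS^1|\bigr| \leq t$ with probability at least $1 - 2\exp(-2t^2/m)$.

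Combining these two concentration bounds via a union bound and the triangle inequality gives
\[
\Pr\Bigl(\bigl||\MS^0| - |\MS^1|\bigr| > t + O(1)\Bigr) \leq 4 \exp\bigl(-\Omega(t^2/n)\bigr).
\]
Taking $t = \ep n / 2$ and using the hypothesis $n \geq C_{\ref{lem:perturb-diff}} \ln^2(1/\ep)/\ep^2$ (with $C_{\ref{lem:perturb-diff}}$ chosen so that the additive $O(1)$ slack is absorbed and $t^2/n = \Omega(\ep^2 n) = \omega(\log n)$ for large $n$), this probability is $\negl(n)$, yielding the desired bound on $\Pr(\dham(y^0, D(z)) \geq \ep n)$.

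The only genuinely non-routine part is the deterministic identity $\dham(y^0, D(z)) = \bigl||\MS^0| - |\MS^1|\bigr|$ in the first step, since one must carefully track which symbols survive after the $\sigma$- or $\tau$-substitutions; once this is established, the remainder is standard concentration of measure. The choice of $m = \lceil n \ln 2 \rceil$ is precisely what makes both $|\MS^0|$ and $|\MS^1|$ have essentially the same mean $n/2$, so that their difference is small; any other choice of $m$ would produce a constant-fraction gap and the lemma would fail.
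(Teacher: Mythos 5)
Your proposal is correct and follows essentially the same route as the paper: the deterministic identity $\dham(y^0, D(z)) = \bigl||\MS^0| - |\MS^1|\bigr|$ (which the paper states without the case analysis you supply), a Chernoff bound for $|\MS^0|$, and a McDiarmid-based concentration of $|\MS^1|$ around $n(1-(1-1/n)^m) \approx n/2$, which the paper packages as its ``typical string'' lemma. The only difference is organizational, not mathematical.
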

\begin{proof}
  Given $y^0$, let $\MS^0 = \{ i \in [n] \ : \ y_i^0 = 1 \}$ be defined as in \cref{line:define-s0} of $\PerturbDifference$. 
  By a Chernoff bound, for any $\delta \in (0,1)$,
  \begin{align}
\Pr_{y^0 \sim \Unif(\{0,1\}^n)} \left( \left| |\MS^0| - \frac n2 \right| \geq \sqrt{n \log(2/\delta)} \right) \leq \delta\label{eq:s0-typical}
  \end{align}
  Now consider $y^1 \sim \Unif([n]^m)$ and $\MS^1 = \Unique(y^1)$ as in \cref{line:define-s0,line:sample-y1} of $\PerturbDifference$. By our choice of $m,n$ and \cref{lem:typical-typical} with $q=n$, $y^1$ is typical with probability $1-\negl(m) \geq 1-\negl(n)$, which implies that $\left| |\MS^1| - \frac n2 \right| \leq 2\sqrt{m} \ln(m) + 1$. Combining this fact with \cref{eq:s0-typical} and choosing $\delta = 2 \exp(-\ln^2(n)) \leq \negl(n)$ gives that
  \begin{align}
    & \Pr_{y^0 \sim \Unif(\{0,1\}^n), y^1 \sim \Unif([n]^m)} \left( \left| |\MS^0| - |\MS^1|\right| \geq \ep n \right) \nonumber\\
    \leq & \Pr_{y^0 \sim \Unif(\{0,1\}^n), y^1 \sim \Unif([n]^m)} \left(\left| |\MS^0| - |\MS^1|\right| \geq 2 \sqrt{m} \ln(m) + 1 + \sqrt{n} \ln(n) \right) \leq \negl(n)\nonumber,
  \end{align}
  where we have used that $\ep n \geq 3 \sqrt{n} \ln(n) + 1 \geq 2\sqrt{m} \ln(m) + 1 + \sqrt{n} \ln(n)$ as a result of our assumption that $n \geq \frac{C \ln^2(1/\ep)}{\ep^2}$ for a sufficiently large constant $C$. The conclusion of the lemma follows by noting that $\left| |\MS^0| - |\MS^1| \right| = \dham(y^0, D(z))$ with probability $1$. 
\end{proof}

\begin{definition}[Typical string]
  \label{def:typical}
  Fix $q,m \in \BN$. A string $z \in [q]^m$ is defined to be \emph{typical} if the following inequalities hold:
  \begin{align}
    q \cdot (1-\exp(-m/q)) - 2\sqrt{m} \ln m \leq |\Unique(z)| \leq q \cdot (1-\exp(-m/q)) + 2 \sqrt{m} \ln m.\nonumber
  \end{align}
\end{definition}

\begin{lemma}
  \label{lem:typical-typical}
Suppose that $q \geq \sqrt{m}/\ln(m)$. With probability $1-\negl(m)$, a uniformly random string $z \sim \Unif([q]^m)$ is typical.
\end{lemma}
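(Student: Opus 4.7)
The plan is to analyze $|\Unique(z)|$ via its expectation and a concentration inequality. For each $j \in [q]$, let $X_j := \One{j \in \Unique(z)}$, so $|\Unique(z)| = \sum_{j=1}^q X_j$. Since $z \sim \Unif([q]^m)$, we have $\Pr(X_j = 0) = (1-1/q)^m$, and hence $\E[|\Unique(z)|] = q \cdot (1 - (1-1/q)^m)$.

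The first step is to show that this exact mean is within $\sqrt{m}\ln m$ of the target quantity $q(1 - \exp(-m/q))$. Using the Taylor expansion $\ln(1-1/q) = -1/q - 1/(2q^2) - \cdots$, one obtains the elementary bound $0 \leq \exp(-m/q) - (1-1/q)^m \leq m/q^2$ (for $q \geq 2$). Multiplying by $q$ gives $|\E[|\Unique(z)|] - q(1-\exp(-m/q))| \leq m/q$, and the hypothesis $q \geq \sqrt{m}/\ln m$ yields $m/q \leq \sqrt{m}\ln m$, as required.

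The second step is concentration. The function $(z_1, \ldots, z_m) \mapsto |\Unique(z)|$ has bounded differences: changing a single coordinate $z_i$ can add or remove at most one element from $\Unique(z)$, so the function changes by at most $1$. Since the coordinates $z_1,\ldots,z_m$ are i.i.d., McDiarmid's inequality gives
\begin{align}
\Pr\left( \bigl| |\Unique(z)| - \E[|\Unique(z)|] \bigr| > t\right) \leq 2 \exp(-2t^2/m)\nonumber.
\end{align}
Choosing $t = \sqrt{m}\ln m$ makes the right-hand side $2\exp(-2\ln^2 m) = \negl(m)$.

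Combining the two steps via the triangle inequality, with probability $1-\negl(m)$ one has $\bigl| |\Unique(z)| - q(1-\exp(-m/q)) \bigr| \leq 2\sqrt{m}\ln m$, which is exactly typicality per \cref{def:typical}. No step is substantially harder than the others; the only mild subtlety is ensuring that the gap between $(1-1/q)^m$ and $\exp(-m/q)$ is small enough, which is precisely where the hypothesis $q \geq \sqrt{m}/\ln m$ is used.
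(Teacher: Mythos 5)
Your proof is correct and follows essentially the same route as the paper's: compute $\E[|\Unique(z)|] = q(1-(1-1/q)^m)$, bound its gap from $q(1-\exp(-m/q))$ by $m/q \leq \sqrt{m}\ln m$ using the hypothesis on $q$, and apply McDiarmid's inequality with unit bounded differences. No discrepancies worth noting.
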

\begin{proof}
  Consider a string $z \in [q]^m$, and define
  \begin{align}
F_1(z) := |\Unique_{\geq 1}(z)| = \sum_{a=1}^q \One{| \{ i \in [m] \ : \ z_i = a \}| \geq 1 }\nonumber.
  \end{align}
  Note that $F_1$ satisfies the bounded differences property with constants $c_i = 1$, for each $i \in [m]$. Note also that
  \begin{align}
\E_{z \sim \Unif([q]^m)}[F_1(z)] = q \cdot \left( 1 -\sum_{k=0}^{j-1} {m \choose k} \cdot q^{-k} \cdot (1-1/q)^{m-k} \right) = q \cdot \left( \sum_{k=j}^m {m \choose k} \cdot q^{-k} \cdot (1-1/q)^{m-k} \right)\nonumber.
  \end{align}

  We have that $\E[F_1(z)] = q \cdot (1 - (1-1/q)^m) $, and using the bounds $\exp(-1/q) \geq 1-1/q \geq \exp(-1/q - 1/q^2)$ for $q \geq 1$, we conclude that
  \begin{align}
q \cdot (1-\exp(-m/q)) \leq \E[F_1(z)] \leq q \cdot (1-\exp(-m/q -m/q^2)) \leq q \cdot (1-\exp(-m/q)) + m/q\nonumber.
  \end{align}
  By \cref{thm:mcdiarmid}, for any $\delta \in (0,1)$, we have that with probability $1-\delta$ over $z \sim \Unif([q]^m)$,
  \begin{align}
\left| \E[F_1(z)] - F_1(z) \right| \leq \sqrt{m \ln(2/\delta)}\nonumber.
  \end{align}
  Choose $\delta = 2\exp(-\ln^2(m))$ (so that $\delta \leq \negl(m)$), and note that our assumption that $q \geq \sqrt{m}/\ln m$ gives that $m/q \leq \sqrt{m \ln(2/\delta)}$. 
  Combining the two displays above gives that with probability $1-\negl(m)$ over $z \sim \Unif([q]^m)$, we have
  \begin{align}
q \cdot (1-\exp(-m/q)) - 2\sqrt{m} \ln m \leq F_1(z) \leq q \cdot (1-\exp(-m/q)) + 2\sqrt{m} \ln m\nonumber.
  \end{align}
  \end{proof}

Given integers $n < \alphb$ so that $\alphb / n \in \BN$, let $\partition$ denote the uniform distribution over mappings $\psi : [\alphb] \to [n]$ conditioned on the event that $|\psi^{-1}(j)| = \alphb / n$ for each $j \in [n]$. For sets $\MS, \MT \subset \Omega$, define $\Delta(\MS, \MT) := (\MS \backslash \MT) \cup (\MT \backslash \MS)$. 
\begin{lemma}
  \label{lem:delta-psi-z}
There is a constant $C_{\ref{lem:delta-psi-z}} > 0$ so that the following holds.  Consider $n,\rho,\alphb \in \BN$ satisfying $\alphb/\rho = n$, let $\MZ_1,  \MZ_2 \subset [\alphb]$ be given, and write $\MZ := \MZ_1 \cap \MZ_2$. Define $Z := |\MZ|, Z_1 := |\MZ_1|, Z_2 := |\MZ_2|$, and suppose that $0 \leq \ep \leq p \leq 1/10$ are given so that
  \begin{align}
    \label{eq:eps-p}
\frac{8}{\ep} \leq \rho \leq n^{1/4}, \quad \frac{Z_1}{n} \in [\ln(2) - \ep, \ln(2) + \ep], \quad \frac{Z_2}{n} \leq 2\ln(2) + \ep, \quad Z \geq pn, \quad n \geq \frac{C_{\ref{lem:delta-psi-z}}}{\ep^3}.
  \end{align}
  Then
  \begin{align}
\Pr_{\psi \sim \partition} \left(| \Delta(\psi ( \MZ_1), \psi ( \MZ_2))| \geq n \cdot \left( \frac 12 - \frac p5 + 23\ep \right) \right) \leq \negl(n)\nonumber.
  \end{align}
\end{lemma}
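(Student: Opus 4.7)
The plan is to decompose $|\Delta(\psi(\MZ_1), \psi(\MZ_2))| = \sum_{j=1}^n X_j$, where $X_j := \One{j \in \Delta(\psi(\MZ_1), \psi(\MZ_2))}$, and bound the mean and fluctuations separately. Since $X_j = 1$ iff exactly one of $\psi^{-1}(j) \cap \MZ_1, \psi^{-1}(j) \cap \MZ_2$ is empty, and the marginal law of $\psi^{-1}(j)$ under $\partition$ is uniform over size-$\rho$ subsets of $[\alphb]$, inclusion--exclusion yields
\[
\E[X_j] \;=\; p_0(Z_1) + p_0(Z_2) - 2\,p_0(|\MZ_1\cup\MZ_2|), \qquad p_0(Z) := \binom{\alphb-Z}{\rho}\Big/\binom{\alphb}{\rho}.
\]
Using $e^{-x(1+x)} \leq 1-x \leq e^{-x}$ on $[0, 1/2]$, the hypothesis $\rho \geq 8/\ep$, and the fact that all the relevant set sizes are at most $3n$, I would verify $p_0(Z) = e^{-Z/n}(1 + O(\ep))$ uniformly. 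Writing $\alpha := Z_1/n, \beta := Z_2/n, \gamma := Z/n$, this gives
\[
\E[X_j] \;\leq\; e^{-\alpha}(1 - e^{\gamma-\beta}) + e^{-\beta}(1 - e^{\gamma-\alpha}) + O(\ep).
\]

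Next I would maximize the leading term over the admissible range of $\beta \in [\gamma, 2\ln 2 + \ep]$. Its derivative in $\beta$ equals $e^{-\beta}(2 e^{\gamma-\alpha} - 1)$, so in the main regime $p \geq \ep$, the hypotheses $\alpha \leq \ln 2 + \ep$ and $\gamma \geq p$ yield $\alpha - \gamma \leq \ln 2$, whence the derivative is nonnegative and the maximum is attained at $\beta = 2\ln 2 + \ep$. Substituting this value, using $\alpha \geq \ln 2 - \ep$ so that $e^{-\alpha} \leq e^\ep/2$, and invoking $e^p \geq 1+p$, a short Taylor expansion produces $\E[X_j] \leq 1/2 - p/4 + O(\ep)$. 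The degenerate regime $p < \ep$ is handled by the trivial bound $\E[X_j] \leq 1/2 + O(\ep)$, which suffices because $p/5$ is absorbed into the error.

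For concentration, I would realize $\psi \sim \partition$ via $\psi^{-1}(j) := \{\sigma((j-1)\rho + 1), \ldots, \sigma(j\rho)\}$ for a uniformly random permutation $\sigma$ of $[\alphb]$. A single transposition of $\sigma$ alters at most two of the sets $\psi^{-1}(j)$ (each by exchanging a single element), and therefore changes at most two of the $X_j$ by at most $1$ each, so $\sum_j X_j$ is $2$-Lipschitz on the transposition Cayley graph of the symmetric group on $[\alphb]$. Standard concentration for functions of a uniformly random permutation (e.g.\ a Doob-martingale Azuma argument on $\sigma(1), \sigma(2), \ldots$) then gives
\[
\Pr\!\left[\textstyle\sum_j X_j > n\cdot \E[X_1] + \ep n\right] \;\leq\; 2\exp\!\left(-\Omega(\ep^2 n / \rho)\right),
\]
which is $\negl(n)$ under the hypotheses $\rho \leq n^{1/4}$ and $n \geq C_{\ref{lem:delta-psi-z}}/\ep^3$. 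Combining with the mean bound and consolidating constants (hypergeometric approximation, exponential Taylor expansion, concentration slack) produces the claimed $1/2 - p/5 + 23\ep$ threshold.

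The main obstacle is extracting a linear-in-$p$ decrease rather than merely $\Theta(p^2)$ from the mean term. The key observation is that the constraint $\alpha \approx \ln 2$ forces $2e^{\gamma-\alpha}$ to exceed $1$ once $\gamma > \ep$, which pins the sign of the derivative in $\beta$ and places the maximum at the upper endpoint $\beta = 2\ln 2 + \ep$; there the correction term $-2 e^{-(\alpha+\beta-\gamma)}$ simplifies to $-e^\gamma/2 \cdot (1 + O(\ep))$ and delivers the required $-p/4$ contribution. A secondary technical point is ensuring that the hypergeometric-to-exponential approximation is uniform across the three probabilities $p_0(Z_1), p_0(Z_2), p_0(|\MZ_1 \cup \MZ_2|)$, which is exactly where the quantitative hypothesis $\rho \geq 8/\ep$ (rather than just $\rho = \omega(1)$) enters.
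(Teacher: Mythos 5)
Your proposal is correct, and it reaches the same two-part structure as the paper (bound $\E\bigl[|\Delta(\psi(\MZ_1),\psi(\MZ_2))|\bigr]$, then concentrate), but both halves are executed by somewhat different means. For the mean, the paper conditions on $\{\psi^{-1}(i)\cap\MZ_2=\emptyset\}$ and observes that, under this conditioning, only $Z_1-Z\le Z_1-pn$ elements of $\MZ_1$ remain available to be hit, which is where the linear-in-$p$ gain enters; you instead write the exact mean by inclusion--exclusion as $p_0(Z_1)+p_0(Z_2)-2p_0(|\MZ_1\cup\MZ_2|)$ and extract the same gain from the union term via $|\MZ_1\cup\MZ_2|=Z_1+Z_2-Z\le Z_1+Z_2-pn$, then optimize explicitly over $\beta=Z_2/n$. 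These are algebraically equivalent (your $-2e^{\gamma-\alpha-\beta}$ is the paper's $-2\zeta_1\zeta_2(1+p)$ up to the hypergeometric error), and your endpoint analysis at $\beta=2\ln 2+\ep$ lands at $\tfrac12-\tfrac p4+O(\ep)$, slightly better than the paper's $\tfrac12-\tfrac p5+22\ep$, with enough slack to absorb the $O(\ep)$ approximation errors into the $23\ep$ budget. The genuinely different step is concentration: the paper partitions $[n]$ into blocks of size $\Theta(\ln^2 n/\ep^2)$, bounds pairwise influences, and applies Dobrushin's inequality (\cref{thm:dobrushin}) block by block; you realize $\psi$ as consecutive length-$\rho$ segments of a uniform permutation of $[\alphb]$ and apply the transposition-Lipschitz/Azuma bound for random permutations, getting $\exp(-\Omega(\ep^2 n/\rho))$, which is $\negl(n)$ since $\rho\le n^{1/4}$ and $n\ge C/\ep^3$ force the exponent to be $\Omega(n^{1/12})$. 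Your concentration route is cleaner and avoids the block decomposition entirely; the one thing to make explicit in a full write-up is the standard martingale bound for permutation statistics, which plays the role that \cref{thm:dobrushin} plays in the paper. (Also note that the ``degenerate regime $p<\ep$'' you set aside is vacuous here, since the hypotheses already impose $\ep\le p$.)
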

\begin{proof}
  The fact that $2/\ep \leq \rho$ and $\max\{ Z_1/n, Z_2/n \} \leq 2$ implies that $\max\{ Z_1/\rho, Z_2/\rho \} \leq \ep n$. Let us write $\zeta_1 := \exp(-Z_1/n), \zeta_2 := \exp(-Z_2/n)$ so that $\zeta_1 \in [(1-\ep)/2, 1/2 + \ep]$ and $\zeta_2 \in [(1-\ep)/4, 1]$. 

  Note that the mapping $\psi : [\alphb] \to [n]$ is specified by the random variables $\psi^{-1}(1), \psi^{-1}(2), \ldots, \psi^{-1}(n) \subset [\alphb]$. Let us define
  \begin{align}
F(\psi) := |\Delta(\psi ( \MZ_1), \psi ( \MZ_2))| = \sum_{i=1}^n \One{i \in \Delta(\psi ( \MZ_1), \psi ( \MZ_2))} = \sum_{i=1}^n G(\psi^{-1}(i); \MZ_1, \MZ_2)\nonumber,
  \end{align}
  where $G(\MT; \MZ_1, \MZ_2) \in \{0,1\}$ (for some $\MT \subset [\alphb]$) is defined to be equal to $1$ if and only if 
  either (a) $\MT \cap \MZ_1 \neq \emptyset$ but $\MT \cap \MZ_2 = \emptyset$ or (b) $\MT \cap \MZ_2 \neq \emptyset$ but $\MT \cap \MZ_1 = \emptyset$.

  \paragraph{Step 1: Bounding the expectation of $F$.} Fix any $i \in [n]$, and note that $\psi^{-1}(i) \subset [\alphb]$ is a uniformly random subset of size $\rho$. Note that
  \begin{align}
    & \Pr_{\psi \sim \partition}(\psi^{-1}(i) \cap \MZ_1 \neq \emptyset,\ \psi^{-1}(i) \cap \MZ_2 = \emptyset) \nonumber\\
    =& \Pr_{\psi \sim \partition} (\psi^{-1}(i) \cap \MZ_1 \neq \emptyset \mid \psi^{-1}(i) \cap \MZ_2 = \emptyset) \cdot \Pr_{\psi \sim \partition}(\psi^{-1}(i) \cap \MZ_2 = \emptyset)\label{eq:psi-decompose}.
  \end{align}
  By \cref{lem:random-subset}, we have
  \begin{align}
\Pr_{\psi \sim \partition}(\psi^{-1}(i) \cap \MZ_2 = \emptyset) \leq \exp (-\rho Z_2/\alphb) = \exp(-Z_2/n) = \zeta_2 \label{eq:psi-z2}
  \end{align}
  and
  \begin{align}
    & \Pr_{\psi \sim \partition} (\psi^{-1}(i) \cap \MZ_1 \neq \emptyset \mid \psi^{-1}(i) \cap \MZ_2 = \emptyset)\nonumber\\
    \leq & 1 - \exp \left( -\frac{\rho(Z_1 - Z)}{\alphb - Z_2 - \rho} - \frac{\rho (Z_1-Z)^2}{(\alphb - Z_2 - \rho)^2}\right) \nonumber\\
    \leq & 1 - \exp \left( - \frac{Z_1 - Z}{n - Z_2/\rho - 1} - \frac{\rho Z_1^2}{(\alphb - Z_2 - \rho)^2}\right)\nonumber\\
    \leq & 1 - \exp\left( -(1+4\ep) Z_1/n\right) \cdot \exp(p) \cdot \exp(- Z_1^2(1+4\ep)^2/(\rho n^2))\nonumber\\
    \leq & 1 - \exp(\ln(\zeta_1) - 8\ep) \cdot \exp(p) \cdot \exp(-8/\rho)\nonumber\\
    \leq & 1 - \zeta_1 \cdot (1+p) \cdot \exp(-9\ep) \leq 1 - \zeta_1 \cdot (1+p) \cdot (1-9\ep) \leq  (1-\zeta_1) - p\zeta_1 + 10 \ep\label{eq:psi-z2-z1},
  \end{align}
  where we have used the fact that, conditioned on $\psi^{-1}(i) \cap \MZ_2 = \emptyset$, $\psi^{-1}(i)$ is distributed as a uniformly random subset of $[\alphb]\backslash \MZ_2$, which has size $\alphb -Z_2$. Moreover, the third inequality above uses the upper bound $Z_2/\rho\leq \ep n$ and the lower bound $Z \geq pn$ in \cref{eq:eps-p}, the fourth inequality uses the upper bound  $Z_1/n\leq 2$ from \cref{eq:eps-p}, and the remaining inequalities simplify and use the fact that $\ep, p \in (0, 1/10)$ and $8/\rho \leq \ep$. 

  Using \cref{eq:psi-decompose,eq:psi-z2,eq:psi-z2-z1} together with a symmetrical argument to bound $\Pr(\psi^{-1}(i) \cap \MZ_2 \neq \emptyset,\ \psi^{-1}(i) \cap \MZ_1 = \emptyset)$,\footnote{Note that, though our assumptions on $Z_1, Z_2$ in the lemma statement are not symmetric, to derive \cref{eq:psi-z2-z1} we only need the inequalities $Z_2/\rho \leq \ep n$ and $Z_1/n \leq 2$, which hold when the roles of $Z_1, Z_2$ are flipped.} we see that, for each $i \in[n]$,
  \begin{align}
    \E_{\psi \sim \partition}[G(\psi^{-1}(i);\MZ_1, \MZ_2)] \leq &  \zeta_2 \cdot \left( (1-\zeta_1) - p\zeta_1 + 10\ep \right) + \zeta_1 \cdot \left( (1-\zeta_2) - p\zeta_2 + 10\ep \right)\nonumber\\
    \leq & \zeta_1 + \zeta_2 - 2\zeta_1\zeta_2 (1+p) + 20\ep\nonumber\\
    \leq & \left( \frac 12 + \ep \right) + \zeta_2 - (1-\ep) \zeta_2 \cdot (1+p) + 20\ep\nonumber\\
    = & \left( \frac 12 + \ep \right) + \ep \zeta_2 - \zeta_2 (1-\ep) \cdot p + 20\ep\nonumber\\
    \leq & \frac 12 - \frac{p}{5} + 22\ep \nonumber,
  \end{align}
  where the second inequality uses that $\zeta_1, \zeta_2 \leq 1$, the third inequality uses that $\zeta_1 \in [(1-\ep)/2, 1/2 + \ep]$, and the final inequality uses that $\zeta_2 \in [(1-\ep)/4, 1]$. 
  It follows that
  \begin{align}
\E_{\psi \sim \partition}[F(\psi)] \leq n \cdot \left( \frac 12 - \frac p5 + 22\ep\right)\label{eq:partition-f-expectation}.
  \end{align}

\paragraph{Step 2: Concentration of $F$ to its expectation.}  Consider any subset $\MH \subset [n]$ of size $H_0 := |\MH|$ satisfying $4\rho H_0 \leq  n$, and define $F_\MH(\psi) := \sum_{i \in \MH} G(\psi^{-1}(i); \MZ_1, \MZ_2)$. 
  Note that $F_\MH$ satisfies the bounded differences property with respect to the random variables $\psi^{-1}(i)$, $i \in \MH$, with constants $c_i = 1$ for each $i \in \MH$. For any distinct $i,j \in \MH$, we have
  \begin{align}
    & \max_{\substack{\psi^{-1}(k) \subset [\alphb],\ k \in \MH \backslash \{i,j\} \\ \psi^{-1}(j), \tilde \psi^{-1}(j) \subset [\alphb]}} \tvd{\partition(\psi^{-1}(i) = \cdot \mid \psi^{-1} |_{\MH \backslash \{i,j\}}, \psi^{-1}(j))}{\partition(\psi^{-1}(i) = \cdot \mid \psi^{-1}|_{\MH \backslash \{ i,j\}}, \tilde \psi^{-1}(j))}\nonumber\\
    \leq & \max_{\substack{\psi^{-1}(k) \subset [\alphb],\ k \in \MH \backslash \{i,j\} \\ \psi^{-1}(j), \tilde \psi^{-1}(j) \subset [\alphb]}} \partition(\psi^{-1}(i) \cap \tilde \psi^{-1}(j) \neq \emptyset \mid \psi^{-1}|_{\MH \backslash \{i,j\}}, \psi^{-1}(j))\label{eq:prob-psii}\\
    \leq & 1 - \left(1- \frac{\rho}{\alphb - (H_0-1)\rho}\right) \cdots \left(1 - \frac{\rho}{\alphb - H_0\rho +1}\right)\nonumber\\
    \leq & 1 - (1 - 2\rho/\alphb)^\rho \leq 2\rho^2 / \alphb \leq 1/(2H_0)\label{eq:psi-influence},
  \end{align}
  where we use $\psi^{-1} |_{ \MH \backslash \{i,j\}}$ to denote the collection of tuples $(k,\psi^{-1}(k))$ for $k \in \MH \backslash \{i,j\}$. In \cref{eq:prob-psii}, the probability is over $\psi^{-1}(i)$, whose conditional distribution is that of a uniformly random subset of $[\alphb] \backslash (\psi^{-1}(j) \cup (\psi^{-1} (\MH \backslash \{ i,j \})))$ of size $\rho$. The second inequality above uses \cref{lem:random-subset}, and the second-to-last inequality uses the fact that $\alphb - H_0\rho + 1 > \alphb - H_0 \rho \geq \alphb /2$, since $2H_0\rho \leq \alphb$ by assumption, and the final inequality uses that $\alphb \geq 4\rho^2 H_0$, by assumption.

  The above chain of inequalities \cref{eq:psi-influence} guarantees that $I_{j \to i}(\psi^{-1}|_{\MH}) \leq 1/(2H_0)$ for all $i,j \in \MH$ with $i \neq j$. By \cref{thm:dobrushin}, it follows that for any $\delta \in (0,1)$,
  \begin{align}
\Pr_{\psi \sim \partition} \left(|F_\MH(\psi) - \E_{\partition}[F_\MH(\psi)]| \geq \sqrt{4H_0 \ln(2/\delta)} \right) \leq \delta\label{eq:H-conc-ineq}.
  \end{align}
  Write $H := \frac{C_1 \ln^2 n}{\ep^2}$, for a sufficiently large constant $C_1$ to be specified below. Note that $4\rho(H+1) \leq n$ as long as $n \geq \frac{8C_1 \rho \ln^2 n}{\ep^2}$, which in turn, since $\rho \leq n^{1/4}$, holds when $n \geq C/\ep^3$ for a sufficiently large constant $C$ (chosen as a function of $C_1$). 
  Let $\MH_1, \ldots, \MH_{\lfloor  n / H\rfloor }$ denote a partition of $[ n]$ for which $|\MH_j| \in \{ H, H+1 \}$. Since $4\rho(H+1) \leq n$, we may apply \cref{eq:H-conc-ineq} to each $\MH_j$ and use a union bound, which yields 
  \begin{align}
\Pr_{\psi \sim \partition} \left( |F(\psi) - \E_{\partition}[F(\psi)]| \geq  \frac{n \sqrt{4 \ln(2/\delta)}}{\sqrt{H}}\right) \leq \frac{ n \delta}{H} \leq  n \delta\label{eq:partition-f-concentration}.
  \end{align}
  Choosing $\delta = 2 \exp(-\ln^2(n)) \leq \negl(n)$ gives that $n\sqrt{4 \ln(2/\delta)}/\sqrt{H} \leq 2 n \ln(n) / \sqrt{H} \leq \ep n$, as long as the constant $C_1$ is chosen sufficiently large. Combining \cref{eq:partition-f-expectation,eq:partition-f-concentration} yields that with probability $1-\negl(n)$ over the draw of $\psi \sim \partition$, $F(\psi) \leq n \cdot \left( \frac 12 - \frac p5 + 23\ep\right)$, which yields the claimed bound. 
\end{proof}

\subsection{Proof of robustness}
We are ready to show robustness of our PRC $\PRCI[\PRCS]$ to \edit-bounded channels (\cref{def:sid}).

\begin{lemma}[Robustness]
  \label{lem:sdi-robust}
  There are some constants $C_{\ref{lem:sdi-robust}}, \Crob \geq 1$ so that the following holds.  Consider any $\rho > 1$ and security parameter $\lambda \in \BN$ satisfying $n(\lambda) \geq C_{\ref{lem:sdi-robust}}\rho^{4}$, and $p_0 \in (C_{\ref{lem:sdi-robust}}/\rho,1/(10\Crob))$ and any $(1-\Crob p_0)$-\edit-bounded channel $\ME$. %
  Then for any PRC $\PRCS$ which is robust to $(1/2-p_0)$-substitution-bounded channels, the PRC $\PRCI[\PRCS, {\rho}]$ in \cref{alg:prf-di} is robust to $\ME$.
\end{lemma}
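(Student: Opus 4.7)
The plan is to reduce robustness of $\PRCI[\PRCS, \rho]$ to robustness of $\PRCS$ against substitution channels. Write $n = n(\lambda)$, $m = \lceil n \ln 2\rceil$, $\alphb = \rho n$, and fix a message $\msg$. I trace one execution of the pipeline: $y^0 \gets \EncS(1^\lambda, \sk, \msg)$, $y \gets \PerturbDifference(n, m, y^0)$, $z \gets \Enc(1^\lambda, (\sk, \psi), \msg)$ (so that $\psi(z) = y$ by construction), $z' \sim \ME(z)$, and finally $y' \in \hc[n]$ with $y'_i = \One{i \in \Unique(\psi(z'))}$, which is what the outer decoder hands to $\DecS$. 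The goal is thus to show $\dham(y^0, y') \leq (1/2 - p_0)n$ with probability $1 - \negl(\lambda)$, at which point a standard clamping trick lets us view $y^0 \mapsto y'$ as a strictly $(1/2-p_0)$-substitution-bounded channel and invoke robustness of $\PRCS$ to conclude $\DecS(\sk, y') = \msg$.

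I would bound $\dham(y^0, y')$ via the triangle inequality
\begin{align*}
\dham(y^0, y') \;\leq\; \dham(y^0, D_\psi(z)) \;+\; \dham(D_\psi(z), D_\psi(z')),
\end{align*}
where $D_\psi(w)_i := \One{i \in \Unique(\psi(w))}$. Since $\psi(z) = y$, the first term equals $\dham(y^0, D(y))$ in the notation of \cref{lem:perturb-diff}, and since $y^0 \sim \Unif(\hc[n])$ makes $y \sim \Unif([n]^m)$ by \cref{lem:pd-unif}, \cref{lem:perturb-diff} directly gives $\dham(y^0, D_\psi(z)) \leq \ep n$ with overwhelming probability for any $\ep$ satisfying the hypothesis of that lemma (I will eventually set $\ep = p_0/100$).

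The second term is the interesting one: setting $\MZ_1 := \Unique(z), \MZ_2 := \Unique(z') \subset [\alphb]$ identifies it with $|\Delta(\psi(\MZ_1), \psi(\MZ_2))|$, which is exactly the quantity bounded by \cref{lem:delta-psi-z} (applied with $p := \tfrac{1}{2}\Crob p_0 \ln 2$). So the bulk of the work reduces to verifying its three hypotheses: (i) $|\MZ_1|/n \in [\ln 2 - \ep, \ln 2 + \ep]$, (ii) $|\MZ_2|/n \leq 2\ln 2 + \ep$, and (iii) $|\MZ_1 \cap \MZ_2| \geq pn$. For (i), the observation is that when $y^0$ is uniform, $z$ is statistically uniform over $[\alphb]^m$ (by \cref{lem:pd-unif} together with the uniform lift in \cref{line:ybar-sample} of \cref{alg:prf-di}); since $m/\alphb = \ln 2/\rho$ is small, the number of distinct coordinates of $z$ concentrates at $m \cdot (1 - O(1/\rho))$, which lies in the required window once $\rho \gtrsim 1/\ep$. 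Item (ii) is essentially free, since $|z'| \leq m + (1-\Crob p_0) m \leq 2m$. Item (iii) is the core of the proof: because $\ME$ is $(1 - \Crob p_0)$-\edit-bounded, any aligning of $z$ to $z'$ leaves at least $\Crob p_0 \cdot m$ positions untouched, so $z$ and $z'$ share a common subsequence of length $\Crob p_0 m$; I then need to argue that nearly all symbols in this common subsequence are distinct (so they contribute to the \emph{set} intersection $\MZ_1 \cap \MZ_2$), which again uses $\alphb = \rho n \gg m$ to bound collisions within $z$ by $O(n/\rho)$, yielding $|\MZ_1 \cap \MZ_2| \geq \Crob p_0 m - O(n/\rho) \geq pn$ once $\rho \geq C/p_0$.

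Assembling everything gives $\dham(y^0, y') \leq n(1/2 - \Crob p_0 \ln 2/10 + 24\ep)$, and choosing $\Crob$ a sufficiently large absolute constant together with $\ep = p_0/100$ makes this at most $(1/2 - p_0)n$, closing the reduction. The main obstacle I anticipate is item (iii): translating the combinatorial guarantee of a long common subsequence (an LCS-style argument against an adversarial alignment) into a bound on \emph{distinct} shared symbols, and doing so uniformly with respect to the high-probability events used elsewhere. A close second is the concentration argument in (i), which must mesh the (near-)uniformity of $z$ with the independent randomness of $\psi$ demanded by \cref{lem:delta-psi-z}; this is in particular why the lemma statement insists on the lower bound $n \geq C \rho^4$, matching the $\rho \leq n^{1/4}$ hypothesis of \cref{lem:delta-psi-z}.
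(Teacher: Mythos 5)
Your proposal is correct and follows essentially the same route as the paper's proof: the same triangle-inequality decomposition $\dham(y^0,y') \leq \dham(y^0, D_\psi(z)) + \dham(D_\psi(z), D_\psi(z'))$, with \cref{lem:perturb-diff} handling the first term, \cref{lem:delta-psi-z} (with $\MZ_1 = \Unique(z)$, $\MZ_2 = \Unique(z')$) handling the second, and \cref{lem:robust-robust} closing the reduction to $\PRCS$. The one step you defer as an ``anticipated obstacle''---passing from the idealized world where $y^0$ is uniform (so $z$ is genuinely uniform and $\psi$ is conditionally distributed as $\partition$ given $(z,z')$) to the real world where $y^0 \gets \EncS$---is resolved in the paper exactly as you'd expect, by a hybrid argument invoking undetectability of $\PRCS$, which is also where the polynomial-time sampleability of $\ME$ from \cref{def:sid} gets used.
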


\begin{proof}[Proof of \cref{lem:sdi-robust}]
  Fix $\lambda \in \BN$, and write $n := n(\lambda), \alphb := \alphb(\lambda), m := m(\lambda)$, so that $p_0 > C_{\ref{lem:sdi-robust}}/\rho \geq 1/\rho$. 
  Fix any $z \in [\alphb]^m$ which is typical (per \cref{def:typical}), and consider any $z' \in [\alphb]^m$ which can be obtained form $z$ via a total of at most $(1-\Crob \cdot p_0) \cdot m$ substitutions, insertions, and deletions (which has probability $1$ over $z' \sim \ME(z)$ since $\ME$ is $(1-\Crob p_0)$-\edit-bounded. %
  Write $\MZ := \Unique(z) \cap \Unique(z')$, i.e., $\MZ$ denotes those entries of $z$ which are preserved in $z'$. Since $z$ is typical and $m/\alphb \leq m/(n\rho) \leq 1/\rho$, we have that
  \begin{align}
    |\Unique(z)| &\geq \alphb \cdot (1-\exp(-m/\alphb)) - 2\sqrt{m} \ln m \\
    &\geq \alphb \cdot (m/\alphb - (m/\alphb)^2) - 2\sqrt{m} \ln m \geq m \cdot (1 - 1/\rho) - 2\sqrt{m} \ln m,\nonumber
  \end{align}
  where the second inequality uses the fact that $\exp(-x) \leq 1 - x + x^2$ for $x \in [0,1]$. Since the requirements $n \geq C_{\ref{lem:sdi-robust}} \rho^4$ and $m = \lceil \ln(2) \cdot n \rceil$ ensure that that $2\sqrt{m} \ln (m) \leq m/\rho$ (as long as $C_{\ref{lem:sdi-robust}}$ is large enough), we have that $|\Unique(z)| \geq m\cdot (1-2/\rho)$. Note also that it is immediate that $|\Unique(z)| \leq m$.

\paragraph{Step 1: Using \cref{lem:delta-psi-z}.}  Since $z'$ is obtained from $z$ via at most $1-\Crob p_0$ insertions, deletions, and substitutions, we have $|\MZ| \geq |\Unique(z)| - m \cdot (1 - \Crob p_0) \geq m \cdot (\Crob p_0 - 2/\rho)$.  We also have that $|\Unique(z')| \leq 2m$. %
  We now apply \cref{lem:delta-psi-z} with
  \begin{align}
  \MZ_1 = \Unique(z), \quad \MZ_2 = \Unique(z'), \quad   \ep = {p_0}, \quad p = \frac{\Crob - 2}{2} \cdot p_0\nonumber.
  \end{align}
  We must verify that the preconditions \cref{eq:eps-p} are satisfied: first, we check that, as long as $C_{\ref{lem:sdi-robust}} \geq 8$,
  \begin{align}
    \frac{8}{\ep} = \frac{8}{p_0} < \frac{8\rho}{C_{\ref{lem:sdi-robust}}}
\leq \rho \leq n^{1/4}\nonumber,
  \end{align}
  where the final inequality follows from our choice of $n \geq \rho^4$.  Next, we have
  \begin{align}
|\Unique(z)| \in [ m\cdot (1-2/\rho), m] \subset [(\ln(2) - \ep) \cdot n, (\ln(2) + \ep) \cdot n]\nonumber,
  \end{align}
  where we have used that $m = \lceil \ln(2) \cdot n\rceil$ and the fact that $2/\rho < p_0 = \ep$ and $n\ep = np_0 \geq 1$. Similarly, we have
  \begin{align}
|\Unique(z')| \leq 2m \leq (2\ln(2) + \ep) \cdot n\nonumber.
  \end{align}
  Next, we have $|\MZ| \geq m \cdot (\Crob p_0 - 2/\rho) \geq (\Crob -2) p_0 \cdot m \geq \frac{\Crob - 2}{2} \cdot p_0 n = pn$, since $m \geq n/2$. Finally, we have that
  \begin{align}
n \geq \rho^4 \geq (C_{\ref{lem:sdi-robust}}/p_0)^4 = (3C_{\ref{lem:sdi-robust}}/\ep)^4 \geq C_{\ref{lem:delta-psi-z}}/\ep^3\label{eq:n-large-eps},
  \end{align}
  as long as $C_{\ref{lem:sdi-robust}}$ is chosen sufficiently large. Thus, all constraints of \cref{lem:delta-psi-z} are satisfied. 
  Given $\psi : [\alphb]\to [n]$ and $z \in [\alphb]^\star$, define $D_\psi(z)  \in \{0,1\}^n$  to be the vector defined by
  \begin{align}
    D_\psi(z)_i := \One{i \in \Unique(\psi ( z))}.\nonumber
  \end{align}
  Then \cref{lem:delta-psi-z} gives that
  \begin{align}
    & \Pr_{\psi \sim \partition} \left( \dham({D_\psi(z)},{D_\psi(z')}) \geq n \cdot \left( \frac 12 - \frac p5 + 23\ep \right)\right) \nonumber\\
    = &\Pr_{\psi \sim \partition} \left( |\Delta(\psi ( \Unique(z)), \psi ( \Unique(z')))| \geq n \cdot \left( \frac 12 - \frac p5 + 23\ep\right)\right)\leq \negl(n)\label{eq:fix-z-robust}.
  \end{align}
  As long as $\Crob \geq 300$, we have
  \begin{align}
\frac 12 - \frac p5 + 23\ep \leq \frac 12 - \frac{\Crob -2}{10} \cdot p_0 + 23p_0 \leq \frac 12 - 5 p_0\nonumber.
  \end{align}

  \paragraph{Step 2: Averaging over $z$.} Let us consider the following ``idealized'' variant of $\Enc(1^\lambda,(\sk, \psi), \msg)$, which we denote by $\Enc'(1^\lambda, \psi)$ (as the output of the below procedure does not depend on $\sk$ or $\msg$):
  \begin{enumerate}
  \item Sample $y \sim \Unif([n]^m)$.
  \item For each $j \in [m]$, choose $z_j \sim \Unif(\{ z \ : \ \psi(z) = y_j \})$, so that $z_j \in [\alphb]$.
  \item \Return $z = (z_1, \ldots, z_m)$. 
  \end{enumerate}
  For any fixed $z \in [\alphb]^m$, note that $\Pr(\Enc'(1^\lambda, \psi) = z) = \alphb^{-m}$, and in particular does not depend on $\psi$: this holds since each element $z_j$ of $z$ is drawn independently from the distribution which first draws $y_j \sim \Unif([n])$ and then draws $z_j \sim \Unif(\{ z \ : \ \psi(z) = y_j \})$; since $|\psi^{-1}(a)| = \alphb / n$ for each $a \in [n]$, this distribution is simply $\Unif([\alphb])$. 
  Let $Q$ denote the joint distribution of $(y, \psi, z,z')$, where $y \sim \Unif([n]^m)$, $\psi \sim \partition$, $z$ is generated from $y, \psi$ as in the above procedure $\Enc'(1^\lambda, \psi)$, and $z'\sim \ME(z)$. For any $z_0 \in [\alphb]^m$, it follows that the conditional distribution (under $Q$) of $\psi$ given $z=z_0$ is $\partition$. Since $\psi$ and $z'$ are conditionally independent given $z$, we have furthermore that for any $z_0, z_0' \in [\alphb]^m$, the conditional distribution (under $Q$) of $\psi$ given $z= z_0, z' = z_0'$ is $\partition$. Thus, by \cref{eq:fix-z-robust}, if $z_0$ is typical and $z_0'$ can be obtained from $z_0$ with at most $(1-\Crob p_0) \cdot m$ subsitutions, insertions, and deletions, then %
  \begin{align}
\Pr_Q\left( \dham(D_\psi(z), D_\psi(z')) \geq n \cdot \left( \frac 12 - 5p_0 \right) \mid z = z_0, z' = z_0'\right) \leq \negl(n)\nonumber.
  \end{align}
  Since $z$ is typical with probability $1-\negl(n)$ under $Q$ (by \cref{lem:typical-typical}) and and $z'$ can be obtained from $z$ using at most $(1-\Crob p_0) \cdot m$ substitutions, insertions, and deletions with probability $1$ under $Q$, it follows that in fact %
  \begin{align}
\Pr_Q\left( \dham(D_\psi(z), D_\psi(z')) \geq n \cdot \left( \frac 12 - 5 p_0 \right)\right) \leq \negl(n) \leq \negl(m)\label{eq:q-dham}.
  \end{align}
  \iftrue
  \paragraph{Step 3: using pseudorandomness.} Let $\tilde Q$ denote the joint distribution of $(y, \psi, z, z')$ where $y$ is distributed as the random variable $y$ defined in \cref{line:pd-call} of \cref{alg:prf-di}, $\psi \sim \partition$, $z$ is distributed as the random variable $z$ defined in \cref{line:ybar-sample} of \cref{alg:prf-di} given the value of $\psi$ and $y$ (i.e., $z_j \sim \Unif(\{ a \ : \ \psi(a) = y_j \})$), and $z' \sim \ME(z)$. Using the fact a sample from $\ME(\cdot)$ may be produced by a probabilistic polynomial-time algorithm %
  together with \cref{lem:di-undetect,lem:pd-unif}, we have that
  \begin{align}
\left| \Pr_Q\left( \dham(D_\psi(z), D_\psi(z')) \geq  n \cdot \left( \frac 12 -5p_0\right) \right) - \Pr_{\tilde Q} \left( \dham(D_\psi(z), D_\psi(z')) \geq  n \cdot \left( \frac 12 -5p_0\right) \right) \right| \leq \negl(n)\label{eq:q-tilq}.
  \end{align}
In more detail, to arrive at \cref{eq:q-tilq}, we reason as follows: if the difference in \cref{eq:q-tilq} were non-negligible, then we could distinguish in polynomial time between a sample $y^0 \gets \EncS(1^\lambda, \sk, \msg)$ from a uniformly random string $y^0 \sim \Unif(\{0,1\}^n)$, as follows: we first generate $y \gets \PerturbDifference(n,m,y^0)$ (as in \cref{line:pd-call} of \cref{alg:prf-di}, then sample $\psi \sim \partition$, then sample $z \in [\alphb]^m$ by $z_j \sim \Unif(\{ a \ : \ \psi(a) = y_j \})$ for $j \in [m]$, then sample $z' \sim\ME(z)$, and finally evaluate $\dham(D_\psi(z), D_\psi(z'))$. In the event that $y^0 \gets \EncS(1^\lambda, \sk, \msg)$, then the resulting distribution of $(y,\psi,z,z')$ is $\tilde Q$, and in the event that $y^0 \sim \Unif(\{0,1\}^n)$, \cref{lem:pd-unif} gives that the induced distribution over $y$ is $\Unif([n]^m)$ and thus the resulting distribution of $(y,\psi,z,z')$ is $Q$. Thus we would get a contradiction to \cref{lem:di-undetect}. 
\fi
\paragraph{Step 4: Wrapping up.}
  We have from \cref{eq:q-dham,eq:q-tilq} that
  \begin{align}
\Pr_{\tilde Q}\left( \dham(D_\psi(z), D_\psi(z')) \geq n \cdot \left( \frac 12 - 3 p_0 \right) \right) \leq \negl(m)\label{eq:tilq-dham}.
  \end{align}
  By our construction of the distribution $\tilde Q$, $D_\psi(z) = y$ with probability $1$ under $\tilde Q$. Moreover, we have that 
  \begin{align}
    & \Pr_{(\sk, \psi) \gets \Key(1^\lambda)} \left( \Dec(1^\lambda, (\sk, \psi), \ME(z)) \neq \msg \mid z \gets \Enc(1^\lambda, (\sk, \psi), \msg)\right) \nonumber\\
    = & \Pr_{(\sk, \psi) \gets \Key(1^\lambda)} \left( \DecS(1^\lambda, \sk, D_\psi( \ME(z))) \neq \msg\mid z \gets \Enc(1^\lambda, (\sk, \psi), \msg) \right)\nonumber\\
    =& \Pr_{\sk \sim \KeyS(1^\lambda), \psi \sim \partition} \left( \DecS(1^\lambda, \sk, \bar \ME_\psi(y^0)) \neq \msg \mid y^0 \gets \EncS(1^\lambda, \sk, \msg)\right) \leq \negl(\lambda)\label{eq:final-insdel-robustness},
  \end{align}
  where $\bar \ME_\psi : \{0,1\}^n \to \{0,1\}^n$ denotes the (random) channel which, given $y^0 \in \{0,1\}^n$, first applies the procedure in \cref{line:pd-call,line:ybar-sample} of \cref{alg:prf-di} to generate $z \in [\alphb]^m$ from $y^0$, and then outputs $D_\psi(z')$ for $z' \sim \ME(z)$. For future reference we let this distribution over $(y^0, z)$ where $\sk \sim \KeyS(1^\lambda), y^0 \gets \EncS(1^\lambda, \sk, \msg)$ be denoted by $R$. The first equality in the display above uses that the output of $\Dec(1^\lambda, (\sk, \psi), z')$ is given by $\DecS(1^\lambda, \sk, D_\psi(z'))$. The second equality uses the definition of $\Enc(1^\lambda, (\sk, \psi), \msg)$ in \cref{alg:prf-di}. Finally, the inequality follows since the PRC $\PRCS$ is robust to $(1/2 - p_0)$-substitution bounded channels together with \cref{lem:robust-robust} and the fact that  with probability $1-\negl(\lambda)$ over the draw of $\psi \sim \partition, \sk \sim \KeyS(1^\lambda)$, $y^0 \gets \EncS(1^\lambda,\sk, \msg)$, and $z' \sim \bar \ME_\psi(y^0)$, we have $\dham(z', y^0) \leq (1/2 - p_0) \cdot n$. This fact follows from the following observations:
  \begin{itemize}
  \item By \cref{lem:perturb-diff} and undetectability of $\PRCS$, for any $\psi$, with probability $1-\negl(n)-\negl(\lambda) \geq 1-\negl(\lambda)$ over $(y^0, z) \sim R$, we have $\dham(y^0, D_\psi(z)) \leq p_0 n$. (In particular, \cref{lem:perturb-diff} ensures that $\dham(y^0, D_\psi(z)) \leq p_0n$ when $y^0 \sim \Unif(\{0,1\})^n$ and then $z$ is generated from $y^0$ as in \cref{line:pd-call,line:ybar-sample} of \cref{alg:prf-di}, and undetectability of $\PRCS$ ensures that this also holds when instead $y^0 \sim \EncS(1^\lambda, \sk, \msg)$.) 
    Here we have also used the fact that $\psi ( z)$ is the output string $y$ of $\PerturbDifference(n(\lambda), m(\lambda), y^0)$ (defined in \cref{line:pd-call}), together with the fact that $n \geq C_{\ref{lem:perturb-diff}} \ln^2(1/\ep)/\ep^2$ by \cref{eq:n-large-eps}, as long as the constant $C_{\ref{lem:sdi-robust}}$ is chosen sufficiently large.  %
  \item By \cref{eq:tilq-dham} together with the fact that the distribution of $(\psi, z, z')$ where $\psi \sim \partition$, $z \sim R$, and $z' \sim \ME(z)$ is exactly the marginal distribution of $(\psi, z, z') \sim \tilde Q$, we have that $ \dham(D_\psi(z), D_\psi(z')) \leq \frac 12 - 3p_0$ with probability $1-\negl(n) \geq 1-\negl(\lambda)$.
  \item Combining the two points above, we see that with probability $1-\negl(\lambda)$ over $\sk \sim \KeyS(1^\lambda), y^0 \sim \EncS(1^\lambda,\sk, \msg), \psi \sim \partition, z' \sim \bar \ME_\psi(y^0)$, $\dham(y^0, z') \leq (1/2 - p_0) \cdot n$, as desired. %
  \end{itemize}
  Summarizing, we have established \cref{eq:final-insdel-robustness}, which yields the desired robustness guarantee.
\end{proof}
\begin{remark}[Removing computational efficiency of channel]
  \label{rem:channel-eff}
  Though the proof of \cref{lem:sdi-robust} uses the fact that (per \cref{def:sid}), the channel $\ME$ is sampleable in polynomial time (namely, in Step 3), this assumption is not necessary, in the following sense. If we make the slightly stronger assumpgion that $\PRCS$ is in fact $(1/2 -p_0)$-weakly substitution bounded (per \cref{def:weak-sub-bounded}), then the proof of \cref{lem:sdi-robust} can be modified as follows to remove the assumption that $\ME$ is computationally efficient: instead of showing that with probability $1-\negl(\lambda)$ over the draw of $y^0 \gets \EncS(1^\lambda, \sk, \msg)$ and $z' \sim \bar \ME_\psi(y^0)$, we have $\dham(z', y^0) \leq (1/2 - p_0) \cdot n$ and using \cref{lem:robust-robust}, we would show that $\dham(z', y^0) \leq (1/2 - p_0) \cdot n$ with probability $1-\negl(\lambda)$ when instead $y^0 \gets \Unif(\hc[n])$, which would imply that $\bar \ME_\psi$ is $(1/2 - p_0)$-weakly robust. This latter argument avoids us to avoid needing to reason about the distribution of $y^0 \gets \EncS(1^\lambda, \sk, \msg)$, and so allows us to omit Step 3 in the proof of \cref{lem:sdi-robust}.  
  Moreover, we note that (as discussed in \cref{sec:add-prelim}) \cref{thm:prf-prc} in fact establishes the existence of PRCs robust to weakly substitution bounded channels (as do \cite[Sections 5.3 \& 5.4]{christ2024pseudorandom}). 
  \end{remark}
\begin{lemma}
  \label{lem:random-subset}
  Fix integers $N,\rho \in \BN$ and a subset $\MZ \subset [N]$ with $Z:=|\MZ|$, and suppose $Z \leq N-\rho$. Let $\MT \subset [N]$ be a uniformly random subset of size $\rho$. Then
  \begin{align}
\Pr(\MZ \cap \MT = \emptyset) = \left( 1 - \frac{Z}{N}\right) \cdots \left(1 - \frac{Z}{N-\rho+1}\right) \in \left[ \exp \left( -\frac{\rho Z}{N-\rho} - \frac{\rho Z^2}{(N-\rho)^2}\right), \exp\left(-\frac{\rho Z}{N}\right) \right] \nonumber.
  \end{align}
\end{lemma}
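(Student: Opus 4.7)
The plan is in three steps: derive the exact product formula, bound it above, and bound it below.

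First, I would unpack $\Pr(\MZ \cap \MT = \emptyset)$ by sampling the elements of $\MT$ one at a time uniformly without replacement. Conditioned on the first $i-1$ selections lying in $[N]\setminus \MZ$, there remain $N-i+1$ candidates, exactly $N-i+1-Z$ of which avoid $\MZ$; the conditional probability that the $i$-th selection again avoids $\MZ$ is thus $1 - Z/(N-i+1)$. Multiplying over $i = 1, \ldots, \rho$ yields the claimed product formula, with the condition $Z \leq N-\rho$ ensuring that the $\rho$-th factor is well-defined and nonnegative.

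The upper bound is immediate: each factor satisfies $1 - Z/(N-i+1) \leq 1 - Z/N$ because $N-i+1 \leq N$, and then $(1-Z/N)^\rho \leq \exp(-\rho Z/N)$ by $1 - x \leq e^{-x}$.

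For the lower bound I would take logarithms and use the Taylor-series inequality
\[ \ln(1 - x) \;\geq\; -x - \frac{x^2}{2(1-x)}, \qquad x \in [0,1), \]
which follows by writing $-\ln(1-x) = \sum_{k\geq 1} x^k/k \leq x + (x^2/2)\sum_{k\geq 0} x^k$. Setting $\alpha_i := Z/(N-i+1)$ and using $N - i + 1 \geq N-\rho+1 > N-\rho$ uniformly for $i \in [\rho]$, each $\alpha_i$ is at most $Z/(N-\rho)$. Then $\sum_{i=1}^{\rho} \alpha_i \leq \rho Z/(N-\rho)$ by a termwise bound, and $\sum_{i=1}^\rho \alpha_i^2/(2(1-\alpha_i)) \leq \rho Z^2/(N-\rho)^2$ using the same uniform bound on $\alpha_i$ together with $1/(1-\alpha_i) \leq 2$ in the operating regime where $\alpha_i \leq 1/2$. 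Exponentiating yields the claimed lower bound.

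The one subtle point is the $1/(1-\alpha_i)$ factor in the Taylor correction: obtaining the clean form $\rho Z^2/(N-\rho)^2$ in the exponent requires $\alpha_i$ to be bounded away from $1$, equivalently $Z$ to be sufficiently smaller than $N - \rho$. This is implicitly the case wherever the lemma is applied (in the proof of \cref{lem:delta-psi-z}), since there $Z/(N-\rho)$ is controlled by a constant strictly less than $1$ via the assumptions on $Z_1, Z_2$ and $\rho$ in \cref{eq:eps-p}.
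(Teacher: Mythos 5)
Your proof is correct and follows essentially the same route as the paper's: the same sequential-sampling derivation of the product formula, $1-x\le e^{-x}$ for the upper bound, and a second-order lower bound on $\ln(1-x)$ for the lower bound (the paper invokes $1-x\ge\exp(-x-x^2)$, which is exactly what your Taylor estimate reduces to once $1/(1-\alpha_i)\le 2$). The caveat you flag is genuine and in fact also affects the paper's own proof: $1-x\ge\exp(-x-x^2)$ fails for $x$ near $1$ (at $x=1$ the left side is $0$), so the stated lower bound really does require $Z/(N-\rho)$ to be bounded away from $1$ --- which, as you observe, holds in every place the lemma is invoked.
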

\begin{proof}
We may choose $\MT$ by selecting its elements without replacement from $[N]$. After $i$ elements of $\MT$ (not intersecting $\MZ$) have been chosen, the $i+1$th element of $\MT$ is distributed uniformly over a set of size $N-i$, of which $Z$ elements belong to $\MZ$. This establishes the equality. To see the following containment, we use the fact that $\exp(-x) \geq 1-x \geq \exp(-x-x^2)$ for all $x \in [0,1]$.  
\end{proof}

\section{Watermarks from PRCs over larger alphabets}
\label{sec:prc-to-wm}

\neurips{\subsection{Overview of the algorithm and guarantee}}
\neurips{}

\neurips{\subsection{Formal algorithm and guarantee}}
\neurips{\label{sec:prc-to-wm-formal}}
\begin{algorithm}
  \caption{Watermarking from PRCs for general alphabets: $\MW[\PRC, \Model]$}
    \label{alg:wm-prc}
  \begin{algorithmic}[1]\onehalfspacing
    \Require Pseudorandom code $\PRC$ with security parameter $\lambda$ over alphabet $\Sigprc = \Sigprc(\lambda)$ and block length $n(\lambda)$, $\Model$ over alphabet $\Sigma(\lambda)$, maximum length of model text $\Lmax(\lambda)$. 
    \Function{$\Setup$}{$1^\lambda$}
    \State $\sk \gets \PRC.\Key(1^\lambda)$.
    \State Let $\phi : \Sigma(\lambda) \to \Sigprc(\lambda)$  be chosen uniformly randomly.
    \State \Return $(\sk, \phi)$. 
    \EndFunction
    \Function{$\Wat$}{$1^\lambda, (\sk, \phi)$}
    \State $x \gets \PRC.\Enc(1^\lambda, \sk)$.\Comment{$x \in \Sigprc^n$}
    \State $i \gets 1, j \gets 1$. %
    \While{$\term \not \in \{\tok_1, \ldots, \tok_{i-1} \}$}
    \State \label{line:pi-model} $p_i \gets \Model(\tok_i = \cdot \mid  \tok_1, \ldots, \tok_{i-1})$.\Comment{$p_i \in \Delta(\Sigma)$}
    \State \label{line:embed-token} $\tok_i \gets \EmbedToken(x_j, p_i, \phi)$. 
    \State $i \gets i+ 1, j \gets j+1$.
    \If{$j > n(\lambda)$}
    \State $j \gets 1$, $x \gets \PRC.\Enc(1^\lambda, \sk)$.\label{line:wm-encode}
    \EndIf
    \EndWhile
    \EndFunction

    \Function{$\EmbedToken$}{$x_j, p_i, \phi$}    \Comment{\emph{$x_j \in \Sigprc$, $p_i \in \Delta(\Sigma)$, $\phi : \Sigma \to \Sigprc$}}
    \State $\bar p_i \gets \phi \circ p_i$.\label{line:bar-p} \Comment{$\bar p_i \in \Delta(\Sigprc)$}
    \If{$\Ber(\min\{1, |\Sigprc(\lambda)| \cdot \bar p_i(x_j)\}) = 1$}\label{line:sample-ber}
    \State Set $y_i \gets x_j$.\Comment{$y_i \in \Sigprc$} 
    \Else
    \State Sample $y_i \sim q_i$, where $q_i(\sigma) \propto \left[\bar p_i(\sigma) - \frac{1}{|\Sigprc|}\right]_+$.\label{line:sample-fresh}
    \State 
    \EndIf
    \State Sample $\tok_i$ from the distribution of $\sigma \sim p_i \mid \phi(\sigma) = y_i$.\label{line:cond-phi}
    \State \Return $\tok_i$. 
    \EndFunction
    
    \Function{$\Detect$}{$1^\lambda, (\sk, \phi), (\tok_1, \ldots, \tok_\ell)$}
    \For{$ i \in [\ell], j \in [i, \min\{i+n-1, \ell \}]$}
    \If{$\PRC.\Dec(\sk , (\phi(\tok_i),\ldots, \phi(\tok_j))) \neq \perp$}
    \State \Return \True.
    \EndIf
    \EndFor
    \State \Return\False.
    \EndFunction
  \end{algorithmic}
\end{algorithm}

\cref{alg:wm-prc} displays our watermarking procedure $\MW[\PRC, \Model]$, given a pseudorandom code $\PRC$ over alphabet $\Sigprc(\lambda)$ and a family of language models $(\Model(\lambda))_{\lambda \in \BN}$ over alphabet $\Sigma(\lambda)$. Whenever the security parameter $\lambda$ is clear from context, we drop the argument $\lambda$, i.e., write $\Sigprc, \Sigma, \Model$. To aid in the analysis, given a language model $\Model$ over alphabet $\Sigma$, an alphabet $\Sigprc$ for the PRC, and a mapping $\phi : \Sigma \to \Sigprc$, we define an \emph{embedding channel} $x \mapsto \Eemb^\phi(x; \tok_{1:i-1})$ for each choice of $i \in \BN$ and $\tok_{1:i-1} \in \Sigma^{i-1}$, which maps $x \in \Sigprc^n$ to some (random) string $\Eemb^\phi(x; \tok_{1:i-1}) \in \Sigma^n$ (as the input and output alphabets are different, our use of the term ``channel'' is a slight abuse of terminology). Given $x \in \Sigprc^n$ and $\tok_{1:i-1} \in \Sigma^{i-1}$, $\Eemb^\phi(x; \tok_{1:i-1})$ performs the following for $1 \leq j \leq n$: for $p_j = \Model(\tok_{j+i-1} = \cdot \mid \tok_{1:j+i-2})$, it generates $\tok_{j+i-1} \gets \EmbedToken(x_j, p_j, \phi)$. (If some token $\tok_{j+i-1}$ is the terminal token $\term$, then all remaining tokens are also the terminal token $\term$.) Note that this is exactly the procedure in \cref{line:pi-model,line:embed-token} for steps $i$ through $i+n-1$ of $\Wat(1^\lambda, (\sk, \phi), \phi)$ of \cref{alg:wm-prc}. 

\begin{lemma}
  \label{lem:phi-unif-dist}
  Suppose $\Sigma, \Sigma'$ are finite alphabets, and $P \in \Delta(\Sigma)$ is fixed. Let $\phi : \Sigma \to \Sigma'$ be a uniformly random function. Then
  \begin{align}
\Pr_\phi \left(\sum_{\sigma' \in \Sigma'} \min \left\{ \frac{1}{|\Sigma'|}, \phi \circ P(\sigma') \right\}\geq \frac{H(P)}{4 \ln |\Sigma|} \right) \geq 1 - 2^{|\Sigma'| -  \frac{H(P) \cdot \exp(H(P)/2)}{4\ln |\Sigma|}}\nonumber.
  \end{align}
\end{lemma}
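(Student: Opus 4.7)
The plan is to restrict attention to a ``typical'' subset $\Sigma^\circ \subseteq \Sigma$ consisting of the relatively low-mass atoms of $P$, and to show via concentration that a random $\phi$ spreads the mass of $P$ restricted to $\Sigma^\circ$ well across $\Sigma'$. First, I would define $\Sigma^\circ := \{\sigma \in \Sigma : P(\sigma) \leq e^{-H(P)/2}\}$ and verify that $P(\Sigma^\circ) \geq H(P)/(2\ln|\Sigma|) = 2C$, where $C := H(P)/(4\ln|\Sigma|)$. This follows by splitting $H(P) = \sum_\sigma P(\sigma)\ln(1/P(\sigma))$ across $\Sigma^\circ$ and its complement: the complement contributes at most $(1 - P(\Sigma^\circ)) \cdot H(P)/2$ since $\ln(1/P(\sigma)) \leq H(P)/2$ there, while the contribution from $\Sigma^\circ$ is at most $P(\Sigma^\circ) \ln|\Sigma| + O(1)$ by a standard restricted-entropy estimate; rearranging gives the claim.

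Next, write $q := |\Sigma'|$ and for each $\sigma' \in \Sigma'$ define the ``typical'' contribution $Y_{\sigma'} := \sum_{\sigma \in \Sigma^\circ} P(\sigma)\One{\phi(\sigma) = \sigma'}$. Clearly $\phi \circ P(\sigma') \geq Y_{\sigma'}$, so $\min\{1/q, \phi \circ P(\sigma')\} \geq \min\{1/q, Y_{\sigma'}\}$, and it suffices to lower-bound the latter sum. Using the elementary identity $\sum_{\sigma'}\min\{\tau, a_{\sigma'}\} = \min_{S \subseteq \Sigma'}[\tau|S| + \sum_{\sigma' \notin S} a_{\sigma'}]$ (attained at $S^\star = \{\sigma' : a_{\sigma'} \geq \tau\}$), the bound $\sum_{\sigma'}\min\{1/q, Y_{\sigma'}\} \geq C$ is equivalent to the assertion that, for every subset $S \subseteq \Sigma'$, the random variable $Z_S := \sum_{\sigma' \in S} Y_{\sigma'} = \sum_{\sigma \in \Sigma^\circ} P(\sigma)\One{\phi(\sigma) \in S}$ satisfies $Z_S \leq P(\Sigma^\circ) + |S|/q - C$. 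For fixed $S$, $Z_S$ is a sum of $|\Sigma^\circ|$ independent random variables, each bounded by $\max_{\sigma \in \Sigma^\circ} P(\sigma) \leq e^{-H(P)/2}$, with $\sum_{\sigma \in \Sigma^\circ} P(\sigma)^2 \leq e^{-H(P)/2}$ and $\E[Z_S] = P(\Sigma^\circ)|S|/q$. The slack in the target inequality is $P(\Sigma^\circ) - C + (|S|/q)(1 - P(\Sigma^\circ)) \geq P(\Sigma^\circ) - C \geq C$. A Bernstein-type inequality therefore yields $\Pr[Z_S > \E[Z_S] + \text{slack}] \leq 2^{-\Omega(C \cdot e^{H(P)/2})}$, and a union bound over the $2^{|\Sigma'|}$ subsets of $\Sigma'$ gives the claimed probability bound.

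The main obstacle will be tracking the constants precisely enough to obtain exactly $H(P)\exp(H(P)/2)/(4\ln|\Sigma|)$ in the exponent of the tail bound. Since in the worst case the slack can be as small as $C$ itself (when $P(\Sigma^\circ)$ attains its lower bound $2C$), a vanilla Hoeffding estimate $\exp(-2C^2 e^{H(P)/2})$ is not sharp enough --- it would demand $C \geq \Omega(1)$. Instead one must use a Bernstein inequality that simultaneously exploits the small variance proxy $\sum P(\sigma)^2 \leq e^{-H(P)/2}$ and the small maximum summand $\max_{\Sigma^\circ} P(\sigma) \leq e^{-H(P)/2}$ to convert the quadratic $C^2$ in the Hoeffding exponent into a linear $C$, yielding the desired $\exp(-\Omega(C \cdot e^{H(P)/2}))$ per subset. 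A slight retuning of the typical-set threshold (e.g., using $c \cdot e^{-H(P)/2}$ for a judicious constant $c$) can be used to make the final prefactor line up exactly with $1/4$.
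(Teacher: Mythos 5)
Your proposal is correct and follows essentially the same route as the paper: restrict to the typical set of atoms with $P(\sigma)\leq e^{-H(P)/2}$ (whose mass is at least $H(P)/(2\ln|\Sigma|)$), reduce the bound on $\sum_{\sigma'}\min\{1/|\Sigma'|,\cdot\}$ to a per-subset deviation bound for $Z_S$, and union bound over the $2^{|\Sigma'|}$ subsets of $\Sigma'$. The only place you diverge is the claimed need for Bernstein: this is unnecessary, because the slack is really $P(\Sigma^\circ)-C\geq P(\Sigma^\circ)/2$ (not just $C$) and the Hoeffding variance proxy is $\sum_{\sigma\in\Sigma^\circ}P(\sigma)^2\leq e^{-H(P)/2}P(\Sigma^\circ)$ (not just $e^{-H(P)/2}$), so plain Hoeffding already gives exponent $(P(\Sigma^\circ)/2)^2/\bigl(e^{-H(P)/2}P(\Sigma^\circ)\bigr)=\Omega(C\cdot e^{H(P)/2})$ per subset; the paper realizes exactly this by renormalizing $P|_{\Sigma^\circ}$ to a probability measure $Q$ and taking a constant deviation $\varepsilon=1/2$ of $\phi\circ Q(S)$ from $|S|/|\Sigma'|$.
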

\begin{proof}
  Define $\eta := \frac{1}{\exp(H(P)/2)}$. Set $\MT := \{ \sigma \in \Sigma \ : \ P(\sigma) \leq \eta \}$. Note that $H(P) \leq P(\MT) \cdot \ln|\Sigma| + P(\Sigma \backslash \MT) \cdot H(P)/2$, meaning that $P(\MT) \geq \frac{H(P)}{2 \ln |\Sigma|}$. Writing $M := |\MT|$, we have $M \geq \frac{H(P)}{2 \ln |\Sigma| \cdot \eta}$. 

  Next, define $Q \in \Delta(\Sigma)$ by $Q(\sigma) := \frac{\One{\sigma \in \MT} \cdot P(\sigma)}{P(\MT)}$, and let $\MU$ denote the uniform distribution on $\Sigma'$. For any subset $\MS \subset \Sigma'$, by Hoeffding's inequality we have that
  \begin{align}
\Pr_{\phi} \left( \left|\phi \circ Q(\MS) - \MU(\MS)\right| \geq \ep \right) = \Pr_{\phi} \left( \left| \MU(\MS) - \sum_{\sigma \in \Sigma} Q(\sigma) \cdot \One{ \phi(\sigma) \in \MS }\right| \geq \ep \right) \leq 2 \exp \left( - \frac{2\ep^2}{\sum_{\sigma \in \Sigma} Q(\sigma)^2} \right)\nonumber,
  \end{align}
  where the randomness is over the draw of a uniformly random function $\phi : \Sigma \to \Sigprc$. Using that $\sum_{\sigma \in \Sigma} Q(\sigma)^2 \leq \max_{\sigma \in \Sigma} Q(\sigma) \leq \frac{\eta}{P(\MT)} \leq \frac{\eta \cdot 2\ln |\Sigma|}{H(P)}$ together with a union bound over all subsets $\MS \subset \Sigma'$,\footnote{Technically, we only need to do a union bound over half of all subsets, since $\phi \circ Q$ and $\MU$ are both probability measures; hence the multiplicative factor in front is $2^{|\Sigma'|-1}$ as opposed to $2^{|\Sigma'|}$.} we see that
  \begin{align}
\Pr_\phi \left( \tvd{\MU}{\phi \circ Q} \geq \ep \right) \leq 2^{|\Sigma'| } \cdot \exp \left( \frac{-2\ep^2 \cdot H(P)}{\eta \cdot 2 \ln |\Sigma|}\right) \leq 2^{|\Sigma'| - \frac{-\ep^2 \cdot H(P) \cdot \exp(H(P)/2)}{\ln |\Sigma|}} \nonumber.
  \end{align}
  Under the event that $\tvd{\MU}{\phi \circ Q} \leq \ep$, we have that $\sum_{\sigma' \in \Sigma'} \min \{ 1/|\Sigma'|, \phi \circ Q(\sigma') \} \geq 1 - \ep$, and thus, since $\phi \circ P \geq P(\MT) \cdot (\phi \circ Q)$ pointwise, $\sum_{\sigma' \in \Sigma'} \min \{ 1/|\Sigma'|, \phi \circ P(\sigma') \} \geq P(\MT) \cdot (1-\ep)$. The conclusion of the lemma follows by setting $\ep = 1/2$. %
\end{proof}

\begin{lemma}
  \label{lem:model-phi-equiv}
Fix any $\phi : \Sigma \to \Sigprc$, and $i,n \in \BN$. The distribution of $\Eemb^\phi(x;  \tok_{1:i-1})$, for $x \sim \Unif(\Sigprc^n)$, is exactly the distribution of $\Modelo(\tok_{i:i+n-1} = \cdot \mid  \tok_{1:i-1})$. 
\end{lemma}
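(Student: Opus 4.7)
The plan is to reduce the lemma to a single-step claim and then apply it inductively. Concretely, I would prove the following local assertion: for any fixed distribution $p \in \Delta(\Sigma)$, any mapping $\phi : \Sigma \to \Sigprc$, and $x \sim \Unif(\Sigprc)$, the output $\tok$ of $\EmbedToken(x, p, \phi)$ is distributed exactly as $p$.

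To establish the local claim, I would first compute the marginal distribution of the intermediate variable $y \in \Sigprc$ (the value set on \cref{line:sample-ber} or \cref{line:sample-fresh}). Writing $\bar p := \phi \circ p$, $A(\sigma') := \min\{1, |\Sigprc| \bar p(\sigma')\}$, and $Z := \sum_{\sigma''} [\bar p(\sigma'') - 1/|\Sigprc|]_+$, a direct calculation gives
\begin{align}
\Pr(y = \sigma') = \frac{A(\sigma')}{|\Sigprc|} + q(\sigma') \cdot \Bigl( 1 - \frac{1}{|\Sigprc|} \sum_{\sigma''} A(\sigma'') \Bigr).\nonumber
\end{align}
The key algebraic identity is that $1 - \frac{1}{|\Sigprc|}\sum_{\sigma''} A(\sigma'') = Z$; this follows by splitting the sum at the threshold $\bar p(\sigma'') = 1/|\Sigprc|$ (below the threshold $A(\sigma'') = |\Sigprc|\bar p(\sigma'')$ and $[\bar p(\sigma'')-1/|\Sigprc|]_+ = 0$; above it $A(\sigma'') = 1$ and $[\bar p(\sigma'')-1/|\Sigprc|]_+ = \bar p(\sigma'') - 1/|\Sigprc|$). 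Plugging this in and examining the two cases ($\bar p(\sigma') < 1/|\Sigprc|$ where $q(\sigma') = 0$, and $\bar p(\sigma') \geq 1/|\Sigprc|$ where $q(\sigma')Z = \bar p(\sigma') - 1/|\Sigprc|$) shows that $\Pr(y = \sigma') = \bar p(\sigma')$ in both cases. Hence $y$ has law $\phi \circ p$, and marginalizing the subsequent draw $\tok \sim p(\cdot \mid \phi(\tok) = y)$ over $y$ immediately recovers $p$, proving the local claim.

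For the full lemma, I would proceed by induction on $j = 1, \ldots, n$. Let $\tok_i', \tok_{i+1}', \ldots$ denote a sample from $\Modelo(\tok_{i:i+n-1} = \cdot \mid \tok_{1:i-1})$, generated autoregressively so that $\tok_{j+i-1}' \mid \tok_{1:j+i-2}' \sim \Model(\cdot \mid \tok_{1:j+i-2}')$. The embedding channel generates $\tok_{j+i-1}$ by calling $\EmbedToken(x_j, p_j, \phi)$ with $p_j = \Model(\cdot \mid \tok_{1:j+i-2})$ and $x_j \sim \Unif(\Sigprc)$ independent of everything else (because the coordinates of $x$ are i.i.d.\ uniform). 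By the local claim applied conditionally on $\tok_{1:j+i-2}$, we have $\tok_{j+i-1} \mid \tok_{1:j+i-2} \sim \Model(\cdot \mid \tok_{1:j+i-2})$, which matches the conditional law of $\tok_{j+i-1}'$. The inductive step then extends the equality in distribution of the prefixes from length $j-1$ to length $j$.

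I do not expect a serious obstacle: the content is essentially verifying that $\EmbedToken$ is a correctly calibrated rejection/residual sampling scheme. The only delicate point is the algebraic identity $1 - \frac{1}{|\Sigprc|}\sum_{\sigma''} A(\sigma'') = Z$, but this is elementary once the sum is split at the threshold $1/|\Sigprc|$. (Edge cases where some $\term$ token appears can be handled by noting that once $\tok_{j+i-1} = \term$ the convention makes both $\Modelo$ and $\Eemb^\phi$ output $\term$ thereafter with probability one, so the induction is unaffected.)
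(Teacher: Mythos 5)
Your proposal is correct and follows essentially the same route as the paper's proof: induction over token positions, reducing each step to the claim that the intermediate variable $y$ produced by $\EmbedToken$ has law $\phi \circ p$, after which \cref{line:cond-phi} recovers $p$ exactly. You spell out the calibration identity $1 - \frac{1}{|\Sigprc|}\sum_{\sigma''}\min\{1,|\Sigprc|\bar p(\sigma'')\} = \sum_{\sigma''}[\bar p(\sigma'')-1/|\Sigprc|]_+$ in more detail than the paper does, which is a welcome clarification rather than a deviation.
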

\begin{proof}
  We use induction on $j \in [i, i+n-1]$. Fix any $j \in [i,i+n-1]$ together with a sequence $\tok_{1:j-1} \in \Sigma^{j-1}$. Let $p_j \in \Delta(\Sigma)$ denote the distribution of $\tok_j \sim \Eemb^\phi(x ;  \tok_{1:i-1})_{j-i+1} \mid \tok_{i:j-1}$, i.e., the distribution of the $j-i+1$th token of the output $\Eemb^\phi(x;  \tok_{1:i-1})$, conditioned on $\tok_{i:j-1}$. Let $p_j^\st$ denote $\Model(\tok_j = \cdot \mid  \tok_{1:j-1})$; we wish to show that $p_j = p_j^\st$.

  By \cref{line:cond-phi} of \cref{alg:wm-prc}, it suffices to show that $\phi \circ p_j = \phi \circ p_j^\st$. To do so, write $\bar p_j := \phi \circ p_j^\st$ (i.e., the quantity computed in \cref{line:bar-p} of \cref{alg:wm-prc}), and write $\rho_j := \tvd{\bar p_j}{\Unif(\Sigprc)} = \sum_{\sigma \in \Sigprc} [\bar p_j(\sigma) - 1/|\Sigprc|]_+$, so that $\rho_j = 1-\sum_{\sigma \in \Sigprc} \min \{ \bar p_j(\sigma), 1/|\Sigprc|\}$. By definition, for each $\sigma \in \Sigprc$, we have
  \begin{align}
\phi \circ p_j(\sigma) = \frac{1}{|\Sigprc|} \cdot \min \{ 1, |\Sigprc| \cdot \bar p_j(\sigma) \} + \rho_j \cdot \frac{[\bar p_j(\sigma) - 1/|\Sigprc|]_+}{\rho_j} \ \bar p_j(\sigma)\nonumber,
  \end{align}
  as desired.
\end{proof}
Given integers $a,b \in \BN$ with $a < b$, a mapping $\phi : \Sigma \to \Sigma'$, and a sequence $\tok = \tok_{1:b} \in \Sigma^b$, we define the \emph{spread} of the sequence with respect to $\phi$ to be
\begin{align}
\Spread{\phi, [a:b)}(\tok) := \sum_{i=a}^{b-1} \sum_{\sigma \in \Sigma'} \min \left\{ \frac{1}{|\Sigma'|}, \phi \circ P_i(\sigma)\right\}\label{eq:spread-def},
\end{align}
where $P_i(\sigma) := \Model(\tok_i = \sigma \mid \tok_{1:i-1})$.

Additionally, we define the \emph{mean entropy} for the sequence $\tok \in \Sigma^b$ in the interval $[a:b)$ to be
\begin{align}
\Hmean{[a:b)} (\tok) := \sum_{i=a}^{b-1} H(\Model(\tok_i = \cdot \mid \tok_{1:i-1})) = \sum_{i=a}^{b-1} \E[\Hemp{i}(\tok) \mid \tok_{1:i-1}]\nonumber.
\end{align}
\begin{lemma}
  \label{lem:mean-emp-entropy}
  For any integers $a < b$, we have that
  \begin{align}
\Pr_{\tok \gets \Modelo} \left(\frac{3}{2} \Hmean{[a:b)}(\tok) + 8\ln |\Sigma| \cdot \ln^4(b-a) \geq  \Hemp{[a:b)}(\tok) \right) \geq 1-\negl(b-a)\nonumber.
  \end{align}
\end{lemma}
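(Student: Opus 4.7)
The plan is to realize $\sum_i (X_i - H(p_i))$ as a martingale increment and apply a Bennett-type tail bound, together with a preliminary truncation to handle the fact that $X_i := -\ln p_i(\tok_i)$ is not uniformly bounded; here $p_i(\cdot) := \Model(\tok_i = \cdot \mid \tok_{1:i-1})$, so $X_i = \Hemp{i}(\tok)$ and $\E[X_i \mid \tok_{1:i-1}] = H(p_i)$. Hence $\Hemp{[a:b)}(\tok) = \sum_{i=a}^{b-1} X_i$ and $\Hmean{[a:b)}(\tok) = \sum_{i=a}^{b-1} H(p_i)$. Write $n := b-a$ and assume $|\Sigma| \geq 2$ and $n$ larger than some absolute constant, since otherwise the claim is vacuous.

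First I would truncate. Set $c := \ln|\Sigma|\cdot \ln^2 n$ and $X_i' := \min(X_i, c)$. Since $\Pr[X_i > c \mid \tok_{1:i-1}] = \sum_{\sigma :\, p_i(\sigma) < e^{-c}} p_i(\sigma) \leq |\Sigma|\cdot e^{-c}$, a union bound over $i\in[a,b)$ yields $\Pr[\exists i : X_i \neq X_i'] \leq n|\Sigma|e^{-c} = n\cdot |\Sigma|^{1-\ln^2 n} \leq \negl(n)$.

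Next I would apply a Bennett-style martingale bound to $\{X_i'\}$. The chord inequality $e^{\lambda x} \leq 1 + \tfrac{e^{\lambda c}-1}{c}\,x$ on $x \in [0,c]$, together with $\E[X_i' \mid \tok_{1:i-1}] \leq \E[X_i \mid \tok_{1:i-1}] = H(p_i)$, yields
\begin{align*}
\E\bigl[\exp(\lambda X_i') \mid \tok_{1:i-1}\bigr] \;\leq\; \exp\bigl(g(\lambda)\, H(p_i)\bigr), \qquad g(\lambda) := \tfrac{e^{\lambda c}-1}{c}.
\end{align*}
Consequently $Z_i := \exp\!\bigl(\lambda \sum_{j=a}^{i} X_j' - g(\lambda) \sum_{j=a}^{i} H(p_j)\bigr)$ is a supermartingale with $Z_{a-1} = 1$, so $\E[Z_{b-1}] \leq 1$ and Markov's inequality gives, for any $\delta > 0$,
\begin{align*}
\Pr\!\left[\sum_{i=a}^{b-1} X_i' \;>\; \tfrac{g(\lambda)}{\lambda}\,\Hmean{[a:b)}(\tok) \;+\; \tfrac{\ln(1/\delta)}{\lambda}\right] \;\leq\; \delta.
\end{align*}

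Finally I would tune parameters: let $u_0$ be the positive solution of $(e^{u_0}-1)/u_0 = 3/2$ (numerically $u_0 \approx 1.26$), take $\lambda := u_0/c$ so that $g(\lambda)/\lambda = 3/2$, and set $\delta := 2\exp(-\ln^2 n) \leq \negl(n)$. The additive slack becomes $c\ln(1/\delta)/u_0 \leq 8\ln|\Sigma|\cdot \ln^4 n$, matching the target. Intersecting with the truncation event completes the proof, since on the resulting high-probability event $X_i = X_i'$ for every $i$. The main obstacle is exactly this unboundedness of the $X_i$: a naive Azuma bound at scale $c$ would introduce additive error of order $\sqrt{n c^2 \ln(1/\delta)}$, which would dwarf $\Hmean$; the Bennett form above is what lets us retain the constant $3/2$ in front of $\Hmean$ while paying only polylogarithmic additive correction.
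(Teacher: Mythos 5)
Your proof is correct and follows essentially the same route as the paper: truncate the per-token empirical entropies at a polylogarithmic threshold (the paper uses $\ln|\Sigma|+\ln^2(b-a)$ rather than your $\ln|\Sigma|\cdot\ln^2(b-a)$, which changes nothing), then apply a multiplicative Freedman/Bennett-type martingale bound to keep the factor $\tfrac32$ in front of $\Hmean{[a:b)}$ with only polylogarithmic additive slack. The sole difference is that the paper invokes Freedman's inequality as a black box (\cref{thm:freedman}) where you re-derive it via the chord inequality and an exponential supermartingale.
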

\begin{proof}
  Let us write $\Hempt{[a:b)}(\tok) = \sum_{i=a}^{b-1} \min \{ \Hemp{i}(\tok),  \ln |\Sigma| + \ln^2(b-a) \}$. For each $i$, we have that
  \begin{align}
\Pr(\Hemp{i}(\tok) - \Hempt{i}(\tok) > 0) \leq |\Sigma|\cdot \exp(-\ln |\Sigma| - \ln^2(b-a)) = \exp(-\ln^2(b-a)) \label{eq:hemp-hempt}.
  \end{align}
  By a union bound, it follows that $\Pr(\Hemp{[a:b)}(\tok) > \Hempt{[a:b)}(\tok)) \leq (b-a) \cdot \exp(-\ln^2(b-a))$. Next, \cref{thm:freedman} gives that, for any $\delta > 0$,
  \begin{align}
\Pr_{\tok \gets \Modelo} \left( \Hempt{[a:b)}(\tok) - \frac{3}{2} \sum_{i=a}^{b-1} \E\left[ \Hempt{i}(\tok) \mid \tok_{1:i-1} \right] > 4(\ln|\Sigma| + \ln^2(b-a)) \cdot \ln(2/\delta) \right) \leq \delta\label{eq:hempt-azuma}.
  \end{align}
  Finally, since $\Hempt{i}(\tok) \leq \Hemp{i}(\tok)$ with probability $1$ for each $i \in [a,b)$, we have
  \begin{align}
\E[\Hempt{i}(\tok) \mid \tok_{1:i-1}] \leq \E[\Hemp{i}(\tok) \mid \tok_{1:i-1}] = \Hmean{i}(\tok)\label{eq:hempt-hmean}.
  \end{align}
  Combining \cref{eq:hemp-hempt,eq:hempt-azuma,eq:hempt-hmean} and choosing $\delta = 2\exp(-\ln^2(b-a))$, we see that with probability $1-2(b-a+1)\cdot \exp(-\ln^2(b-a)) \geq 1-\negl(b-a)$, we have
  \begin{align}
    \Hemp{[a:b)}(\tok) \leq  \Hempt{[a:b)}(\tok) \leq &  \frac{3}{2} \sum_{i=a}^{b-1} \E[\Hemp{i}(\tok) \mid \tok_{1:i-1}] + 4(\ln |\Sigma| + \ln^2(b-a)) \cdot \ln^2(2/\delta) \nonumber\\
    \leq &\frac{3}{2} \Hmean{[a:b)}(\tok) +  8\ln |\Sigma| \cdot \ln^4(b-a)\nonumber.
  \end{align}
\end{proof}
\begin{lemma}
  \label{lem:phi-he-spread}
Let alphabets $\Sigma, \Sigma'$ be given, and let $\phi: \Sigma \to \Sigma'$ be a uniformly random function.  For any integers $a < b$ and any $\alpha \in [0,1]$ satisfying $|\Sigma| \geq \left( \frac{8}{\alpha} |\Sigma'| \right)^{2/\alpha}$, we have that
  \begin{align}
    \Pr_{ \tok \gets \Modelo,\phi} \left( \substack{ \Hemp{[a:b)}(\tok) \leq 3\alpha \cdot (b-a) \ln |\Sigma| + 8 \ln|\Sigma| \cdot \ln^4(b-a) \\ \mbox{ or } \Spread{\phi, [a:b)}(\tok) \geq \frac{\alpha \cdot (b-a)}{4}}\right) \geq 1-\negl(b-a) - (b-a) \cdot 2^{-\frac 18 \alpha |\Sigma|^\alpha}\nonumber.
  \end{align}
\end{lemma}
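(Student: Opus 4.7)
The plan is to negate the first disjunct --- so $\Hemp{[a:b)}(\tok) > 3\alpha(b-a)\ln|\Sigma| + 8\ln|\Sigma|\ln^4(b-a)$ --- and show that the second disjunct then holds with the claimed failure probability, by combining \cref{lem:phi-unif-dist} (a pointwise spread bound for a single distribution) with \cref{lem:mean-emp-entropy} (which relates empirical to mean entropy). Applying \cref{lem:mean-emp-entropy} to $[a:b)$ yields that on a $1-\negl(b-a)$ event over $\tok\gets\Modelo$, $\Hemp{[a:b)}(\tok)\leq \tfrac{3}{2}\Hmean{[a:b)}(\tok)+8\ln|\Sigma|\ln^4(b-a)$; combined with the negated first disjunct this gives $\Hmean{[a:b)}(\tok) > 2\alpha(b-a)\ln|\Sigma|$.

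Next, freeze such a $\tok$, so that the conditional distributions $P_i:=\Model(\tok_i=\cdot\mid \tok_{1:i-1})$ are determined by $\tok$ (and hence independent of the random function $\phi$). Partition the positions by entropy via a threshold set $T:=\{i\in[a,b):H(P_i)\geq c\alpha\ln|\Sigma|\}$ for a suitable absolute constant $c$; indices outside $T$ contribute at most $c\alpha(b-a)\ln|\Sigma|$ to $\Hmean{[a:b)}$, so $\sum_{i\in T}H(P_i)\geq (2-c)\alpha(b-a)\ln|\Sigma|$, which is a positive constant multiple of $\alpha(b-a)\ln|\Sigma|$ whenever $c<2$. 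Now apply \cref{lem:phi-unif-dist} to each $P_i$ with $i\in T$, using the independent randomness of $\phi$: for such $i$, $\exp(H(P_i)/2)\geq |\Sigma|^{c\alpha/2}$, so the per-index failure probability is at most $2^{|\Sigma'|-\Omega(\alpha|\Sigma|^{c\alpha/2})}$. The size hypothesis $|\Sigma|\geq (8|\Sigma'|/\alpha)^{2/\alpha}$ gives $|\Sigma'|\leq \alpha|\Sigma|^\alpha/8$, absorbing the $|\Sigma'|$ term and (for a suitable choice of $c$) leaving a per-index failure bound of at most $2^{-\alpha|\Sigma|^\alpha/8}$.

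A union bound over the at most $b-a$ indices in $T$ then shows that, except on a $(b-a)\cdot 2^{-\alpha|\Sigma|^\alpha/8}$ event over $\phi$, every $i\in T$ satisfies $E_i:=\sum_{\sigma'\in\Sigma'}\min\{1/|\Sigma'|,\phi\circ P_i(\sigma')\}\geq H(P_i)/(4\ln|\Sigma|)$; summing then gives $\Spread{\phi,[a:b)}(\tok)\geq \sum_{i\in T}E_i \geq \sum_{i\in T}H(P_i)/(4\ln|\Sigma|)\geq \alpha(b-a)/4$, which is the second disjunct. Combining the $\negl(b-a)$ failure event from \cref{lem:mean-emp-entropy} with the $(b-a)\cdot 2^{-\alpha|\Sigma|^\alpha/8}$ event from the $\phi$-union bound yields the stated probability. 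The main subtlety is that the set $T$ depends on $\tok$, so one must first condition on $\tok$ to fix all the $P_i$ and only then invoke \cref{lem:phi-unif-dist} using that $\phi$ is independent of $\tok$; the most delicate quantitative step is calibrating the threshold $c$ together with the constants in the size condition $|\Sigma|\geq(8|\Sigma'|/\alpha)^{2/\alpha}$ so that \cref{lem:phi-unif-dist}'s exponent genuinely delivers $|\Sigma|^\alpha$ (rather than only $|\Sigma|^{\alpha/2}$) in the final failure bound.
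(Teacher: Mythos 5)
Your proposal is correct and follows essentially the same route as the paper: apply \cref{lem:mean-emp-entropy} to pass from empirical to mean entropy, restrict to the high-entropy indices (the paper uses threshold $H(P_i)\ge\alpha\ln|\Sigma|$, i.e.\ your $c=1$), invoke \cref{lem:phi-unif-dist} per index after conditioning on $\tok$ so the $P_i$ are fixed and $\phi$ is independent, and union bound. The "delicate quantitative step" you flag is real: with threshold $c=1$ the hypothesis $|\Sigma|\ge(8|\Sigma'|/\alpha)^{2/\alpha}$ gives $|\Sigma'|\le\frac{\alpha}{8}|\Sigma|^{\alpha/2}$ and hence a per-index failure of $2^{-\frac18\alpha|\Sigma|^{\alpha/2}}$, not $2^{-\frac18\alpha|\Sigma|^{\alpha}}$; the paper's own intermediate bound \cref{eq:hient-tokens} indeed has $|\Sigma|^{\alpha/2}$, and the $|\Sigma|^\alpha$ in the displayed conclusion and lemma statement is a typo (harmless downstream, since the applications only need the bound to be negligible, which $|\Sigma|^{\alpha/2}\ge\sqrt{n}$ already supplies).
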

\begin{proof}
  Let us fix any sequence $\tok \in \Sigma^L$. For $i \in [L]$, define $P_i \in \Delta(\Sigma)$ by $P_i(\sigma) := \Model(\tok_i = \sigma \mid \tok_{1:i-1})$. By \cref{lem:phi-unif-dist}, for each $i \in [a:b)$, we have, over a random (uniform) draw of $\phi$, that
  \begin{align}
\Pr_\phi \left( \sum_{\sigma' \in \Sigma'} \min \left\{ \frac{1}{|\Sigma'|}, \phi \circ P_i(\sigma') \right\} \geq \frac{H(P_i)}{4\ln |\Sigma|} \right) \geq 1 - 2^{|\Sigma'| - \frac{H(P_i) \cdot \exp(H(P_i)/2)}{4\ln |\Sigma|}}\nonumber,
  \end{align}
  where $c > 0$ is a universal constant. 
  Thus, for each index $i \in [a,b-1]$ for which $H(P_i) = \Hmean{i}(\tok) \geq \alpha \ln |\Sigma|$, we have
  \begin{align}
\Pr_\phi \left( \sum_{\sigma' \in \Sigma'} \min \left\{ \frac{1}{|\Sigma'|}, \phi \circ P_i(\sigma') \right\} \geq \frac{H(P_i)}{4 \ln |\Sigma|} \right) \geq 1 - 2^{|\Sigma'| - \frac 14 \cdot \alpha |\Sigma|^{\alpha/2}} \geq 1- 2^{-\frac 18 \alpha |\Sigma|^{\alpha/2}}\label{eq:hient-tokens}
  \end{align}
  where the second inequality follows by our requirement that $|\Sigma| \geq \left( \frac{8}{\alpha} |\Sigma'| \right)^{2/\alpha}$. For any sequence $\tok$ for which  $\Hmean{[a:b)}(\tok) \geq 2 \alpha \cdot (b-a) \ln |\Sigma|$, we must have that $\sum_{i \in [a,b) :\ \Hmean{i}(\tok) \geq \alpha \ln |\Sigma|} \Hmean{i}(\tok) \geq \alpha \cdot (b-a) \ln |\Sigma|$. Thus, for any such $\tok$, \cref{eq:hient-tokens} together with a union bound gives that
  \begin{align}
\Pr_{\phi} \left( \Spread{\phi, [a:b)}(\tok) \geq \frac{\alpha \cdot (b-a)}{4} \right) \geq 1-(b-a) \cdot 2^{-\frac 18 \alpha |\Sigma|^\alpha}\nonumber.
  \end{align}
  Next, \cref{lem:mean-emp-entropy} gives that
  \begin{align}
\Pr_{\tok \gets \Modelo} \left( \Hemp{[a:b)}(\tok) \leq 3\alpha \cdot (b-a) \ln |\Sigma| + 8 \ln|\Sigma| \cdot \ln^4(b-a) \mbox{ or } \Hmean{[a:b)}(\tok) \geq 2\alpha \cdot (b-a) \ln |\Sigma| \right) \geq 1-\negl(b-a)\nonumber.
  \end{align}
  Combining the two above displays, we see that
  \begin{align}
\Pr_{\tok \gets \Modelo,\phi} \left( \substack{ \Hemp{[a:b)}(\tok) \leq 3\alpha \cdot (b-a) \ln |\Sigma| + 8 \ln|\Sigma| \cdot \ln^4(b-a) \\ \mbox{ or } \Spread{\phi, [a:b)}(\tok) \geq \frac{\alpha \cdot (b-a)}{4}}\right) \geq 1-\negl(b-a) - (b-a) \cdot 2^{-\frac 18 \alpha |\Sigma|^\alpha}\nonumber.
  \end{align}
\end{proof}

\begin{lemma}
  \label{lem:ent-or-ham}
  Fix $L \in \BN$, alphabets $\Sigma, \Sigprc$, and integers $a,b \in [L]$. Suppose $\alpha \in [0,1]$ satisfies $\alpha \geq \frac{32 \cdot \ln^4(b-a)}{b-a}$ and $|\Sigma| \geq (\frac 8\alpha |\Sigma'|)^{2/\alpha}$. Then for $x \sim \Unif(\Sigprc^L)$ and $\phi : \Sigma \to \Sigprc$ drawn uniformly, it holds that %
  \begin{align}
    \Pr_{\substack{\phi, x\\ \tok_{1:a-1} \sim \Modelo(\cdot \mid \emptyset) \\ \tok_{a:b-1} \sim \Eemb^\phi(x_{a:b-1}; \tok_{1:a-1})}} \left( \substack{ \Hemp{[a:b)}(\tok) \leq 4\alpha \cdot (b-a) \ln |\Sigma|  \\ \mbox{ or } \dham(x_{a:b-1},\phi( \tok_{a:b-1})) \leq (b-a) \cdot (1-\alpha/8) } \right)  \geq 1-\negl(b-a) - (b-a) \cdot 2^{-\frac 18 \alpha |\Sigma|^\alpha}\nonumber,
  \end{align}
  where we recall that $\phi(\tok_{a:b-1})$ denotes the string $(\phi(\tok_a), \ldots, \phi(\tok_{b-1}))$. 
\end{lemma}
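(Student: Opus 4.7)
\textbf{Proof proposal for \cref{lem:ent-or-ham}.} The plan is to reduce to \cref{lem:phi-he-spread} by identifying the ``spread'' $\Spread{\phi,[a:b)}(\tok)$ with the expected number of coordinates on which the embedded codeword agrees with the hashed token sequence. First, by \cref{lem:model-phi-equiv}, when $x \sim \Unif(\Sigprc^L)$ the joint distribution of $\tok$ described in the statement coincides with $\tok \sim \Modelo$. Hence \cref{lem:phi-he-spread} applies and yields, with the claimed failure probability, that either $\Hemp{[a:b)}(\tok) \leq 3\alpha(b-a) \ln|\Sigma| + 8\ln|\Sigma|\cdot \ln^4(b-a)$ or $\Spread{\phi,[a:b)}(\tok) \geq \alpha(b-a)/4$. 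Under the hypothesis $\alpha(b-a) \geq 32\ln^4(b-a)$, the additive term satisfies $8\ln|\Sigma|\cdot \ln^4(b-a) \leq \alpha(b-a)\ln|\Sigma|/4$, so the first alternative already implies $\Hemp{[a:b)}(\tok) \leq 4\alpha(b-a)\ln|\Sigma|$, matching the first alternative in the conclusion.

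Next, I would restrict attention to the event $\Spread{\phi,[a:b)}(\tok) \geq \alpha(b-a)/4$ and show that it forces a small Hamming distance. For $i \in [a,b)$, set $B_i := \One{\phi(\tok_i) = x_i}$ and
\[
S_i := \sum_{\sigma \in \Sigprc} \min\bigl\{ 1/|\Sigprc|,\ \phi \circ P_i(\sigma) \bigr\}, \qquad P_i := \Model(\tok_i = \cdot \mid \tok_{1:i-1}),
\]
so that $\dham(x_{a:b-1}, \phi(\tok_{a:b-1})) = (b-a) - \sum_{i=a}^{b-1} B_i$ and $\sum_{i=a}^{b-1} S_i = \Spread{\phi,[a:b)}(\tok)$. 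Unpacking the $\EmbedToken$ subroutine, conditional on the filtration generated by $(\phi, \tok_{1:i-1}, x_{a:i-1})$ the coordinate $x_i$ is uniform on $\Sigprc$ and independent of the past, and the Bernoulli sampled on \cref{line:sample-ber} of \cref{alg:wm-prc} succeeds (equivalently, $B_i = 1$) with probability $\E_{x_i}[\min\{1, |\Sigprc|\cdot \phi \circ P_i(x_i)\}] = S_i$. Thus $\{S_i - B_i\}_i$ is a $[-1,1]$-bounded martingale difference sequence with conditional variance at most $S_i$.

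Finally, I would apply Freedman's inequality (\cref{thm:freedman}) with deviation $t = \frac{1}{2}\sum_i S_i$ and variance bound $V \leq \sum_i S_i$: on the event $\sum_i S_i \geq \alpha(b-a)/4$, this yields $\sum_i B_i \geq \sum_i S_i/2 \geq \alpha(b-a)/8$ except with probability at most $\exp(-\Omega(\sum_i S_i)) = \exp(-\Omega(\alpha(b-a))) \leq \exp(-\Omega(\ln^4(b-a)))$, which is $\negl(b-a)$ by the hypothesis $\alpha(b-a) \geq 32 \ln^4(b-a)$. Consequently $\dham(x_{a:b-1}, \phi(\tok_{a:b-1})) \leq (b-a) - \alpha(b-a)/8 = (b-a)(1 - \alpha/8)$, and union-bounding with the event from \cref{lem:phi-he-spread} yields the claim. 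The only genuinely non-routine step is the identification $\E[B_i \mid \mathcal{F}_{i-1}] = S_i$, which requires carefully tracing through the two-stage sampling (first the ``key'' Bernoulli, then the conditional sample of $\tok_i$ from $p_i$ given $\phi(\tok_i)$) in $\EmbedToken$; once this is in hand, the concentration step is standard, and the polylogarithmic lower bound on $\alpha(b-a)$ is precisely what is needed to make the Freedman deviation negligible relative to $\sum_i S_i$.
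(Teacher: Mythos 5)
Your proposal is correct and follows essentially the same route as the paper: identify the embedded-token distribution with $\Modelo$ via \cref{lem:model-phi-equiv}, invoke \cref{lem:phi-he-spread} and absorb the $8\ln|\Sigma|\ln^4(b-a)$ term using the lower bound on $\alpha$, identify the spread with the conditional success probability of the rejection step, and apply Freedman's inequality to lower-bound the number of agreements. The only (harmless) imprecision is that $\E[B_i \mid \mathcal{F}_{i-1}]$ equals $S_i$ only for the latent Bernoulli $Z_i$ of \cref{line:sample-ber} (the \emph{else} branch can also produce $\phi(\tok_i)=x_i$, so for your $B_i$ one only has $\E[B_i\mid\mathcal{F}_{i-1}] \geq S_i$), but the inequality points in the needed direction, which is exactly how the paper argues with $Z_i$ in place of $B_i$.
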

\begin{proof}
  For any fixed $\phi : \Sigma\to \Sigprc$ and $\tok_{1:a-1} \in \Sigma$, \cref{lem:model-phi-equiv} gives that the distribution of $\Eemb^\phi(x_{a:b-1}; \tok_{1:a-1})$, under $x_{a:b-1} \sim \Unif(\Sigprc^{b-a})$, is exactly the distribution of $\Modelo(\tok_{a:b-1} = \cdot \mid \tok_{1:a-1})$. Thus, by \cref{lem:phi-he-spread} and the fact that $\alpha \cdot (b-a) \ln |\Sigma| \geq 8 \ln |\Sigma| \cdot \ln^4(b-a) $, we have that
  \begin{align}
    & \Pr_{\substack{\phi, x\\ \tok_{1:a-1} \sim \Modelo(\cdot \mid \emptyset) \\ \tok_{a:b-1} \sim \Eemb^\phi(x_{a:b-1}; \tok_{1:a-1})}} \left( { \Hemp{[a:b)}(\tok) \leq 4\alpha \cdot (b-a) \ln |\Sigma|  \mbox{ or } \Spread{\phi, [a:b)}(\tok)
    \geq  \frac{\alpha \cdot (b-a)}{4}}\right)\nonumber\\
   \geq  & \Pr_{\substack{\phi, x\\ \tok_{1:a-1} \sim \Modelo(\cdot \mid \emptyset) \\ \tok_{a:b-1} \sim \Eemb^\phi(x_{a:b-1}; \tok_{1:a-1})}}  \left( \substack{ \Hemp{[a:b)}(\tok) \leq 3\alpha \cdot (b-a) \ln |\Sigma| + 8 \ln|\Sigma| \cdot \ln^4(b-a) \\ \mbox{ or } \Spread{\phi, [a:b)}(\tok)
    \geq  \frac{\alpha \cdot (b-a)}{4}}\right)\nonumber\\
    \geq &  1-\negl(b-a) - (b-a) \cdot 2^{-\frac 18\alpha |\Sigma|^\alpha}\label{eq:ent-spread}.
  \end{align}
  (To be precise, the first inequality above uses the lower bound on $\alpha$ in the lemma statement and the second inequality uses \cref{lem:phi-he-spread}.)
  For each $i \in [L]$, let $p_i \in \Delta(\Sigma)$ denote the distribution $\Model(\tok_i = \cdot \mid \tok_1, \ldots, \tok_{i-1})$, which is itself a random variable (depending on $\tok_1, \ldots, \tok_{i-1}$). For $i \in [a,b-1]$, let $Z_i \sim \Ber(\min\{ 1, |\Sigprc| \cdot (\phi \circ p_{i})(x_{i-a+1}) \})$ be the random variable used in the definition of $\Eemb^\phi(x_{a:b-1};\tok_{1:a-1})$ (i.e., corresponding to \cref{line:sample-ber} of \cref{alg:wm-prc}). Note that the output $\tok_{a:b-1} = \Eemb^\phi(x_{a:b-1}; \tok_{1:a-1})$ of the embedding channel satisfies $\phi(\tok_{i-a+1}) = x_{i-a+1}$ if $Z_i = 1$ for each $i \in [a,b-1]$. Therefore,
  \begin{align}
\dham(x_{a:b-1},\phi(\tok_{a:b-1})) \leq \sum_{i=a}^{b-1} (1-Z_i)\label{eq:dham-z-ub}.
  \end{align}
  For each $i \in[a,b-1]$, note that
  \begin{align}
    \E_{x \sim \Unif(\Sigprc^L)}[Z_i \mid \phi, \tok_{1:i-1}, Z_{1:i-1}] = \sum_{\sigma \in \Sigprc} \frac{1}{|\Sigprc|} \cdot \min\{1, |\Sigprc| \cdot (\phi \circ p_i)(\sigma)\}, \nonumber
  \end{align}
  and using the definition of the spread in \cref{eq:spread-def} with $\Sigma' = \Sigprc$, we see that
  \begin{align}
\sum_{i=a}^{b-1}  \E_{x \sim \Unif(\Sigprc^L)}[Z_i \mid \phi, \tok_{1:i-1}, Z_{1:i-1}] =& \sum_{i=a}^{b-1} \sum_{\sigma \in \Sigprc} \min \left\{ \frac{1}{|\Sigprc|}, (\phi \circ p_i)(\sigma) \right\} = \Spread{\phi, [a:b)}(\tok)\nonumber.
  \end{align}
  By \cref{thm:freedman}, for any fixed $\phi$ and for any $\delta \in (0,1)$, with probability $1-\delta$ over the draw of $\tok_{1:a-1} \sim \Modelo(\cdot \mid \emptyset)$, $x \sim \Unif(\Sigprc^L)$, $\tok_{a:b-1} \sim \Eemb(x_{a:b-1}; \tok_{1:a-1})$, and $Z_i$, it holds that
  \begin{align}
\sum_{i=a}^{b-1} Z_i \geq \frac 12 \Spread{\phi, [a:b)}(\tok) - 4 \log(2/\delta)\label{eq:zi-lb}.
  \end{align}
  Choose $\delta = 2\exp(-\ln^2(b-a)) \leq \negl(b-a)$. In the event that $\Spread{\phi, [a:b)}(\tok) \geq \frac{\alpha \cdot (b-a)}{4}$ and \cref{eq:zi-lb} both hold, using the lower bound on $\alpha$ in the lemma statement together with \cref{eq:dham-z-ub}, we see that
  \begin{align}
\dham(x_{a:b-1}, \phi ( \tok_{a:b-1})) \leq (b-a)  - \frac{\alpha \cdot (b-a)}{4} + 4 \ln^2(b-a) \leq (b-a) \cdot (1-\alpha/8)\label{eq:ham-good-ub}.
  \end{align}
  Combining \cref{eq:ent-spread} and the fact that \cref{eq:zi-lb} holds with probability $1-\negl(b-a)$ together with \cref{eq:ham-good-ub}, we see that
  \begin{align}
\Pr_{\substack{\phi, x\\ \tok_{1:a-1} \sim \Modelo(\cdot \mid \emptyset) \\ \tok_{a:b-1} \sim \Eemb^\phi(x_{a:b-1}; \tok_{1:a-1})}}  \left( \substack{ \Hemp{[a:b)}(\tok) \leq 4\alpha \cdot (b-a) \ln |\Sigma|  \\ \mbox{ or } \dham(x_{a:b-1}, \phi ( \tok_{a:b-1})) \leq (b-a) \cdot (1-\alpha/8) } \right)  \geq 1-\negl(b-a) - (b-a) \cdot 2^{-\frac 18 \alpha |\Sigma|^\alpha}\nonumber.
  \end{align}
\end{proof}

\begin{lemma}[Substring robustness of the watermark]
  \label{lem:wm-robustness}
  Suppose that $p,\alpha \in (0,1)$ are given, and $\PRC$ is zero-bit PRC with block length $n(\lambda)$ over alphabet $ \Sigprc(\lambda)$, satisfying $|\Sigprc(\lambda)| \geq n(\lambda)$. Suppose further that $\PRC$ is robust to any $(1-\frac{\alpha}{8}+ \frac{3p}{\alpha})$-\edit-bounded channel. Let $\Model(\lambda)$ be defined over some alphabet $\Sigma(\lambda)$ satisfying $|\Sigma(\lambda)| \geq (\frac{8}{\alpha} |\Sigprc(\lambda)|)^{2/\alpha}$. 
  Then the watermarking scheme $\MW[\PRC,\Model]$ (defined in \cref{alg:wm-prc}) is $\beta_\lambda(\ell)$-substring robust to any $p$-\edit-bounded channel $\ME$, where $\beta_\lambda(\ell) = 8n(\lambda)+ 6 \alpha \ell$.
\end{lemma}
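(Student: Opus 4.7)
The plan is to fix an arbitrary pair $(i,j) \in [\Lmax(\lambda)]^2$ with $\ell := j-i$ satisfying the hypothesis $\Hemp{[i:j)}(\tok) \geq \beta_\lambda(\ell) \cdot \ln |\Sigma|$, and argue that $\Detect(1^\lambda, (\sk, \phi), \tok') = \True$ with probability at least $1 - \negl(\lambda)$; a union bound over the polynomially many pairs $(i,j)$ then yields the lemma. As a first step, I would use undetectability of $\PRC$ together with polynomial-time computability of $\ME$ and $\Detect$ to replace each codeword $x \gets \PRC.\Enc(1^\lambda, \sk)$ used inside $\Wat$ with a uniform draw $x \sim \Unif(\Sigprc^n)$, losing only $\negl(\lambda)$. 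Under this idealized law, the tokens within any complete $\Wat$-block $[a, a+n)$ are distributed exactly as $\Eemb^\phi(x; \tok_{1:a-1})$, matching the hypothesis of Lemma~\ref{lem:ent-or-ham}.

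Next I would apportion the entropy across individual blocks. The interval $[i,j)$ contains at least $B \geq \lfloor \ell / n \rfloor$ complete $\Wat$-blocks, and the two partial blocks at the boundaries contribute at most $2n \ln|\Sigma|$ to $\Hemp{[i:j)}(\tok)$. Subtracting, the complete blocks in $[i,j)$ carry total entropy at least $(6n + 6\alpha\ell)\ln|\Sigma|$. Calling a block ``high-entropy'' when its block entropy exceeds $4\alpha n \ln|\Sigma|$ and using the per-block ceiling of $n\ln|\Sigma|$, a short pigeonhole computation yields that the number of high-entropy blocks in $[i,j)$ satisfies $B_h \geq \lceil 2\alpha\ell/n \rceil$. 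For each such block, Lemma~\ref{lem:ent-or-ham} (whose failure probability $\negl(n) + n \cdot 2^{-\alpha |\Sigma|^\alpha/8}$ is $\negl(\lambda)$ under the hypotheses $|\Sigma| \geq (8|\Sigprc|/\alpha)^{2/\alpha}$ and $n(\lambda) \geq \lambda$) guarantees that $\dham(x, \phi \circ \tok_{[a, a+n)}) \leq (1-\alpha/8) n$.

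It then remains to apportion the channel's edits over these blocks. Since $\ME$ is $p$-\edit-bounded, I would argue that at most $p\ell$ edits are aligned to positions within $[i,j)$, and distributing these across the $B_h$ high-entropy blocks, pigeonhole produces some high-entropy block receiving at most $p\ell/B_h \leq pn/(2\alpha) \leq 3pn/\alpha$ channel edits. For that block, composing the $(1-\alpha/8)n$ embedding substitutions with the channel's at most $3pn/\alpha$ edits yields an edit distance of at most $(1 - \alpha/8 + 3p/\alpha) \cdot n$ between the original codeword $x$ and some substring of $\phi(\tok')$ of length near $n$. By the robustness hypothesis on $\PRC$, $\PRC.\Dec$ then returns $\emptyset$ on that substring; since $\Detect$ enumerates all contiguous substrings of $\tok'$ of length at most $n$, it returns $\True$.

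The main obstacle I expect is the edit-localization step: $\ME$ is applied to the full $\tok$ of length $\Lmax$, so its nominal edit budget is $p\Lmax$ rather than $p\ell$, and an adversarial channel could in principle concentrate many edits within a single high-entropy substring. Making the ``at most $p\ell$ edits fall in $[i,j)$'' claim rigorous will require fixing an optimal edit alignment between $\tok$ and $\tok'$, counting only those edits whose endpoints lie in a particular block, and exploiting the fact that we need only one decodable window in $\phi(\tok')$ to exist among the many high-entropy blocks. A secondary subtlety is that insertions and deletions shift block boundaries, so the aligned window in $\tok'$ for a given block has length $n \pm E_k$ rather than exactly $n$; this is benign because $\Detect$ tries all windows of length at most $n$, but the selected window length and the corresponding edit-distance budget must be tracked with care.
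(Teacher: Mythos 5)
Your skeleton matches the paper's: decompose $[i,j)$ into PRC blocks, apply \cref{lem:ent-or-ham} to conclude that every high-entropy block has $\dham(x,\phi(\tok)) \leq (1-\alpha/8)n$, localize the adversary's edits by pigeonhole to find a high-entropy block receiving at most $O(pn/\alpha)$ of them, and decode that block. However, two of your steps conceal genuine gaps. First, you cannot ``replace each codeword with a uniform draw'' for the purpose of analyzing $\Detect$: the distinguisher in the PRC undetectability game does not hold $\sk$, so it cannot evaluate $\PRC.\Dec$ or $\Detect$, and in fact the detection probability is near $1$ in the real world and near $0$ in the idealized world (by soundness), so no such hybrid can be $\negl(\lambda)$-close. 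Undetectability can only transfer \emph{key-free, efficiently checkable} events between the two worlds; the paper uses it exactly this way in \cref{clm:encode-emb}, to establish the per-block event ``entropy small or Hamming distance $\leq (1-\alpha/8)n$,'' while the decoding analysis stays entirely in the real world.

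Second, ``by the robustness hypothesis, $\PRC.\Dec$ returns $\emptyset$ on that substring'' does not follow as stated: PRC robustness is a guarantee against a \emph{channel fixed in advance} (the probability is over $\sk$), not a claim that every string within edit distance $(1-\frac{\alpha}{8}+\frac{3p}{\alpha})n$ of a codeword decodes correctly. The window you hand to $\Dec$ is a complicated function of $\phi$, $\Model$, $\ME$, and the other codewords, and \emph{which} window you decode is selected after seeing the realization. The paper resolves this by defining, for every block $g$ and candidate window $(b,b')$, a channel $\ME^\phi_{g,b,b'}$ that simulates the whole pipeline, outputs the true window when it lies within the edit budget and otherwise falls back to an artificial nearby string so that the channel is \emph{almost surely} edit-bounded (\cref{clm:sid-bounded}); robustness is then invoked per channel and union-bounded over the polynomially many windows (\cref{eq:all-dec-correct}). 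Some version of this packaging is unavoidable. Two smaller points: the per-block empirical entropy is not deterministically bounded by $n\ln|\Sigma|$ (one low-probability token can make $\Hemp{i}(\tok)$ arbitrarily large), so your entropy accounting needs the Freedman-type bound of \cref{lem:mean-emp-entropy} to get a $2n\ln|\Sigma|$ ceiling with high probability; and your worry about the $p\Lmax$ budget is resolved because the paper's proof applies $\ME$ to the substring $\tok_{i:j-1}$ itself, so the relevant budget is $p\ell$.
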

\begin{proof}
  For convenience we write $p_1 := \alpha/8,\ p_0 := 3p/\alpha$; then $\PRC$ is robust to any $(1-p_1+ p_0)$-bounded channel. 
  Let $\lambda$ denote the security parameter for the given PRC. %
  We will show that for all $i,j \in [\Lmax(\lambda)]$, 
  \begin{align}
\Pr_{\substack{(\sk, \phi) \gets \Setup(1^\lambda) \\ \tok \gets \Wat(1^\lambda, (\sk, \phi)), \tok' \gets \ME(\tok)}} \left( \Detect(1^\lambda, (\sk, \phi), \tok') = \False \mbox{ and } \Hemp{[i:j)}(\tok) \geq \beta_\lambda(j-i) \cdot \ln|\Sigma(\lambda)| \right) \leq \negl(\lambda) \label{eq:one-substring-robust}.
  \end{align}
  Using \cref{eq:one-substring-robust}, the fact that $\Lmax(\lambda) \leq \poly(\lambda)$, and a union bound over all possible choices of $i,j$ will yield the desired claim of $\beta_\lambda(\ell)$-substring robustness. To establish \cref{eq:one-substring-robust}, fix $\lambda \in \BN$, and set $\Model = \Model(\lambda), n = n(\lambda), \Sigma = \Sigma(\lambda), \Sigprc = \Sigprc(\lambda)$, as well as $i,j \in [\Lmax(\lambda)]$; we argue in two parts:

  \paragraph{Step 1: defining channels.} Let us write $\ell := j-i$ and let $a_1 < a_2 < \cdots < a_h$ be the indices in $[i,j-1]$ denoting the start positions of blocks for the PRC in the execution of $\Wat(1^\lambda, (\sk,\phi))$ (in particular, $a_1, \ldots, a_h$ are simply the integers in $[i,j-1]$ congruent to $1\pmod{n}$). %
  We will write $ \tok' := \ME(\tok_{i:j-1})$ to denote the output of the \edit-bounded channel $\ME$ given $\tok_{i:j-1}$ as input. Since $\ME$ is $p_0$-\edit-bounded, we have $\len(\tok') \leq \ell' := \lfloor (1+p_0) \cdot \ell\rfloor$. %
Given a mapping $\phi: \Sigma \to \Sigprc$, we let $\MF^\phi(x, \tok)$ (given $x \in \Sigprc^L, \tok \in \Sigma^L$) denote the joint distribution over random variables $(y_{g,b,b'})_{g \in [h-1], b,b' \in [\ell']:\ b'-b \in [n(1-p_0), n(1+p_0)]}$, defined as follows: 
\begin{itemize}
\item It draws $\tok' \sim \ME(\tok_{i:j-1})$.
\item For each choice of $g,b,b'$ as above, it lets $y_{g,b,b'}$ to be either: (a) the substring $\tok'_{b:b'-1}$, if $\phi(\tok'_{b:b'-1})$ can be obtained from $x_{a_g:a_{g+1}-1}$ via a sequence of at most $(1-p_1 + p_0)n$ subsitutions, insertions, and deletions, or (b) if not, any string  satisfying $\phi(y_{g,b,b'}) = x_{a_g:a_{g+1}-1}$.\footnote{If $b$ or $b'$ are outside the length bounds of $\tok'$, then we let $\tok'_{b:b'-1}$ denote the substring $\tok'_{\min\{b,\len(\tok')\} : \min\{b'-1, \len(\tok') \}}$, which may be empty.}
\end{itemize}
  Next, given $\phi, g,b,b'$ as above, we let $\ME^\phi_{g,b,b'}$ be the channel which, given as input a string $x \in \Sigprc^n$, performs the following operations:
  \begin{itemize}
  \item First, it generates $(\sk, \phi) \sim \Setup(1^\lambda)$, and generates $\tok \sim \Wat(1^\lambda, (\sk, \phi))$, with the following modification: when generating output for the block starting at index $a_g$, instead of using a fresh output of $\PRC.\Enc(1^\lambda,\sk)$, it uses the given input string $x$. Let $x^1, \ldots, x^{\lceil \Lmax(\lambda)/n(\lambda) \rceil} \in \Sigprc^n$ denote the codewords (output by $\PRC.\Enc(1^\lambda, \sk)$) used in the $\Wat$ procedure, so that in particular $x^{a_g} = x$. Write $\bar x = (x^1, \ldots, x^{\lceil \Lmax(\lambda)/n(\lambda) \rceil})$. 
  \item Then, it samples from the marginal $y_{g,b,b'} \sim \MF^\phi(\bar x, \tok)$ and outputs $\phi(y_{g,b,b'})$. 
  \end{itemize}
  
  \begin{claim}
    \label{clm:sid-bounded}
For any $g \in [h-1]$, $b,b' \in [\ell']$, and $\phi : \Sigma \to \Sigprc$, the channel $\ME^\phi_{g,b,b'}$ is $(1-p_1+ p_0)$-\edit-bounded.
\end{claim}
\begin{proof}
  It is immediate from the definition of $\ME_{g,b,b'}^\phi$ and $\MF^\phi$ that, almost surely, for a sample $y_{g,b,b'} \sim \MF^\phi(x, \tok)$, $\phi(y_{g,b,b'})$ can be obtained from $x$ via a sequence of at most $(1-p_1+p_0)n$ substitutions, insertions, and deletions.

  Finally, an output of $\ME^\phi_{g,b,b'}$ can be sampled in polynomial time: here we use that $\tok \sim \Wat(1^\lambda, (\sk, \phi))$ can be sampled in polynomial time (as $\Model$ is assumed to be computationally efficient), as well as the fact that it can be determined in polynomial time if one string (namely, $\phi(\tok'_{b:b'-1})$ above) can be obtained from another string (namely, $x_{a_g:a_{g+1}-1}$ above) via a sequence of a given number (i.e., $(1-p_1+p_0)n$) substitutions, insertions, and deletions. %
\end{proof}

\begin{claim}
  \label{clm:encode-emb}
For each $g \in [h-1]$,  it holds that
  \begin{align}
\Pr_{\substack{(\sk, \phi) \gets \Setup(1^\lambda) \\ x^1, x^2, \ldots, x^{\lceil \Lmax(\lambda)/n(\lambda)\rceil}  \gets \PRC.\Enc(1^\lambda, \sk) \\ x = (x^1, x^2, \ldots, x^{\lceil \Lmax(\lambda)/n(\lambda)\rceil} )\\  \tok \sim \Eemb^\phi(x; \emptyset)}} \left(\substack{ \Hemp{[a_g:a_{g+1})}(\tok) \leq 4\alpha \cdot n \ln |\Sigma|  \\ \mbox{ or } \dham(x_{a_g:a_{g+1}-1},\phi( \tok_{a_g:a_{g+1}-1})) \leq n \cdot (1-\alpha/8) }  \right) \geq 1-\negl(\lambda)\label{eq:encode-emb}.
  \end{align}
\end{claim}
We emphasize that the notation $x^1, x^2, \ldots \gets \PRC.\Enc(1^\lambda, \sk),\ \tok \sim \Eemb^\phi(x; \emptyset)$ means that $\PRC.\Enc(1^\lambda, \sk)$ is called repeatedly to produce strings $x^1, x^2, \ldots \in \Sigprc^n$, and the concatenated string $x = (x^1, x^2, \ldots, x^{\lceil \Lmax(\lambda)/n(\lambda)\rceil} )$ is used as input to the embedding channel $\Eemb^\phi(x; \emptyset)$. Notice that this procedure is identical to $\Wat(1^\lambda, (\sk, \phi))$, meaning that the distribution of the output string $\tok$ is identical to the output distribution of $\Wat(1^\lambda, (\sk, \phi))$. 
\begin{proof}[Proof of \cref{clm:encode-emb}]
Let $\MA$ denote the event that $\Hemp{[a_g:a_{g+1})}(\tok) \leq 4\alpha n \ln |\Sigma|$ or $\dham(x_{a_g:a_{g+1}-1},\phi( \tok_{a_g:a_{g+1}-1})) \leq n \cdot (1-\alpha/8) $. We have that $ \frac{32 \cdot \ln^4(n)}{n} \leq \alpha$ as long as $n = n(\lambda) \geq \tilde \Omega(1/\alpha)$, which holds for all $\lambda$ greater than a sufficiently large constant depending on $\alpha$ (and it is sufficient for us to only consider such $\lambda$ as the claimed failure probability is $\negl(\lambda)$).
Thus, we may apply \cref{lem:ent-or-ham} with $a = i, b = j, \Sigma'=\Sigprc$; doing so, we see that if $\phi : \Sigma \to \Sigprc$ is drawn uniformly and $x \sim \Unif(\Sigma^{\Lmax(\lambda)})$, the event $\MA$ occurs with probability $1 - \negl(n) - n\cdot 2^{-\frac 18\alpha |\Sigma|^\alpha} \geq 1-\negl(\lambda) - n\cdot 2^{-\frac 18 \alpha |\Sigma|^\alpha}$ (here we have used $ n \geq \lambda$). Since a draw of $\tok \sim \Eemb^\phi(x; \emptyset)$ may be implemented in polynomial time (as we have assumed that $\Model$ is computationally efficient), and since $\Hemp{[a_g:a_{g+1})}(\tok)$, $\dham(x_{a_g:a_{g+1}-1}, \phi(\tok_{a_g:a_{g+1}-1}))$ can be computed in polynomial time, it follows that $\MA$ must occur with probability $1-\negl(\lambda) - n \cdot 2^{-\frac 18 \alpha |\Sigma|^\alpha}$ when $ x^1, x^2, \ldots, x^{\lceil \Lmax(\lambda)/n(\lambda)\rceil}  \gets \PRC.\Enc(1^\lambda, \sk)$. (Otherwise, we could violate undetectability of $\PRC$ by generating codewords $x^1, x^2, \ldots$ from the encoding oracle $\MO$, representing either $\PRC.\Enc$ or a random oracle, and checking whether $\MA$ occurs.)

Finally, we remark that $n \cdot 2^{-\frac 18 \alpha |\Sigma|^\alpha} \leq \negl(n) \leq \negl(\lambda)$ by our requirement that $|\Sigma|^\alpha \geq |\Sigprc| \geq n$.
\end{proof}

   Since the channel $\ME$ is $p$-\edit-bounded, for most values of $g \in [h-1]$ indexing full blocks of $\tok_{i:j-1}$, there must be some $b, b'$ so that $\tok'_{b:b'-1}$ can be obtained from $\tok_{a_g:a_{g+1}-1}$ by at most $O(p n)$ insertions and deletions; this is formalized in the below claim:
  \begin{claim}
    \label{clm:most-g}
Fix any $C \geq 1$ and suppose $h \geq 2$. For any string $\tok_{i:j-1} \in \Sigma^\ell$, there are at least $\lfloor (h-1) \cdot (1-1/C)\rfloor $ values of $g \in [h-1]$ so that there exist $b,b' \in [\ell']$ for which $\tok'_{b:b'-1}$ can be obtained from $\tok_{a_g:a_{g+1}-1}$ by applying at most $3Cp \cdot n$ substitutions, insertions, and deletions (where we have $\tok' = \ME(\tok_{i:j-1})$). %
  \end{claim}
\begin{proof}[Proof of \cref{clm:most-g}]
  For each $g \in [h-1]$ and $\tok_{i:j-1} \in \Sigma^\ell$, the \edit-bounded channel $\ME$ sends the substring $\tok_{a_g:a_{g+1}-1}$ to some substring $\tok'_{b_g:b_{g+1}-1}$ of the (possibly random) channel output $\tok' = \ME(\tok_{i:j-1})$. (In particular, $b_g$ denotes the index in $\tok'$ of the first token in $\tok_{a_g:a_{g+1}-1}$ which is not deleted by $\ME$, or, if all of $\tok_{a_g:a_{g+1}-1}$ is deleted, the index in $\tok'$ of the first token following $\tok_{a_{g+1}-1}$ which is not deleted by $\ME$.) Note that $b_g$ is a function of $\tok_{i:j-1}$; we omit this dependence from the notation to avoid clutter.

  Fix $\tok_{i:j-1}$ and $\tok' = \ME(\tok_{i:j-1})$. For each $g \in [h-1]$, let $e_g$ denote the number of substitutions, insertions, and deletions applied by $\ME$ to obtain $\tok'_{b_g:b_{g+1}-1}$ from $\tok_{a_g:a_{g+1}-1}$. Since $\ME$ is $p$-\edit-bounded, we have $\sum_{g=1}^{h-1} e_g \leq p \ell \leq p \cdot (h+1)n$. Therefore, at least $\lfloor(h-1) \cdot (1-1/C)\rfloor$ values of $g \in [h-1]$ satisfy $e_g \leq \frac{C}{h-1} \cdot p (h+1) n \leq 3Cp\cdot n$, as desired. %
\end{proof}

\paragraph{Step 2: using robustness of the PRC.} %
Since each channel $\ME^\phi_{g,b,b'}$ (for any $\phi : \Sigma \to \Sigprc,\ g \in [h-1],\ b,b' \in [\ell']$ with $b'-b \in [n(1-p_0), n(1+p_0)]$) is $(1-p_1+ p_0)$-bounded (by \cref{clm:sid-bounded}), it holds that
\begin{align}
\Pr_{\substack{(\sk, \phi) \gets \Setup(1^\lambda) \\ x \gets \PRC.\Enc(1^\lambda, \sk) \\ y \sim \ME_{g,b,b'}^\phi(x)}} \left( \PRC.\Dec(1^\lambda, \sk, y) \neq \perp \right) \geq 1-\negl(\lambda)\label{eq:egbb-robust}.
\end{align}
Note that the distribution of $y \sim \ME_{g,b,b'}^\phi(x)$, for $x \gets \PRC.\Enc(1^\lambda, \sk)$, is the same as the marginal distribution of $\phi(y_{g,b,b'}) \sim \MF^\phi(x, \tok)$, for $x_1, x_2, \ldots \gets \PRC.\Enc(1^\lambda, \sk),\ x := (x^1, \ldots, x^{\lceil \Lmax(\lambda)/n(\lambda)\rceil}),\ \tok \sim \Eemb^\phi(x;\emptyset)$. Therefore, using \cref{eq:egbb-robust} together with a union bound over the (polynomially many) choices of $g,b,b'$, we see that
\begin{align}
\Pr_{\substack{(\sk, \phi) \gets \Setup(1^\lambda) \\ x^1, x^2, \ldots, x^{\lceil \Lmax(\lambda)/n(\lambda)\rceil}  \gets \PRC.\Enc(1^\lambda, \sk) \\ x = (x^1, x^2, \ldots, x^{\lceil \Lmax(\lambda)/n(\lambda)\rceil} ),\  \tok \sim \Eemb^\phi(x; \emptyset) \\ (y_{g,b,b'})_{g,b,b'} \sim \MF^\phi(x, \tok)}}\left( \forall g,b,b':\ \PRC.\Dec(1^\lambda, \sk, \phi(y_{g,b,b'})) \neq \perp\right)  \geq 1-\negl(\lambda)\label{eq:all-dec-correct}. 
\end{align}
For any $i \in [\Lmax(\lambda)]$, by \cref{lem:mean-emp-entropy} with $a = i, b = i+n$, there is some event $\MB_i$ that occurs with probability $1-\negl(\ell) \geq 1-\negl(\lambda)$ under $\tok \gets \Modelo$ so that, under $\MB_i$,
\begin{align}
\Hemp{[i:i+n)}(\tok) \leq \frac 32 \Hmean{[i:i+n)}(\tok) + 8 \ln |\Sigma| \cdot \ln^4(n) \leq \frac 32 n \ln |\Sigma| + 8 \ln |\Sigma| \cdot \ln^4(n) \leq 2 n \ln |\Sigma|\nonumber,
\end{align}
where the final inequality holds as long as $\lambda$ is chosen sufficiently large so that $n(\lambda) \geq 20 \ln^4(n(\lambda))$. By undetectability of $\PRC$, \cref{lem:model-phi-equiv}, and efficiency of $\Model$, the event $\MB_i$ also holds with probability $1-\negl(\lambda)$ under the distribution of $\tok$ specified in \cref{eq:encode-emb}. Let us write $\MB := \MB_{a_1-n} \cap \MB_{a_1} \cap \cdots \cap \MB_{a_h}$; thus, under $\MB$, we have
\begin{align}
\Hemp{[i:a_1)}(\tok)\leq 2n \ln |\Sigma|, \qquad  \Hemp{[a_g:a_{g+1})} \leq 2n \ln |\Sigma| \ \ \forall g \in [h-1], \qquad   \Hemp{[a_h :j)}(\tok) \leq 2n \ln |\Sigma|\nonumber.
\end{align}
Consider the distribution of $(x, \tok, \tok')$ where $(x, \tok)$ are drawn as specified in \cref{eq:encode-emb} and $\tok' \sim \ME(\tok_{i:j-1})$. Let $\MA$ denote the event that for all $g \in [h-1]$, $\Hemp{[a_g:a_{g+1})} \leq 4\alpha \cdot n \ln |\Sigma|$ or $\dham(x_{a_g:a_{g+1}-1}, \phi(\tok_{a_g:a_{g+1}-1})) \leq n \cdot (1-\alpha/8)$; by \cref{clm:encode-emb} and a union bound over $g \in [h-1]$, $\MA$ holds with probability $1-\negl(\lambda)$.

Under the event $\MA \cap \MB$ (which occurs with probability $1-\negl(\lambda)$), one of the following must be the case:
\begin{itemize}
\item There are at least $\lfloor \alpha \cdot (h-1) \rfloor +2$ values of $g \in [h-1]$ so that $\dham(x_{a_g:a_{g+1}-1}, \phi(\tok_{a_g:a_{g+1}-1})) \leq n \cdot (1-\alpha/8)$; let this event be denoted $\MC_1$. 
\item It holds that $\Hemp{[i:j)} \leq 8n\ln |\Sigma| + 6 \alpha \ell \ln |\Sigma|$; let this event be denoted $\MC_2$. 
\end{itemize}
Indeed, under $\MA \cap \MB$, if $\MC_1$ does not occur, then we must have that $\MC_2$ occurs since we would have:
\begin{align}
  \Hemp{[i:j)} = & \Hemp{[i:a_1)}(\tok) + \Hemp{[a_h:j)}(\tok) + \sum_{g=1}^{h-1} \Hemp{[a_g:a_{g+1})}(\tok)\nonumber\\
  \leq & 4n \ln |\Sigma| + \left( \lfloor \alpha \cdot (h-1)\rfloor + 2 \right) \cdot 2n \ln |\Sigma| + (h-1 - \lfloor \alpha \cdot (h-1)\rfloor - 1) \cdot 4\alpha n \ln |\Sigma|\nonumber\\
  \leq & 8n \ln |\Sigma| + 2\alpha \ell \ln |\Sigma| + 4 \alpha \ell \ln |\Sigma| = 8n \ln |\Sigma| + 6 \alpha \ell \ln |\Sigma|\nonumber,
\end{align}
where the first inequality uses that $\MA \cap \MB$ occurs and the second inequality uses the fact that $(h-1)n \leq \ell$.

Note that if $h < 2$, under the event $\MB$, the event $\MC_2$ occurs.

Next, if $h \geq 2$, by \cref{clm:most-g} with $C = 1/\alpha$, for any string $ \tok_{i:j-1} \in \Sigma^\ell$, with probability $1$ over $\tok' \sim \ME( \tok_{i:j-1})$,  there are at least $\lfloor(h-1) \cdot (1-\alpha)\rfloor$ values of $g \in [h-1]$ so that there exist $b,b' \in [\ell']$ for which $\tok'_{b:b'-1}$ can be obtained from $\tok_{a_g:a_{g+1}-1}$ by applying at most $\frac{3}{\alpha} \cdot p n$ substitutions, insertions and deletions. Note that $\lfloor \alpha \cdot (h-1) \rfloor + 2 + \lfloor (h-1) \cdot (1-\alpha)\rfloor > h-1$. %
Thus, under the event $\MA \cap \MB \cap \MC_1$, there is some value of $g_\st \in [h-1]$ and $b_\st,b_\st' \in [\ell']$ so that $\dham(x_{a_{g_\st}:a_{g_\st+1}-1}, \phi(\tok_{a_{g_\st}:a_{g_\st+1}-1})) \leq n \cdot (1-\alpha/8) = n \cdot (1-p_1)$ and $\tok'_{b_\st:b'_\st-1}$ can be obtained from $\tok_{a_{g_\st}:a_{g_\st+1}-1}$ by applying at most $p_0n = \frac{3}{\alpha} \cdot pn$ substitutions, insertions and deletions. By definition of $\MF^\phi(x,\tok)$, we may couple a draw of $(y_{g,b,b'})_{g,b,b'} \sim \MF^\phi(x, \tok)$ to this distribution over $(x,\tok, \tok')$ so that under the event $\MA\cap\MB \cap \MC_1$ that $(g_\st, b_\st, b'_\st)$ as above exist, we have $y_{g_\st, b_\st, b'_\st} = \tok'_{b_\st:b_\st'-1}$. %

Let the event of \cref{eq:all-dec-correct} (which holds with probability $1-\negl(\lambda)$) be denoted by $\MD$. Then, if $h \geq 2$, under the event $\MA \cap \MB \cap \MD \cap \MC_1$, $g_\st, b_\st, b_\st'$ are well-defined as above, and so we have that $\PRC.\Dec(1^\lambda, \sk, \phi(\tok'_{g_\st, b_\st, b'_\st})) \neq \perp$, and in particular, that $\Detect(1^\lambda, (\sk, \phi), \tok') = \True$. Under the event $\MA \cap \MB \cap \MD \cap \MC_2 \subset \MC_2$ (which must occur under the event $\MB$ if $h < 2$), we have that $\Hemp{[i:j)}(\tok) \leq 8n \ln |\Sigma| + 6\alpha \ell \ln |\Sigma| = \beta_\lambda(\ell) \cdot \ln |\Sigma|$. Since $(\MA \cap \MB \cap \MD \cap \MC_1) \cup (\MA \cap \MB \cap \MD \cap \MC_2) \supseteq \MA \cap \MB \cap \MD$ occurs with probability $1-\negl(\lambda)$, we have established \cref{eq:one-substring-robust}. 
\end{proof}

\begin{lemma}[Soundness of the watermark]
For any PRC $\PRC$, the watermarking scheme $\MW[\PRC]$ (defined in \cref{alg:wm-prc}) is sound. 
\end{lemma}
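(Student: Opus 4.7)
The plan is to reduce soundness of $\MW[\PRC]$ directly to soundness of the underlying PRC, applied to each substring considered by $\Detect$.

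Fix $\lambda \in \BN$ and any $\tok = (\tok_1, \ldots, \tok_\ell) \in \Sigma^{\ell}$ with $\ell \leq \Lmax(\lambda)$. The key observation is that the hash function $\phi : \Sigma \to \Sigprc$ generated by $\Setup$ is drawn independently of the PRC secret key $\sk \gets \PRC.\Key(1^\lambda)$. Thus for any pair $i \in [\ell],\ j \in [i, \min\{i+n-1, \ell\}]$, if we condition on $\phi$, the string $y_{i,j} := (\phi(\tok_i), \ldots, \phi(\tok_j)) \in \Sigprc^{j-i+1}$ is a \emph{fixed} element of $\Sigprc^\st$ which depends only on $\tok$ and $\phi$ (both of which are independent of $\sk$). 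By soundness of $\PRC$ (per \cref{def:sk-prc}), for any such fixed $y_{i,j}$ we have
\[
  \Pr_{\sk \gets \PRC.\Key(1^\lambda)}\left( \PRC.\Dec(1^\lambda, \sk, y_{i,j}) \neq \perp \right) \leq \negl(\lambda).
\]
Averaging this bound over the draw of $\phi$ (via the tower rule) yields the same inequality with the probability taken jointly over $(\sk, \phi) \gets \Setup(1^\lambda)$.

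To conclude, we take a union bound over the pairs $(i,j)$ considered inside $\Detect$. There are at most $\ell \cdot n(\lambda) \leq \Lmax(\lambda) \cdot n(\lambda) \leq \poly(\lambda)$ such pairs, and $\Detect(1^\lambda, (\sk, \phi), \tok)$ returns $\True$ only if $\PRC.\Dec(1^\lambda, \sk, y_{i,j}) \neq \perp$ for at least one of them. Hence
\[
  \Pr_{(\sk,\phi) \gets \Setup(1^\lambda)}\left( \Detect(1^\lambda, (\sk, \phi), \tok) = \True \right) \leq \poly(\lambda) \cdot \negl(\lambda) = \negl(\lambda),
\]
which is the desired soundness bound. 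There is no substantive obstacle here: the proof is essentially a bookkeeping argument, with the only mild subtlety being to note the independence of $\phi$ and $\sk$ (so that PRC soundness, which requires $y$ to be fixed independently of the PRC key, can be invoked after conditioning on $\phi$).
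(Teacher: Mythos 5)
Your proof is correct and follows essentially the same route as the paper's: condition on $\phi$ (which is independent of $\sk$) so that each $\phi(\tok_{i:j})$ is a fixed string, invoke PRC soundness for that fixed string, average over $\phi$, and union over the polynomially many $(i,j)$ pairs. No gaps.
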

\begin{proof}
  Fix any sequence $\tok \in \Sigma^\ell$, for some $\ell \leq \lceil \Lmax(\lambda)/n(\lambda) \rceil$. We wish to show that
  \begin{align}
    \Pr_{(\sk, \phi) \gets \Setup(1^\lambda)} \left( \Detect(1^\lambda,(\sk, \phi), \tok) \neq \perp \right) \leq \negl(\lambda).\label{eq:wm-sound}
  \end{align}
  By definition of $\Detect$ (in \cref{alg:wm-prc}),  $\Detect(1^\lambda, (\sk, \phi), \tok) \neq \perp$ only if there are some $i,j \in [\ell]$ so that $\PRC.\Dec(\sk, \phi(\tok_{i:j})) \neq \perp$. Since $\PRC$ is sound, for each fixed choice of $\phi$ and $i,j$,
  \begin{align}
\Pr_{\sk \sim \PRC.\Key(1^\lambda)}\left( \PRC.\Dec(\sk, \phi(\tok_{i:j})) \neq \perp \right) \leq \negl(\lambda).\nonumber
  \end{align}
  Taking expectation over $\phi$ (which is drawn independently from $\sk$) and a union bound over $i,j \leq \ell \leq \poly(\lambda)$, we see that \cref{eq:wm-sound} holds. 
\end{proof}

\begin{lemma}[Undetectability of the watermark]
For any PRC $\PRC$, the watermarking scheme $\MW[\PRC]$ (defined in \cref{alg:wm-prc}) is undetectable.
\end{lemma}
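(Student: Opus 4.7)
The plan is to reduce undetectability of the watermarking scheme to undetectability of the underlying PRC. Given any probabilistic polynomial-time distinguisher $\Dist$ which achieves advantage $\nu(\lambda)$ at distinguishing between $\Dist^\Modelo(1^\lambda)$ and $\Dist^{\Wat(1^\lambda, (\sk, \phi))}(1^\lambda)$, I will construct a polynomial-time adversary $\Adv'$ for the PRC which achieves advantage at least $\nu(\lambda)$ in the undetectability game of \cref{def:sk-prc}. Then undetectability of $\PRC$ forces $\nu(\lambda) = \negl(\lambda)$.

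The adversary $\Adv'$ first samples $\phi : \Sigma(\lambda) \to \Sigprc(\lambda)$ uniformly at random (this is done locally and does not require the oracle). It then simulates $\Dist$, and each time $\Dist$ makes a call to its oracle (which should return a sample of either $\Wat(1^\lambda, (\sk, \phi))$ or $\Modelo$), $\Adv'$ responds as follows. It generates a fresh codeword $x^1 \gets \MO'$ from its own oracle $\MO'$ (which is either $\PRC.\Enc(1^\lambda, \sk, \emptyset)$ or the uniform oracle $\MU$), and runs the $\Wat$ procedure of \cref{alg:wm-prc} using $\phi$ and using $x^1, x^2, \ldots$ as the sequence of codewords (calling $\MO'$ again each time $\Wat$ would call $\PRC.\Enc$ on \cref{line:wm-encode}). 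Since $\Model$ is computationally efficient and the length of a $\Wat$ output is at most $\Lmax(\lambda) \leq \poly(\lambda)$, this simulation runs in polynomial time and uses at most $\lceil \Lmax(\lambda)/n(\lambda)\rceil$ calls to $\MO'$ per simulated oracle call from $\Dist$. When $\Dist$ halts, $\Adv'$ outputs the same bit.

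If $\MO' = \PRC.\Enc(1^\lambda, \sk, \emptyset)$, then by construction the simulated view presented to $\Dist$ is exactly distributed as $\Dist^{\Wat(1^\lambda, (\sk, \phi))}(1^\lambda)$. If instead $\MO' = \MU$, then each $x^g$ is an independent uniformly random string in $\Sigprc(\lambda)^{n(\lambda)}$, so the sequence of tokens produced by running the $\Wat$ template with these uniform $x^g$'s is distributed exactly as $\Eemb^\phi$ applied block-by-block to a uniformly random string. By \cref{lem:model-phi-equiv} applied inductively across consecutive blocks of length $n(\lambda)$ (where at each block the previously generated tokens play the role of the prefix $\tok_{1:i-1}$), this induced distribution is identical to $\Modelo$. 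Hence the simulated view in this case is distributed exactly as $\Dist^\Modelo(1^\lambda)$.

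Consequently the advantage of $\Adv'$ in distinguishing $\PRC.\Enc$ from $\MU$ equals the advantage $\nu(\lambda)$ of $\Dist$ in distinguishing $\Wat$ from $\Modelo$. The main (minor) subtlety to be careful about is that $\Adv'$ must aggregate multiple PRC oracle calls per $\Wat$ query and use a consistent hash $\phi$ across the entire execution; since $\phi$ is part of $\skwat$ (not $\sk$) this is fine. By undetectability of $\PRC$ (\cref{def:sk-prc}), we conclude $\nu(\lambda) \leq \negl(\lambda)$, as desired.
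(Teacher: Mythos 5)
Your proposal is correct and follows essentially the same route as the paper: a reduction that samples $\phi$ locally, simulates $\Wat$ while replacing each call to $\PRC.\Enc$ with the PRC oracle, and invokes \cref{lem:model-phi-equiv} to identify the uniform-oracle case with $\Modelo$. The details you flag (polynomially many oracle calls per simulated query, efficiency of $\Model$, and keeping $\phi$ consistent) match the paper's treatment.
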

\begin{proof}
  Suppose for the sake of contradiction that there is a polynomial-time adversary $\Adv$ so that
  \begin{align}
\left| \Pr\left( \Adv^{\Modelo}(1^\lambda)  = 1 \right) - \Pr_{\sk \sim \Setup(1^\lambda)}\left( \Adv^{\Wat(1^\lambda, \sk)}(1^\lambda) = 1 \right) \right| \geq \frac{1}{\poly(\lambda)}\label{eq:adv-contradiction-wm}.
  \end{align}
  Using $\Adv$, we can construct an adversary $\Adv'$ to break the undetectability of $\PRC$, as follows. $\Adv'$ has access to an oracle $\MO$ which is either $\PRC.\Enc(1^\lambda, \sk)$ (where $\sk \gets \PRC.\Key(1^\lambda)$) or outputs a uniformly random string of length $n(\lambda)$ at each call. $\Adv'$ then draws $\phi : \Sigma \to \Sigprc$ uniformly at random and simulates $\Adv$, simulating each call to  $\Wat(1^\lambda, (\sk, \phi))$ as in \cref{alg:wm-prc} but replacing each call to $\PRC.\Enc(1^\lambda, \sk)$ (\cref{line:wm-encode} of \cref{alg:wm-prc}) with a call to $\MO$. In the case that $\MO$ is $\PRC.\Enc(1^\lambda, \sk)$, then this procedure is exactly $\Adv^{\Wat(1^\lambda, \sk)}$. In the case that $\MO$ outputs uniformly random strings in response to each call, then \cref{lem:model-phi-equiv} gives that this procedure is exactly $\Adv^{\Modelo}$. Thus, \cref{eq:adv-contradiction-wm} gives that $\Adv'$ can distinguish with inverse polynomial advantage between the two cases, a contradiction to undetectability of $\PRC$. Note that in order to simulate $\Wat$, $\Adv'$ needs to be able to evaluate $\Model(\tok_i = \cdot \mid \tok_{1:i-1})$ (\cref{line:pi-model} of \cref{alg:wm-prc}), which is possible in polynomial time by  assumption on $\Model$ (see \cref{sec:additional-prelims}).
\end{proof}

\begin{proof}[Proof of \cref{thm:wm-overall}]
  Fix $\alpha > 0$ together with a function family $\MF$ which is a $\log n(\lambda)$-local weak PRF for noise level $q \in [0, 1/2)$, and a family of language models $\Model(\lambda)$ as in the theorem statement. Write $p_0 := \frac{\alpha}{16\Crob}$, where $\Crob$ is the constant from \cref{thm:prc-insdel}.

  By \cref{thm:prf-prc}, $\PRC_1 := \PRFPRC[\MF,\frac 12 - p_0, q]$ is a zero-bit binary alphabet PRC with robustness to all $(\frac 12 - p_0)$-bounded substitution channels. By definition in \cref{eq:choose-Nn}, the block length $N(\lambda)$ of $\PRC_1$ satisfies:
  \begin{align}
N(\lambda) = O \left( n(\lambda)^{4 \log(1/p_0)} \cdot n(\lambda)^2 \right) \leq n(\lambda)^{O( \log(1/p_0))}\nonumber.
  \end{align}
  
  Next recall that $C_0, \Crob \geq 1$ are the constants from \cref{thm:prc-insdel}. By \cref{thm:prc-insdel}, for $\rho = C_0/p_0$, $\PRC_2 := \PRCI[\PRC_1, \rho]$ is a zero-bit PRC with block length at most $N(\lambda)$ and alphabet $\Sigprc(\lambda)$ of size $|\Sigprc(\lambda)| \leq \lceil \frac{C_0}{p_0} \cdot N(\lambda) \rceil$ which has robustness to any $(1-\Crob p_0)$-\edit-bounded channel. Since $1-\Crob p_0 \geq 1 - \frac{\alpha}{8} + \frac{3p}{\alpha}$ by our choice of $p_0$ and since $p \leq \alpha^2/48$ (which can be ensured by taking $c = 1/48$ in the theorem statement), $\PRC_2$ is robust to any $(1-\frac{\alpha}{8} + \frac{3p}{\alpha})$-\edit-bounded channel. 

  Finally, by \cref{thm:wm-from-prc}, as long as $|\Sigma(\lambda)| \geq (\frac{8}{\alpha} |\Sigprc(\lambda)|)^{2/\alpha}$ the watermarking scheme $\MW = \MW[\PRC_2, \Model]$ is sound, undetectable, and $\beta_\lambda(\ell)$-substring robust to any $p$-\edit-bounded channel, for $\beta_\lambda(\ell) := 6\alpha \ell + 8 N(\lambda) \leq 6\alpha \ell + n(\lambda)^{O(\log 1/p_0)}$. To ensure that the lower bound on $|\Sigma(\lambda)|$ holds, we note that
  \begin{align}
\left(\frac{8}{\alpha} |\Sigprc(\lambda)|\right)^{2/\alpha} \leq \left(\frac{8}{\alpha} \cdot \left\lceil \frac{C_0}{p_0} \cdot N(\lambda) \right\rceil \right)^{2/\alpha} \leq n(\lambda)^{O \left( \frac{1}{\alpha} \log \frac{1}{\alpha} \right)}\nonumber,
  \end{align}
  which is bounded above by $n(\lambda)^{C_2 \frac 1\alpha \log \frac 1\alpha} \leq |\Sigma(\lambda)|$ (for sufficiently large $\lambda$) as long as the constant $C_2$ in the theorem statement is sufficiently large. 

\end{proof}

\section{Technical tools: concentration inequalities}
\begin{theorem}[McDiarmid's inequality]
  \label{thm:mcdiarmid}
  Let sets $\MX_1, \ldots, \MX_m$ equipped with sigma algebras be given, and suppose $f : \MX_1 \times \cdots \times \MX_m \to \BR$ satisfies the \emph{bounded differences property}, i.e., for each $j \in [m]$, 
  \begin{align}
\max_{x_i \in \MX_i\ \forall i \in[m], x_j' \in \MX_j} | f(x_j, x_{-j}) - f(x_j', x_{-j})| \leq c_j\nonumber,
  \end{align}
  for some positive real numbers $c_j$. Then given independent random variables $X_i \in \MX_i$ ($i \in [m]$), it holds that
  \begin{align}
\Pr \left( | f(X_1, \ldots, X_m) - \E[f(X_1, \ldots, X_m)]|  \geq \ep\right)\leq 2 \exp \left( \frac{-2\ep^2}{\sum_{i=1}^m c_i^2} \right)\nonumber.
  \end{align}
\end{theorem}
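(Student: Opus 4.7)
The plan is to prove McDiarmid's inequality by applying the Azuma--Hoeffding martingale concentration inequality to the Doob martingale associated with $f(X_1, \ldots, X_m)$. First I would define, for $k = 0, 1, \ldots, m$, the random variable $Y_k := \E[f(X_1, \ldots, X_m) \mid X_1, \ldots, X_k]$, with the convention $Y_0 = \E[f(X_1, \ldots, X_m)]$ and $Y_m = f(X_1, \ldots, X_m)$. The sequence $(Y_k)_{k=0}^m$ is a martingale with respect to the filtration generated by $(X_k)_k$, and telescoping gives $f(X_1, \ldots, X_m) - \E[f(X_1,\ldots,X_m)] = \sum_{k=1}^m D_k$ where $D_k := Y_k - Y_{k-1}$.

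Next I would show the key bounded-difference estimate for the martingale: conditional on $X_1, \ldots, X_{k-1}$, the increment $D_k$ takes values in an interval of length at most $c_k$. To do so, using independence of $X_k$ from $X_1, \ldots, X_{k-1}$, one writes
\[
Y_k = g_k(X_1, \ldots, X_k), \qquad Y_{k-1} = \E_{X_k'}[g_k(X_1, \ldots, X_{k-1}, X_k')],
\]
where $g_k(x_1, \ldots, x_k) := \E[f(x_1, \ldots, x_k, X_{k+1}, \ldots, X_m)]$. Define
\[
L_k := \inf_{x_k \in \MX_k} g_k(X_1, \ldots, X_{k-1}, x_k), \qquad U_k := \sup_{x_k \in \MX_k} g_k(X_1, \ldots, X_{k-1}, x_k).
\]
The bounded-differences hypothesis on $f$, together with independence (which lets one take expectation over $X_{k+1}, \ldots, X_m$ inside), implies $U_k - L_k \leq c_k$, and clearly $L_k \leq D_k + Y_{k-1} = Y_k \leq U_k$ and similarly $L_k \leq Y_{k-1} \leq U_k$, so $|D_k| \leq c_k$ almost surely when conditioned on $X_1, \ldots, X_{k-1}$.

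Finally I would apply Azuma's inequality (which I would invoke as a standard fact, or give a short proof of via the Hoeffding lemma bound $\E[e^{\lambda D_k} \mid X_1, \ldots, X_{k-1}] \leq \exp(\lambda^2 c_k^2 / 8)$ and iterated conditional expectations, followed by a Markov/Chernoff argument). This yields
\[
\Pr\!\left(\left|\sum_{k=1}^m D_k\right| \geq \varepsilon\right) \leq 2\exp\!\left(\frac{-2\varepsilon^2}{\sum_{k=1}^m c_k^2}\right),
\]
which is exactly the desired bound after substituting $\sum D_k = f(X_1, \ldots, X_m) - \E[f(X_1, \ldots, X_m)]$.

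The main technical point, and the only place where the hypotheses of the theorem are truly used, is the bounded-increment estimate $|D_k| \leq c_k$: it is crucial that the $X_i$ be independent so that $Y_{k-1}$ can be written as an average (over $X_k'$) of $g_k(X_1, \ldots, X_{k-1}, X_k')$, allowing the bounded-differences property of $f$ to propagate to $g_k$. Given this, the martingale concentration step is entirely standard.
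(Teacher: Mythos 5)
Your proof is correct: it is the standard derivation of McDiarmid's inequality via the Doob martingale $Y_k = \E[f(X_{1:m})\mid X_{1:k}]$ and Azuma--Hoeffding. The paper itself gives no proof of \cref{thm:mcdiarmid} --- it is invoked as a standard off-the-shelf concentration tool --- so there is nothing to compare against; your argument is the canonical one. One small point worth being careful about: the sharp constant $2$ in the exponent requires not merely $|D_k|\le c_k$ but the stronger fact that, conditionally on $X_{1:k-1}$, the increment $D_k$ lies in the interval $[L_k - Y_{k-1},\, U_k - Y_{k-1}]$ of length at most $c_k$ (the bound $|D_k|\le c_k$ alone only gives an interval of length $2c_k$ and hence the weaker exponent $-\ep^2/(2\sum_i c_i^2)$); your invocation of Hoeffding's lemma in the form $\E[e^{\lambda D_k}\mid X_{1:k-1}]\le \exp(\lambda^2 c_k^2/8)$ correctly uses the length-$c_k$ interval, so the constant comes out right.
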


\begin{theorem}[Corollary of Freedman's inequality; see Lemma A.3 of \cite{foster2023statistical}]
  \label{thm:freedman}
  Let $(X_t)_{t \leq T}$ be a sequence of random variables adapted to a filtration $(\mathscr{F}_t)_{t \in [T]}$. If $0 \leq X_t \leq R$ almost surely, then each of the below inequalities holds with probability $1-\delta$:
  \begin{align}
    \sum_{t=1}^T X_t \leq & \frac 32 \sum_{t=1}^T \E_{t-1}[X_t] + 4R \ln(2/\delta)\nonumber\\
    \sum_{t=1}^T \E_{t-1}[X_t] \leq & 2 \sum_{t=1}^T X_t + 8R \ln(2/\delta)\nonumber,
  \end{align}
  where $\E_{t-1}[X_t]$ denotes $\E[X_t \mid \mathscr{F}_{t-1}]$. 
\end{theorem}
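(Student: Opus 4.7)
The plan is to reduce both inequalities to a single exponential-supermartingale bound for the centered process. First, I would define the martingale differences $D_t := X_t - \E_{t-1}[X_t]$ where $\E_{t-1}[\cdot] := \E[\cdot \mid \mathscr{F}_{t-1}]$; these satisfy $|D_t| \leq R$ and, crucially, the one-sided variance bound $\E_{t-1}[D_t^2] \leq \E_{t-1}[X_t^2] \leq R \cdot \E_{t-1}[X_t]$, which is exactly what ties the quadratic variation to the predictable mean. Write $S := \sum_{t=1}^T X_t$ and $M := \sum_{t=1}^T \E_{t-1}[X_t]$, so the goal is to show $S \leq \tfrac 32 M + 4R \ln(2/\delta)$ and $M \leq 2S + 8R \ln(2/\delta)$, each with probability $\geq 1-\delta$.

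Next, I would invoke the standard MGF estimate for a mean-zero variable bounded above by $R$: for $\lambda \geq 0$, $\E_{t-1}[\exp(\lambda D_t)] \leq \exp\bigl(\lambda^2\, g(\lambda R)\cdot \E_{t-1}[D_t^2]\bigr)$, where $g(x) := (e^x - 1 - x)/x^2$ is increasing. Combining this with the variance bound above shows that
\begin{align*}
Z_n \;:=\; \exp\!\left(\lambda \sum_{t \leq n} D_t \;-\; \lambda^2\, g(\lambda R)\, R \cdot \sum_{t \leq n} \E_{t-1}[X_t]\right)
\end{align*}
is a nonnegative supermartingale with $\E[Z_0] = 1$. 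Applying Markov's inequality to $Z_T$ yields that with probability at least $1-\delta$,
\begin{align*}
S - M \;\leq\; \tfrac{1}{\lambda}\ln(1/\delta) \;+\; \lambda\, g(\lambda R)\, R \cdot M.
\end{align*}
Keeping $M$ inside the variance proxy (rather than an a priori deterministic bound) is what lets us avoid a peeling/stratification argument over the size of the quadratic variation, which is otherwise the main technical nuisance.

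Finally, I would specialize to $\lambda = 1/(2R)$. A direct numerical check gives $\tfrac 12\, g(1/2) = 2(e^{1/2} - 3/2) < \tfrac 12$, so the second term on the right is at most $M/2$, and the first term equals $2R \ln(1/\delta) \leq 4R \ln(2/\delta)$. This yields $S \leq \tfrac 32 M + 4R \ln(2/\delta)$. For the reverse inequality, I would repeat the identical argument with $-D_t$ in place of $D_t$ (still a mean-zero martingale difference, still with $|-D_t| \leq R$ and the same conditional-variance bound in terms of $\E_{t-1}[X_t]$), giving $M - S \leq M/2 + 2R \ln(1/\delta)$ and hence $M \leq 2S + 4R \ln(1/\delta) \leq 2S + 8R \ln(2/\delta)$.

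The main thing to verify carefully is the numerical constant $\tfrac 12\, g(1/2) < \tfrac 12$, which is what powers the AM-GM-style absorption of $\lambda g(\lambda R) R\cdot M$ into $M/2$; a slightly worse constant would merely inflate the multiplicative factor on $M$ and the additive $R \ln(2/\delta)$ term, but would not change the structural shape of the bound. No martingale-specific machinery beyond the MGF estimate and Markov's inequality is required.
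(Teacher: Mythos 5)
Your proof is correct, and it is essentially the standard derivation of this result (which the paper does not prove itself but cites as Lemma A.3 of \cite{foster2023statistical}): an exponential supermartingale for the centered increments with the conditional variance absorbed via $\E_{t-1}[D_t^2] \leq R\,\E_{t-1}[X_t]$, Markov's inequality, and a fixed choice $\lambda \asymp 1/R$. Your numerical check is right --- $\tfrac12 g(1/2) = 2(e^{1/2}-3/2) \approx 0.297 < \tfrac12$ --- and the symmetric application to $-D_t$ correctly yields the second inequality with the stated constants.
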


\subsection{Concentration with weakly dependent random variables}
\label{sec:dobrushin}
Fix a set $\MX$ equipped with a sigma-algebra together with a positive integer $n$. Let $P \in \Delta(\MX^n)$ denote a distribution (where $\MX^n$ is equipped with the product sigma algebra). For $X = (X_1, \ldots, X_n) \sim P$, we define the \emph{influence of $X_j$ on $X_i$} to be
\begin{align}
I_{j \to i}(X) := \max_{\substack{x_{-i-j} \in \MX^{n-2}\\ x_j, x_j' \in \MX}}\tvd{P(X_i = \cdot \mid X_{-i-j} = x_{-i-j}, X_j = x_j)}{P(X_i = \cdot \mid X_{-i-j} = x_{-i-j}, X_j = x_j')}\label{eq:define-influence}.
\end{align}
\begin{defn}
  Given a random variable $X$ distributed according to some distribution $P$ over $\MX^n$, define the \emph{Dobrushin coefficient} of $X$ to be
  \begin{align}
\alpha(X) := \max \left\{ \max_{j \in [n]} \sum_{i \neq j} I_{j \to i}(X), \max_{i \in [n]} \sum_{j \neq i} I_{j \to i}(X)\right\}\nonumber.
  \end{align}
\end{defn}

\begin{theorem}[Dobrushin's concentration inequality; e.g., Theorem 4.3 of \cite{chatterjee2016concentration}]
  \label{thm:dobrushin}
  Let $P$ be a distribution over $\MX^n$ whose Dobrushin coefficient is $\alpha$. Let $f : \MX^n \to \BR$ be a real-valued function satisfying the following \emph{bounded differences condition}, for some constants $c_1, \ldots, c_n \geq 0$:
  \begin{align}
\max_{(x_1, \ldots, x_n) \in \MX^n, x_j' \in \MX} |f(x_j, x_{-j}) - f(x_j', x_{-j})| \leq & c_j\nonumber.
  \end{align}
  Then for all $\ep > 0$,
  \begin{align}
\Pr_{X \sim P} \left( |f(X) - \E[f(X)]| \geq \ep \right) \leq 2\exp \left( - \frac{(1-\alpha) \ep^2}{2 \sum_{i=1}^n c_i^2} \right)\nonumber.
  \end{align}
\end{theorem}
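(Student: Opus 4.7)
The plan is to prove this via the classical Doob martingale / coupling approach, controlling the dependencies through the Dobrushin influence matrix. I would set up the filtration $\mathscr{F}_i = \sigma(X_1, \ldots, X_i)$ and the Doob martingale $M_i := \E[f(X) \mid \mathscr{F}_i]$ with differences $D_i := M_i - M_{i-1}$, so that $f(X) - \E f(X) = \sum_{i=1}^n D_i$. The $\alpha = 0$ case (independent coordinates) recovers McDiarmid's inequality via $|D_i| \leq c_i$; the challenge is to handle the dependency-induced blow-up optimally.

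Next, I would express each increment as
\begin{align*}
D_i = \int \bigl( g_i(y) - g_i(y') \bigr)\, P(X_i \in dy \mid \mathscr{F}_{i-1})\, P(X_i \in dy' \mid \mathscr{F}_{i-1}),
\end{align*}
where $g_i(y) := \E[f(X) \mid \mathscr{F}_{i-1}, X_i = y]$. The inner difference $g_i(y) - g_i(y')$ can be bounded by coupling the conditional laws of $X_{i+1:n}$ given $(\mathscr{F}_{i-1}, X_i = y)$ versus $(\mathscr{F}_{i-1}, X_i = y')$: by a greedy one-coordinate-at-a-time coupling (iterating the triangle inequality against the single-site influences), the expected Hamming cost of the optimal coupling is controlled by the row $(I_{i \to j})_{j > i}$ of the influence matrix $\Gamma_{ij} := I_{j \to i}(X)$. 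Combined with the bounded-differences hypothesis on $f$, this produces a pathwise bound of the form $|D_i| \leq \sum_j H_{ij}\, c_j$ for an upper-triangular matrix $H$ dominated entrywise by the geometric series $I + \Gamma + \Gamma^2 + \cdots = (I - \Gamma)^{-1}$.

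The symmetric Dobrushin assumption $\alpha(X) \leq \alpha$ bounds both the row and column sums of $\Gamma$ by $\alpha$, so $\|\Gamma\|_\infty \leq \alpha$, $\|\Gamma\|_1 \leq \alpha$, and by Neumann series both $\|(I-\Gamma)^{-1}\|_\infty$ and $\|(I-\Gamma)^{-1}\|_1$ are at most $1/(1-\alpha)$. Interpolating between these two bounds (equivalently, writing $\sum_i D_i^2 = \langle D, D \rangle$ and using Cauchy--Schwarz with $H$ appearing on both sides) yields the improved pathwise bound $\sum_i D_i^2 \leq \frac{1}{1-\alpha} \sum_i c_i^2$ in the appropriate sense. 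Applying Azuma--Hoeffding to the martingale $(M_i)$ with this squared-increment bound then delivers the stated subgaussian tail.

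The main obstacle is extracting the linear factor $(1-\alpha)$ in the exponent, as opposed to the weaker $(1-\alpha)^2$ that would arise from naively bounding $|D_i| \leq c_i/(1-\alpha)$ and applying McDiarmid coordinate-wise. Capturing the linear $(1-\alpha)$ requires exploiting the \emph{symmetric} Dobrushin hypothesis (both row \emph{and} column sums of $\Gamma$ bounded by $\alpha$) in a symmetric way: this is what allows both $\ell_1$ and $\ell_\infty$ norms of $(I-\Gamma)^{-1}$ to be used simultaneously via interpolation, rather than a one-sided operator-norm bound, and it is the step where the proof in Chatterjee's reference genuinely uses more than a routine martingale argument.
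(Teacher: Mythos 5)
The paper does not prove this result; it imports it verbatim from Chatterjee (Theorem 4.3 of the cited reference), whose proof goes through the method of exchangeable pairs built from the Glauber dynamics, not through a Doob martingale. So the only question is whether your martingale route reaches the stated bound, and as written it does not. The critical step is your claim that interpolating the $\ell_1$ and $\ell_\infty$ bounds on $(I-\Gamma)^{-1}$ yields $\sum_i D_i^2 \leq \frac{1}{1-\alpha}\sum_i c_i^2$. What the interpolation (Riesz--Thorin, or the Schur test) actually gives is $\|(I-\Gamma)^{-1}\|_{2\to 2} \leq 1/(1-\alpha)$; feeding the pathwise bound $|D_i| \leq ((I-\Gamma)^{-1}c)_i$ into Azuma--Hoeffding then requires controlling $\sum_i D_i^2 \leq \|(I-\Gamma)^{-1}c\|_2^2 \leq \|(I-\Gamma)^{-1}\|_{2\to2}^2\,\|c\|_2^2 = \frac{1}{(1-\alpha)^2}\sum_i c_i^2$, and this squaring is forced: the quantity entering Azuma is the quadratic form $c^\top H^\top H c$, so the operator norm of $H$ is paid twice. (Bounding $\sum_i((I-\Gamma)^{-1}c)_i^2$ directly by Cauchy--Schwarz against the row sums and then summing columns also lands on $1/(1-\alpha)^2$.) Your "main obstacle" paragraph identifies exactly this issue and then asserts it away; the symmetric hypothesis does not rescue the martingale argument --- even for contracting Markov chains the martingale/transportation route (Marton, Samson) is known to produce the squared contraction coefficient. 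So the proposal as written establishes only the weaker exponent $-(1-\alpha)^2\ep^2/(2\sum_i c_i^2)$.

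The reason Chatterjee gets the linear factor is structural: in the exchangeable-pairs argument the tail is governed by $\|v\|_\infty$ with $v(X) \approx \tfrac 12 \langle c, (I-A)^{-1}c\rangle$, a single bilinear form, so $\|(I-A)^{-1}\|_{2\to2} \leq 1/(1-\alpha)$ is paid exactly once. A self-contained proof should therefore switch frameworks: construct the pair by resampling a uniformly random coordinate from its conditional law, build the antisymmetric function $F$ by summing the Glauber semigroup, and use the Dobrushin condition to make that series converge geometrically. Two smaller points: your displayed expression for $D_i$ as a double integral of $g_i(y)-g_i(y')$ against the product of conditional laws is identically zero by antisymmetry --- the correct identity is $D_i = \int \bigl(g_i(X_i) - g_i(y')\bigr)\,P(X_i \in dy' \mid \mathscr{F}_{i-1})$; and the greedy coupling bounding $g_i(y)-g_i(y')$ needs care because $I_{j\to i}$ conditions on \emph{all} other coordinates while the martingale conditions only on a prefix, so the Neumann-series comparison, while salvageable, is not automatic. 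None of this affects the paper, which uses the theorem only as a black box.
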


\section{Miscellaneous Lemmas}
\label{sec:prc-lemmas}
\begin{lemma}
  \label{lem:robust-robust}
  Suppose a PRC $\PRC = (\Key, \Enc, \Dec)$ is robust to any $p$-substitution-bounded channel. Then for any channel $\ME$ satisfying $\Pr_{\sk \sim \Key(1^\lambda), x \gets \Enc(1^\lambda, \sk, \msg), y \sim \ME(x)}(\dham(x, y) \leq pn) \geq 1-\negl(n)$, $\PRC$ is robust to $\ME$.
\end{lemma}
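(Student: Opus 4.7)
The plan is to reduce robustness with respect to $\ME$ to robustness with respect to a modified channel $\ME'$ that is guaranteed to be $p$-substitution-bounded (in the strict almost-sure sense of \cref{def:sub-bounded}), by ``truncating'' $\ME$ on the bad event where it produces a string too far from its input.

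First, I would define $\ME'$ as follows: on input $x \in \Sigma^n$, sample $y \sim \ME(x)$, and output $y$ if $\dham(x,y) \leq pn$ and $|y| = |x|$, and otherwise output $x$ itself. By construction, for every $x$ and every realization, the output $y'$ of $\ME'(x)$ satisfies $\dham(x, y') \leq pn$, so $\ME'$ is $p$-substitution-bounded. By hypothesis, $\PRC$ is therefore robust to $\ME'$, meaning that for any message $\msg$,
\begin{align*}
\Pr_{\sk \gets \Key(1^\lambda)}\!\left(\Dec(1^\lambda, \sk, \ME'(x)) = \msg \mid x \gets \Enc(1^\lambda, \sk, \msg)\right) \geq 1-\negl(\lambda).
\end{align*}

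Next, I would argue that $\ME$ and $\ME'$ can be coupled so that their outputs agree except on a negligible-probability event. Specifically, under the joint distribution where $\sk \gets \Key(1^\lambda)$, $x \gets \Enc(1^\lambda, \sk, \msg)$, and $y \sim \ME(x)$, the hypothesis on $\ME$ states that $\dham(x,y) \leq pn$ except with probability $\negl(n)$; since $n = n(\lambda) \geq \lambda$ by our convention, this is $\negl(\lambda)$. On this good event, $\ME'(x) = y = \ME(x)$ under the obvious coupling.

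Combining the two displays via a union bound yields
\begin{align*}
\Pr_{\sk \gets \Key(1^\lambda)}\!\left(\Dec(1^\lambda, \sk, \ME(x)) = \msg \mid x \gets \Enc(1^\lambda, \sk, \msg)\right) \geq 1-\negl(\lambda),
\end{align*}
which is exactly the robustness of $\PRC$ with respect to $\ME$. Soundness and undetectability are properties of $\PRC$ itself and do not depend on the channel, so nothing further needs to be checked. There is no real obstacle here; the only subtlety is ensuring that the ``failure'' probability of $\ME$ relative to $x \gets \Enc(1^\lambda, \sk, \msg)$ is $\negl(\lambda)$ and not merely $\negl(n)$, which follows from the convention $n(\lambda) \geq \lambda$ stated just after \cref{def:zero-bit-prc}.
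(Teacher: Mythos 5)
Your proposal is correct and follows essentially the same route as the paper's proof: define the truncated channel $\ME'$ that falls back to outputting $x$ on the bad event, invoke robustness of $\PRC$ to the $p$-substitution-bounded $\ME'$, and couple $\ME$ with $\ME'$ so that they agree except with negligible probability. The only (harmless) additions are the explicit length check $|y|=|x|$ and the remark that $\negl(n)\le\negl(\lambda)$ via $n(\lambda)\ge\lambda$, both of which the paper leaves implicit.
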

\begin{proof}
Let $\ME'$ be the channel which samples $y \sim \ME(x)$, outputs $y$ if $\dham(y,x) \leq pn$, and otherwise outputs $x$. By assumption $\PRC$ is robust to $\ME'$. Moreover, we can construct a joint distribution of $(\sk, x, y, y')$ where: (a) the marginal of $(\sk, x,y)$ is as follows: $\sk \sim \Key(1^\lambda), x \gets \Enc(1^\lambda, \sk, \msg), y \sim \ME(x)$; (b) the marginal of $(\sk, x, y')$ is identical except that $y' \sim \ME'(x)$; (c) $y = y'$ with probability $1-\negl(\lambda)$. Then the robustness criterion for $\ME'$ yields that $\PRC$ is robust to $\ME$. 
\end{proof}

\begin{lemma}[Theorem 1 of \cite{rupassara2023convergence}]
  \label{lem:bin-hyp-tvd}
For any positive integers $t,k,N$ with $k \leq N$ and $t \leq N$, it holds that $\tvd{\Bin(t, k/N)}{\Hyp(N, k,t)} \leq \frac{2t}{\sqrt{N-t}}$. 
\end{lemma}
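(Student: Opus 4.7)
The plan is to prove the stronger bound $\tvd{\Bin(t,k/N)}{\Hyp(N,k,t)} \leq t/(2\sqrt{N})$, which immediately implies the stated inequality since $2t/\sqrt{N-t} \geq 2t/\sqrt{N} \geq t/(2\sqrt{N})$. The key observation is a classical mixture representation: if we color $N$ balls independently red with probability $p := k/N$, the total number of reds $K$ is $\Bin(N,p)$, while conditional on $K$ the number of reds among any fixed subset of size $t$ is $\Hyp(N,K,t)$. But the first $t$ balls are themselves iid $\Ber(p)$, so marginalizing gives the identity
\[
\Bin(t, k/N) \;=\; \E_{K \sim \Bin(N, p)}\bigl[\Hyp(N, K, t)\bigr]
\]
as a mixture of hypergeometric distributions. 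Applying the triangle inequality for total variation yields
\[
\tvd{\Bin(t,k/N)}{\Hyp(N,k,t)} \;\leq\; \E_{K \sim \Bin(N,p)}\bigl[\tvd{\Hyp(N,K,t)}{\Hyp(N,k,t)}\bigr].
\]

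Next, I would bound the inner TV distance between two hypergeometrics with the same $N,t$ but different numbers of reds. Assume WLOG $K \geq k$; set up an urn in which the first $k$ balls are designated ``required red,'' the next $K - k$ balls are ``extra red,'' and the remaining $N - K$ are blue. Drawing $t$ balls without replacement and counting only required reds gives a sample from $\Hyp(N,k,t)$, while counting all reds gives a sample from $\Hyp(N,K,t)$; the two counts coincide unless at least one extra red is drawn. The number of extra reds drawn is itself $\Hyp(N, K-k, t)$ with mean $(K-k)t/N$, so by Markov's inequality,
\[
\tvd{\Hyp(N,K,t)}{\Hyp(N,k,t)} \;\leq\; \Pr(\text{extra red drawn}) \;\leq\; |K - k| \cdot t/N.
\]

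Finally I would take expectation over $K$ and apply Jensen: $\E|K - k| \leq \sqrt{\Var(K)} = \sqrt{Np(1-p)} \leq \sqrt{N}/2$. Assembling the pieces yields $\tvd{\Bin(t,k/N)}{\Hyp(N,k,t)} \leq (t/N)\cdot \sqrt{N}/2 = t/(2\sqrt{N})$, which is stronger than the claimed bound. There is no real obstacle: the mixture identity is elementary, the coupling between hypergeometrics is transparent, and the variance bound is a one-line calculation. The only mild subtlety is verifying the convexity step, which is just the fact that total variation against a fixed measure is a convex functional of its first argument, a direct consequence of the triangle inequality.
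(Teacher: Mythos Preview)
Your proof is correct and takes a genuinely different route from the paper. The paper computes the likelihood ratio $\BQ(w)/\BP(w)$ explicitly, bounds it uniformly by $\prod_{j=0}^{t-1}(1-j/N)^{-1}$, and then applies Pinsker's inequality to pass from a KL bound to a TV bound; the log of that product telescopes to at most $t^2/(N-t)$, yielding $2t/\sqrt{N-t}$. Your argument instead exploits the exchangeability identity $\Bin(t,p) = \E_{K\sim\Bin(N,p)}[\Hyp(N,K,t)]$, couples two hypergeometrics differing only in the number of reds, and closes with a variance bound on $|K-k|$. Your approach is more probabilistic, avoids Pinsker entirely, and in fact produces the sharper bound $t/(2\sqrt{N})$; the paper's approach is more mechanical but has the virtue of reading the answer directly off the density ratio without constructing any auxiliary random experiment. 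Both are short, and either would serve the paper's purposes.
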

For completeness, we provide the proof of \cref{lem:bin-hyp-tvd} below.
\begin{proof}[Proof of \cref{lem:bin-hyp-tvd}]
  Fix $t,k,N$ as in the lemma statement and let $\BP = \Bin(t, k/N) \in \Delta(\{0, 1, \ldots, t\})$ and $\BQ = \Hyp(N,k,t) \in \Delta(\{0, 1, \ldots, t \})$. Also write $p = k/N$. Then we have $\BP(w) = {t \choose w} \cdot p^w (1-p)^{t-w}$ and

  \begin{align}
    \BQ(w) = \frac{{k \choose w}{N-k \choose t-w}}{{N \choose t}}  =& {t \choose w} \cdot \frac{(N-k) \cdots (N-k-(t-w)+1) \cdot k \cdots (k-w+1)}{N \cdots (N-t+1)} \nonumber\\
    =&  {t \choose w} \cdot \left( \frac{k}{N} \right)^w \cdot \frac{\prod_{j=0}^{t-w-1} (1 - p - j/N) \cdot \prod_{j=0}^{w-1} (1-j/k)}{\prod_{j=0}^{t-1} (1-j/N)}.\nonumber
  \end{align}
  Then
  \begin{align}
\frac{\BQ(w)}{\BP(w)} =&\frac{\prod_{j=0}^{t-w-1} (1 - p - j/N) \cdot \prod_{j=0}^{w-1} (1-j/k)}{(1-p)^{t-w}\prod_{j=0}^{t-1} (1-j/N)} \leq \prod_{j=0}^{t-1} \frac{1}{1-j/N},\label{eq:qp-ratio}
  \end{align}
  for each $0 \leq w \leq t$. By Pinsker's inequality,
  \begin{align}
    \tvd{\BQ}{\BP} \leq & 2 \sqrt{\sum_{w=0}^t \BQ(w) \cdot \ln \frac{\BQ(w)}{\BP(w)} } \nonumber\\
    \leq & 2 \sqrt{\ln \prod_{j=0}^{t-1} \frac{1}{1-j/N}}\nonumber\\
    = & 2 \sqrt{\sum_{j=0}^{t-1} \ln \frac{1}{1-j/N}}\nonumber\\
    \leq & 2 \sqrt{\sum_{j=0}^{t-1} \frac{j}{N-j}}\nonumber\\
    \leq & 2 \sqrt{\frac{t^2}{N-t}}\nonumber,
  \end{align}
  where the second inequality uses \cref{eq:qp-ratio}, and the third inequality uses the fact that $\ln \frac{1}{1-x} \leq \frac{1}{(1/x)-1}$ for all $x \in [0,1]$.
\end{proof}

\neurips{
\newpage
\section*{NeurIPS Paper Checklist}

\begin{enumerate}

\item {\bf Claims}
    \item[] Question: Do the main claims made in the abstract and introduction accurately reflect the paper's contributions and scope?
    \item[] Answer: \answerYes{} %
    \item[] Justification: The abstract and introduction give informal summaries of the results presented in the main body and appendix. 
    \item[] Guidelines:
    \begin{itemize}
        \item The answer NA means that the abstract and introduction do not include the claims made in the paper.
        \item The abstract and/or introduction should clearly state the claims made, including the contributions made in the paper and important assumptions and limitations. A No or NA answer to this question will not be perceived well by the reviewers. 
        \item The claims made should match theoretical and experimental results, and reflect how much the results can be expected to generalize to other settings. 
        \item It is fine to include aspirational goals as motivation as long as it is clear that these goals are not attained by the paper. 
    \end{itemize}

\item {\bf Limitations}
    \item[] Question: Does the paper discuss the limitations of the work performed by the authors?
    \item[] Answer: \answerYes{} %
    \item[] Justification: Limitations are discussed following Theorem 1.1.
    \item[] Guidelines:
    \begin{itemize}
        \item The answer NA means that the paper has no limitation while the answer No means that the paper has limitations, but those are not discussed in the paper. 
        \item The authors are encouraged to create a separate "Limitations" section in their paper.
        \item The paper should point out any strong assumptions and how robust the results are to violations of these assumptions (e.g., independence assumptions, noiseless settings, model well-specification, asymptotic approximations only holding locally). The authors should reflect on how these assumptions might be violated in practice and what the implications would be.
        \item The authors should reflect on the scope of the claims made, e.g., if the approach was only tested on a few datasets or with a few runs. In general, empirical results often depend on implicit assumptions, which should be articulated.
        \item The authors should reflect on the factors that influence the performance of the approach. For example, a facial recognition algorithm may perform poorly when image resolution is low or images are taken in low lighting. Or a speech-to-text system might not be used reliably to provide closed captions for online lectures because it fails to handle technical jargon.
        \item The authors should discuss the computational efficiency of the proposed algorithms and how they scale with dataset size.
        \item If applicable, the authors should discuss possible limitations of their approach to address problems of privacy and fairness.
        \item While the authors might fear that complete honesty about limitations might be used by reviewers as grounds for rejection, a worse outcome might be that reviewers discover limitations that aren't acknowledged in the paper. The authors should use their best judgment and recognize that individual actions in favor of transparency play an important role in developing norms that preserve the integrity of the community. Reviewers will be specifically instructed to not penalize honesty concerning limitations.
    \end{itemize}

\item {\bf Theory Assumptions and Proofs}
    \item[] Question: For each theoretical result, does the paper provide the full set of assumptions and a complete (and correct) proof?
    \item[] Answer: \answerYes{}%
    \item[] Justification: Proofs are provided in the appendix, and the main computational assumption is stated as \cref{asm:local-weak-prf}.
    \item[] Guidelines:
    \begin{itemize}
        \item The answer NA means that the paper does not include theoretical results. 
        \item All the theorems, formulas, and proofs in the paper should be numbered and cross-referenced.
        \item All assumptions should be clearly stated or referenced in the statement of any theorems.
        \item The proofs can either appear in the main paper or the supplemental material, but if they appear in the supplemental material, the authors are encouraged to provide a short proof sketch to provide intuition. 
        \item Inversely, any informal proof provided in the core of the paper should be complemented by formal proofs provided in appendix or supplemental material.
        \item Theorems and Lemmas that the proof relies upon should be properly referenced. 
    \end{itemize}

    \item {\bf Experimental Result Reproducibility}
    \item[] Question: Does the paper fully disclose all the information needed to reproduce the main experimental results of the paper to the extent that it affects the main claims and/or conclusions of the paper (regardless of whether the code and data are provided or not)?
    \item[] Answer: \answerNA{} %
    \item[] Justification: The paper does not include experiments.
    \item[] Guidelines:
    \begin{itemize}
        \item The answer NA means that the paper does not include experiments.
        \item If the paper includes experiments, a No answer to this question will not be perceived well by the reviewers: Making the paper reproducible is important, regardless of whether the code and data are provided or not.
        \item If the contribution is a dataset and/or model, the authors should describe the steps taken to make their results reproducible or verifiable. 
        \item Depending on the contribution, reproducibility can be accomplished in various ways. For example, if the contribution is a novel architecture, describing the architecture fully might suffice, or if the contribution is a specific model and empirical evaluation, it may be necessary to either make it possible for others to replicate the model with the same dataset, or provide access to the model. In general. releasing code and data is often one good way to accomplish this, but reproducibility can also be provided via detailed instructions for how to replicate the results, access to a hosted model (e.g., in the case of a large language model), releasing of a model checkpoint, or other means that are appropriate to the research performed.
        \item While NeurIPS does not require releasing code, the conference does require all submissions to provide some reasonable avenue for reproducibility, which may depend on the nature of the contribution. For example
        \begin{enumerate}
            \item If the contribution is primarily a new algorithm, the paper should make it clear how to reproduce that algorithm.
            \item If the contribution is primarily a new model architecture, the paper should describe the architecture clearly and fully.
            \item If the contribution is a new model (e.g., a large language model), then there should either be a way to access this model for reproducing the results or a way to reproduce the model (e.g., with an open-source dataset or instructions for how to construct the dataset).
            \item We recognize that reproducibility may be tricky in some cases, in which case authors are welcome to describe the particular way they provide for reproducibility. In the case of closed-source models, it may be that access to the model is limited in some way (e.g., to registered users), but it should be possible for other researchers to have some path to reproducing or verifying the results.
        \end{enumerate}
    \end{itemize}

\item {\bf Open access to data and code}
    \item[] Question: Does the paper provide open access to the data and code, with sufficient instructions to faithfully reproduce the main experimental results, as described in supplemental material?
    \item[] Answer: \answerNA{} %
    \item[] Justification: The paper does not include experiments requiring code.
    \item[] Guidelines:
    \begin{itemize}
        \item The answer NA means that paper does not include experiments requiring code.
        \item Please see the NeurIPS code and data submission guidelines (\url{https://nips.cc/public/guides/CodeSubmissionPolicy}) for more details.
        \item While we encourage the release of code and data, we understand that this might not be possible, so “No” is an acceptable answer. Papers cannot be rejected simply for not including code, unless this is central to the contribution (e.g., for a new open-source benchmark).
        \item The instructions should contain the exact command and environment needed to run to reproduce the results. See the NeurIPS code and data submission guidelines (\url{https://nips.cc/public/guides/CodeSubmissionPolicy}) for more details.
        \item The authors should provide instructions on data access and preparation, including how to access the raw data, preprocessed data, intermediate data, and generated data, etc.
        \item The authors should provide scripts to reproduce all experimental results for the new proposed method and baselines. If only a subset of experiments are reproducible, they should state which ones are omitted from the script and why.
        \item At submission time, to preserve anonymity, the authors should release anonymized versions (if applicable).
        \item Providing as much information as possible in supplemental material (appended to the paper) is recommended, but including URLs to data and code is permitted.
    \end{itemize}

\item {\bf Experimental Setting/Details}
    \item[] Question: Does the paper specify all the training and test details (e.g., data splits, hyperparameters, how they were chosen, type of optimizer, etc.) necessary to understand the results?
    \item[] Answer: \answerNA{} %
    \item[] Justification: The paper does not include experiments.
    \item[] Guidelines:
    \begin{itemize}
        \item The answer NA means that the paper does not include experiments.
        \item The experimental setting should be presented in the core of the paper to a level of detail that is necessary to appreciate the results and make sense of them.
        \item The full details can be provided either with the code, in appendix, or as supplemental material.
    \end{itemize}

\item {\bf Experiment Statistical Significance}
    \item[] Question: Does the paper report error bars suitably and correctly defined or other appropriate information about the statistical significance of the experiments?
    \item[] Answer: \answerNA{} %
    \item[] Justification: The paper does not include experiments.
    \item[] Guidelines:
    \begin{itemize}
        \item The answer NA means that the paper does not include experiments.
        \item The authors should answer "Yes" if the results are accompanied by error bars, confidence intervals, or statistical significance tests, at least for the experiments that support the main claims of the paper.
        \item The factors of variability that the error bars are capturing should be clearly stated (for example, train/test split, initialization, random drawing of some parameter, or overall run with given experimental conditions).
        \item The method for calculating the error bars should be explained (closed form formula, call to a library function, bootstrap, etc.)
        \item The assumptions made should be given (e.g., Normally distributed errors).
        \item It should be clear whether the error bar is the standard deviation or the standard error of the mean.
        \item It is OK to report 1-sigma error bars, but one should state it. The authors should preferably report a 2-sigma error bar than state that they have a 96\% CI, if the hypothesis of Normality of errors is not verified.
        \item For asymmetric distributions, the authors should be careful not to show in tables or figures symmetric error bars that would yield results that are out of range (e.g. negative error rates).
        \item If error bars are reported in tables or plots, The authors should explain in the text how they were calculated and reference the corresponding figures or tables in the text.
    \end{itemize}

\item {\bf Experiments Compute Resources}
    \item[] Question: For each experiment, does the paper provide sufficient information on the computer resources (type of compute workers, memory, time of execution) needed to reproduce the experiments?
    \item[] Answer: \answerNA{} %
    \item[] Justification: The paper does not include experiments.
    \item[] Guidelines:
    \begin{itemize}
        \item The answer NA means that the paper does not include experiments.
        \item The paper should indicate the type of compute workers CPU or GPU, internal cluster, or cloud provider, including relevant memory and storage.
        \item The paper should provide the amount of compute required for each of the individual experimental runs as well as estimate the total compute. 
        \item The paper should disclose whether the full research project required more compute than the experiments reported in the paper (e.g., preliminary or failed experiments that didn't make it into the paper). 
    \end{itemize}
    
\item {\bf Code Of Ethics}
    \item[] Question: Does the research conducted in the paper conform, in every respect, with the NeurIPS Code of Ethics \url{https://neurips.cc/public/EthicsGuidelines}?
    \item[] Answer: \answerYes{} %
    \item[] Justification: The paper is theoretical in nature and so any direct societal impacts are limited.
    \item[] Guidelines:
    \begin{itemize}
        \item The answer NA means that the authors have not reviewed the NeurIPS Code of Ethics.
        \item If the authors answer No, they should explain the special circumstances that require a deviation from the Code of Ethics.
        \item The authors should make sure to preserve anonymity (e.g., if there is a special consideration due to laws or regulations in their jurisdiction).
    \end{itemize}

\item {\bf Broader Impacts}
    \item[] Question: Does the paper discuss both potential positive societal impacts and negative societal impacts of the work performed?
    \item[] Answer: \answerYes{} %
    \item[] Justification: Broader societal impacts are discussed in the introduction. Due to the theoretical nature of our paper, any direct societal impacts are negligible. %
    \item[] Guidelines:
    \begin{itemize}
        \item The answer NA means that there is no societal impact of the work performed.
        \item If the authors answer NA or No, they should explain why their work has no societal impact or why the paper does not address societal impact.
        \item Examples of negative societal impacts include potential malicious or unintended uses (e.g., disinformation, generating fake profiles, surveillance), fairness considerations (e.g., deployment of technologies that could make decisions that unfairly impact specific groups), privacy considerations, and security considerations.
        \item The conference expects that many papers will be foundational research and not tied to particular applications, let alone deployments. However, if there is a direct path to any negative applications, the authors should point it out. For example, it is legitimate to point out that an improvement in the quality of generative models could be used to generate deepfakes for disinformation. On the other hand, it is not needed to point out that a generic algorithm for optimizing neural networks could enable people to train models that generate Deepfakes faster.
        \item The authors should consider possible harms that could arise when the technology is being used as intended and functioning correctly, harms that could arise when the technology is being used as intended but gives incorrect results, and harms following from (intentional or unintentional) misuse of the technology.
        \item If there are negative societal impacts, the authors could also discuss possible mitigation strategies (e.g., gated release of models, providing defenses in addition to attacks, mechanisms for monitoring misuse, mechanisms to monitor how a system learns from feedback over time, improving the efficiency and accessibility of ML).
    \end{itemize}
    
\item {\bf Safeguards}
    \item[] Question: Does the paper describe safeguards that have been put in place for responsible release of data or models that have a high risk for misuse (e.g., pretrained language models, image generators, or scraped datasets)?
    \item[] Answer: \answerNA{} %
    \item[] Justification: The paper poses no such risks.
    \item[] Guidelines:
    \begin{itemize}
        \item The answer NA means that the paper poses no such risks.
        \item Released models that have a high risk for misuse or dual-use should be released with necessary safeguards to allow for controlled use of the model, for example by requiring that users adhere to usage guidelines or restrictions to access the model or implementing safety filters. 
        \item Datasets that have been scraped from the Internet could pose safety risks. The authors should describe how they avoided releasing unsafe images.
        \item We recognize that providing effective safeguards is challenging, and many papers do not require this, but we encourage authors to take this into account and make a best faith effort.
    \end{itemize}

\item {\bf Licenses for existing assets}
    \item[] Question: Are the creators or original owners of assets (e.g., code, data, models), used in the paper, properly credited and are the license and terms of use explicitly mentioned and properly respected?
    \item[] Answer: \answerNA{} %
    \item[] Justification: The paper does not use existing assets.
    \item[] Guidelines:
    \begin{itemize}
        \item The answer NA means that the paper does not use existing assets.
        \item The authors should cite the original paper that produced the code package or dataset.
        \item The authors should state which version of the asset is used and, if possible, include a URL.
        \item The name of the license (e.g., CC-BY 4.0) should be included for each asset.
        \item For scraped data from a particular source (e.g., website), the copyright and terms of service of that source should be provided.
        \item If assets are released, the license, copyright information, and terms of use in the package should be provided. For popular datasets, \url{paperswithcode.com/datasets} has curated licenses for some datasets. Their licensing guide can help determine the license of a dataset.
        \item For existing datasets that are re-packaged, both the original license and the license of the derived asset (if it has changed) should be provided.
        \item If this information is not available online, the authors are encouraged to reach out to the asset's creators.
    \end{itemize}

\item {\bf New Assets}
    \item[] Question: Are new assets introduced in the paper well documented and is the documentation provided alongside the assets?
    \item[] Answer: \answerNA{} %
    \item[] Justification: The paper does not release new assets.
    \item[] Guidelines:
    \begin{itemize}
        \item The answer NA means that the paper does not release new assets.
        \item Researchers should communicate the details of the dataset/code/model as part of their submissions via structured templates. This includes details about training, license, limitations, etc. 
        \item The paper should discuss whether and how consent was obtained from people whose asset is used.
        \item At submission time, remember to anonymize your assets (if applicable). You can either create an anonymized URL or include an anonymized zip file.
    \end{itemize}

\item {\bf Crowdsourcing and Research with Human Subjects}
    \item[] Question: For crowdsourcing experiments and research with human subjects, does the paper include the full text of instructions given to participants and screenshots, if applicable, as well as details about compensation (if any)? 
    \item[] Answer: \answerNA{} %
    \item[] Justification: The paper does not involve crowdsourcing nor research with human subjects.
    \item[] Guidelines:
    \begin{itemize}
        \item The answer NA means that the paper does not involve crowdsourcing nor research with human subjects.
        \item Including this information in the supplemental material is fine, but if the main contribution of the paper involves human subjects, then as much detail as possible should be included in the main paper. 
        \item According to the NeurIPS Code of Ethics, workers involved in data collection, curation, or other labor should be paid at least the minimum wage in the country of the data collector. 
    \end{itemize}

\item {\bf Institutional Review Board (IRB) Approvals or Equivalent for Research with Human Subjects}
    \item[] Question: Does the paper describe potential risks incurred by study participants, whether such risks were disclosed to the subjects, and whether Institutional Review Board (IRB) approvals (or an equivalent approval/review based on the requirements of your country or institution) were obtained?
    \item[] Answer: \answerNA{} %
    \item[] Justification: The paper does not involve crowdsourcing nor research with human subjects.
    \item[] Guidelines:
    \begin{itemize}
        \item The answer NA means that the paper does not involve crowdsourcing nor research with human subjects.
        \item Depending on the country in which research is conducted, IRB approval (or equivalent) may be required for any human subjects research. If you obtained IRB approval, you should clearly state this in the paper. 
        \item We recognize that the procedures for this may vary significantly between institutions and locations, and we expect authors to adhere to the NeurIPS Code of Ethics and the guidelines for their institution. 
        \item For initial submissions, do not include any information that would break anonymity (if applicable), such as the institution conducting the review.
    \end{itemize}

\end{enumerate}

}
\end{document}